\newcolumntype{H}{>{\setbox0=\hbox\bgroup}c<{\egroup}@{}}
\def\thmhead@plain#1#2#3{%
	\thm@notefont{}
	\thmname{#1}\thmnumber{\@ifnotempty{#1}{ }\@upn{#2}}%
	\thmnote{ {\the\thm@notefont#3}}}
\let\thmhead\thmhead@plain
\newtheorem*{definition*}{Definition}
\newtheorem*{assumption*}{Assumption}
\newtheorem*{lemma*}{Lemma}
\newtheorem{lemma}{Lemma}
\newtheorem*{proposition*}{Proposition}
\newtheorem{proposition}{Proposition}
\newtheorem*{conjecture*}{Conjecture}
\newtheorem*{theorem*}{Theorem}
\newtheorem{theorem}{Theorem}
\newtheorem*{corollary*}{Corollary}
\newtheorem{corollary}{Corollary}
\newcommand{\indep}{\perp\!\!\!\!\perp} 
\titleformat*{\subsubsection}{\large\bfseries}
\newsavebox{\fmbox}
\title{Identifying causal effects with subjective ordinal outcomes}
\author{Leonard Goff\thanks{\protect\linespread{1}\protect\selectfont Department of Economics, University of Calgary. For useful conversations, I thank Christopher Barrington-Leigh, Carol Caetano, Andrew Clark, Ben Crost, John Helliwell, Peter Hull, Caspar Kaiser, Louise Laage, Jean-William Lalibert\'e, Simon Lee, Erzo Luttmer, Guy Mayraz, Max Norton, Bernard Salani\'e, Adam Rosen, Kevin Song, Takuya Ura and Sam Viavant. I thank Erzo Luttmer and Social Science Research Services at the University of Wisconsin for help with data access.}}
\date{}
\begin{document}


\maketitle

\begin{abstract}
\large 
Survey questions often ask respondents to select from ordered scales where the meanings of the categories are subjective, leaving each individual free to apply their own definitions in answering. This paper studies the use of these responses as an outcome variable in causal inference, accounting for variation in interpretation of the categories across individuals. I find that when a continuous treatment variable is statistically independent of both i) potential outcomes; and ii) heterogeneity in reporting styles, a nonparametric regression of response category number on that treatment variable recovers a quantity proportional to an average causal effect among individuals who are on the margin between successive response categories. The magnitude of a given regression coefficient is not meaningful on its own, but the \textit{ratio} of local regression derivatives with respect to two such treatment variables identifies the relative magnitudes of convex averages of their effects. These results can be seen as limiting cases of analogous results for binary treatment variables, though comparisons of magnitude involving discrete treatments are not as readily interpretable outside of the limit. I obtain a partial identification result for comparisons involving discrete treatments under further assumptions. An empirical application illustrates the results by revisiting the effects of income comparisons on subjective well-being, without assuming cardinality or interpersonal comparability of responses.
\end{abstract}

\newpage

\large

\section{Introduction}
Many survey questions ask respondents to choose from a set of two or more ordered categories that lack clear definitions, leaving the interpretation of those categories to the respondent. Examples include self-reported health status (SRHS), product or service ratings, job satisfaction, and questions gauging satisfaction with life overall. Individuals' responses are then often used as an outcome variable in research, frequently as a proxy for some underlying latent variable of interest (e.g. true health in the case of SRHS).\footnote{A broad class of this type of survey questions that use so-called \textit{Likert scales}: e.g. allowing responses such as ``strongly agree'', ``agree'' \dots ``strongly disagree'' to indicate agreement with a given statement, or to categorize quantities such as frequencies (``often'', ``sometimes'', \dots ``almost never''). \citet{hamermesh2004} discusses the use of such outcomes in economics.}

A key question for this practice is how ``reporting functions''---the way that individuals map that latent variable into one of the available response categories---impact conclusions drawn from the data.\footnotemark   \hspace{.1cm} \citet{bondandlang} influentially show that even if individuals share a common reporting function (but it is not ex-ante known to the researcher), averages of the latent variable cannot be meaningfully compared between groups using their responses, absent strong restrictions on the latent variable's unobserved distribution. More fundamentally, if the response categories lack objective definitions, reporting functions might vary between individuals, potentially confounding any attempt to study relationships between explanatory variables and the latent variable. 

This paper shows that the observed categorical responses can nevertheless be informative about causal relationships in which this latent variable is the outcome, despite the dual threats of reporting functions being both i) unknown to the researcher and ii) heterogeneous across respondents. Taking the perspective of \citet{bondandlang} that the latent variable driving individuals' responses is the researcher's ultimate outcome of interest, I decompose differences in the observed joint distribution of responses and covariates to the causal effects of those covariates on the latent variable. I do so by strengthening the familiar selection-on-observables assumption that one or more explanatory variables are statistically independent of potential outcomes, adding to it that explanatory variables are also independent of heterogeneity in reporting functions (with both independence assumptions made conditional on observed control variables). Under this assumption I show how the estimand that arises from the common practice of regressing categorical response numbers on explanatory variables can be interpreted in terms of the causal effects of those regressors on the latent variable of interest. \footnotetext{The use of the term ``reporting function'' for subjective data appears to have first appeared in the economics literature in \citet{oswaldletters}, though the general concept predates its discussion in economics (e.g. \citealt{psycho}).} 

Concretely, I consider a general model of ordered response taking the form:
\begin{align}
	R_i &= r_i(H_i) = r(H_i, V_i) \nonumber \\
	H_i &= h_i(X_i) = h(X_i, U_i) \label{eq:modelintro}
\end{align}
where $H_i \in \mathbb{R}^K$ reflects a set of unobserved latent variables, and $R_i$ an observed response mapped to a real number in some set $\mathcal{R}$. For example, $\mathcal{R} = \{0,1\}$ for a binary yes/no question, or $\mathcal{R} = \{0,1,2,3,4\}$  for a question with five ordered response categories. I focus primarily on the case of a scalar latent variable $H \in \mathbbm{R}$, and later generalize to $K > 1$.

The function $h_i(x)$ in \eqref{eq:modelintro} denotes the potential outcomes of the latent variable for individual $i$, indicating the value of $H$ that would occur if a vector of observed explanatory variables $X$ took each counterfactual value $x$. The function $r_i(h)$ represents individual $i$'s reporting function, which I assume to be weakly increasing in $h$ for each $i$. The random vectors $U_i$ and $V_i$ parameterize heterogeneity across individuals in potential outcomes and reporting functions, respectively. The main statistical assumption of the model is that $X_i \indep (U_i,V_i)$, which I relax to \textit{conditional} independence given control variables. The researcher's objective is to learn how $h_i(x)$ varies with $x$, observing only $R_i$ and $X_i$.

One of the key implications of my results is that if $X_{1i}$ and $X_{2i}$ reflect two continuously distributed components of the vector $X_i$, and $\mathcal{R}$ is associated with a set of integers, then
\begin{equation} \label{eq:compare}
	\frac{\mathbbm{E}[\partial_{x_2}\mathbbm{E}[R_{i}|X_i]]}{\mathbbm{E}[\partial_{x_1}\mathbbm{E}[R_{i}|X_i]]} = \frac{\tilde{\beta}_2}{\tilde{\beta}_1}
\end{equation}
where $\tilde{\beta}_j$ reflects a convex weighted average across individuals of the causal effect of a small change in the $j^{th}$ component of $X$ on $H$. In particular, $\tilde{\beta}_j$ averages the causal partial derivative $\partial_{x_j} h(X_i,U_i)$ over individuals $i$ who are on the margin between two response categories $r-1$ and $r$ for any $r \in \mathcal{R}$.\footnote{$\partial_{x_j} h(X_i,U_i)$ denotes $\partial_{x_j} h(x,U_i)$ with $x_j$ the $j^{th}$ component of $x$, evaluated at $x=X_i$ (and similarly for $\partial_{x_j}\mathbbm{E}[R_{i}|X_i]$)} If the conditional expectation $\mathbbm{E}[R_i|X_i]$ happens to be linear, then the average derivative quantity $\mathbbm{E}[\partial_{x_j}\mathbbm{E}[R_{i}|X_i]]$ in the LHS of \eqref{eq:compare} is simply the coefficient on $X_{j}$ in a linear regression of $R$ on $X$. In this case, Eq. \eqref{eq:compare} affords a causal interpretation to the ratio of OLS regression coefficients for two continuous treatments.

Throughout the paper, I discuss results through an application to survey questions that ask respondents about their overall satisfaction with life, and for ease of exposition refer to the latent variable $H$ as ``happiness''.\footnote{This simplified language ignores e.g. distinctions between hedonic, affective and evaluative notions of well-being \citep{DEATON201818,helliwellcpbl}.}  For example, the popular Cantril Ladder question asks individuals to describe their satisfaction with life on an eleven point scale from $0$ to $10$.\footnote{A popular version of the Cantril ladder question asks: \textit{Please imagine a ladder with steps numbered from zero at the bottom to ten at the top. Suppose we say that the top of the ladder represents the best possible life for you and the bottom of the ladder represents the worst possible life for you. If the top step is 10 and the bottom step is 0, on which step of the ladder do you feel you personally stand at the present time?} \citep{gallup}.} Questions like this about general well-being motivate treating the latent variable $H$ as an outcome of normative interest, drawing on the notion of cardinal utility as a measure of welfare \citep{Fleming1952,Harsanyi1955}. With this interpretation, the marginal rates of substitution between treatment variables are a key input for normative analysis, suggesting trade-offs that would be welfare improving for individuals. However, my results are also applicable to other outcomes elicited on ordered scales, e.g. general or mental health status, job satisfaction, product or service ratings, and other settings in which ordered response models might be employed with individual-specific heterogeneity in the thresholds between response categories. 

Despite a growing trend in papers leveraging natural experiments with subjective outcome data,\footnote{Some prominent examples include \citet{cardetal,aermrs,lindqvistetal2020,aerperez,pnasdwyerdunn}.} empiricists have lacked formal results such as Eq. \eqref{eq:compare} to interpret precisely what is estimated by regressions in which subjectively-defined ordinal responses $R$ are used as the dependent variable. This paper helps to fill the gap by showing that when the selection-on-observables research design is extended to include reporting-function heterogeneity, derivatives of the conditional expectation function of integer category numbers on $X$ reveal positive aggregations of the local causal effects of $X$ on $H$.\footnote{I also show that when the researcher is interested in establishing correlations rather than causation, the same results capture changes to the conditional quantile function of the underlying latent variable, rather than causal effects.} The weights in this aggregation have an intuitive form but are not under the researcher's control. This illuminates the limits for identification of overall unweighted means of causal effects, which correspond to the parameter analyzed by \citet{bondandlang}. My results show that mean regression can nonetheless remain a useful tool for analyzing more general weighted averages of effects, without assuming cardinality or interpersonal comparability of $H$.

I apply my formal results to revisit the influential study of \citet{luttmer2005}, who considers the effects of household income as well as the incomes of one's neighbors on satisfaction with life. Using a selection-on-observables strategy and linear regression adjustment, \citet{luttmer2005} finds a positive coefficient on own-income along with a negative coefficient on neighbor income, suggesting that relative income comparisons are important for subjective well-being. My nonparametric identification results corroborate this interpretation under the maintained exogeneity assumptions, but without assuming cardinality or interpersonal comparability of individuals' responses to the well-being question. Empirically, I first report distributional regressions of $\mathbbm{1}(R_i \le r)$ on $X$ for each $r$. The patterns suggest that differences in the coefficients across $r$ are driven by the unknown distribution of the underlying latent variable, underscoring the theoretical observation that coefficients must be compared between variables to be quantitatively meaningful. The heterogeneity in coefficients across $r$ in fact cancels in the ratio, and I cannot reject equality across $r$ of the local marginal rates of substitution between own and neighbor income (among respondents on the threshold between $r$ and $r+1$). I also estimate these ``marginal'' respondents to be similar to inframarginal respondents in terms of gender and education. The empirical results overall are consistent with simple models of heterogeneity in potential outcomes and/or response functions that render the effects for marginal respondents somewhat typical of the population, in this particular setting.

When the treatment variables of interest are \textit{discrete}, rather than continuous as above, I find that comparisons of magnitude become more complicated. First, I show that when one compares the mean of $R$ between two fixed values $x$ and $x'$ of the vector $X$: \begin{equation} \label{eq:discreteintro}
	\mathbbm{E}[R_i|X_i=x']-\mathbbm{E}[R_i|X_i=x]=\mathbbm{E}\left[\bar{f}(\Delta_i,V_i,x)\cdot \Delta_i\right],
\end{equation}
where $\Delta_i = h(x',U_i)-h(x,U_i)$ is the treatment effect of changing $X$ from $x$ to $x'$ on outcome $H$ for individual $i$. The ``weight'' $\bar{f}(\Delta_i,v_i,x_i)$ is unknown but positive for all $i$, and Eq. (\ref{eq:discreteintro}) thus implies that if the sign of the treatment effect $\Delta_i$ is the same for all individuals, then the sign of $\mathbbm{E}[R_i|X_i=x']-\mathbbm{E}[R_i|X_i=x]$ will be the same as that of the causal effect. However, the magnitude of $\mathbbm{E}\left[\bar{f}(\Delta_i,V_i,x)\right]$ can in general depend on the values $x$ and $x'$ being compared, and quantitative comparisons of regression coefficients can be misleading (even if the regression is correctly specified) if one or more of the treatment variables being considered is discrete and treatment effects are not small.\footnote{The function $\bar{f}$ is defined in Sec. \ref{secdiscrete}, and no longer depends upon $\Delta$ as $x' \rightarrow x$ and the difference becomes a derivative.} 

Eq. \eqref{eq:discreteintro} provides a new perspective on the key point made by \citet{bondandlang}, who argue that the conditional distributions $R_i|X_i=x'$ and $R_i|X_i=x$ are generally uninformative about the sign of $\mathbbm{E}[H_i|X_i=x']-\mathbbm{E}[H_i|X_i=x]$, even if it is assumed that all individuals share a common reporting function. Knowing the sign of the difference in means of $H_i$ between groups $x$ and $x'$ from observations of $R_i$ generally requires that the quantile functions of $H_i|X_i=x'$ and $H_i|X_i=x$ do not cross, i.e. that the latent variable distribution for one group stochastically dominates that of the other. This assumption cannot be verified from the data $(R,X)$, and may be implausible if $x$ and $x'$ are two populations (men vs. women, two countries, etc.), that are each quite heterogeneous in themself and differ from one another across many dimensions. However, when $x'$ and $x$ differ in a single component representing \textit{treatment} in e.g. a quasi-experimental setting, it may be possible to argue that treatment effects $\Delta_i$ are not too heterogeneous.\footnote{Indeed, the much stronger assumption of complete homogeneity in treatment effects is often made implicitly to motivate a causal interpretation of regression models. For example, $h(x,u) = g(x)+u$ yields the regression $H_i = g(X_i)+U_i$ but implies that $\Delta_i = g(x')-g(x)$ for all $i$.} If $\Delta_i$ has the same sign for all $i$, that sign is equal to that of $\mathbbm{E}[R_i|X_i=x']-\mathbbm{E}[R_i|X_i=x]$. Thus while the argument made by \citet{bondandlang} is compelling for generic comparisons between two groups, it may have less bearing on settings where a clear research design is leveraged to interpret differences in $R_i$ causally.

Implications of my results for regression analysis using subjective ordinal outcomes are threefold. First, the focus on finding natural experiments popular in modern applied work yields a previously unrecognized benefit for subjective outcomes: reporting functions may become uncorrelated with treatment variables of interest $X$, affording inference on the direction of causal effects of $X$ on unobserved $H$. Second, researchers can move beyond interpretations of the sign of average effects and consider magnitudes only when multiple valid treatment variables are available. Third, such comparisons of magnitude are most informative when the two variables being compared are continuous rather than discrete. An implication is that identification in experimental work with subjective outcome variables would benefit from randomizing the quantitative ``doses'' of multiple treatments.\\ 

\noindent \textit{Outline of paper:} In Section \ref{secmodel} I propose a general nonparametric model of ordered response with nonseparable heterogeneity: it allows each respondent to have their own response function, but takes treatment variables $X$ to be conditionally independent of all unobserved heterogeneity. Section \ref{seccontiniousvariation} establishes my main identification result when there is continuous variation in $X$, which provides a generalization of Eq. \eqref{eq:compare}. I outline assumptions under which Eq. \eqref{eq:compare} in turn reveals a local average marginal rate of substitution between two continuous treatment variables. Section \ref{sec:empirical} applies these results to revisit the findings of \citet{luttmer2005} relating the effect of one's own income and one's neighbors' incomes on life satisfaction. 

In Section \ref{secdiscrete}, I turn to identification with a discrete treatment variable. After showing that ratios of regression coefficients involving one or more discrete regressors lack the guarantee of a simple quantitative interpretation like Eq. \eqref{eq:compare}, I describe how one can obtain bounds on the ratio of the total weight that the conditional expectation function applies to causal effects when comparing continuous to discrete variation in $X$. The analytic results suggest that when there are many response categories and individual reporting functions are approximately linear, discrete contrasts will tend to \textit{overstate} causal effects relative to regression derivatives, by a factor that is upper bounded by two. I assess this implication through simulations with a variety of assumed distributions of the latent variable, and only find evidence of appreciable distortion when treatment effects are made implausibly large in the DGP.

Appendix \ref{sec:bl} provides an extended discussion of how my results relate to \citet{bondandlang}. Appendix \ref{sec:orderedresponse} relates my general model of ordered response to ones previously considered in the literature. Appendix \ref{sec:extensions} considers several extensions to my baseline model, such as using instrumental variables rather than selection-on-observables for identification, or allowing for a multivariate latent variable. Appendices \ref{sec:additionalcont} and \ref{seccontinuousreporting} develop some supporting theoretical results for the paper. Appendix \ref{sec:empiricalmore} provides additional results for the empirical application, while Appendix \ref{sec:regression} expands on the implications of my results for practical regression analysis and presents a numerical illustration. 

\section{Model} \label{secmodel}


Suppose that there exists a meaningful latent value $H_i$ for each individual which the researcher is ultimately interested in as an outcome. With the life satisfaction example in mind, I will often refer to $H_i$ as $i's$ underlying ``happiness'', which the researcher aims to learn about given those individuals' responses $R_i$.\footnote{The model extends naturally to a setting in which the definition of ``$H$'' is itself subjective, in the sense that different individuals use different latent variables when constructing their responses. The key requirement is that these subjectively defined latent variables in turn reflect increasing transformations of an objective variable of interest. See Appendix \ref{sec:extended}.} Section \ref{sec:params} discusses the interpretation of $H_i$ as a measure of utility. In the body of this paper I take $H_i$ to be a scalar, but Appendix \ref{sec:multivariate} extends results to the vector case.

The researcher observes a sample of $(R_i,X_i,W_i)$ across individuals $i$ generated as:
\begin{align}
	R_i &= r_i(H_i) = r(H_i, V_i) \label{modelr}\\
	H_i &= h_i(X_i) = h(X_i, U_i) \label{modelh}
\end{align}
where $r_i(h)$ is in individual-specific function mapping happiness $h$ to the space of possible responses $\mathcal{R}$. The above model indexes heterogeneity in $r_i(\cdot)$ by a heterogeneity parameter $V_i \in \mathcal{V} \subseteq \mathbbm{R}^{d_V}$. Since no constraints are placed on $d_V$, this is without loss of generality and the model is compatible with each individual having their own reporting function $r_i(h)$. Figure \ref{fighonest} depicts two examples of reporting functions when $\mathcal{R} = \{0,1,2\}$.

For each individual there is a function $h_i(\cdot)$ mapping values of a vector of $J$ explanatory variables $X$ into a value of $H$ via (\ref{modelh}), where heterogeneity in the function $h_i(\cdot)$ is represented by parameter $U_i \in \mathcal{U} \subseteq \mathbbm{R}^{d_U}$. The primary interpretation of the function $h_i(x)$ is that it denotes potential outcomes for individual $i$ as a function of counterfactual values of $x$, in some set of possible treatments $\mathcal{X} \subseteq \mathbbm{R}^{J}$.\footnote{An alternative interpretation of $h(x,u)$ is always also available and requires no causal assumptions, which is that $h$ represents the conditional quantile function of $H_i$ given $X_i$, with $U_i \in [0,1]$ a scalar indicating $i$'s rank in a distribution of their peers. \label{eqcondquantile} In particular, let $\theta_i:=F_{H|XVW}(H_i|X_i,V_i,W_i)$ be $i$'s ``rank'' in the conditional happiness distribution of individuals sharing their value of $X,V$ and $W$, where $F_{H|XVW}$ denotes a cumulative distribution function of $H$. Now let $U_i = (\theta_i, V_i,W_i)^T$, and define $h(x,u):=Q_{H|XVW}(\theta|x,v,w)$ for any $u=(\theta,v,w)^T$, where $Q_{H|XVW}$ denotes the conditional quantile function of $H$ given $X,V$ and $W$. Eq. (\ref{modelh}) now follows from these definitions. See Appendix \ref{sec:idio} for details. This representation is helpful when causal effects are not the target, and the researcher is instead interested in uncovering statistical features of the joint distribution between $H_i$ and $X_i$.} Since the dimension $d_U$ is again left unrestricted, the above model places no restriction on heterogeneity in potential outcomes and causal effects across individuals.

Finally, $W_i \in \mathcal{W} \subseteq \mathbbm{R}^{d_W}$ is a vector of additional observed variables to be used as control variables in the analysis. These $W_i$ can be thought of as variables that matter for happiness but are not necessarily manipulable (e.g. race), as components of $U_i$ that are potentially correlated with $X_i$ but are observable, or as correlates of $X_i$ that proxy for reporting function heterogeneity $V_i$. In settings with stratified randomization, $W_i$ isolates the experimental strata.

\subsection{Causal parameters of interest} \label{sec:params}
The function $h(x,u)$ is our main object of interest: how it varies with $x$ holding $u$ fixed yields the causal effect of that change on $H$. For example, $h(x',U_i)-h(x,U_i)$ denotes the ``treatment effect'' for unit $i$ of moving between two counterfactual values $x$ and $x'$ of the vector $X$. I consider the identification of such discrete treatment effects in Section \ref{secdiscrete}.

For most of the analysis, I will consider small changes in one or more components of $x$ that are continuously distributed. Letting $\partial_{x_j}$ denote a partial derivative with respect to $x_j$, the function $\partial_{x_j} h(x,U_i)$ for a given individual $i$ characterizes the effect of a small change in the $j^{th}$ component of $X$ on $H$, when $X=x$. An average of the value of this derivative across individuals $i$ provides a summary of the marginal effect of $x_j$ on $H$ when $X=x$. More generally, such averages can employ weights $\rho_i$ that depend on the individual-level observables $(X_i,W_i)$ and unobserved heterogeneity parameters $(U_i,V_i)$. For example, for a given function $\rho(u,v,x,w)$, we might consider a weighted average of the form:
\begin{equation} \label{eq:genavg}
	\tilde{\beta}_j = \mathbbm{E}[\rho_i \cdot \partial_{x_j} h(X_i,U_i)]
\end{equation}
where $\rho$ is chosen such that $\rho_i:=\rho(U_i,V_i,X_i,W_i)$ is positive with probability one and satisfies $\mathbbm{E}[\rho_i] =1$. 
Many results of this paper represent, intuitively, limits of parameters of the form $\tilde{\beta}_j$ for a sequence of such weighting functions $\rho(\cdot)$.\footnote{For example, the average derivative $\mathbbm{E}[\partial_{x_j} h(x,U_i)|h(x,U_i)=h]$ that conditions on a single value $h$ for $h_i(x)$ represents the limit of $\tilde{\beta}_j$ for the function $\rho(U_i,V_i)=\frac{\mathbbm{1}(h(x,U_i) \in [h, h+\epsilon])}{\mathbbm{E}[\mathbbm{1}(h(x,U_i) \in [h, h+\epsilon])]}$, as $\epsilon \rightarrow 0$. See also discussion after proof of Theorem \ref{propflow}.}

If we interpret $H$ as a measure of ``utility'', then $h_i(\cdot)=h_i(\cdot,U_i)$ can be thought of as $i$'s utility function, and $H_i=h_i(X_i)$ as their realized utility (evaluated at $i$'s actual $X_i$). Under this interpretation the \textit{ratio} of two derivatives of $h_i(x)$ represents a local marginal rate of substitution of $X_1$ for $X_2$ when $X=x$, for individual $i$, e.g. 
$$MRS_i(x):=\frac{\partial_{x_2}h(x,U_i)}{\partial_{x_1} h(x,U_i)}$$
Note that this interpretation only requires $h_i(\cdot)$ to represent utility in an ordinal sense: $MRS_i(x)$ yields the slope of the indifference curve for $i$ that passes through the point $x$. Among individuals for whom $X_i=x$, the quantity $MRS_i(x)$ yields a marginal rate of substation at their actual value of $X_i$. Weighted averages of this realized marginal rate of substitution across individuals take the form $\widetilde{MRS}:=\mathbbm{E}\left[\rho_i\cdot MRS_i(X_i)\right]$ for $\rho_i=\rho(U_i,W_i,X_i,W_i)$ defined as following Eq. \eqref{eq:genavg}, or a limit of $\widetilde{MRS}$ for a sequence of such functions.

Finally, this paper will consider weighted averages of discrete \textit{treatment effects} between two fixed values of $X$, i.e. $\Delta_i:=h(x',U_i)-h(x,U_i)$ for some $x,x' \in \mathcal{X}$. Weighted averages of treatment effects take the form:
$$\widetilde{\Delta}:=\mathbbm{E}\left[\rho(U_i,W_i,X_i,W_i)\cdot \Delta_i\right]$$
with $\rho_i:=\rho(U_i,V_i,X_i,W_i)$ as above, or the limit of $\widetilde{\Delta}$ for a sequence of such functions $\rho$.

\subsection{Model assumptions}
Note that model (\ref{modelr})-(\ref{modelh}) embeds an exclusion restriction: $X$ does not directly enter in the equation for $R$, and only affects reports through $H$. This is important for drawing inferences about the relationship between $H$ and $X$ from the observable joint distribution of $R$ and $X$. The model can be generalized slightly to allow reporting behavior to depend directly on observables, as described in Appendix \ref{sec:testing}.

The following two subsections introduce the two key identifying assumptions of the model: first, that reporting functions are weakly increasing in $h$; and second, that the researcher as exogenous variation in some components of $X_i$. These assumptions are, under suitable regularity conditions, sufficient for the main results of this paper. The basic model is therefore more general than existing models of ordered response, which typically couple parametric assumptions with an assumption that there is no heterogeneity in $v$. Appendix \ref{sec:orderedresponse} shows how the model nests models previously considered in the literature.

\subsubsection{First assumption: reporting functions are weakly increasing} \label{sec:reporting}

The main assumption that I make about the reporting functions $r_i(\cdot)$ themselves is that they are \textit{increasing} in $H_i$:
\begin{assumption*}[MONO (weakly increasing reporting functions)]
	$r(h,v)$ is weakly increasing and left-continuous in $h$ for all $v \in \mathcal{V}$
\end{assumption*}
\noindent Appendix \ref{sec:multivariate} extends Assumption MONO to the case in which $H_i$ is a random vector, assuming that $r_i(\cdot)$ is weakly increasing in each component of $H_i$. Note that MONO does \textit{not} assume the effect of $X$ on $H$ to be monotonic or uniform across individuals.

The first part of Assumption MONO rules out cases in which individuals would report a lower value of $R$ if $H$ were increased. The left-continuity assumption of MONO is essentially a normalization, since any weakly increasing function of bounded variation is continuous except at isolated points within its support.\footnote{Hence a reporting function that is, say, right continuous rather than left continuous could be made left continuous by modifying the function on a set of Lebesque measure zero.}

\begin{figure}[H]
	\begin{center} \vspace{.2cm}
		\includegraphics[height=2in]{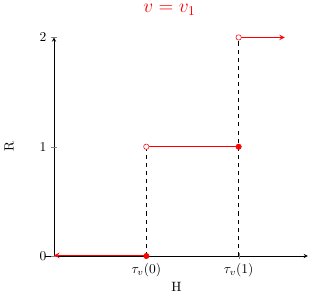}	\quad \quad \quad \quad \quad
		\includegraphics[height=2in]{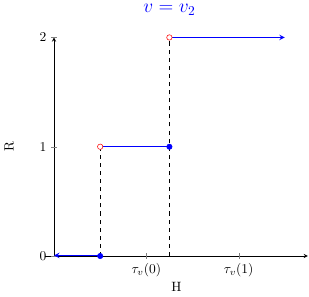}
		\caption{Examples of two different reporting functions, in a case with three categories: $\mathcal{R}=\{0,1,2\}$. The reporting function depicted in the right panel is more ``optimistic'' than the one in the left panel, as the threshold value of $H$ for $R=1$ and $R=2$ are both lower than for the reporting function on the left (see Lemma \ref{prophonest}).} \label{fighonest} 
	\end{center}
\end{figure}

The following lemma shows that Assumption MONO is equivalent to there being a set of ``thresholds'' $\tau_v(r)$ that separate the ordered categories in $\mathcal{R}$. This characterization is useful in developing the formal results to come.
\begin{lemma} \label{prophonest}
	MONO holds iff for all $v \in \mathcal{V}, r \in \mathcal{R}$ and $h \in \mathcal{H}$: 
	\begin{equation} \label{eq:iff}
		r(h,v) \le r \iff h \le \tau_{v}(r)
	\end{equation}
	where $\tau_{v}(r) = \sup\{h \in \mathcal{H}: r(h,v) \le r\}$ or $\tau_v(r):=\infty$ if the supremum does not exist.
\end{lemma}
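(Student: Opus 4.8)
The plan is to prove the equivalence in Lemma \ref{prophonest} by establishing each direction separately, with the threshold $\tau_v(r)$ defined as the supremum in the statement. The key idea is that HONEST (weak monotonicity plus left-continuity of $h \mapsto r(h,v)$) is exactly the structure needed to ensure that the sublevel set $\{h \in \mathcal{H} : r(h,v) \le r\}$ is a \emph{downward-closed interval} that \emph{contains its supremum}, so that membership in the set is characterized by the single inequality $h \le \tau_v(r)$.

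First I would prove the forward direction ($\Rightarrow$). Assume HONEST and fix $v \in \mathcal{V}$ and $r \in \mathcal{R}$. Define $\tau_v(r) := \sup\{h \in \mathcal{H} : r(h,v) \le r\}$ (with the convention $\tau_v(r) := \infty$ if the set is unbounded above). I want to show $r(h,v) \le r \iff h \le \tau_v(r)$. The ``only if'' part is immediate: if $r(h,v) \le r$ then $h$ is an element of the set whose supremum is $\tau_v(r)$, so $h \le \tau_v(r)$. For the ``if'' part, suppose $h \le \tau_v(r)$. If $h < \tau_v(r)$, then by definition of supremum there exists some $h'$ in the set with $h < h' \le \tau_v(r)$ and $r(h',v) \le r$; weak monotonicity then gives $r(h,v) \le r(h',v) \le r$. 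The remaining case $h = \tau_v(r)$ is where left-continuity does the work: take a sequence $h_n \uparrow \tau_v(r)$ with each $r(h_n,v) \le r$ (possible by definition of the supremum), and left-continuity of $r(\cdot,v)$ gives $r(\tau_v(r),v) = \lim_n r(h_n,v) \le r$, so $r(h,v) \le r$ as needed.

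Next I would prove the reverse direction ($\Leftarrow$). Here I assume that for every $v$ and $r$ there exists a threshold function $\tau_v(r)$ satisfying the biconditional \eqref{eq:iff}, and I must recover that $r(\cdot,v)$ is weakly increasing and left-continuous. Weak monotonicity follows because, for fixed $v$, the family of sublevel sets $\{h : r(h,v) \le r\} = \{h : h \le \tau_v(r)\}$ is a nested family of downward-closed intervals indexed by $r$; if $h_1 \le h_2$ then any $r$ with $r(h_2,v) \le r$ forces $h_2 \le \tau_v(r)$, hence $h_1 \le \tau_v(r)$, hence $r(h_1,v) \le r$, and taking $r = r(h_2,v)$ yields $r(h_1,v) \le r(h_2,v)$. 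Left-continuity follows because each sublevel set is of the form $\{h \le \tau_v(r)\}$, which is closed on the right (it contains its own supremum), and this is precisely the topological condition equivalent to left-continuity of a monotone step function: at any point $h_0$, the value $r(h_0,v)$ agrees with $\lim_{h \uparrow h_0} r(h,v)$ because $h_0$ lies in the smallest sublevel set that contains it.

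I expect the main obstacle to be the boundary/limit case in the forward direction, namely verifying $r(\tau_v(r),v) \le r$ at the supremum itself, which is exactly the step that can fail without left-continuity and which motivates including left-continuity in HONEST. I would handle this carefully by distinguishing whether the supremum is attained and whether it is finite (the $\tau_v(r) = \infty$ convention disposes of the unbounded case vacuously, since then $h \le \tau_v(r)$ holds for all $h \in \mathcal{H}$ and correspondingly $r(h,v) \le r$ for all such $h$). A secondary point requiring mild care is ensuring the argument is valid on a general $\mathcal{H} \subseteq \mathbbm{R}$ that need not be an interval or closed, so I would phrase the supremum and limiting sequences relative to $\mathcal{H}$ throughout rather than assuming completeness of the domain.
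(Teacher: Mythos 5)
Your proposal is correct and takes essentially the same route as the paper: the forward direction splits into $h < \tau_v(r)$ (handled by weak monotonicity via a point of the sublevel set above $h$) and the boundary case $h = \tau_v(r)$ (handled by left-continuity along a sequence approaching the supremum), while the reverse direction recovers monotonicity from the nested sublevel sets $\{h : h \le \tau_v(r)\}$ and left-continuity from those sets being closed under increasing limits. The differences are only stylistic — you prove monotonicity directly by plugging in $r = r(h_2,v)$ where the paper argues by contradiction, and your left-continuity step in the reverse direction is asserted as a topological equivalence rather than spelled out as the paper's explicit limit argument at the thresholds — but the underlying content, including the recognition that the supremum case is exactly where left-continuity is needed, matches the paper's proof.
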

\begin{proof}
	All proofs are given in Appendix \ref{sec:proofs}.
\end{proof}
\noindent As an illustration of Lemma \ref{prophonest}, suppose that $\mathcal{R} = \{0,1,\dots \bar{R}\}$ for some integer $\bar{R}$. Then Lemma \ref{prophonest} implies that any given reporting function $r(h,v)$ can be written as:
\begin{equation}
	r(h,v) = 
	\begin{cases}
		0 & \textrm{ if } h \le \tau_v(0)\\
		1 & \textrm{ if } \tau_v(0) < h \le \tau_v(1)\\
		2 & \textrm{ if } \tau_v(1) < h \le \tau_v(2)\\
		\vdots &\\
		\bar{R} & \textrm{ if } h > \tau_v(\bar{R}-1)
	\end{cases}
\end{equation}

\noindent \textit{Remark: } Assumption MONO does not require that respondents are motivated only by ``honesty'' when choosing $R_i$. Instead, they may have direct preferences for certain response categories. Consider a utility maximization model in which $ r(h,v) = \textrm{argmax}_{r \in \mathcal{R}} \textrm{ }u(r,h,v)$, with utility $u$ depending not only on happiness $h$, but also directly on the response category $r$. As an example, let us further assume that the utility function takes the form $u(r,h,v) = \phi_v(r)-|h^*_v(r)-h|$ where individuals of type $v$ obtain utility $\phi_v(r)$ from giving a response of $r$, but also value giving an answer close to a value $h^*_v(r)$ they perceive to correspond to response $r$. Provided that $h^*_v(r)$ is strictly increasing in $r$ (i.e. higher responses are subjectively associated with higher values of happiness), then $u$ satisfies the property of \textit{increasing differences} (cf. \citealt{shannonmilgrom}) in $(r,h)$, which in turn implies MONO.\footnote{Note that heterogeneity $v$ in this form for utility need not be additively separable from quantities that depend on $x$ (i.e. $h$). Such separability is shown by \citet{allenrehbeck} to admit important identification results for latent utility.}\footnote{MONO also allows there to be individuals with preferences that \textit{only} depend on $r$, giving the same response regardless of their $H_i$. Such individuals will not contribute to regression derivatives and differences of $R$ on $X$ under EXOG.}

\subsubsection{Second assumption: random variation in treatment variables} \label{sec:ci}
The final piece of the model is a conditional independence assumption for variation in $X$. In particular, I suppose that conditional on $W$, the treatments $X$ are as-good-as-randomly assigned in the following sense:
\begin{assumption*}[EXOG (conditionally exogenous components of $X$)] i) $\{X_{i} \indep V_i \} \textrm{ }| \textrm{ } W_{i}$; and ii) $\{X_{i} \indep U_i \} \textrm{ }| \textrm{ } (W_{i},V_i)$
\end{assumption*}
\noindent A sufficient condition for Assumption EXOG is that:
\begin{equation} \label{eq:idx}
	\{X_{i} \indep (U_i, V_i) \} \textrm{ }| \textrm{ } W_{i}
\end{equation} 
Eq. (\ref{eq:idx}) provides a natural foundation for EXOG and is simpler to motivate, but is technically stronger than the results require.\footnote{Eq. (\ref{eq:idx}) can equivalently be expressed as $\{(U_i,V_i) \indep X_{j,i}\}| (X_{-j,i},W_i)$ for all $j$, where $X_{-j,i}$ denotes the elements of $X_i$ apart from $X_{j,i}$. Assumption EXOG can be re-expressed similarly.} For causal inference, an assumption like $\{X \indep U\}|W$ is generally already necessary for identification: one needs some kind of experiment or natural experiment providing exogenous variation in $X$. Eq. (\ref{eq:idx}) then simply requires this natural experiment to also render $X$ (conditionally) independent of $V$. Note that under EXOG, $U$ and $V$ may be arbitrarily correlated with one another (e.g. if happier individuals have more optimistic reporting functions).\footnote{This is a feature that distinguishes my approach from the treatment of measurement error by \citet{hausmanabreyava}, who assume (in my notation) that $R \indep X | H$, which amounts to $V \indep U | H$. They also restrict the model functionally, with a linear index structure for $h$ and scalar errors with monotonicity.} In Appendix \ref{sec:iv}, I relax EXOG to consider identification using instrumental variables. Appendix \ref{sec:regression} illustrates through an example how violations of EXOG can affect results.

The assumption that response behavior is independent of a treatment variable may be restrictive in many contexts, especially in the absence of a credible research design. Appendix \ref{sec:testing} describes one specific threat to Assumption EXOG, that reporting functions might themselves be affected by the treatment variables $X$. I show there that EXOG can be relaxed slightly, and in fact tested under additional structural assumptions. Whether reporting functions might themselves be affected by a given treatment variable must be considered on a case-by-case basis.\footnote{Other approaches to allowing for reporting-function heterogeneity that do not require EXOG rely on particular models of that heterogeneity (e.g. \citealt{CPBL}) or auxiliary data sources. Such sources include ``anchoring vignettes'' \citep{kingetal, kapetyn,molina,jeboreversing,stantchevaannrev}, memories of past life satisfaction \citep{kaiserjebo}, calibration questions \citep{adjustingforscaleuse} and survey response times \citep{happytimes}.}

\section{What is identified from continuous variation in $X$} \label{seccontiniousvariation}

Given the model outlined in the last section, let us consider what can be identified by looking at responses given variation in $X$. In this section, I suppose that at least one component of $X$ is continuously distributed.

Denote by $f_H(h|x,v,w)$ the density of $H_i$ at $h$, conditional on $X_i=x$, $V_i=v$ and $W_i=w$, and assume the following:
\begin{assumption*}[REG$_j$ (regularity conditions for $X_j$)]
	The following hold for given $j$: i) $X_{ji}$ is continuously distributed; ii) $f_{H|XVW}(h|x,v,w)$ exists; iii) $\partial_{x_j}  Q_{H|XVW}(\alpha|h,v,w) \le M < \infty$ for all $\alpha \in [0,1]$, $h \in \mathcal{H}$, where $Q_{H|XVW}$ is the conditional quantile function of $H$ given $X,V,W$; iv) for each $x,w$ and $h$, $f_{H,\partial_{x_j}h(x,U)|XVW}(h,h'|x,v,w)$ exists and is upper bounded by some $c(h')$ where $\int c(h')|h'|dh' < \infty$, for all $v \in \mathcal{V}$.
\end{assumption*}
\noindent Assumption REG reflect standard regularity conditions, as described in \citet{hoderleinmammen2007}. The only substantive modification above is that I take the conditions to hold conditional on each reporting function type $V_i=v$.

\subsection{Derivatives of the response distribution in terms of causal responses} \label{sec:singler}
Let $P(R_{i} \le r|x,w):=P(R_{i} \le r|X_i=x,W_i=w)$ denote the observed distribution of responses $R_i$ given values $x$ of treatments $X_i$ and $w$ of the control variables $W_i$. For brevity, I will often use this type of shorthand in long expressions.
\begin{theorem}[] \label{propflow}
	Assume MONO and EXOG hold REG$_j$ holds for a $j \in \{1,\dots, J\}$. Then:
	$$\partial_{x_j} P(R_i \le r|x,w) = -\mathbbm{E}\left\{\left.{f_H(\tau_{V_i}(r)|x,V_i,w)} \cdot \mathbbm{E}\left[\partial_{x_j} h(x,U_i)|H_i=\tau_{V_i}(r),x,V_i,w\right]\right|W_i=w\right\}$$
\end{theorem}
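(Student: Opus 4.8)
The plan is to first use the threshold characterization of Lemma \ref{prophonest} to rewrite the event $\{R_i \le r\}$ as $\{H_i \le \tau_{V_i}(r)\}$, and then to differentiate the resulting conditional probability through two nested layers of conditioning (on $V_i$, and then on the level of $H_i$). Concretely, Lemma \ref{prophonest} gives $P(R_i \le r \mid X_i = x) = \mathbbm{E}\left[F_{H|XV}(\tau_{V_i}(r) \mid x, V_i) \mid X_i = x\right]$, where $F_{H|XV}(\cdot \mid x, v)$ is the conditional CDF corresponding to the density $f(\cdot \mid x, v)$. The role of EXOG is to ensure that, as $x_j$ varies with the remaining coordinates of $x$ held fixed, neither the conditional law of $V_i$ given $X_i = x$ (first bullet of EXOG) nor the conditional law of $U_i$ given $(X_i = x, V_i)$ (second bullet) moves; hence the only dependence of the expression on $x_j$ enters through the structural function $h(x, \cdot)$ inside $F_{H|XV}$. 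This is what licenses pushing $\partial_{x_j}$ inside the outer expectation over $V_i$ and treating the inner object as the CDF of the transformed variable $h(x, U_i)$ with the law of $U_i$ frozen.

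The heart of the argument is the identity, holding at each fixed $v$,
\begin{equation*}
\partial_{x_j} F_{H|XV}(\tau_v(r) \mid x, v) = - f(\tau_v(r) \mid x, v)\cdot \mathbbm{E}\left[\partial_{x_j} h(x, U_i) \mid H_i = \tau_v(r), X_i = x, V_i = v\right].
\end{equation*}
I would establish this from the difference quotient. Writing $t = \tau_v(r)$ and using that the law of $U_i$ given $(X_i = x, V_i = v)$ does not depend on $x_j$,
\begin{equation*}
F_{H|XV}(t \mid x + \epsilon e_j, v) - F_{H|XV}(t \mid x, v) = \mathbbm{E}\left[\mathbbm{1}(h(x + \epsilon e_j, U_i) \le t) - \mathbbm{1}(h(x, U_i) \le t)\mid X_i = x, V_i = v\right].
\end{equation*}
The difference of indicators is nonzero only for those $U_i$ whose outcome $H_i = h(x, U_i)$ lies within an $O(\epsilon)$ window of $t$ set by the individual marginal effect $\partial_{x_j} h(x, U_i)$: for a first-order expansion $h(x + \epsilon e_j, U_i) \approx H_i + \epsilon\, \partial_{x_j} h(x, U_i)$, the threshold $t$ is crossed precisely on the event that $H_i$ lies between $t - \epsilon\, \partial_{x_j} h(x, U_i)$ and $t$, with a sign governed by that of the marginal effect. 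Dividing by $\epsilon$, the limit is expressed through the joint density of $(H_i, \partial_{x_j} h(x, U_i))$ evaluated with first argument $t$ — exactly the object $f_{H,\partial_{x_j} h \mid XV}$ assumed to exist in REG — and collapses to $-\int h' f_{H, \partial_{x_j} h \mid XV}(t, h' \mid x, v)\, dh' = - f(t \mid x, v)\,\mathbbm{E}[\partial_{x_j} h(x, U_i) \mid H_i = t, x, v]$.

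The main obstacle is the rigorous justification of this limit: differentiating through a discontinuous indicator requires controlling the crossing event uniformly, and this is where the REG conditions do the work. The domination bound $f_{H, \partial_{x_j} h \mid XV}(t, h' \mid x, v) \le c(h')$ with $\int c(h') |h'|\, dh' < \infty$ provides an integrable envelope so that dominated convergence applies to the inner difference quotient, while the bound $\partial_{x_j} Q_{H|XV}(\alpha \mid x, v) \le M$ — which, by the chain rule for $F_{H|XV}(Q_{H|XV}(\alpha \mid x,v)\mid x,v)=\alpha$, equals the inner conditional expectation $\mathbbm{E}[\partial_{x_j} h \mid H_i = t, x, v]$ at the matching rank $\alpha = F_{H|XV}(t \mid x, v)$ — furnishes a uniform-in-$v$ envelope $|\partial_{x_j} F_{H|XV}(t \mid x, v)| \le M\, f(t \mid x, v)$ that lets me interchange $\partial_{x_j}$ with the outer expectation over $V_i$ by a second application of dominated convergence.

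Assembling the two layers then yields
\begin{equation*}
\partial_{x_j} P(R_i \le r \mid X_i = x) = -\,\mathbbm{E}\left[f(\tau_{V_i}(r) \mid x, V_i)\cdot \mathbbm{E}\left[\partial_{x_j} h(x, U_i) \mid H_i = \tau_{V_i}(r), x, V_i\right]\,\Big|\, X_i = x\right],
\end{equation*}
which is the claimed expression. I would take care throughout to keep the left-continuity convention of HONEST aligned with the $\le$ events so that the threshold $\tau_v(r)$ enters the density at the correct point, and to verify that the outer expectation of $f(\tau_{V_i}(r)\mid x,V_i)$ against $M$ is finite so that the interchange is valid.
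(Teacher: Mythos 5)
Your proposal is correct and shares the overall skeleton of the paper's proof: both begin by applying Lemma \ref{prophonest} together with the first part of EXOG to write $P(R_i \le r|X_i=x)$ as a mixture over $V_i$ of the conditional CDFs $F_{H|XV}(\tau_v(r)|x,v)$, both pass $\partial_{x_j}$ inside that mixture using dominated convergence under REG, and both reduce the theorem to the single identity $\partial_{x_j}F_{H|XV}(\tau_v(r)|x,v) = -f(\tau_v(r)|x,v)\cdot\mathbbm{E}\left[\partial_{x_j}h(x,U_i)\,\middle|\,H_i=\tau_v(r),x,v\right]$. Where you genuinely diverge is in how that identity is established. The paper totally differentiates $Q_{H|XV}(F_{H|XV}(h|x,v)|x,v)=h$ to convert the CDF derivative into a quantile derivative, and then invokes the theorem of \citet{hoderleinmammen2007} (this is where the second part of EXOG enters) to equate $\partial_{x_j}Q_{H|XV}$ with the local average structural derivative. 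You instead re-derive this equality from first principles via the difference quotient, isolating the crossing event in which $H_i$ lies between $\tau_v(r)-\epsilon\,\partial_{x_j}h(x,U_i)$ and $\tau_v(r)$ and passing to the limit against the joint density $f_{H,\partial_{x_j}h|XV}$. Your route is in effect a self-contained proof of the Hoderlein--Mammen step---it formalizes exactly the ``flow over the threshold'' picture the paper offers as intuition for Theorem \ref{propflow} and connects to the flow-density expression of \citet{kasy2022}---and it buys transparency about which REG condition does what: the envelope $c(h')$ with $\int c(h')|h'|\,dh'<\infty$ dominates the inner difference quotient, while the quantile-derivative bound $M$, together with finiteness of $\mathbbm{E}[f(\tau_{V_i}(r)|x,V_i)|X_i=x]$ (which you rightly flag), licenses the outer interchange over $V_i$. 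The one point to tighten in a full write-up is the first-order expansion $h(x+\epsilon e_j,U_i)\approx H_i+\epsilon\,\partial_{x_j}h(x,U_i)$: controlling the $o(\epsilon)$ remainder uniformly enough for the crossing-event computation is precisely the technical content the paper outsources to the cited theorem, so a complete version of your argument would either impose a dominated-remainder condition or quote the same result. As a proof strategy, however, it is sound and delivers exactly the paper's estimand.
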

\noindent Theorem \ref{propflow} shows that the derivative of $P(R_i \le r|X_i=x,W_i=w)$ with respect to changes in $x_j$ provides a positively-weighted linear combination of the causal response in $H$ due to $X_j$: ``marginal'' causal effects $\partial_{x_j} h(x,U_i)$ due to a small change in $X_j$. The proof of Theorem \ref{propflow} relates the derivative of the conditional CDF of $R$ to a mixture of (infeasible) quantile regressions that condition on response type $V_i$ (Lemma \ref{lemma:hdist}), and then makes use of a connection between quantile regressions and local average structural derivatives \citep{hoderleinmammen2007,sasaki_2015}. As an intermediate step in establishing Theorem \ref{propflow}, Lemma \ref{lemma:hdist} in Appendix \ref{sec:proofs} shows establishes the connection between $\partial_{x_j} P(R_i \le r|x,w)$ and the conditional quantiles of $H$ given $X$.\\

\noindent \textit{Example: }Theorem \ref{propflow} generalizes the well-known formula for ``marginal effects'' in the probit model: $\partial_{x_j}P(R_i=1|X_i=x) = \sigma^{-1}\phi(x^T \beta/\sigma)\cdot \beta_j$, where $\phi$ is the standard normal probability density function. In the probit model, $v$ is degenerate and the single threshold $\tau_v(0)=0$, while $h(x,u) = x^T \beta + u$ and $H_i|X_i=x \sim \mathcal{N}(x^T\beta, \sigma^2)$. Thus, $f_H(\tau(0)|x)=f_H(0|x)=1/\sigma\cdot \phi(-x'\beta/ \sigma)=\phi(x'\beta)$.\\

\textbf{Normalization: }  It is well-known that $\beta$ in the probit model is only identified up to an overall scale normalization, often achieved by fixing the variance of the error distribution $\sigma^2=1$. Similarly, we lack from Theorem \ref{propflow} the ability to pin down the overall scale of derivatives of the structural function $\partial_{x_j}h(x,U_i)$. The inner expectation in Theorem \ref{propflow} (indicated by square brackets [ ]) is over heterogeneity in causal effects $U_i$, while the outer expectation (indicated by curly brackets \{ \}) is over heterogeneity $V_i$ in reporting functions. Expanding this second expectation out, we have
\begin{equation} \label{eq:propflow}
	\partial_{x_j} P(R_i \le r|x,w) =-\int dF_{V|W}(v|w) \cdot f_H(\tau_{v}(r)|x,v,w) \cdot \mathbbm{E}\left[\partial_{x_j} h(x,U_i)|H_i=\tau_{v}(r),x,v,w\right]
\end{equation} 
The weights $dF_{V|W}(v|w) \cdot f_H(\tau_{v}(r)|x,v,w)$ that multiply the conditional expectation do not necessarily integrate to one---indeed all that we can say about $\mathbbm{E}[f_H(\tau_{V_i}(r)|x,V_i,w)|W_i=w]=\int dF_{V|W}(v|w) \cdot f_H(\tau_{v}(r)|x,v,w)$ is that it is positive. However, considering the ratio of \textit{two} derivatives cancels out the dependence on this unknown scale:
\begin{align} \label{eqprobratio}
	\frac{\partial_{x_2} P(R_{i}\le r|x,w)}{\partial_{x_1} P(R_{i}\le r|x,w)} &= \frac{\mathbbm{E}\left\{\omega_r(x,V_i,w)\cdot \mathbbm{E}\left[\partial_{x_2} h(x,U_i)|H_i=\tau_v(r),x,v,w\right]|W_i=w\right\}}{\mathbbm{E}\left\{\omega_r(x,V_i,w)\cdot \mathbbm{E}\left[\partial_{x_1} h(x,U_i)|H_i=\tau_v(r),x,v,w\right]|W_i=w\right\}}
\end{align}
where $\omega_r(x,v,w):= f_H(\tau_v(r)|x,v,w)/\mathbbm{E}[f_H(\tau_{V_i}(r)|x,V_i,w)|W_i=w]$. The function $\omega_r$ yields weights that are positive and integrate to one, i.e. $\mathbbm{E}[\omega_r(x,V_i)|X_i=x,W_i=w]=1$. To contrast this with the positive but non-normalized integration measure that appears in (\ref{eq:propflow}), I refer to weights such as the $\omega_r$ appearing in (\ref{eqprobratio}) as ``convex''. Note that the convex weight applied to each group characterized by $H_i=\tau_v(r),X_i=x,V_i=v,W_i=w$ is exactly the same in both the numerator and denominator of (\ref{eqprobratio}). Eq. \eqref{eqprobratio} can be seen as a ratio $\tilde{\beta}_2/\tilde{\beta}_1$ of two parameters of the form $\tilde{\beta}_j=\mathbbm{E}[\rho_i \cdot \partial_{x_j} h(X_i,U_i)]$ described in Section \ref{sec:params}, where $\rho_i$ picks out individuals with $H_i$ close to $\tau_{V_i}(r)$ (and $X_i=x,W_i=w$).\footnote{\label{fn:borel}In particular, let $\rho_i:=\rho(U_i,V_i,X_i,V_i)=\frac{\mathbbm{1}\left\{h(x,U_i) \in B_\epsilon^1(\tau_{V_i}(r)), X_i \in B_\epsilon^J(x), W_i \in B_\epsilon^{d_W}(w)\right\}}{\mathbbm{E}\left[\mathbbm{1}\left\{h(x,U_i) \in B_\epsilon^1(\tau_{V_i}(r)), X_i \in B_\epsilon^J(x), W_i \in B_\epsilon^{d_W}(w)\right\}\right]}$, where $B_\epsilon^d(p)$ denotes an open ball of radius $\epsilon$ centered around $p \in \mathbbm{R}^d$, e.g. $B_\epsilon^1(p) = (p-\epsilon,p+\epsilon)$. Then consider the limit of $\tilde{\beta}_j$ as $\epsilon \rightarrow 0$. See the end of the proof of Theorem \ref{propflow} for more details about this limit.}

Note that the practice sometimes seen in applied work of reporting standardized or ``beta'' coefficients (in which each regressor $X_j$ is normalized against it's standard deviation) would break this important property of (\ref{eqprobratio}). In that case, the ratio of the total weights appearing in the numerator and denominator would become $sd(X_{2})/sd(X_{1})$ rather than unity. By contrast, rescaling regression coefficients only by the standard deviation of the \textit{outcome} $R_i$ leaves \eqref{eqprobratio} unchanged.\\

\textbf{Intuition for Theorem \ref{propflow}: } By Eq. (\ref{eq:propflow}), the ``weight'' in the observable $\partial_{x_j} P(R_i \le r|x,w)$ placed on an individual with happiness close to $\tau_v(r)$ is positive and proportional to $dF_{V|W}(v|w) \cdot f_H(\tau_{v}(r)|x,v,w)$. Figure \ref{figflow} provides intuition for this particular weighting.

\begin{figure}[H]
	\begin{center} \vspace{.2cm}
		\includegraphics[height=2in]{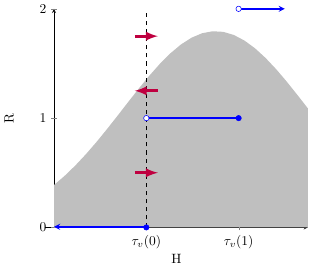}	\quad \quad \quad \quad \quad
		\includegraphics[height=2in]{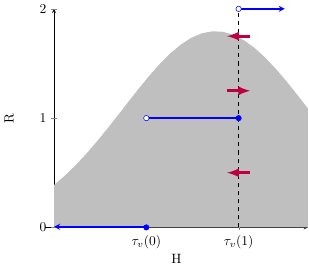}	
		\caption{Intuition for Theorem \ref{propflow}: the derivative of $P(R_i\le r|X_i=x)$ with respect to $x_j$ captures the ``flow'' of individuals over threshold $\tau_v(r)$ due to a small change in $x_j$. Left: $\partial_{x_j} P(R_i\le 1|X_i=x)$ captures flows over $\tau_v(0)$. Right: $\partial_{x_j} P(R_i\le 1|X_i=x)$ captures flows over $\tau_v(1)$. The gray shaded curve in the background depicts the density of $H_i$.} \label{figflow}
	\end{center}
\end{figure}
Suppose for simplicity there are no controls $w$. By the law of iterated expectations, we can write $\partial_{x_j} P(R_i \le r|X_i=x)$ as a weighted average of $\partial_{x_j} P(R_i \le r|X_i=x,V_i=v)$ across the various reporting functions $v$ in the population. For a given $v$, $\partial_{x_j} P(R_i \le r|X_i=x,V_i=v)$ captures the ``flow'' of individuals over the threshold $\tau_{v}(r)$ due to a small change in $x_j$, in one direction or the other. Some of these individuals can have negative effects: $\partial_{x_j} h(x,U_i)<0$, denoted by arrows to the left in Figure \ref{figflow}. Others can have positive effects $\partial_{x_j} h(x,U_i)>0$, indicated by rightward arrows in Figure \ref{figflow}. The net effect captured by $\partial_{x_j} P(R_i \le r|X_i=x,V_i=v)$ depends on the average derivative $\mathbbm{E}\left[\partial_{x_j} h(x,U_i)|H_i=\tau_{v}(r),x,v\right]$ local to the threshold. Since the derivative $\partial_{x_j}$ considers an infinitesimal change in $X$, any such ``flow'' over the threshold requires a positive density there: $f_H(\tau_{v}(r)|x,v)>0$.\footnote{The quantity $f_H(h|x,v) \cdot \mathbbm{E}\left[\partial_{x_j} h(x,U_i)|H_i=h,x,v\right]$ at a given $h$ is sometimes referred to as a ``flow density'', and appears in \cite{kasy2022}, \citet{goff2022} and in the physics of fluids, where it arises from the conservation of mass.}

While Theorem \ref{propflow} is specific to a fixed value of $X_i=x$ (and $W_i=w$), Corollary \ref{corr:avgderivative} to come shows that averaging back over the distribution of $X_i,Z_i$ yields a simpler formula for the average derivative: $\mathbbm{E}[\partial_{x_j} P(R_i \le r|X_i,W_i)] = -f_{H-\tau_{V}(r)}(0) \cdot \mathbbm{E}\left[\partial_{x_j} h(X_i,U_i)|H_i=\tau_{V_i}(r)\right]$. This again captures a an average causal response among respondents who are located at their individual-specific threshold $\tau_{V_i}(r)$, up to a non-identified but positive scale factor. The estimand of Theorem \ref{propflow} is a more disaggregated parameter, representing a more fundamental identification result.

I refer to individuals with $H_i=\tau_{V_i}(r)$ for some $r$ as ``marginal'', or ``indifferent'' between response categories. Theorem \ref{propflow} shows that local derivatives of the distribution of $R_i$ conditional on $X_i$ and $R_i$ only average causal effects among these marginal respondents. These marginal respondents averaged over in the RHS of Theorem \ref{propflow} cannot be individually identified, since neither $H_i$ nor $\tau_{V_i}(r)$ are observed for a given $i$. However, I show in Appendix \ref{sec:characterizing} that if the sign of causal effects is assumed to be common across individuals, average characteristics of the marginal respondents can be identified (Section \ref{sec:empirical} provides an implementation). Appendix \ref{app:helphurt} shows that reporting function heterogeneity can have a counter-intuitive benefit: if the heterogeneous thresholds $f_{V_i}(r)$ are so spread out that they are approximately uniform across the support of $H_i$, then $\partial_{x_j}P(R_i \le r|x,w)$ is proportional to $\mathbbm{E}\left[\partial_{x_j} h(x,U_i)|X_i=x,W_i=w\right]$, which averages over both the marginal and infra-marginal respondents having $X_i=x$ and $W_i=w$.

\subsection{Implications of Theorem \ref{propflow} for mean regression at a point} \label{sec:ratios}
Beyond the case of binary survey questions, researchers do not typically estimate regressions of response the CDF evaluated at a fixed category $r$, as contemplated by Theorem \ref{propflow}. However, the result allows us to study the more common practice of modeling the conditional mean of $R_i$ given $X_i$. To see this, suppose that $\mathcal{R}$ consists of integers $\{0,1, \dots, \bar{R}\}$ for some $\bar{R}$. Note that the following identity holds for all $i$:
\begin{equation} \label{eqidentity}
	R_i = \sum_{r=1}^{\bar{R}} \mathbbm{1}(r \le R_i)=\sum_{r=0}^{\bar{R}-1} \mathbbm{1}(r < R_i)
\end{equation}
From this it then follows that for any $x$: $\mathbbm{E}[R_i|X_i=x] = \sum_{r =0}^{\bar{R}-1} P(r<R_i|X_i=x)=\bar{R}-\sum_{r =0}^{\bar{R}-1} P(R_i \le r|X_i=x)$. Then, applying Theorem \ref{propflow}:
\begin{align} 
	&\partial_{x_j} \mathbbm{E}[R_i|X_i=x,W_i=w] \nonumber \\
	& \hspace{1cm} = \int dF_{V|W}(v|w) \cdot \sum_{r =0}^{\bar{R}-1} f_H(\tau_{v}(r)|x,v,w) \cdot \mathbbm{E}\left[\partial_{x_j} h(x,U_i)|H_i=\tau_{v}(r),x,v,w\right] \label{eq:expflow}
\end{align}
For brevity, I use the shorthand $\sum_r$ for the definite sum $\sum_{r =0}^{\bar{R}-1}$.  Collecting (\ref{eq:expflow}) across all continuous regressors, we can summarize as:
\begin{corollary} \label{propnabla}
	Under the assumptions of Theorem \ref{propflow}, if $\mathcal{R}=\{0,1, \dots, \bar{R}\}$ then for each $j$ that satisfies $REG_j$:
	\begin{equation*}
		\partial_{x_j} \mathbbm{E}[R_i|x,w] = \mathbbm{E}\left\{\left.\sum_{r} f_H(\tau_{V_i}(r)|x,V_i,w) \cdot \mathbbm{E}\left[\partial_{x_j}  h(x,U_i)|H_i=\tau_{V_i}(r),x,V_i,w\right]\right|W_i=w\right\}
	\end{equation*}
\end{corollary}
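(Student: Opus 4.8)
The plan is to obtain the corollary as a componentwise application of Theorem \ref{propflow}, combined with the indicator identity already recorded in Eq.~(\ref{eqidentity}). First I would rewrite the conditional mean using the decomposition $R_i = \sum_{r=0}^{\bar{R}-1}\mathbbm{1}(r < R_i)$, so that $\mathbbm{E}[R_i|X_i=x] = \bar{R} - \sum_{r=0}^{\bar{R}-1} P(R_i \le r|X_i=x)$. This is exactly the expression derived in the text just before Eq.~(\ref{eq:expflow}), and it converts the object of interest into a finite sum of conditional CDFs of $R_i$, each of which is governed by Theorem \ref{propflow}.

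Next, for a fixed continuous component $x_j$, I would differentiate term by term. Because the sum runs over finitely many values $r \in \{0,\dots,\bar{R}-1\}$ and $\bar{R}$ is constant in $x$, the partial derivative passes through the sum with no interchange-of-limits issue: $\partial_{x_j}\mathbbm{E}[R_i|X_i=x] = -\sum_r \partial_{x_j} P(R_i \le r|X_i=x)$. Substituting the conclusion of Theorem \ref{propflow} into each summand, the leading minus sign cancels the minus sign in that theorem, which reproduces Eq.~(\ref{eq:expflow}).

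Finally, to reach the stated gradient form I would push the finite sum $\sum_r$ inside the outer expectation $\mathbbm{E}\{\,\cdot\mid X_i=x\}$ (legitimate by linearity of conditional expectation for a finite sum, with integrability of each term supplied by the last bullet of Assumption REG), and then stack the resulting scalar identities for the continuous components into the vector operator $\nabla_x$. This step is immediate once one observes that the conditioning events ($X_i=x$ outside, and $H_i=\tau_{V_i}(r)$ inside) do not depend on the summation index beyond the threshold value itself, so only the factor $f(\tau_{V_i}(r)|x,V_i)\cdot\mathbbm{E}[\nabla_x h(x,U_i)\mid\cdots]$ varies with $r$.

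I do not expect a genuine analytic obstacle here; the result is essentially bookkeeping layered on Theorem \ref{propflow}. The one point worth stating carefully is a scope condition rather than a difficulty: since Theorem \ref{propflow} is invoked separately for each continuous regressor, Assumption REG must be understood to hold for every continuously distributed component entering $\nabla_x$, and $\nabla_x$ itself is the gradient taken over exactly those components.
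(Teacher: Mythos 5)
Your proposal is correct and is essentially the paper's own argument: the text preceding the corollary derives $\mathbbm{E}[R_i|X_i=x]=\bar{R}-\sum_{r=0}^{\bar{R}-1}P(R_i\le r|X_i=x)$ from the identity in Eq.~(\ref{eqidentity}), differentiates the finite sum term by term, applies Theorem \ref{propflow} to each summand (with the signs cancelling as you note), and then collects the scalar identities across continuous regressors into $\nabla_x$. Your closing remark that REG must hold componentwise for every regressor entering $\nabla_x$ matches the paper's own scope qualification following the corollary.
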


\noindent \textit{Remark:} if instead of the integers, the researcher associates alternative numerical values $r_j$ with the ordered responses $\mathcal{R}$, where $r_0 < r_1 < \dots < r_R$, then instead of (\ref{eqidentity}) we have $R_i = r_0+\sum_{j=0}^{R-1} (r_{j+1}-r_j)\cdot \mathbbm{1}(r_j < R_i)$. The above results thus generalize with $f_H(\tau_{v}(r_j)|x,v,w)$ upweighted by the positive factor $(r_{j+1}-r_{j})$. This implies that different labeling schemes could be used in estimation to achieve different weightings over local causal effects, though the most information one could learn is by simply repeating Theorem \ref{propflow}, one $r$ at a time. When considering mean regression, using integer category labels is natural in that it weighs each threshold in proportion to its occupancy, as demonstrated in Eq. \eqref{eqexpratio} below. Corollary \ref{propnabla} also holds unchanged so long as $\mathcal{R}$ reflects any set of consecutive integers, with $\Sigma_r$ denoting a sum over all but the highest integer in $\mathcal{R}$.\\

\begin{figure}[H]
	\begin{center} \vspace{.2cm}
		\includegraphics[height=2in]{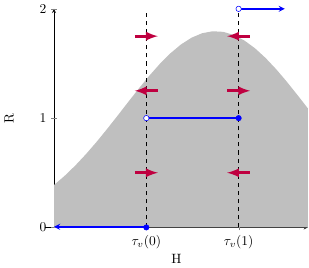}
		\caption{With $\mathcal{R}$ integers, the derivative of $\mathbbm{E}[R_i\le r|X_i=x]$ with respect to $x_j$ captures the ``flow'' of individuals over either threshold $\tau_v(r)$ due to a small change in $x_j$. Compare to left and right panels of Fig \ref{figflow}.} \label{figflow2}
	\end{center}
\end{figure}

Another way to express Corollary \ref{propnabla} is to let $\tau_v  := \{\tau_v(r)\}_{r \in \mathcal{R}}$ denote the set of all thresholds for individuals with reporting function $v$. Then for each $j$ that satisfies $REG_j$:
\begin{equation} \label{eqtauV}
	\partial_{x_j} \mathbbm{E}[R_i|x,w]  = \mathbbm{E}\left[\rho(x,V_i,w) \cdot  \partial_{x_j} h(x,U_i)|H_i \in \tau_{V_i}, X_i=x,W_i=w\right]
\end{equation}
where $\rho(x,v,w):=\sum_r f_H(\tau_v(r)|x,v,w)$ we assume that $\lim_{h\rightarrow \infty}f_H(h|x,v,w)=0$ and that for each $v \in \mathcal{V}$, the $\tau_v(r)$ are all distinct.\footnote{Let $A$ and $B$ be random variables, where $B$ is absolutely continuous and let $\mathcal{B}$ be a finite set of distinct values. Assume that $\mathbbm{E}[A|B=b]$ is continuous in $b$, so we can then define $\mathbbm{E}[A|B \in \mathcal{B}]$ simply as $\lim_{\epsilon \downarrow 0} \mathbbm{E}[A|\min_{b \in \mathcal{B}} |B-b| < \epsilon]$ which works out to $\sum_{b \in \mathcal{B}} \frac{f_{B}(b)}{\sum_{b' \in \mathcal{B}}f_{B}(b')} \cdot \mathbbm{E}[A|B=b]$.} This expression shows that $ \partial_{x_j} \mathbbm{E}[R_i|x,w]$ averages over all units having $X_i=x$ (and $W_i=w$), located at \textit{any} of their individual-specific happiness thresholds, with (positive but not convex) weights $\rho(X_i,V_i,X_i)$. 

As in (\ref{eqprobratio}), if we consider the ratio of such regression derivatives for two continuous treatment variables $X_1$ and $X_2$, the ``total'' weight cancels out:
\begin{align} \label{eqexpratio}
	\frac{\partial_{x_2}\mathbbm{E}[R_{i}|X_i=x,W_i=w]}{\partial_{x_1}\mathbbm{E}[R_{i}|X_i=x,W_i=w]} &= \frac{\tilde{\beta}_2(x,w)}{\tilde{\beta}_1(x,w)}
\end{align}
where $\tilde{\beta}_j(x,w):=\mathbbm{E}\left[\omega(x,V_i,w)\cdot \partial_{x_j} h(x,U_i)|H_i \in \tau_{V_i},X_i=x,W_i=w\right]$ and $ \omega(x,v,w):=\rho(x,v,w)/\mathbbm{E}\left[\rho(x,V_i,w)|H_i \in \tau_{V_i},X_i=x,w\right]$. The quantity $\tilde{\beta}_j(x,w)$ is thus a convex combination of causal effects with respect to $X_j$ across individuals in the population, in the sense described in Sections \ref{sec:params} and \ref{sec:singler}. Note that the weights $\omega$ appearing in the numerator and denominator are the same for any $(x,w)$.

\subsection{Averaging local regression derivatives back over $X$ and $W$} \label{app:avgderivatives}

Theorem \ref{propflow} shows how observable derivatives $\partial_{x_j}P(R_i \le r|x,w)$ and $\partial_{x_j}\mathbbm{E}[R_i|x,w]$ can be interpreted in terms of average causal effects among individuals $i$ who are marginal between response categories and for whom $X_i=x$, $W_i=w$. These local derivatives at a given $x,w$ are identified by a non-parametric regression of $\mathbbm{1}(R_i\le r)$ or $R_i$ on $X_i$ and $W_i$, respectively.

Corollary \ref{corr:avgderivative} shows furthermore that if one averages these local regression derivatives across the observable distribution of $X_i,W_i$, one obtains an average causal effect that remains ``local'' to individuals who are on the margin between response categories, but is no longer specific to individuals having a particular value of $X_i$ and $W_i$:
\begin{corollary} \label{corr:avgderivative}
	Under the assumptions of Theorem \ref{propflow}: $$\mathbbm{E}[\partial_{x_j} P(R_i \le r|X_i,W_i)] = -f_{H-\tau_{V}(r)}(0) \cdot \mathbbm{E}\left[\partial_{x_j} h(X_i,U_i)|H_i=\tau_{V_i}(r)\right]$$
\end{corollary}
\noindent The density $f_{H-\tau_{V}(r)}(0)$ is not identified by the data, but it does not depend on $j$. Thus we have as in (\ref{eqprobratio}) that this unidentified density cancels out in ratios, i.e. $\frac{\mathbbm{E}[\partial_{x_2} P(R_i \le r|X_i,W_i)]}{\mathbbm{E}[\partial_{x_1} P(R_i \le r|X_i,W_i)]} = \frac{\mathbbm{E}\left[\partial_{x_2} h(X_i,U_i)|H_i=\tau_{V_i}(r)\right]}{\mathbbm{E}\left[\partial_{x_1} h(X_i,U_i)|H_i=\tau_{V_i}(r)\right]}$, and if $\mathcal{R} = \{0,1, \dots \bar{R}\}$ we have similarly for the mean that:
\begin{align} \label{eq:avgexpratio}
	\frac{\mathbbm{E}[\partial_{x_2}\mathbbm{E}[R_{i}|X_i,W_i]]}{\mathbbm{E}[\partial_{x_1}\mathbbm{E}[R_{i}|X_i,W_i]]} 
	&= \frac{\tilde{\beta}_2}{\tilde{\beta}_1} = \frac{\mathbbm{E}[\partial_{x_2} h(X_i,U_i)|H_i \in \tau_{V_i}]}{\mathbbm{E}[\partial_{x_1} h(X_i,U_i)|H_i \in \tau_{V_i}]}
\end{align}
where $\tilde{\beta}_j:=\sum_{r =0}^{\bar{R}-1} \omega_r\cdot \mathbbm{E}[\partial_{x_j} h(X_i,U_i)|H_i = \tau_{V_i}(r)]$ and $ \omega_r:=\frac{f_{H-\tau_{V}(r)}(0)}{\sum_{r' =0}^{\bar{R}-1}f_{H-\tau_{V}(r')}(0)}$ are positive weights that sum to one. Response thresholds $r$ that are more ``populated'' in the sense that $f_{H-\tau_{V}(r)}(0)$ is larger, receive higher weight, in such a way that $\tilde{\beta}_j = \mathbbm{E}[\partial_{x_j} h(X_i,U_i)|H_i \in \tau_{V_i}]$.\footnote{i.e. $\sum_{r' =0}^{\bar{R}-1} f_{H-\tau_{V}(r)}(0) \cdot \mathbbm{E}\left[\partial_{x_j} h(X_i,U_i)|H_i=\tau_{V_i}(r)\right] = \rho \cdot \mathbbm{E}[\partial_{x_j} h(X_i,U_i)|H_i \in \tau_{V_i}]$, with $\rho = \sum_{r' =0}^{\bar{R}-1}f_{H-\tau_{V}(r')}(0)$.} In the case with no control variables $W_i$, we then obtain Eq. \eqref{eq:compare} stated in the introduction.

If the conditional mean function $\mathbbm{E}[R_i|X_i=x,W_i=w]$ happens to be linear in $x$ and $w$, then the quantity $\mathbbm{E}[\partial_{x_j}\mathbbm{E}[R_{i}|X_i,W_i]]$ on the LHS of Eq. \eqref{eq:avgexpratio} and \eqref{eq:mrsavg} is simply the coefficient $\gamma_j$ from the OLS regression
\begin{equation} \label{eq:ols}
	R_i = \gamma_1 X_{1i}+\gamma_2 X_{2i} + \dots + \gamma_J X_{Ji} + \lambda^TW_i+\epsilon_i
\end{equation}
where the vector of control variables $W$ includes a constant. While specification \ref{eq:ols} is the standard in empirical practice, Appendix \ref{sec:regression} discusses the implications of this practice when the functional form is misspecified, i.e. when $\mathbbm{E}[R_i|X_i=x,W_i=w]$ is not actually linear but the researcher proceeds in estimating \eqref{eq:ols} anyways. Such issues are generally a concern when selection on observables identification arguments are implemented via linear regression, and are not specific to the use of subjective ordinal outcome variables.

\subsection{Marginal rates of substitution} \label{sec:weaksepmain}
Equation \eqref{eqexpratio} shows that a ratio of regression derivatives at $X=x$ identifies the ratio of a conditional average causal effect of $X_2$ on $H$ to the same conditional average of the effect of $X_1$ on $H$, among individuals for whom $X=x$. Similarly, \eqref{eq:avgexpratio} shows that a ratio of average regression derivatives (or simply OLS regression coefficients in the case of a linear conditional mean) has a similar interpretation, but averaging over $x$. \citet{luttmer2005} and \citet{ditellaetal} represent two prominent empirical studies in which the relative magnitude of regression coefficients (with subjective well-being as the dependent variable) is interpreted as yielding the implicit trade-off between two goods.

In general, a ratio of averages is not the same as an average of ratios, and thus neither \eqref{eqexpratio} nor  \eqref{eq:avgexpratio} immediately yields an average marginal rate of substitution parameter $\widetilde{MRS}$ of the form introduced in Section \ref{sec:params}. For example, Equation \eqref{eq:avgexpratio} does not immediately yield an average of $MRS_i(X_i)$. A sufficient condition however is that $Cov\left(\left.MRS_i(X_i), \partial_{x_1} h(X_i,U_i)\right|H_i \in \tau_{V_i}\right) = 0$. In this case
\begin{equation} \label{eq:mrsavg}
	\frac{\mathbbm{E}[\partial_{x_2}\mathbbm{E}[R_i|X_i,W_i]]}{\mathbbm{E}[\partial_{x_1}\mathbbm{E}[R_i|X_i,W_i]]}= \mathbbm{E}\left[\left.MRS_i(X_i)\right|H_i \in \tau_{V_i}\right]
\end{equation}
capturing the average marginal rate of substitution between $X_1$ and $X_2$, among respondents who are marginal at any threshold, i.e. $H_i = \tau_{V_i}(r)$ for some $r$. This covariance condition says that heterogeneity in $MRS_i(X_i)$ across individuals is uncorrelated with heterogeneity in the magnitude of the marginal effect of $X_1$ alone. 

Proposition \ref{propmrs} of Appendix \ref{sec:mrs} also shows how a similar result to Eq. \eqref{eq:mrsavg} holds using the estimand $\frac{\partial_{x_2}\mathbbm{E}[R_{i}|X_i=x,W_i=w]}{\partial_{x_1}\mathbbm{E}[R_{i}|X_i=x,W_i=w]}$ of \eqref{eqexpratio}, which fixes a value of $x$. In this case note that variation in $MRS_i(x)$ conditional on $H_i$ and $X_i$ comes from $U_i$ alone. Thus if $U_i$ is degenerate conditional on $X_i$ and the value of $H_i$ (e.g. if $h$ is invertible in a scalar $u$), then the needed covariance condition holds automatically. Proposition \ref{propmrs} generalizes this with a covariance restriction similar to the above, which conditions on $X_i$ and $W_i$. Proposition \ref{propmrs} also shows how one can obtain a one-sided bound on the RHS of \eqref{eq:mrsavg} by relaxing this to assume a known \textit{sign} of the correlation between $MRS_i(X_i)$ and $\partial_{x_1} h(X_i,U_i)$. 

Below I consider two particular cases in which assuming some natural structure for the function $h(x,u)$ is sufficient to interpret $\frac{\partial_{x_2}\mathbbm{E}[R_{i}|X_i=x,W_i=w]}{\partial_{x_1}\mathbbm{E}[R_{i}|X_i=x,W_i=w]}$ as a marginal rate of substitution, and the ratio of averages of such derivatives as an average of such marginal rates of substitution across individuals.

\subsubsection{The weakly separable special case} \label{sec:weaksepmain}
We say that the potential outcomes function $h(x,u)$ is weakly separable between $x$ and $u$ when
\begin{equation} \label{eqweaksep}
	h(x,u) = \texttt{h}(g(x),u),
\end{equation}
i.e. some function $g: \mathcal{X}\rightarrow \mathbbm{R}$ aggregates over the treatments $X$ into a scalar $g(x)$, which is then combined through $\texttt{h}$ with heterogeneity $u$ in a way that may or may not be additively separable. For example, a linear model $h(x,u) = x^T\beta + u$ sets $g(x) = X^T\beta$ and $\texttt{h}(g,u) = g+u$, combining a linear causal response with an additive scalar error term. For any $g(x)$, the additively separable form $\texttt{h}(g,u) = g+u$ is equivalent to imposing that the causal effect of changing between any treatment values $x$ and $x'$ is the same for all individuals. This assumption is implicit in much empirical work employing regressions with subjective outcome data.

When (\ref{eqweaksep}) holds, Eq. \eqref{eqtauV} yields
\begin{align} \label{eqexpratio2}
	\frac{\partial_{x_2}\mathbbm{E}[R_{i}|x,w]}{\partial_{x_1}\mathbbm{E}[R_{i}|x,w]} &= \frac{\textcolor{purple}{ \int dF_{V|W}(v|w) \cdot \rho(x,v,w)} \cdot \partial_{x_2} g(x) \cdot \textcolor{purple}{\mathbbm{E}\left[ \partial_{g} \texttt{h}(g(x),U_i)|H_i\in \tau_v,x,v,w\right]}}{\textcolor{purple}{\int dF_{V|W}(v|w) \cdot \rho(x,v,w)} \cdot \partial_{x_1} g(x) \cdot \textcolor{purple}{\mathbbm{E}\left[ \partial_{g} \texttt{h}(g(x),U_i)|H_i\in \tau_{v},x,v,w\right]}} \nonumber \\ & =\frac{\partial_{x_2} g(x)}{\partial_{x_1} g(x)}
\end{align}
where the highlighted factors cancel out in the numerator and denominator, since the derivatives of $g(x)$ do not depend on $v$. In the weakly separable model the marginal rate of substitution between $X_1$ and $X_2$ when $X=x$ is the same for all individuals and equal to $\frac{\partial_{x_2} g(x)}{\partial_{x_1} g(x)}$. Thus the ratio of local regression derivatives at a point $X=x$ identifies \textit{the} MRS at that point $x$.\footnote{Weakly separable models for ordered response in which $u$ is a scalar have been studied by \citet{matzkin1994}. Appendix \ref{sec:weaksep} discusses how Eq. \eqref{eqexpratio2}, which does not require $u$ to be a scalar, relates to that body of work.} A testable implication of the weakly separable model is therefore that the LHS of Eq. \eqref{eqexpratio2} does not depend on the value of the controls $w$.\footnote{Another testable implication is that $\partial_{x_2}P(R_{i} \le r|x,w)/\partial_{x_1}P(R_{i} \le r|x,w)$ does not depend on $r$. Appendix \ref{sec:testing} applies this insight to test the assumption that $X$ does not directly affect reporting functions. \citet{DHAULTFOEUILLE2024105075} consider testable restrictions of a similar weakly-separable structure in certain IV models, while relaxing exclusion.}

\subsubsection{The quasilinear special case} \label{sec:quasilinear}

Suppose that $h$ represents preferences and for each individual, these preferences are quasi-linear in $X_1$ such that $h(x,u) = x_1 + h(x_2, \dots x_J,u)$.\footnote{Any preference relation that is quasi-linear in $X_1$, continuous, and strictly ``increasing'' in $X_1$ admits of a representation $h(x,u) = x_1 + h(x_2, \dots x_J,u)$ \citet{rubinstein}. In the other direction, we can see that if $h(x,u)$ is a representation of quasi-linear preferences that is strictly increasing in $x_1$ and differentiable in $x_2 \dots x_J$, then it must be the case that $h(x,u) = \phi(x_1 + h(x_2, \dots x_J,u))$ where $\phi_u$ is a strictly increasing function.}  Quasi-linear utility is widely used in economics to simplify welfare analysis (see e.g. \citealt{FENG2025105927}).\footnote{Although quasi-linearity is a property of \textit{preferences}, we can think of $h(x,u) = x_1 + h(x_2, \dots x_J,u)$ as a cardinalization of these ordinal preferences (which will generally differ by individual $i$) in which a unit increase in $x_1$ has equal weight for any individual in population expectations involving $h$. Under this normalization, $\mathbbm{E}[h(x,U_i)]$ for example  represents a utilitarian social welfare function whose value is unaffected by transfers of $x_1$ between individuals.} When the elements of $X$ are priced, quasilinearity in $X_1$ can also deliver demand functions for the remaining goods that do not depend on income (see e.g. \citealt{qljet}).

In this case the condition $Cov\left(\left.MRS_i(X_i), \partial_{x_1} h(X_i,U_i)\right|H_i \in \tau_{V_i}\right) = 0$ is satisfied trivially, because $\partial_1 h(x,U_i) = 1$ with probability one. Thus we have that $\frac{\mathbbm{E}[\partial_{x_2}\mathbbm{E}[R_i|X_i,W_i]]}{\mathbbm{E}[\partial_{x_1}\mathbbm{E}[R_i|X_i,W_i]]}= \mathbbm{E}\left[\left.MRS_i(X_i)\right|H_i \in \tau_{V_i}\right]$. Further $\frac{\partial_{x_2}\mathbbm{E}[R_i|X_i=x,W_i=w]}{\partial_{x_1}\mathbbm{E}[R_i|X_i=x,W_i=w]} = \mathbbm{E}\left[\left.MRS_i(x)\right|H_i \in \tau_{V_i}, X_i=x,W_i=w\right]$, which can be derived as a special case of Proposition \ref{propmrs} given in Appendix \ref{sec:mrs}.


\section{Empirical illustration} \label{sec:empirical}
In a prominent paper, \citet{luttmer2005} studies the effects of absolute and relative income on life satisfaction, investigating whether individuals draw on social comparisons in assessing their personal well-being. To do so, \citet{luttmer2005} merges data from the 1987 and 1992 waves of the U.S. National Survey of Families and Households (NSFH)---which contains a question on self-reported satisfaction with life along with self-reported socioeconomic data---to information on the local average earnings for a given household constructed from the Current Population Study and the 1990 Census.

In the notation of the present paper, let $i$ denote the primary respondent of an individual household in the NSFH. We consider two treatment variables $X_i=(X_{1i},X_{2i})$, where $X_{1i}$ denotes the log of household income for $i$'s household (self-reported in the NSFH) and $X_{2i}$ denotes average predicted log earnings in the Public Use Microdata Area (PUMA) in which $i$ lives. The construction of this variable is described in detail in \citet{luttmer2005}. $R_i$ denotes $i$'s response to the question ``taking things all together, how would you say things are these days?'', reported on a one to seven Likert-type scale in which a response of one indicates ``very unhappy'' and seven ``very happy''.\footnote{The intermediate values 2-6 do not have associated descriptions in the survey (e.g. ``somewhat happy''), and are labeled by integers only.} Finally, $W_i$ represents a vector of control variables that includes home size/type/value, employment, education, gender, marriage, race religion, state fixed effects and PUMA characteristics.

I follow \citet{luttmer2005} and focus on households in which the main respondent was married in both waves of the NSFH. Details on the sample construction are provided in Appendix \ref{sec:empiricalmore}. While I let $i$ denote the main respondent for a household, the primary specification of \citet{luttmer2005} averages values of $R_i$ and $W_i$ between the main respondent and their spouse, finding very similar results. I focus on the individual-level specification for two reasons: i) it affords a more straightforward interpretation through the lens of the results of this paper, given that the main respondent and their spouse may have different reporting functions; and ii) I explore departures from linear models, where averaging across observations within a household does not affect the functional form of the regression. Nevertheless, results with this averaging are provided in Appendix \ref{sec:spouse}.

\subsection{Basic result interpreted through the lens of Theorem \ref{propflow}}

The main results of \cite{luttmer2005} exploit a selection-on-observables strategy, estimating an OLS regression of $R_i$ on $X_i$ and $W_i$, i.e. Eq \eqref{eq:ols}:
\begin{equation} \label{eq:ols2var}
	R_i = \gamma_1 X_{1i}+\gamma_2 X_{2i} + \lambda^TW_i+\epsilon_i
\end{equation}
and ascribing a causal interpretation to the coefficients $\gamma_1$ and $\gamma_2$. Luttmer uses fixed effects regressions as well as data on movers between PUMAs to argue that selection due to neighborhood choice is not a major concern in this context. Luttmer further argues that individuals' definitions of ``very happy'' or ``very unhappy'' are not affected by $X$, by replicating the qualitative results with other outcome variables that are expected to be less prone to this threat. I refer the reader to sections IV.B and IV.C of \citet{luttmer2005} for details. These arguments motivate making Assumption EXOG in this context.

Luttmer finds that an increase in household earnings increases subjective well-being $\gamma_1 > 0$, while an increase in the earnings of one's neighbors decreases subjective well-being $\gamma_2 < 0$. This provides evidence that well-being is influenced not only by one's absolute income, but also one's relative income compared with the reference group of one's neighbors.\footnote{This finding has since been replicated using experimental variation in beliefs about relative income \citep{jansens}.} While \citet{luttmer2005} also reports estimates that instrument for own-income to overcome potential measurement error, I focus on magnitudes from the benchmark OLS regression \eqref{eq:ols2var}. In this specification, the positive coefficient on own income has about half the magnitude as the negative coefficient on PUMA (neighbors') income. That is, if one's PUMA were to go up by 1\%, one's own income would need to go up by about 2\% to leave the respondents' well-being unaffected.

\begin{table}[h!]
	\centering
	{
\def\sym#1{\ifmmode^{#1}\else\(^{#1}\)\fi}
\begin{tabular}{l*{5}{c}}
\hline\hline
          &\multicolumn{1}{c}{(1)}&\multicolumn{1}{c}{(2)}&\multicolumn{1}{c}{(3)}&\multicolumn{1}{c}{(4)}&\multicolumn{1}{c}{(5)}\\
          &\multicolumn{1}{c}{OLS}&\multicolumn{1}{c}{Luttmer Table 1}&\multicolumn{1}{c}{Semiparametric}&\multicolumn{1}{c}{OLS}&\multicolumn{1}{c}{Kernel}\\
\hline
Own ln income&   0.0877\sym{***}&    0.111\sym{***}&    0.122\sym{***}&   0.0446\sym{***}&    0.125\sym{***}\\
          &   (3.78)         &   (4.62)         &  (10.82)         &   (3.94)         &  (11.05)         \\
PUMA ln income&   -0.229\sym{**} &   -0.248\sym{**} &   -0.202\sym{**} &   -0.169\sym{**} &   -0.246\sym{***}\\
          &  (-2.73)         &  (-2.99)         &  (-3.28)         &  (-2.75)         &  (-4.01)         \\
\hline
Ratio PUMA/own&   -2.614         &   -2.234         &   -1.581         &   -3.792         &   -1.626         \\
se(ratio) &    1.160         &        .         &        .         &    1.534         &        .         \\
\hline    &                  &                  &                  &                  &                  \\
Controls  &        X         &        X         &        X         &                  &                  \\
Clustered se&        X         &        X         &                  &        X         &                  \\
\hline    &                  &                  &                  &                  &                  \\
Sample size&     7939         &     8023         &     7939         &     7939         &     7939         \\
\hline\hline
\multicolumn{6}{l}{\footnotesize \textit{t} statistics in parentheses}\\
\multicolumn{6}{l}{\footnotesize \sym{*} \(p<0.05\), \sym{**} \(p<0.01\), \sym{***} \(p<0.001\)}\\
\end{tabular}
}
\\ \vspace{.5cm}
	\caption{Replication of \citet{luttmer2005}'s results for the main respondent, and alternative non-linear estimators. Standard errors are clustered at the PUMA level unless otherwise noted. For semiparametric and non-parametric columns, the first two rows report average local derivatives, and ``Ratio'' measures the average ratio of local derivatives, cf. Eq. \eqref{eqexpratioavg2}---see Footnote \ref{fn:ses} for further details.} \label{table:mainself}
\end{table}

I confirm this finding qualitatively in Column (1) of Table \ref{table:mainself}. Column (2) reports the numerical results from Table 1 of \citet{luttmer2005} (main respondent column), in which $\hat{\gamma}_2/\hat{\gamma}_1 = -2.23$. In column (1) I implement regression \eqref{eq:ols2var} the publicly available NSFH data merged with the PUMA income variable constructed by Luttmer (the replication data construction is described in Appendix \ref{sec:empiricalmore}). I obtain similar results in both sign and magnitude, with $\hat{\gamma}_1 = 0.0877$ and $\hat{\gamma}_2 = -0.229$ for a ratio of $\hat{\gamma}_2/\hat{\gamma}_1 = -2.614$.\footnote{That I am not able to match the numerical results exactly is likely explained by the many choices involved in how exactly to define some of the control variables, or possible updates to the underlying NSFH data over the last two decades.} 

If regression \eqref{eq:ols2var} is correctly specified---that is $\mathbbm{E}[R_i|X_i,W_i]$ is indeed a linear function of $X_i$ and $W_i$---then Theorem \ref{propflow} implies via \eqref{eqexpratio} that the quantity $\tilde{\beta}_2(x,w)/\tilde{\beta}_1(x,w)$ is approximately constant over values $x$ of the treatment variables and $w$ of the control variables (and equal to $-2.614$), where recall that $\tilde{\beta}_j(x,w)$ is a weighted average of $\partial_{x_j} h(x,U_i)$ over individuals with $U_i$ such that their happiness is exactly at the threshold between two response categories when $X_i=x,W_i=w$. This is consistent for example with a structural function that takes the linear form $h(x,u) = x^T \beta+u$, in which case $\tilde{\beta}_2(x,w)/\tilde{\beta}_1(x,w) = \beta_2/\beta_1 = -2.614$. However, the magnitudes of $\beta_1$ and $\beta_2$ would not be identified separately, even with this strong functional form assumption about $h$. Appendix \ref{sec:empiricalmore} reports results in which $R_i$ and $W_i$ are constructed by averaging responses of the main respondent and those of their spouse, which are similar.

\subsection{Semi-parametric estimates and marginal rates of substitution}
If $\mathbbm{E}[R_i|X_i,W_i]$ is not linear in fact in $X_i$ and $W_i$, then OLS estimates of Eq. \eqref{eq:ols2var} are not guaranteed to be interpretable in terms of causal effects, even if the assumptions of Theorem \ref{propflow} do hold. Appendix \ref{sec:regression} discusses this issue generally, and in this section I discuss the robustness of the ratio $\gamma_2/\gamma_1 = -2.614$ to relaxing this functional form assumption.

Column (3) of Table \ref{table:mainself} employs a semi-parametric estimator following \citet{robinsonestimator} that assumes the partially linear form $\mathbbm{E}[R_i|X_i,W_i] = f(X_{1i},X_{2i})+\lambda^T W_i$, in which $\lambda$ is estimated by residualizing $R_i$ and each component of $W_i$ with respect to $X_i$, before performing a bivariate kernel regression of $R_i - \hat{\lambda}^T W_i$ on $X_i$ to estimate $f(\cdot, \cdot)$. In this specification the coefficients $\gamma_j$ from Eq. \eqref{eq:ols2var} are replaced with
$$\gamma_j(x) := \partial_{x_j}\mathbbm{E}[R_{i}|X_i=x,W_i=w]$$
which does not depend on $w$ owing to the additively separable structure of $\mathbbm{E}[R_i|X_i,W_i]$. This implies that the estimand $\gamma_j(x)$ can be interpreted as proportional to an average causal effect $\tilde{\beta}_j(x)$ that also does not depend on $w$.

The first two rows of Column (3) report $\gamma_j(X_i)$ averaged across the empirical distribution of $X_i$. These estimates of $\mathbbm{E}[\gamma_j(X_i)]$ are numerically fairly similar to the $\hat{\gamma}_j$ reported by \citet{luttmer2005} from the OLS specification \eqref{eq:ols2var}. These average derivatives appear to mask only minor non-linearity in $\mathbbm{E}[R_i|X_i,W_i]$ with respect to $X_i$. Dividing the first two rows yields an estimate of $-0.202/0.122 \approx -1.66$ for $\frac{\mathbbm{E}[\gamma_2(X_i)]}{\mathbbm{E}[\gamma_1(X_i)]} = \frac{\mathbbm{E}[\tilde{\beta}_2(X_i)]}{\mathbbm{E}[\tilde{\beta}_1(X_i)]}$, but when computing the ratio of regression derivatives evaluated at $X_i$ for each observation, and then averaging across the sample, one instead obtains a value of $\hat{\mathbbm{E}}[\hat{\gamma}_2(X_i)/\hat{\gamma}_1(X_i)] = -1.581$. This average ratio is reported in the row labeled ``Ratio'' of Column (3) and estimates 
\begin{align} \label{eqexpratioavg2}
	\mathbbm{E}\left[\frac{\gamma_2(X_i)}{\gamma_1(X_i)}\right]=\int dF_{X}(x) \cdot \frac{ \partial_{x_2}\mathbbm{E}[R_{i}|X_i=x,W_i]}{ \partial_{x_1}\mathbbm{E}[R_{i}|X_i=x,W_i]}&= \mathbbm{E}\left[\frac{\tilde{\beta}_2(X_i)}{\tilde{\beta}_1(X_i)}\right]
\end{align}
If we assume a weakly separable causal model $h(x,u) = \mathtt{h}(g(x),u)$, then this value of $-1.581$ in turn represents an estimate of
$\mathbbm{E}\left[\frac{\partial_{x_1}g(X_i)}{\partial_{x_2}g(X_i)}\right]$, the overall population mean of the marginal rate of substitution between own income and neighbors' income, which is in this model common among all individuals sharing a value of $X_i$.\footnote{If we instead make the assumptions of Appendix Proposition \ref{propmrs}, then $\mathbbm{E}\left[\frac{\gamma_2(X_i)}{\gamma_1(X_i)}\right]$ is equal to
	$\int dF_{X}(x) \cdot  \mathbbm{E}\left[\left.\frac{\partial_{x_1}h(x,U_i)}{\partial_{x_2}h(x,U_i)}\right|H_i \in \tau_{V_i}, X_i=x\right]$, using as well that $\mathbbm{E}\left[\left.\frac{\partial_{x_1}h(x,U_i)}{\partial_{x_2}h(x,U_i)}\right|H_i \in \tau_{V_i}, X_i=x,W_i\right]$ does not depend on $W_i$ if $\mathbbm{E}[R_{i}|X_i=x,W_i=w]$ is separable between $x$ and $w$.} This value is substantially smaller than the $-2.614$ reported in column (1), which assumes linear conditional means.

Column (5) of Table \ref{table:mainself} shows the PUMA/own ratio to be similar when the controls $W_i$ are omitted, and fully non-parametric regression of $R_i$ on $X_{1i}$ and $X_{2i}$ becomes feasible. For comparison, column (4) implements OLS with no controls. Taking column (5) as our estimate of $\mathbbm{E}[\tilde{\beta}_2(X_i)/\tilde{\beta}_1(X_i)]$ would avoid the functional form restriction that $\mathbbm{E}[R_i|X_i, W_i]$ be linear in $W_i$, but at the expense of requiring Assumption EXOG to hold without the control variables $W_i$.\footnote{\label{fn:ses}I report only the point estimate for the sample mean of $\hat{\gamma}_2(X_i)/\hat{\gamma}_1(X_i)$ in columns (3) and (5) of Table \ref{table:mainself}, as a bootstrap computation of standard errors would be computationally intensive in the case of (3) given the number of control variables $W_i$. Standard errors are computed for the sample means of $\hat{\gamma}_2(X_i)$ and $\hat{\gamma}_1(X_i)$ separately in (3) and (5), but neither are clustered at the PUMA level as the \texttt{npregress} command in Stata does not accommodate cluster robust inference.} The gap in the PUMA/own ratio between linear and nonlinear models is much greater without controls, suggesting that the control variables eliminate much of the non-linearity with respect to $X_i$ in the conditional mean of $R_i$.

\subsection{Decomposing mean effects by response category}

While we know by Corollary \ref{propnabla} that the regression derivatives reported in Table \ref{table:mainself} average over respondents who are on the margin between two adjacent response categories, we also know from Theorem \ref{propflow} that we can isolate causal effects for respondents that are on a single such margin $r$ and $r+1$, for some $r \in \{1, 2, \dots 6\}$.

Table \ref{table:eachthresholdself} reports coefficients from a linear probability model that takes, for a given $r$, the conditional expectation function $\mathbbm{E}[\mathbbm{1}(R_i \le r)|X_i=x,W_i=w] = P(R_i \le r|X_i=x,W_i=w)$ to be linear in $X_i$ and $W_i$, with coefficients $(\gamma_{1r},\gamma_{2r}, \lambda_r)$ specific to that response category $r$, i.e. $\mathbbm{1}(R_i \le r) = \gamma_{1r} X_{1i}+\gamma_{2r} X_{2i} + \lambda_r^TW_i+\epsilon_{ri}$ with $\mathbbm{E}[\epsilon_{ri}|X_i,W_i]=0$.

\begin{table}[H]
	\centering
	{
\def\sym#1{\ifmmode^{#1}\else\(^{#1}\)\fi}
\begin{tabular}{l*{6}{c}}
\hline\hline
          &\multicolumn{1}{c}{(1)}&\multicolumn{1}{c}{(2)}&\multicolumn{1}{c}{(3)}&\multicolumn{1}{c}{(4)}&\multicolumn{1}{c}{(5)}&\multicolumn{1}{c}{(6)}\\
          &\multicolumn{1}{c}{R$\le$1}&\multicolumn{1}{c}{R$\le$2}&\multicolumn{1}{c}{R$\le$3}&\multicolumn{1}{c}{R$\le$4}&\multicolumn{1}{c}{R$\le$5}&\multicolumn{1}{c}{R$\le$6}\\
\hline
Own ln income&-0.000281         &  0.00205         &  0.00829         &   0.0211\sym{**} &   0.0405\sym{***}&   0.0159\sym{*}  \\
          &  (-0.14)         &   (0.73)         &   (1.96)         &   (2.95)         &   (4.67)         &   (2.05)         \\
PUMA ln income& -0.00798         &  -0.0151         &  -0.0242         &  -0.0448         &  -0.0891\sym{**} &  -0.0480         \\
          &  (-1.16)         &  (-1.62)         &  (-1.69)         &  (-1.93)         &  (-2.75)         &  (-1.65)         \\
\hline
Ratio PUMA/own&    28.44         &   -7.360         &   -2.920         &   -2.121         &   -2.198         &   -3.009         \\
se(ratio) &    201.4         &    10.82         &    2.211         &    1.330         &    0.896         &    2.288         \\
Sample size&     7939         &     7939         &     7939         &     7939         &     7939         &     7939         \\
\hline\hline
\multicolumn{7}{l}{\footnotesize \textit{t} statistics in parentheses}\\
\multicolumn{7}{l}{\footnotesize \sym{*} \(p<0.05\), \sym{**} \(p<0.01\), \sym{***} \(p<0.001\)}\\
\end{tabular}
}
\\ \vspace{.5cm}
	\caption{Coefficients from a linear probability model for each response category. All regressions include the controls from Table \ref{table:mainself} and standard errors clustered by PUMA.} \label{table:eachthresholdself}
\end{table}

Table \ref{table:eachthresholdself} reveals that the sign of $\gamma_{1r}$ is positive for all $r$ when it is statistically significant, the sign of $\gamma_{2r}$ is consistently negative when it is statistically significant, and the ratio $\gamma_{2r}/\gamma_{1r}$ is never differs from the ``aggregate'' value of -2.614 recovered by mean regression in a statistically significant way.

The information in Table \ref{table:eachthresholdself} is further visualized in Figure \ref{fig:acrossr_self}. The top-left panel depicts the coefficients $\gamma_{1r}$ versus $r$. By Theorem \ref{propflow}, we know that if the linear model correctly captures the conditional mean function of $R_i$ given $X_i$ and $W_i$, then each $\gamma_{1r}$ captures a positively weighted aggregation of the marginal effect of own income $\partial_{x_1} h(x,U_i)$ across individuals whose $U_i$ put them on their individual-specific threshold between response categories $r$ and $r+1$.

\begin{figure}[H]
	\begin{center} \vspace{.2cm}
		\includegraphics[height=1.75in]{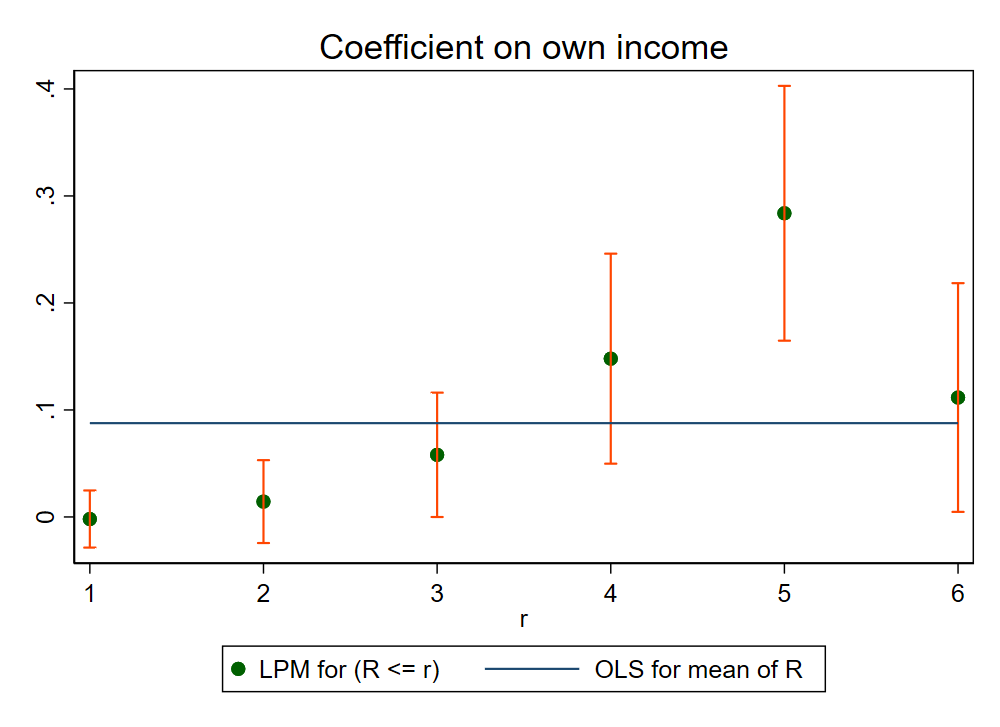}	\quad \quad \includegraphics[height=1.75in]{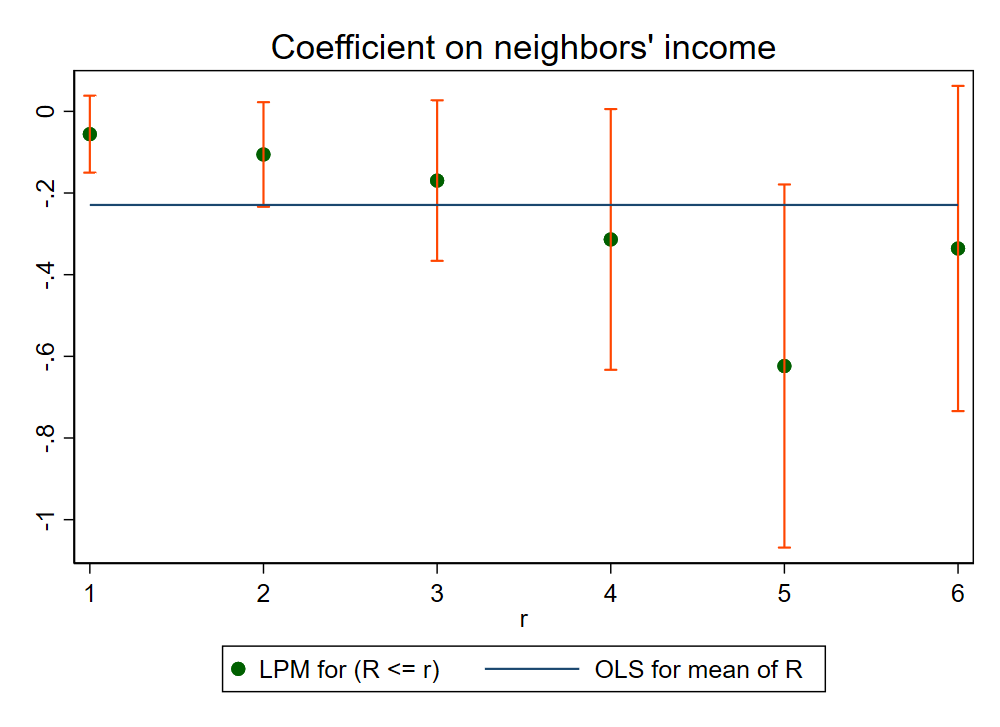} \\ \includegraphics[height=1.75in]{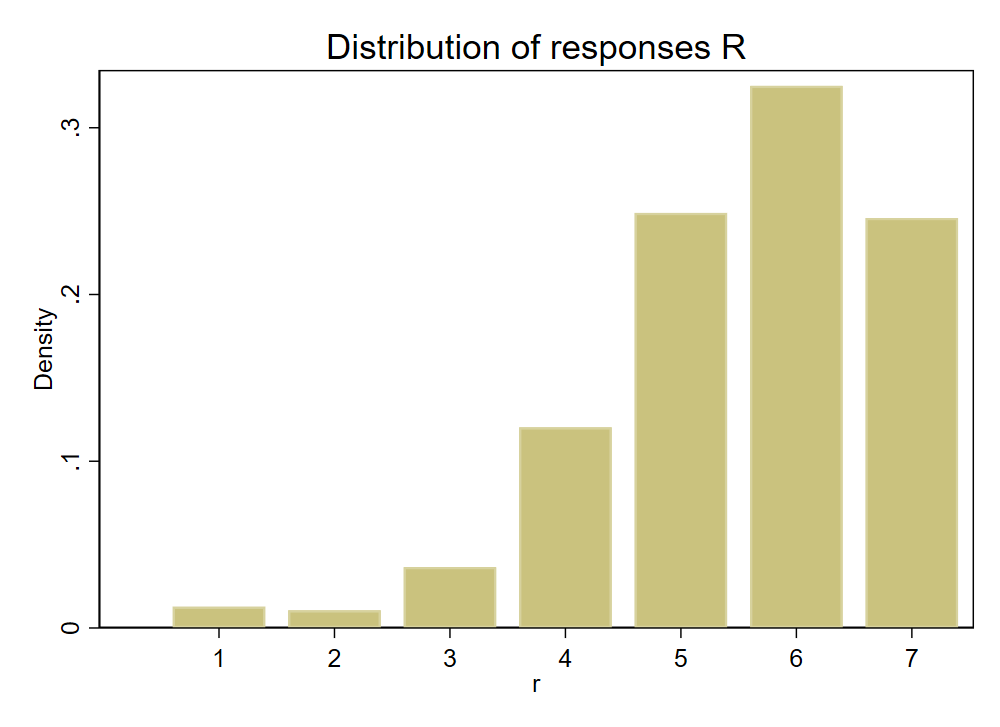}
		\quad \quad \includegraphics[height=1.75in]{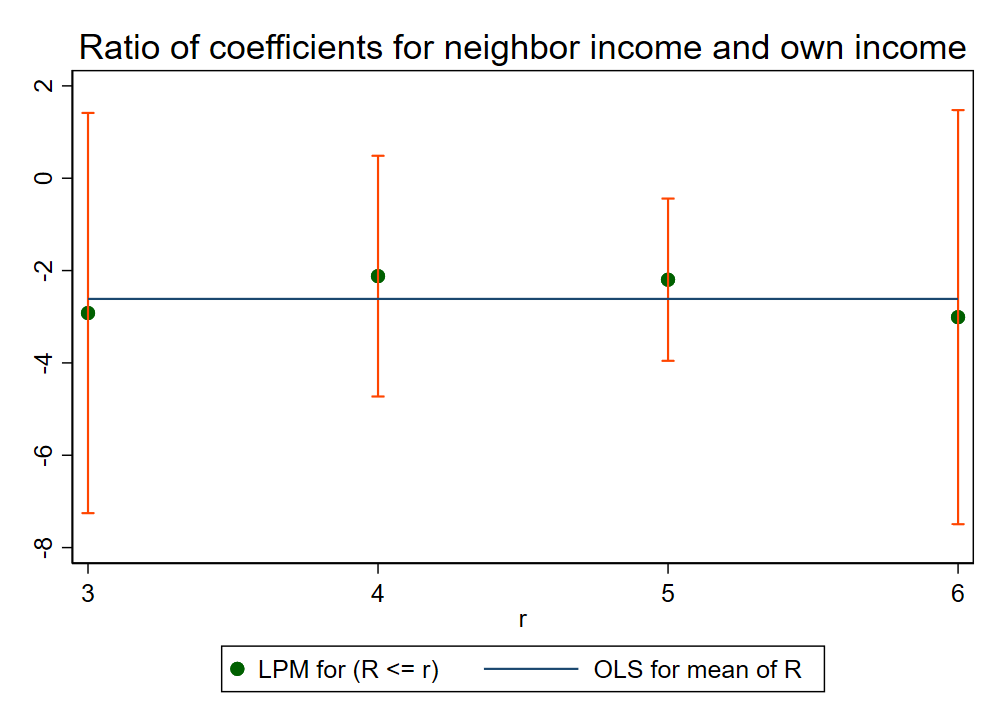}
		\caption{Visualization of the OLS estimates of $\gamma_{1r}$ (top-left), $\gamma_{2r}$ (top-right), the ratio $\gamma_{2r}/\gamma_{1r}$ (bottom-right) from the regression $\mathbbm{1}(R_i \le r) = \gamma_{1r} X_{1i}+\gamma_{2r} X_{2i} + \lambda_r^TW_i+\epsilon_{ri}$ for $r \in \{1, 2, \dots 6\}$, along with a histogram of the response categories $r\in \{1, \dots 7\}$. The horizontal line in the upper panels and bottom right panel depicts the corresponding value from mean regression (see Table \ref{table:mainself}).} \label{fig:acrossr_self}
	\end{center}
\end{figure}

The clear hump-shaped pattern across values of $r$ could be explained by heterogeneity in the mean causal effect among the individuals at each of the thresholds, or by differences in the density of individuals at that threshold. Suppose that this effect were a constant $\partial_{x_1} h(x,U_i) = \beta_1$ for all $i$. Then, Theorem \ref{propflow} shows that $\gamma_{1r}$ would be equal to $\beta_1 \cdot \mathbbm{E}[f_H(\tau_{V_i}(r)|X_i,V_i,W_i)]$ for each $r$.\footnote{This uses that since $\gamma_{1r}$ does not depend on $x$ or $w$, we must have that $\mathbbm{E}[f_H(\tau_{V_i}(r)|x,V_i,w)|W_i=w]=\mathbbm{E}[f_H(\tau_{V_i}(r)|x_i,V_i,W_i)]$ for all $x$ and $w$.} The quantity $\mathbbm{E}[f_H(\tau_{V_i}(r)|X_i,V_i,W_i)]$ is unobservable, but note that for any $r \in \{1, 2, \dots 6\}$ the observable probability $P(R_i=r) = P(R_i \le r)-P(R_i \le r-1)$ identifies the quantity
\begin{align*}
	&\mathbbm{E}[F_{H}(\tau_{V_i}(r)|X_i,V_i,W_i)-F_{H}(\tau_{V_i}(r-1)|X_i,V_i,W_i)]\\
	& \hspace{2.2in} \approx \mathbbm{E}\left[\{\tau_{V_i}(r)-\tau_{V_i}(r-1)\} \cdot f_H(\tau_{V_i}(r)|X_i,V_i,W_i) \right]
\end{align*}
where the approximation takes the density $f_H(\tau_{V_i}(r)|X_i,V_i,W_i)$ to be roughly constant on the interval $[\tau_{V_i}(r),\tau_{V_i}(r-1)]$. This will be a good approximation if that interval is small with high probability (i.e. in the limit of many categories), in which case $P(R_i=r)$ is roughly proportional to $\mathbbm{E}[f_H(\tau_{V_i}(r)|X_i,V_i,W_i)]$, if $\mathbbm{E}[\tau_{V_i}(r)-\tau_{V_i}(r-1)]$ does not vary much with $r$. The bottom-left panel of Figure \ref{fig:acrossr_self} depicts $P(R_i=r)$ and reveals that it does indeed capture the same basic pattern as $\gamma_{1r}$. 

Similarly, the negative values $\gamma_{2r}$ depicted in the top-right panel of Figure \ref{fig:acrossr_self} capture a positive aggregation of the marginal effect of PUMA income on $H$ with the weights $\mathbbm{E}[f_H(\tau_{V_i}(r)|X_i,V_i,W_i)]$. Again, the pattern of $\gamma_{2r}$ mirrors that of $P(R_i=r)$, which is consistent with a model in which this effect $\partial_{x_2} h(x,U_i)$ is captured by a single number $\beta_2$ for all $i$. Finally, we see in the bottom-right panel of Figure \ref{fig:acrossr_self} the observation made after Table \ref{table:eachthresholdself}, that the pattern cancels out and $\gamma_{2r}/\gamma_{1r}$ is roughly constant across $r$ (categories 1 and 2 are omitted due to being very imprecisely estimated). An F-test of equality of $\gamma_{2r}/\gamma_{1r}$ across all $r$ fails to reject (p-value: $0.98$).

Overall, the strong similarity in the shapes of the first three panels of Figure \ref{fig:acrossr_self} are suggestive that the differences in $\gamma_{1r}$ and $\gamma_{2r}$ are driven by the underlying latent density of happiness, than by heterogeneity in causal effects across the happiness distribution. This is consistent with a simple constant effects model in which $\gamma_{2r}/\gamma_{1r} = \beta_2/\beta_1$, or more generally by a weakly-separable model of the form $h(x,u) = \mathtt{h}(g(x),u)$.

\subsection{Who are the marginal respondents?}
Figure \ref{fig:attributes_self} compares the gender balance and education of respondents that on the margin between categories $r$ and $r+1$, for each $r$, with that of the population as a whole. These comparisons are based on Proposition \ref{prop:chars} in Appendix \ref{sec:characterizing}, which leverages additional assumptions to identify averages of an attribute $A_i$ among marginal respondents.

The upper panels of Figure \ref{fig:attributes_self} report estimates of $\mathbbm{E}[A_i|H_i = \tau_{V_i}(r)]$, under an assumption that $\{X_{i} \indep (A_i,U_i,V_i)\}|W_i$ and imposing the additional restriction that the sign of the effect of household income on happiness is the same for all units (not that this assumption is \textit{not} imposed for the main results). The implementation further takes the conditional expectation of $A_i \cdot \mathbbm{1}(R_i\le r)$ to be linear in $x$ and $w$ and assumes a linear probability model for $\mathbbm{1}(R_i\le r)$ (see Appendix \ref{sec:characterizing} for details).

In particular, the top left panel displays 95\% confidence intervals for $\mathbbm{E}[A_i|H_i = \tau_{V_i}(r)]$ versus $r \in \{2,3 \dots 6\}$ when $A_i$ is taken to be an indicator for the main respondent $i$ attending college.\footnote{Confidence intervals for $r=1$ are dropped in all panels of Figure \ref{fig:attributes_self} for visibility, as the standard error is much larger than for other $r$.} The horizontal line (orange) depicts the overall sample mean (an estimate of $\mathbbm{E}[A_i]$). For none of the margins $r$ can we reject the null hypothesis that the average rate of college among marginal respondents for that category is the same as the overall population mean. A similar result appears in the top-right panel, in which this calculation is repeated with $A_i$ equal to $i$'s years of education. There is some weak evidence that individuals on the margin on categories four and five out of seven have fewer years of education than the average. This is consistent with the finding of \citet{CPBL} that lower-education individuals are more likely to ``bunch'' at focal points in the response space $\mathcal{R}$, for example the midpoint (which is indeed 4 on the 1 to 7 scale).

\begin{figure}[H]
	\begin{center} \vspace{.2cm}	
		\includegraphics[height=1.75in]{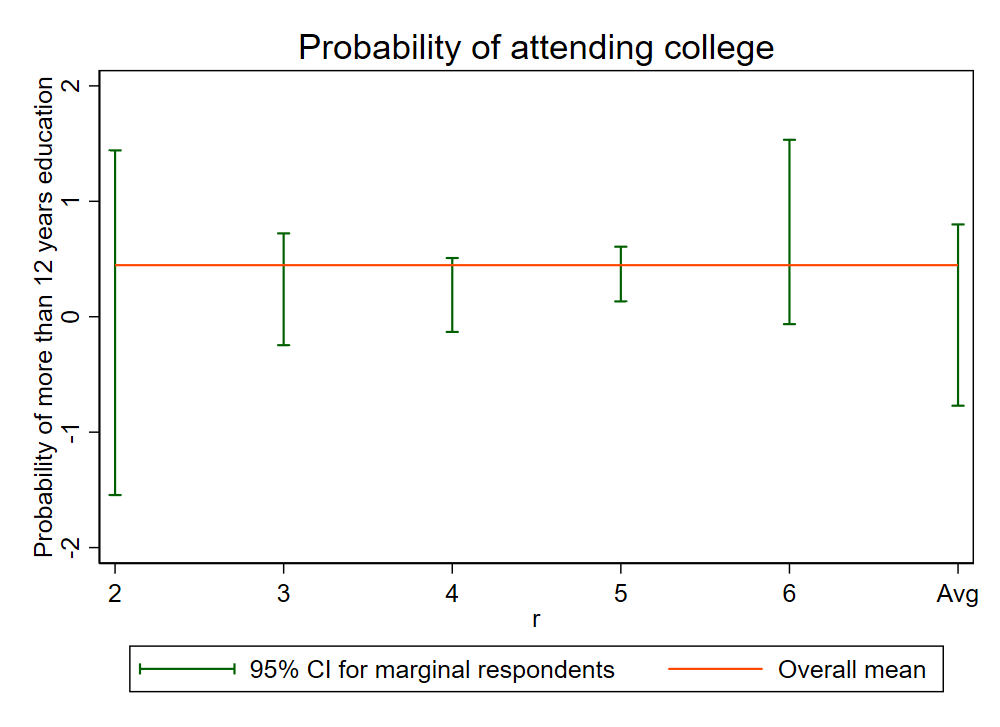} \quad \quad	
		\includegraphics[height=1.75in]{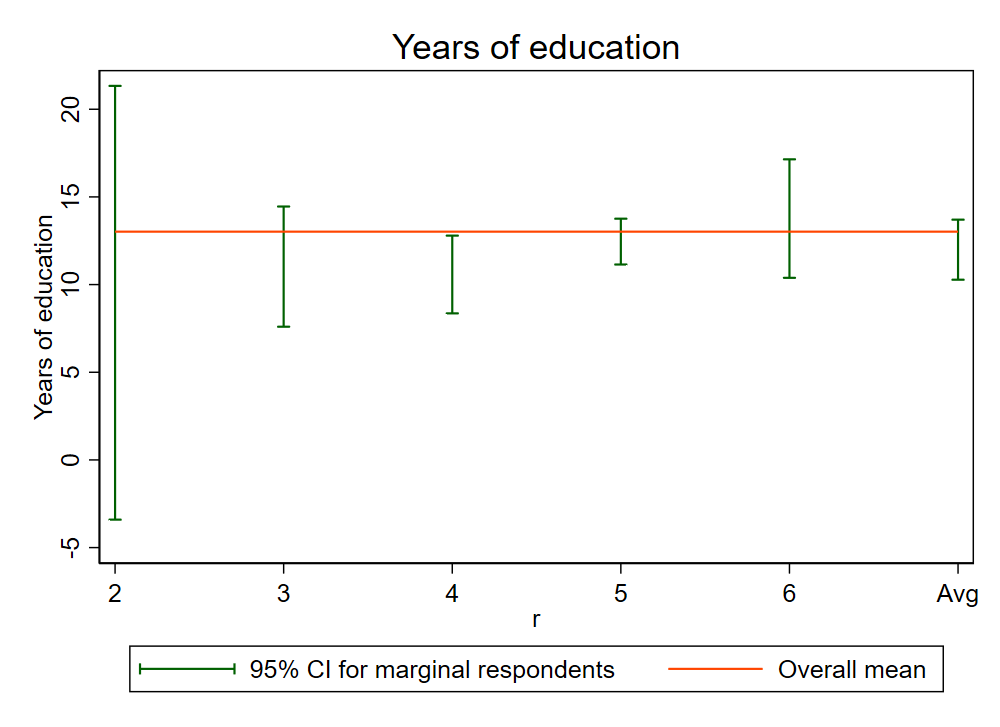}	
		\includegraphics[height=1.75in]{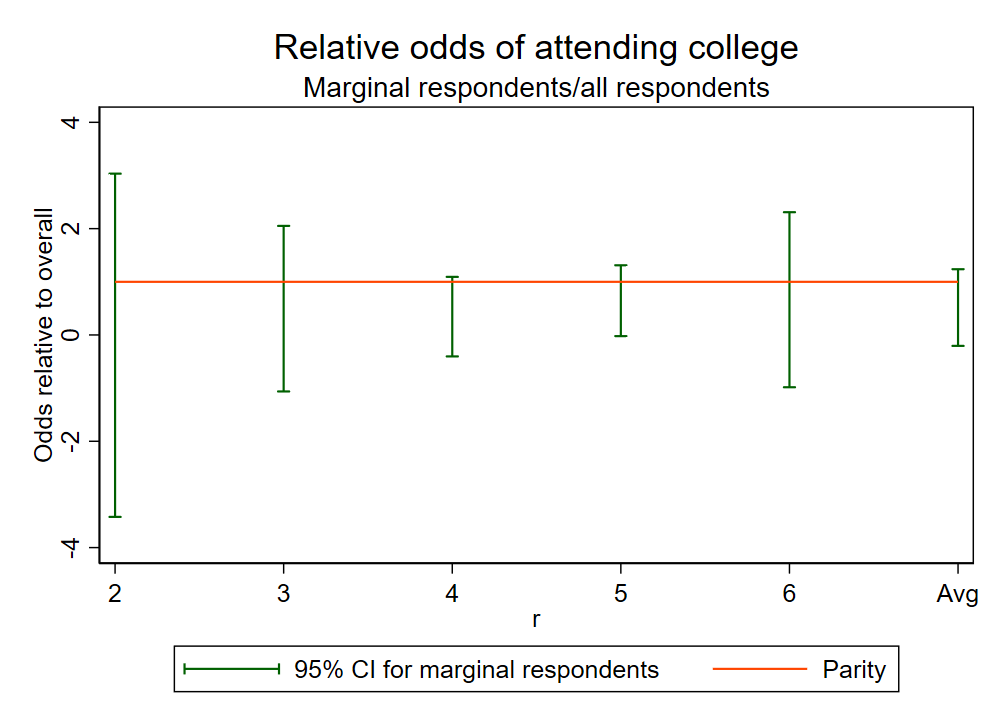} \quad \quad	\includegraphics[height=1.75in]{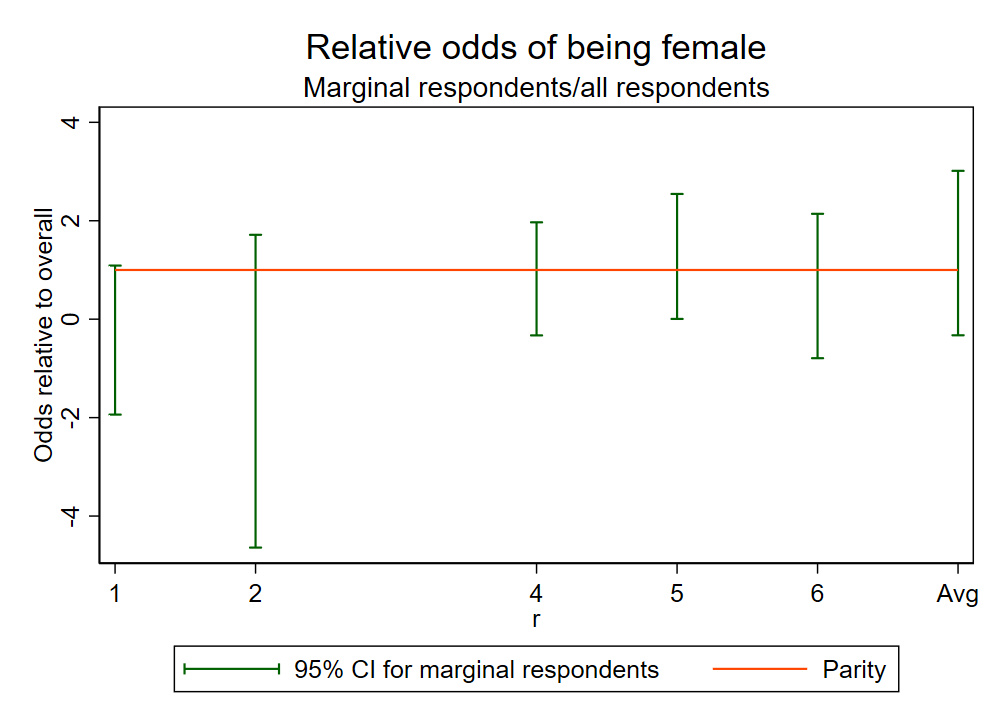}\\
		\caption{Attributes of marginal respondents, main respondent data. See text for details.} \label{fig:attributes_self}
	\end{center}
\end{figure}

The bottom panels of Figure \ref{fig:attributes_self} exploit the identification of the relative odds for a binary $A_i$, comparing marginal respondents to the population as a whole:
$$\frac{P(A_i=1|H_i = \tau_{V_i}(r),x,w)/P(A_i=0|H_i = \tau_{V_i}(r),x,w)}{P(A_i=1|x,w)/P(A_i=0|x,w)},$$
See Eq. \eqref{eq:relativeodds} in Appendix \ref{sec:characterizing}. This result makes use of the weaker assumption in Proposition \ref{prop:chars} that only assumes that a binary $A_i$ would represent a valid control variable to add to $W_i$. In this case all that is required is to implement regressions of $\mathbbm{1}(R_i \le r)$ on $x$ and $w$ separately by subsample defined by $A_i$.

The bottom panels report 95\% confidence intervals for this ratio of odds, with the horizontal line (orange) depicting unity (equal odds in both populations). The bottom-right panel sets $A_i$ to be an indicator for the main respondent being female, and compares the relative odds of being female among marginal respondents to the population overall. None are statistically different from unity. For clarity, the confidence interval for $r=3$, which is very large, is not shown. 

Overall, the results of this section indicate there is some weak evidence that marginal respondents have somewhat less education than the overall population, for the central category in the response space. No differences are detected across gender. The rightmost confidence interval in each panel of Figure \ref{fig:attributes_self}, labeled ``Avg'', replaces indicators for $R_i \le r$ with $R_i$ to approximate an ``average'' comparison considering all of the response categories at once. In all cases, we do not find any evidence that the marginal respondents overall differ from the infra-marginal respondents in education or gender.

\subsection{Summary of empirical results}
The results of the proceeding sections suggest that the results of the OLS regression implemented by \citet{luttmer2005} can indeed be interpreted as being informative about the causal effects of own and PUMA income on subjective well-being. If one is willing to assume that these two treatments are as-good-as-randomly assigned the sense that $\{(X_{1j},X_{2j}) \indep (U_i,V_i)\}|W_i$ (implying EXOG), then the coefficients $\gamma_1$ and $\gamma_2$ from Eq. \eqref{eq:ols2var} have causal interpretations under weak and fully non-parametric assumptions about the latent heterogeneity underlying causal effects and response functions. Using OLS does require that $\mathbbm{E}[R_i|X_i,W_i]$ is indeed linear in $X_i$ and $W_i$, but this caveat is \textit{not} a product of the ultimate outcome of interest $H_i$ being unobserved: rather, a correct specification of conditional mean functions is important for any selection-on-observables research design with control variables or setting in which multiple treatment variables are considered (see Appendix \ref{sec:regression} for details). Nonetheless, the quantitative estimates are similar if assuming a semi-parametric regression function that is partially linear in the controls, or a fully non-parametric regression that drops the controls altogether.

Although the magnitudes of $\gamma_1$ and $\gamma_2$ are not directly interpretable in terms of causal effects, the coefficient of $\gamma_2/\gamma_1$ is. This supports the main conclusion of \citet{luttmer2005} that relative-earnings considerations are indeed important to subjective well-being. A ratio of roughly -2 suggests that a 1\% income increase to one's neighbors' average income would require a roughly 2\% increase to one's own household income, in order to leave individuals equally happy overall. This interpretation in terms of a marginal rate of substitution is justified if one assumes a potential outcomes model that is weakly-separable in the treatments. The bottom-right panel of Figure \ref{fig:acrossr_self} finds that the ratio $\gamma_{2r}/\gamma_{1r}$ of category-specific regression coefficients is relatively constant over $r$, a key implication of a weakly-separable model. This supports the interpretation of $\gamma_2/\gamma_1$ as a marginal rate of substitution; another sufficient condition for this interpretation would be to assume utility to be quasi-linear in the log of own-income, as we saw in Section \ref{sec:quasilinear}.

Finally, I find that although the local regression derivatives of $R_i$ at a specific value of $X_i=x$ only capture causal effects among individuals that are indifferent between two response categories when $X_i=x$, these marginal respondents do not appear to be substantially different than infra-marginal respondents in terms of education or gender. 

\section{What is identified from discrete variation in $X$} \label{secdiscrete}
The analysis thus far has considered what is identified by examining how the conditional distribution of $R$ changes over infinitesimal differences in $X$. This section now considers taking discrete differences in treatment values (nesting the results thus far in the limit of small changes). I find that differences in the distribution of $R$ over discrete changes in $X$ can again be interpreted causally, and identify the sign of causal effects if those effects have the same sign across units. However, unlike the case with continuous treatments, magnitudes cannot be quantitatively compared between regressors absent further assumptions. Discrete treatment variables are prevalent in practice, so this highlights a limitation of, e.g. experiments with two treatment arms and subjective outcomes. 

Consider any two fixed values $x$ and $x'$, and define $\Delta_i := h(x',U_i)-h(x,U_i)$ to be the ``treatment effect'' of moving from $X_i=x$ to $X_i=x'$ for unit $i$. Further, let $f_H(y|\Delta,x,v,w)$ denote the density of $H_i$ conditional on $\Delta_i=\Delta$, $X_i=x$,$V_i=v$ and $W_i=w$. As before, let $P(R_{i} \le r|x,w)$ denote a shorthand for $P(R_{i} \le r|X_i=x,W_i=w)$. The following expression shows what is identified from the conditional distribution of $R_i$ across this discrete change between values $x$ and $x'$:
\begin{theorem} \label{propdiscrete}
	Under MONO and EXOG:
	$$P(R_{i} \le r|x',w)-P(R_{i} \le r|x,w) = -\mathbbm{E}[\bar{f}_H(\tau_{V_i}(r)|\Delta_i,x,V_i,w)\cdot \Delta_i|W_i=w]$$
	where $\bar{f}_H(y|\Delta,x,v,w):= \frac{1}{\Delta} \int_{y-\Delta}^{y} f_H(h|\Delta,x,v,w) \cdot dh$ is the average density between $y-\Delta$ and $y$, among units with reporting function $v$, treatment effect $\Delta$, and $(X_i,W_i)=(x,w)$.\footnote{By ``between $y-\Delta$ and $y$'' I mean in the interval $[\min\{y-\Delta,y\}, \max\{y-\Delta,y\}]$, regardless of the sign of $\Delta$. Note that $\bar{f}_H(y|\Delta,x,v)$ is positive even if $\Delta < 0$, in which case it is equal to the average density between $y$ and $y+|\Delta_i|$.}
\end{theorem}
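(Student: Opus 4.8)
The plan is to mirror the strategy behind Theorem \ref{propflow}, but to replace the infinitesimal ``flow over a threshold'' by a finite displacement of the conditional CDF of happiness. The starting point is Lemma \ref{prophonest}, which rewrites each response-CDF as a threshold-crossing event for $H$: for any value $\tilde{x}$ of the covariates,
$$P(R_i \le r \mid X_i = \tilde{x}) = P\left(H_i \le \tau_{V_i}(r) \mid X_i = \tilde{x}\right),$$
where at $X_i = \tilde{x}$ the realized happiness is $H_i = h(\tilde{x}, U_i)$. Applying this at both $\tilde{x}=x'$ and $\tilde{x}=x$ reduces the object of interest to a difference of two threshold probabilities, and lets me dispense with REG entirely (only the existence of the conditional density $f(h|\Delta,x,v)$ implicit in the statement is used).

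First I would use EXOG to collapse both terms into a single conditioning event. Because $x$ and $x'$ share their control components $w$, EXOG implies that the joint law of $(U_i, V_i)$ given $X_i = x'$ coincides with its law given $X_i = x$ (both equal its law given $W_i = w$)---this is exactly the invariance delivered by the joint form (\ref{eq:idx}) that founds EXOG. I can therefore evaluate the counterfactual outcome $h(x',U_i)$ in the $X_i = x$ population and write
$$P(R_i \le r \mid X_i = x') - P(R_i \le r \mid X_i = x) = P\left(h(x',U_i) \le \tau_{V_i}(r) \mid X_i = x\right) - P\left(h(x,U_i) \le \tau_{V_i}(r) \mid X_i = x\right).$$
Substituting $h(x',U_i) = H_i + \Delta_i$ with $H_i = h(x,U_i)$, both terms now live in the observed $X_i = x$ population.

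Next I would condition on $(\Delta_i, V_i)$ and invoke the law of iterated expectations. Inside the event $\{\Delta_i = \Delta,\, V_i = v,\, X_i = x\}$ the threshold is the constant $y := \tau_v(r)$, so the bracketed difference becomes $F_{H|\Delta,x,v}(y-\Delta) - F_{H|\Delta,x,v}(y) = -\int_{y-\Delta}^{y} f(h|\Delta,x,v)\,dh = -\Delta\,\bar{f}(\Delta,y,x,v)$, where the last equality is merely the definition of $\bar{f}$. Integrating over the conditional law of $(\Delta_i, V_i)$ given $X_i = x$ then yields $-\mathbbm{E}[\bar{f}(\Delta_i,\tau_{V_i}(r),x,V_i)\cdot \Delta_i \mid X_i = x]$, which is the claim.

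The algebra of the CDF increment and the iterated-expectation step are routine. The one place that needs genuine care---and the step I expect to be the main obstacle---is the sign bookkeeping in the oriented integral defining $\bar{f}$: the identity $F(y-\Delta)-F(y) = -\int_{y-\Delta}^{y} f\,dh = -\Delta\,\bar{f}$ must be verified to hold for \emph{both} signs of $\Delta$, using the convention (noted in the statement's footnote) that $\Delta$ and the signed integral share a sign, so that $\bar{f} \ge 0$ throughout. The EXOG-based invariance of the law of $(U_i,V_i)$ across treatment values is the key conceptual lever but is not itself an obstacle, being a direct consequence of the independence built into the assumption.
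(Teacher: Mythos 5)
Your proof is correct and follows essentially the same route as the paper's: Lemma \ref{prophonest} to convert responses into threshold-crossing events, the two parts of EXOG to equate the conditional laws of $(U_i,V_i)$ at $x'$ and $x$ (the paper does this in two steps, $V_i$ first via $dF_{V|W}$ and then $U_i$ given $V_i$, which is equivalent to your joint invariance statement), and then the CDF increment $F_{H|\Delta,x,v}(\tau_v(r)-\Delta)-F_{H|\Delta,x,v}(\tau_v(r)) = -\Delta\,\bar{f}(\Delta,\tau_v(r),x,v)$ with the same oriented-integral sign convention the paper relegates to its footnote. The only cosmetic difference is that you condition on $(\Delta_i,V_i)$ up front and compute the CDF difference directly, where the paper first decomposes into the two ``crossed the threshold'' events and then factorizes the joint density $f(\Delta,y|x,v)=f(\Delta|x,v)f(y|\Delta,x,v)$ --- the same computation in a different order.
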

\noindent Similar to Theorem \ref{propflow}, Theorem \ref{propdiscrete} shows that the change in $P(R_i \le r|W_i=w,X_i=x)$ over discrete changes in $x$ can be written as a positive linear combination of the causal effect of that variation in $X$ on $H$---a quantity proportional to a parameter of the form $\tilde{\Delta}$ introduced in Section \ref{sec:params}. The quantity $\bar{f}_H(\tau_{V_i}(r)|\Delta_i,x,V_i,w)$ is positive for each $i$ but unknown to the researcher, determined in part by individuals' reporting functions and the underlying distribution of $H_i$. Intuitively, respondents with treatment effect value $\Delta$ are ``counted'' in the above average if there exists a positive mass of such individuals with $(X_i,W_i)=(x,w)$ and happiness $H_i$ in the range $\tau_{V_i}(r)-\Delta$ to $\tau_{V_i}(r)$. Note that Theorem \ref{propdiscrete} exhausts all implications of the observable data $(R_i,X_i)$ regarding variation in the potential outcome functions $h(x,u)$ with respect to $x$ (for a fixed value of the controls $W_i$).\footnote{Given any such fixed $w$, once $P(R_{i} \le r|X_i=x,W_i=w)$ is known for all $r$ for some fixed reference value $x$ of the explanatory variables, along with the distribution of $X_i|W_i=w$, the only remaining information available from the data takes the form of differences $P(R_{i} \le r|x',w)-P(R_{i} \le r|x,w)$ for various values of $x'$ and $r$.}\\

\noindent \textit{Theorem \ref{propflow} as a limiting case of Theorem \ref{propdiscrete}:} A similar expression to that of Theorem \ref{propdiscrete} shows up in the ``bunching design", which leverages bunching at kinks in decision-makers' choice sets for identification of behavioral elasticities. Since the kink compares just two distinct slopes, an identification problem emerges for elasticity parameters \citep{blomquist_bunching_2019}. An assumption sometimes used sidestep this issue is that the kink is ``small'' (e.g. \citealt{saez2011,kleven2016}, see \citealt{goff2022} for a discussion). An analogous assumption in the context of Theorem \ref{propdiscrete} would be that $\Delta_i$ is small with probably one so that for each $\Delta \in supp\{\Delta_i\}$, the density $f_H(h|\Delta,x,v,w)$ is approximately constant for all $h$ between $\tau_v(r)-\Delta$ and $\tau_v(r)$. 
Under this assumption, Theorem \ref{propdiscrete} would simplify to:
\begin{align}
	P&(R_{i}  \le r|x',w)-P(R_{i} \le r|x,w) \nonumber =-\int dF_{V|W}(v|w) \cdot f_H(\tau_{V_i}(r)|\Delta_i,x,V_i,w)\cdot \Delta_i \\
	&=-\int dF_{V|W}(v|w) \cdot f_H(\tau_v(r)|x,v,w)\cdot  \mathbbm{E}[\Delta_i|H_i=\tau_v(r), X_i=x, V_i=v,W_i=w]
	\label{eq:discreteconst}
\end{align}
Eq. (\ref{eq:discreteconst}) exactly recovers the weighting over individuals achieved by Theorem \ref{propflow} using continuous variation in $x$. In particular, the quantity $\mathbbm{E}[\Delta_i|\tau_v(r), x, v,w]$ appears above with the same weight $-dF_{V|W}(v|w)\cdot f_H(\tau_v(r)|x,v,w)$ as $\mathbbm{E}[\partial_{x_j} h(x,U_i)|\tau_v(r),x,v,w]$ does in Eq. (\ref{eq:propflow}). Unfortunately, the constant density assumption used to obtain \eqref{eq:discreteconst} is quite hard to justify \textit{except} in the limit that $\Delta_i$ is very small with probability one.\footnote{\label{fn:lemmasmall}If we consider the limit $x' \rightarrow x$ with the two differing only in component $j$, this approximation becomes exact and Eq. (\ref{eq:discreteconst}) applied to $(P(R_{i}  \le r|x',w)-P(R_{i} \le r|x,w))/(x_j'-x_j)$ reduces to Theorem \ref{propflow}. See Lemma SMALL in \citet{goff2022}.} Section \ref{sec:comparing} thus explores this issue further when $\Delta_i$ is not small, in the context of mean regression.\\

\noindent \textit{Intuition for Theorem \ref{propdiscrete}}: We can obtain some intuition for Theorem \ref{propdiscrete} as depicted in Figure \ref{figdiscrete}. Suppose there are two response categories $\mathcal{R} = \{0,1\}$ with a common reporting function $r(h) = \mathbbm{1}(h \ge \tau)$. By iterating expectations over $\Delta_i$, we can consider a single value $\Delta$ of $\Delta_i$ at a time. Thus we aim to show that $\mathbbm{E}[R_{i}|x',\Delta]-\mathbbm{E}[R_{i}|x,\Delta] = \bar{f}_H(\tau|x,\Delta)\cdot \Delta$, using that $P(R_{i} \le 0|x',\Delta)=1-\mathbbm{E}[R_{i}|x',\Delta]$. In Figure \ref{figdiscrete}, I make the conditioning on $\Delta_i=\Delta$ implicit to simplify notation, taking an example in which $X_i$ is an indicator for marriage with $x'=1$, $x=0$.\\

\begin{figure}[h!]
	\begin{center} \vspace{.2cm}
		\includegraphics[height=2in]{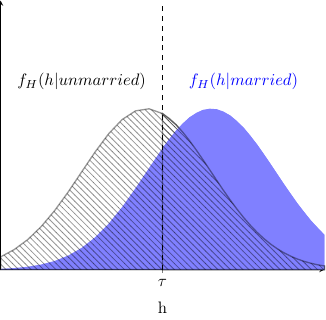} \quad	\quad \quad	\includegraphics[height=2in]{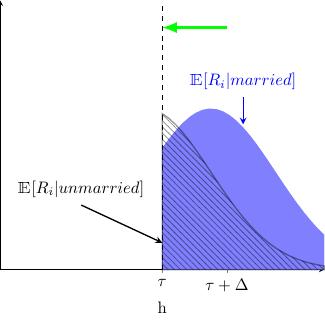} \quad 
		\includegraphics[height=2in]{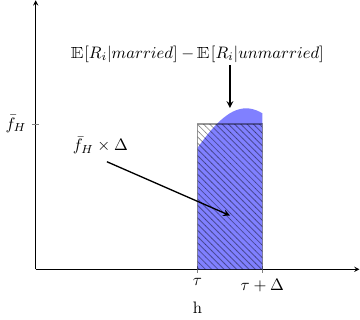}
		\caption{Visualization of Theorem \ref{propdiscrete}. Conditional on $\Delta_i:=h(married,U_i)-h(unmarried,U_i) = \Delta$, $f_H(\cdot|married) = f_{h(married,U)}(\cdot)$ is a rightward shift of $f_H(\cdot|unmarried) = f_{h(unmarried,U)}(\cdot)$, by $\Delta$. Thus $\mathbbm{E}[R_{i}|married,\Delta]-\mathbbm{E}[R_{i}|unmarried,\Delta]$ is the area under $f_H(\cdot|married)$ between $\tau$ and $\tau+\Delta$, which is in turn equal to a rectangle of width $\Delta$ and height $\bar{f}_H$, where $\bar{f}_H$ is the average of $f_H(\cdot|married)$ across this interval. \label{figdiscrete}} 
	\end{center}
\end{figure}

\noindent \textit{Mean regression: } As our main focus is regressions capturing the conditional mean of $R_i$ with $\mathcal{R}$ an integer response scale, let us as in Eq. (\ref{eq:expflow}) aggregate Theorem \ref{propdiscrete} across the response categories $r$ to obtain:\footnote{To obtain the notation of Eq. (\ref{eq:discreteintro}) in the introduction from \eqref{eqdiscreteintermediate}, define $\bar{f}_H(\Delta,x,v,w) := \sum_{r=0}^{\bar{R}-1} \bar{f}_H(\tau_v(r)|\Delta,x,v,w)$.}
\begin{align}
	&\mathbbm{E}[R_i|x',w]-\mathbbm{E}[R_i|x,w]=\mathbbm{E}\left[\left. \sum_{r=0}^{\bar{R}-1} \bar{f}_H(\tau_v(r)|\Delta,x,v,w) \cdot \Delta_i\right|W_i=w\right] \label{eqdiscreteintermediate}
\end{align}
Recall from Theorem \ref{propflow} that derivatives of the conditional distribution of $R$ yield causal effects $\nabla_x h(x,U_i)$ with weights proportional to $\sum_r f_H(\tau_v(r)|x,v)$. By contrast, \eqref{eqdiscreteintermediate} shows that discrete differences in $X$ recover treatment effects $\Delta_i = h(x', U_i)-h(x,U_i)$ with ``weights'' that themselves depend upon $\Delta_i$ through $\sum_r \bar{f}_H(\Delta_i,x,v,w)$. Since this quantity depends not only on the density of $H$ at response thresholds $\tau_v(r)$ but also the density at points within $\Delta$ of such thresholds through $\bar{f}$, the two weighting schemes do not lead to estimands that can obviously be directly compared.\\

\noindent \textit{Note:} whether or not $\mathbbm{E}[R_i|x',w]-\mathbbm{E}[R_i|x,w]$ is positive or negative does \textit{not} reflect the sign of the average treatment effect: $\mathbbm{E}[\Delta_i]$. Rather, it depends on how positive and negative treatment effects are aggregated over by the weights $\sum_r \bar{f}_H(\tau_v(r)|\Delta,x,v,w)$. If the CDF functions (or equivalently, quantile functions) of $h(x,U_i)$ and $h(x',U_i)$ cross, then there must be some individuals with $\Delta_i<0$ while others with $\Delta_i>0$.\footnote{Specifically, then $P(\Delta_i < 0) \ge \sup_t \left\{F_{h(x',U_i)}(t)-F_{h(x,U_i)}(t)\right\}$ and $P(\Delta_i > 0) \ge \sup_t \left\{F_{h(x,U_i)}(t)-F_{h(x',U_i)}(t)\right\}$; see e.g. \citet{fanpark}}. This connects Theorem \ref{propdiscrete} to the result of \citet{bondandlang}, discussed further in Appendix \ref{sec:bl}. 

\subsection{Comparing discrete and continuous regressors} \label{sec:comparing}

Given the foregoing analysis, Theorems \ref{propflow} and \ref{propdiscrete} together imply that regression coefficients between discrete and continuous treatment variables can be meaningfully compared quantitatively in terms of causal effects in the limit that effects $\Delta_i$ for the discrete treatment are very small, if the conditional mean function is indeed linear.

More generally, a researcher who is interested in comparing a local regression derivative to the mean difference across two discrete groups can construct ratios like:
\begin{equation} \label{eq:genratio}
	\frac{\mathbbm{E}[R_i|X_i=x',W_i=w]-\mathbbm{E}[R_i|X_i=x,W_i=w]}{\partial_{x_1}\mathbbm{E}[R_i|X_i=x'',W_i=w]}
\end{equation}
for some $x$,$x'$, and $x''$. For example, if $X = (income, marriage)$ with $x'=(y,married)$ and $x=(y,unmarried)$ for any income $y$ and $x''=(y,m)$ for $m \in \{married, unmarried\}$, then Eq. \eqref{eq:genratio} would yield a comparison of regression contrasts involving income to those involving marriage. If $\mathbbm{E}[R_i|X_i,W_i]$ were fully linear, then the numerator of \eqref{eq:genratio} would not depend on $y$ or $w$ and the denominator would not depend on $y$, $m$ or $w$, yielding a ratio of two linear regression coefficients.

Out goal now is to examine the causal interpretation of Eq. \eqref{eq:genratio} outside of the limit that $\Delta_i$ is very small. Combining Eq. (\ref{eqdiscreteintermediate}) with Corollary \ref{propnabla}, we know that the ratio in Eq. \eqref{eq:genratio} is equal to
\begin{equation} \label{eq:ratiogeneral}
	\frac{\mathbbm{E}\left[\color{purple} \sum_r\bar{f}_H(\tau_{V_i}(r)|\Delta_i,x,V_i,w) \color{black} \cdot \Delta_i|W_i=w\right]}{\mathbbm{E}\left\{\left.\color{purple}\sum_{r} f_H(\tau_{V_i}(r)|x'',V_i,w) \color{black} \cdot \mathbbm{E}\left[\partial_{x_1} h(x'',U_i)|H_i=\tau_{V_i}(r),x'',V_i,w\right]\right|W_i=w\right\}}
\end{equation}
To interpret this as informative about the relative magnitudes of $\Delta_i$ and $\partial_{x_1}h(x'',U_i)$, the relevant question is how similar the sum $\sum_{r} f_H(\tau_{V_i}(r)|x'',V_i,w)$ over densities at the thresholds is to the corresponding sum over mean densities: $\sum_r \bar{f}_H(\tau_{V_i}(r)|\Delta_i,x,V_i,w)$, at least on average. If these quantities tend to be close to one another in magnitude, then Eq. (\ref{eq:genratio}) uncovers something close to the ratio of two convex averages of causal effects. If they differ by an unknown amount, then interpreting (\ref{eq:genratio}) in terms of the relative magnitudes of causal effects is not possible.

Reasoning about the magnitudes involved in \eqref{eq:ratiogeneral} is challenging in full generality, but it is possible to derive analytical results to guide our intuition by assuming that there are ``many'' response categories in $\mathcal{R}$. Given the definition of $\bar{f}$, notice that $\sum_{r} f_H(\tau_v(r)|x'',v,w)$ and $\sum_r \bar{f}_H(\tau_v(r)|\Delta,x,v,w)$ are similar for a given $(\Delta,v,w)$ if
\begin{equation} \label{eq:desiredapprox}
	\sum_{r} \frac{1}{\Delta}\int_{\tau_v(r)-\Delta}^{\tau_v(r)} f_H(y|\Delta,x,v,w)dy \approx \sum_r f_H(\tau_v(r)|x'',v,w)
\end{equation}
Observe that the two sides of (\ref{eq:desiredapprox}) can \textit{only} differ because the summation occurs over $H_i$ evaluated at the discrete thresholds $\tau_v(r)$. If instead the sums over $r$ were replaced by integrals over all possible values of $H_i$, we would have $\int \left\{\frac{1}{\Delta} \int_{h-\Delta}^{h} f_H(y|\Delta,x,v,w)dy\right\}dh = \int f_H(h|x'',v,w)\cdot dh$, which holds trivially because both sides evaluate to unity for any $\Delta,v,x,w$ and $x''$.\footnote{This is immediate for the RHS, which integrates a density. To see it for the LHS, reverse the order of integrals to obtain $\int dy \cdot f_H(y|\Delta, x,v,w) \left\{\frac{1}{\Delta}\int_{y}^{y+\Delta} dh\right\} = 1$.} Thus it would seem that we have a second ``limit'' in which discrete and continuous regression differences can be compared: when there are many response categories. However, I show in Appendix \ref{seccontinuousreporting} that discrete sums over the thresholds do not exactly correspond to equal-weighted integrals over $h$ in the limit of a continuum of response categories. Rather, in this limit the integrals also involve the quantity $r'(h,v)$, which measures how responsive response function $v$ is at $h$. Nevertheless, the intuition provided by the above logic suggests that looking at the limit of many categories may provide a tractable means of evaluating the quality of Eq. (\ref{eq:desiredapprox}) as an approximation.

\subsection{A tractable approximation in the limit of many response categories} \label{sec:tractable}
In Appendix \ref{seccontinuousreporting} I define a formal notion of the response categories being ``dense'' in the space of latent $H_i$, for each reporting function type $v$. This \textit{dense response limit} allows us to conceptualize there as being an infinite number of response categories, while remaining contained between $0$ and a fixed $\bar{R}$. The dense response limit delivers a tractable approximation which may be reasonable to apply in instances in which the survey question offers many response categories between a lower and upper limit (e.g. integers from 0 to 100).

Proposition \ref{propheterolinear} in Appendix \ref{seccontinuousreporting} shows how bounds on the ratio of total weights in Equation \eqref{eq:ratiogeneral} can be obtained in the dense response limit when each individual spaces out the thresholds $\tau_v(r)$ at roughly equal intervals---yielding reporting functions that are individually piecewise-linear. Intuitively, the assumption of linear reporting eliminates the effect of $r'(h,v)$, but only within the range of $h$ upon which each individuals' reporting function is increasing. Proposition \ref{propheterolinear} gives two sets of bounds. First, a more general bound suggests that discrete contrasts will tend to \textit{overstate} causal effects relative to regression derivatives, by a factor that is upper bounded by two. A second bound further assumes that the ``sensitivity'' of individual reporting functions is not too heterogeneous, and suggests that the inflation factor can also be bounded by the reciprocal of the fraction of the population that do not bunch at the endpoints of the response scale. This bound is close to unity when there are few such bunchers, which can be verified empirically.

To assess the performance of the theoretical bounds described above, Appendix \ref{sec:sim} simulates several data-generating-processes (DGPs) for $H_i$ and for the response functions $r(\cdot, V_i)$. The simulations generally provide an optimistic picture that the weights have similar overall magnitude in the numerator and denominator of Equation \eqref{eq:ratiogeneral}, across a wide variety of DGPs. Thus $\left\{\mathbbm{E}[R_i|x',w]-\mathbbm{E}[R_i|x,w]\right\}/\partial_{x_j}\mathbbm{E}[R_i|x,w]$ can be interpreted as close to a ratio of weighted averages of causal effects in those DGPs considered. In general, results do not seem to differ substantially whether the number of response categories is small, or whether there are few or many different reporting functions present in the population. When treatment effects become very \textit{large} relative to the dispersion of happiness in the population, non-linearity in the density of the conditional distribution of happiness becomes important and the sense in which comparisons of magnitude can become misleading is apparent in the simulations.

Appendix \ref{sec:regression} investigates the implications of Proposition \ref{propheterolinear} for practical regression analysis with mixed discrete and continuous regressors, focusing both on the common approach of linear regression and non-parametric alternatives. 

\section{Conclusion} \label{sec:conc}
This paper has investigated the identification of causal effects when using subjective responses as an outcome variable. Such reports typically ask individuals to choose a response from an ordered set of categories, and how individuals use those categories can be expected to differ by individual $i$. Nevertheless, researchers may be willing to suppose that individual responses reflect the value of a well-defined latent variable $H_i$.

Without observing $H_i$ and without assuming it is possible to rank individuals by $H_i$ on the basis of their responses $R_i$, we have seen that the conditional distribution of $R_i$ given exogenous covariates $X_i$ can still be informative about the effects of $X$ on $H$. While this allows one to observe the sign of causal effects under the assumption that this sign is common across individuals, we've seen that different discrete conditional mean comparisons can impose different total weightings over the causal effects of individuals in the population. Simulation evidence as well as theoretical results suggest the impact of this problem for comparisons of magnitude is somewhat limited in practice, and the problem goes away entirely in the limit of continuous treatment variables. Nevertheless, the results suggest that care is warranted in comparing the magnitude of regression coefficients across explanatory variables, even when they are as good as randomly assigned.

The results of this paper suggest three practical implications for using regression analysis for causal inference with subjective ordinal outcomes. First, the critique of \citet{bondandlang} that such responses are only ordinarily meaningful is most acute when comparing large, heterogeneous populations that differ among many dimensions. Isolating causal effects using exogenous variation in individual treatment variables is not subject to this critique in the sense that regression derivatives identify the sign of a convex average of causal effects, even though reporting functions are unknown to the researcher. Second, to make such regression derivatives quantitatively meaningful, researchers should focus on \textit{comparing} across treatment variables when more than one is available. While non-parametric regression methods are preferred from the standpoint of identification, this is not specific to the analysis of subjective outcome variables. Finally, notwithstanding the above, researchers should exercise some caution when comparing the magnitudes of two discrete treatment effects or between a discrete treatment effect and the slope for a continuous treatment. The relative magnitudes of convex averages of causal effects can still be partially identified in such settings with further assumptions, though weakening these assumptions represents a possible avenue for future research.

\normalsize
\printbibliography
\large
\begin{appendices}

\section{Reconciling my results with \citet{bondandlang}} \label{sec:bl}

For fixed treatment values $x$ and $x'$, define the parameter
$$\theta:=\mathbbm{E}[H_i|X_i=x']-\mathbbm{E}[H_i|X_i=x]$$
Drawing on results from \citet{manskitamer}, \citet{bondandlang} show that even if reporting functions are homogeneous, the sign of $\theta$ is not identified from the distribution of $(R_i,X_i)$ absent strong assumptions. Influentially, they argue that regressions of $R_i$ on $X_i$ are therefore generally uninformative about how the mean of $H_i$ varies across subgroups of the population.

\subsection{The identification problem for unweighted mean comparisons}
One way to see the problem highlighted by \citet{bondandlang} quite clearly is to rewrite $\theta$ using the identity $\mathbbm{E}[A] = \int_0^1 Q_A(u) \cdot du$ for any random variable $A$:
\begin{equation} \label{eq:QTEidentity}
\theta = \int_0^1 \left\{Q_{H|X=x'}(u)-Q_{H|X=x}(u)\right\}\cdot du
\end{equation}
where $Q_{H|X}$ is the conditional quantile function of $H_i$ given $X_i$. Meanwhile, the difference in the mean of $R_i$ between $x'$ and $x$, with a common reporting function, instead identify 
\begin{equation} \label{QTEexpansion}
\mathbbm{E}[R|X=x']-\mathbbm{E}[R_i|X_i=x] = \int_{0}^1 \bar{r}'_{x',x}(u) \cdot \left\{Q_{H|X=x'}(u)-Q_{H|X=x}(u)\right\} \cdot du,
\end{equation}
where $\bar{r}'_{x',x}(u):=\frac{r(Q_{H|X=x'}(u))-r(Q_{H|X=x}(u))}{Q_{H|X=x'}(u)-Q_{H|X=x}(u)}$ is the ``average rate of change'' in the common reporting function $r(\cdot)$ between $Q_{H|X=x}(u)$ and $Q_{H|X=x'}(u)$.\footnote{Eq. \eqref{QTEexpansion} is a special case of Proposition \ref{propidr2} from Appendix \ref{sec:idio}) in which there is a single reporting function, and no controls (note that in the case of homogeneous reporting functions $r(c\dot)$, Assumption IDR used in Proposition \ref{propidr2} holds trivially). Eq. \eqref{QTEexpansion} only assumes MONO for the common reporting function.}

Eq. \eqref{QTEexpansion} thus represents a re-weighting of the quantile differences $Q_{H|X=x'}(u)-Q_{H|X=x}(u)$ that appear in \eqref{QTEexpansion} with uniform weight under the integral over all $u \in [0,1]$ in Eq. \eqref{eq:QTEidentity}. The quantity $\bar{r}'_{x',x}(u)$ is weakly positive for any $(x,x',u)$ (since $r$ is weakly increasing). However these weights will not uniform because $r(\cdot)$ cannot be a linear function except in the limit of a continuum of response categories.  Instead, it will exhibit discrete jumps or falls at the $u$ for which $Q_{H|X=x'}(u)$ and $Q_{H|X=x}(u)$ lie on opposite sides of a response threshold $\tau(r)$. Where exactly the weight $\bar{r}'_{x',x}(u)$ is smaller or larger depends on the distribution of latent happiness and the spacing of the response thresholds $\tau(r)$, which are both unknown.

One special case in which the sign of $\mathbbm{E}[R|X=x']-\mathbbm{E}[R_i|X_i=x]$ does identify the sign of $\theta$ is when the conditional distribution $H|X=x'$ stochastically dominates the conditional distribution $H|X=x$ (or vice versa).\footnote{By saying that $A|X=x'$ (first order) stochastically dominates $A|X=x$, I mean that $P(A_i \le a|X_i=x') \le P(A_i \le a|X_i=x')$ for all values $a$.} In this case the quantile functions never cross and the sign of $Q_{H|X=x'}(u) - Q_{H|X=x}(u)$ is positive for all $u \in [0,1]$, implying that $\theta$ and $\mathbbm{E}[R_i|X_i=x']-\mathbbm{E}[R_i|X_i=x]$ will both be positive. If instead $Q_{H|X=x'}(u) < Q_{H|X=x}(u)$ for some $u$, while $Q_{H|X=x'}(u) > Q_{H|X=x}(u)$ for other $u$, then it will generally be possible to reverse the ordering of $\mathbbm{E}[R_i|X_i=x']$ and $\mathbbm{E}[R_i|X_i=x]$for a given $\theta$ depending on where  the unknown function $r(\cdot)$ increases the fastest (see \citet{ShroderYitzhaki2017} for a version of this argument). As \citet{bondandlang} note, even if the observable distribution $R_i|X_i=x'$ stochastically dominates $R_i|X_i=x$, this is not sufficient to conclude that $H|X=x'$ stochastically dominates $H|X=x$. 

Since the observable data is not dispositive on its own, one can of course proceed by making assumptions to identify the sign of $\theta$. Suppose that $X_i=x'$ indicates the $i$ is a resident of the United States and $X_i=x$ that $i$ is a resident of Japan. If one is willing to assume that the higher-mean country has a higher happiness at every quantile level $u \in [0,1]$---whichever country that is---then the sign of $\theta$ is identified. But since life differs in many ways between the US and Japan which may matter in different ways for different individuals, it is hard to make this argument compellingly. Indeed, the underidentification problem for the sign of $\theta$ is most acute when comparing means of $R_i$ between two distinct populations that differ from one another along multiple dimensions, and each of which is quite heterogeneous on its own.

\subsection{Convex averages of causal effects are still identified}

The above problem appears in a much less pronounced way when $X_i$ represents a vector of randomized treatments, and $x$ and $x'$ differ by just one component, as in Theorems \ref{propflow} and \ref{propdiscrete} of this paper. In particular, if the treatment effect $\Delta_i = h_i(x')-h_i(x)$ has the same sign for all units $i$, then $H_i|X_i=x'$ necessarily stochastically dominates $H_i|X_i=x'$. As an example, consider a linear potential outcomes model in which $h_i(x) = h_i(x,U_i) = x^T\beta + U_i$. The treatment effect $\Delta_i$ is then $\Delta := (x'-x)^T\beta$, the same for all $i$. Given randomization $U_i \indep X_i$, $Q_{H|X=x}(u) = Q_{h(X)}(u)=x^T\beta+Q_U(u)$ and the quantile difference $Q_{H|X=x'}(u)-Q_{H|X=x}(u)=\Delta$, numerically the same for all $u \in [0,1]$.\footnote{In fact without loss of generality we can normalize $U_i$ to be uniform on $[0,1]$, and $h(x,u) = Q_{H|X=x}(u)$. To see this, suppose instead that $U_i$ has CDF $F_U$, but given randomization we have that $U_i \indep X_i$. Note that with probability one, $h_i(x) = Q_{h(x)|X_i}(T_i)$ where $T_i := F_{h(x)|X}(h_i(x)|X_i)$. This is a general property of conditional distributions, see e.g. Lemma 3 of \citet{goff2024testingidentifyingassumptionsparametric} for a proof. Observe that since $h_i(x)=x^T\beta+U_i$ with $U_i \indep X_i$, $T_i=F_{U}(U_i)$. Define $\tilde{h}_i(x):=Q_{H(x)|X_i}(T_i)$. We can similarly work out $\tilde{h}_i(x)$ to be $\tilde{h}_i(x)=Q_{x^T \beta +U_i|X}(T_i|X_i) = x'\beta + Q_{U}(T_i)$ using $U_i \indep X_i$. Putting this all together, we have that with probability one $h_i(x)=\tilde{h}_i(x) = x'\beta + \tilde{U}_i$, where we define $\tilde{U}_i:=Q_{U}(F_U(U_i))$. Note that $Q_{U}(F_U(U_i)) \sim Unif[0,1]$ and is independent of $X_i$ (if $U_i$ is not continuously distributed, $\tilde{U}_i$ can be suitably redefined so that it remains uniform, see Lemma 4 of \citet{goff2024testingidentifyingassumptionsparametric}).}  In the context of causal inference about $H_i$, sufficient assumptions regarding treatment effect homogeneity are thus sufficient to afford inference on the sign of $\theta$, which in this context (randomization of $X_i$) corresponds to the overall average treatment effect (ATE) $$\mathbbm{E}[\Delta_i] = \mathbbm{E}[h(x',U_i)]-\mathbbm{E}[h(x,U_i)] = \mathbbm{E}[H_i|X_i=x']-\mathbbm{E}[H_i|X_i=x]$$
between treatment values $x'$ and $x$.

Assuming a linear causal model with homogeneous treatment effects would be very restrictive, but the above example illustrates a broader point. As Theorem \ref{propdiscrete} shows, differences in the distribution of $R_i$ between two distinct points $X_i=x'$ and $X_i=x$ reveal under random assignment positive aggregations of treatment effects $\Delta_i$, among units whose response value would change given a counterfactual shift from $x$ to $x'$. In the limit that $x' \rightarrow x$, the local derivative of $P(R_i \le r|x)$ at $x$ yields the sign of the average marginal causal effect of changing $x$ among individuals at the threshold between response categories $r$ and $r+1$, as shown in Theorem \ref{propflow}. Whether this local average effect among marginal respondents is informative about the overall effect of changing $x$ to $x'$ depends on how heterogeneous casual effects are in the population.

This logic is familiar from the analysis of instrumental variables with heterogeneous treatment effects. In the LATE model of \citet{Imbens2018}, a binary instrument reveals the average effect of a binary treatment among compliers. Whether this local average is informative about the overall ATE depends on how different treatment effects are between the compliers and other groups in the population. Unlike in the LATE context, the ``marginal respondents'' in our setting that are averaged over in the causal effects revealed by the data constitute a measure-zero subset of the population given that for each reporting function $v$ they represent a single value of the continuous variable $H$ (this is true even after averaging back over $X_i$, cf. Corollary \ref{corr:avgderivative}). Furthermore, the magnitudes of regression derivatives or differences reflect not only magnitudes of causal effects, but the density of happiness values near the thresholds between response categories. This underscores the value of \textit{comparing} the magnitudes of regression coefficients across treatment variables, rather than interpreting the magnitudes individually.

\subsection{Targeting ratios of effects rather than the effect of one treatment}
Indeed, recall from Eq. \eqref{eq:compare} that the ratio of local regression derivatives identifies the ratio of convex combinations of the causal effects of the two continuous treatment variables. In a model $h(x,u) = \mathtt{h}(g(x),u)$ that is weakly separable between $x$ and $u$, Section \ref{sec:weaksep} showed that this ratio in turn identifies both the sign and magnitude of marginal rates of substitution between the treatments. For example, if $g(x) = x^T\beta$, we identify $\beta_2/\beta_1$.

The weakly separable class of functions is quite broad, and includes cases in which we may not even be able to identify the sign of $\beta_1$ or $\beta_2$ individually do to the problem highlighted by \citet{bondandlang}. Yet, we can identify both the sign and the exact magnitude of their \textit{ratio}. This is a counter-intuitive result, so I illustrate it below with a simple example.

Consider the model $h(x,u)=(\beta_1 x_1 +\beta_2 x_2)\cdot \left(u-1/3\right)+1/3$ with $\beta_1, \beta_2 > 0$. This is a weakly separable model with $g(x) = x^T \beta$ and $\mathtt{h}(g,u) = g\cdot (u-1/3)+1/3$. Note that in this model the sign of the effect of a small increase in $x$ depends on $U_i$, if $U_i > 1/3$ then $\partial_{x_1}(x,U_i) = \beta_1 \cdot (U_i-1/3)$ is positive. If $U_i < 1/3$, then $\partial_{x_1}(x,U_i)$ is negative. The same considerations apply to $X_2$. If for example $U_i \sim Unif[0,1]$, then the average effect of a small increase in either treatment ends up being positive, since then $\mathbbm{E}[U_i - 1/3] = 1/6$. However if instead $U_i \sim Unif[0,1/2]$, then the average marginal effect is negative. The distribution of $U_i$ is not known by the researcher, and the sign of $\mathbbm{E}[\partial_{x_j} h(x,U_i)]$ is not identified from the data for either $j \in \{1,2\}$ and any $x$.

However, the sign and the magnitude of $\mathbbm{E}[\partial_{x_2} h(x,U_i)]/\mathbbm{E}[\partial_{x_1} h(x,U_i)]$ \textit{is} identified. The reason is that the unknown sign of each variable's individual effect cancels out in the ratio. If we let $\mathtt{h}'$ denote the partial derivative of $\mathtt{h}$ with respect to its first argument, then $\mathtt{h}'(g,u)=u-1/3$ and:
$$ \frac{\partial_2 \mathbbm{E}[R_i|X_i=x]}{\partial_1 \mathbbm{E}[R_i|X_i=x]} = \frac{\cancel{\mathbbm{E}[\mathtt{h}'(x^T\beta,U_i)]}\cdot \beta_2}{\cancel{\mathbbm{E}[\mathtt{h}'(x^T\beta,U_i)]}\cdot \beta_1} = \frac{\cancel{\mathbbm{E}[U_i-1/3]}\cdot \beta_2}{\cancel{\mathbbm{E}[U_i-1/3]}]\cdot \beta_1} = \frac{\beta_2}{\beta_1} $$
by Equation \eqref{eqexpratio2}. Although the sign of $\mathtt{h}'(g,u)$ varies with $u$ and $\mathbbm{E}[\mathtt{h}'(x^T\beta,U_i)]$ is not identified by the data, it appears in both the numerator and the denominator and does not inhibit knowledge of the ratio $\beta_2/\beta_1$.

We can see this phenomenon manifest with discrete differences in $X$ as well. Suppose that $\beta_2 = 10$ and $\beta_1 = 1$, and $U_i \sim Unif[0,1]$. Consider $x=(1,0)'$ and $x'=(1+\epsilon,0)'$ for $\epsilon > 0$, so that $x^T \beta = 1$ and $(x')^T \beta = 1+\epsilon$. We then have $Q_{H|X=x'}(u)-Q_{H|X=x}(u) =\epsilon\cdot (u - 1/3)$, so the conditional quantile functions always cross at $u=1/3$, as depicted in Figure \ref{fig:quantilesexample} for the case of $\epsilon = 1$. Accordingly, the sign of $\theta = \mathbbm{E}[H_i|X_i=x']-\mathbbm{E}[H_i|X_i=x]$ is not identified, as argued by \citet{bondandlang}. This holds for any $\epsilon$, even as it becomes very close to zero. Accordingly, the sign of the overall average marginal effect $\mathbbm{E}[\partial_{x_1}h(x,U_i)]$ with $x=(1,0)'$ remains unidentified as we take $\epsilon \downarrow 0$. However, as we saw above, the ratio $\mathbbm{E}[\partial_{x_2}h(x,U_i)]/\mathbbm{E}[\partial_{x_1}h(x,U_i)]$ \textit{is} identified, in this weakly separable model of potential outcomes.

\begin{figure}[H]
\small
\begin{center}
\begin{tabular}[t]{m{5cm} l}
	\hspace{-.75cm}
	\includegraphics[height=2.5in]{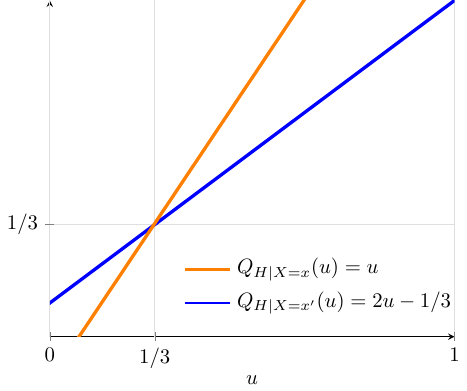}
\end{tabular}
\caption{\label{fig:quantilesexample} Conditional quantile functions from the example described in the text.}
\end{center}
\end{figure}

\section{Relationship to existing econometric models} \label{sec:orderedresponse}
The framework of this paper, outlined in Section \ref{secmodel} is primarily related to two strands of econometric literature: i) models of ordered response; and ii) nonseparable outcome models with possible endogeneity and instrumental variables. This section describes the relationship to both of these literatures.

\subsection{Ordered response models}
The model outlined in Section \ref{secmodel} nests familiar econometric models of ordered response, that typically make parametric assumptions about the functions $h$, $r$ and the distribution of unobservables, while entirely eliminating heterogeneity in $v$.

For example, the probit model treats the case in which $\mathcal{R} = \{0,1\}$, and lets $R_i = \mathbbm{1}(X_i'\beta + U_i \ge 0)$ where $U_i|X_i \sim N(0,\sigma^2)$ where often $\sigma$ is normalized to 1. This fits into the general model above with $V_i$ taken to be degenerate (all units share a value $v$), $\tau(0)=0$, $U_i$ a scalar and $h(x,u)=x^T\beta+u$ for some $\beta \in \mathbbm{R}^{d_x}$. The assumption that $U$ is independent of $X_i$ then implies EXOG. In the probit model, the effect on $H$ of a switch from $X_i=x$ to $X_i=x'$ is common across units, given by $(x'-x)^T\beta$. The ordered probit model maintains this same structure but with a larger set of categories $\mathcal{R}=\{0,1,\dots \bar{R}\}$, with corresponding thresholds $\tau(0), \tau(1), \dots, \tau(\bar{R}-1)$ common across individuals.

Despite the popularity of (ordered) probit and logit models, it is not necessary to impose a parametric structure on $h(x,u)$ or the distribution of $U$ to obtain identification in binary and ordered choice settings. \citet{matzkin1992} shows that $h$ can be identified up to scale under fairly general conditions if $u$ is a scalar and $h$ admits a separable structure: $h(x,u) = g(x)+u$ for some function $g$. This model allows for individual-specific reporting functions in a trivial sense, since owing to the additive separability the distinction between thresholds $\tau_v(r)$ and the error $u$ is simply a matter of definition.\footnote{Indeed, fixing any $r$ and defining $Y^r_i = \mathbbm{1}(R_i \le r)$ we may write $Y^r_i = \mathbbm{1}(g(X_i) +\eta^r_i \le 0)$ where $\eta^r_i = U_i - \tau_{V_i}(r)$. Under conditions given by \citet{matzkin1992}, the function $g$ and the distribution of $\eta^r$ can be identified (up to a scale normalization). See also \citet{cunhaetal}. Since this can be done for each value $r$, the function $g$ is in fact overidentified with more than two categories (see Appendix \ref{sec:testing} for a generalization). \citet{matzkin1994} establishes conditions for identification of $g$ in a weakly separable model $Y_i = r(\texttt{h}(g(X_i),\eta_i))$, but requires $\eta_i$ to be scalar.} However, a separable model like $h(x,u) = g(x)+u$ for potential outcomes, like the probit model, imposes the strong restriction that treatment effects are the same for all individuals. My results allow for treatment effect heterogeneity, and nests a leading case of \citet{matzkin1992} when the treatment variables are all continuous (see Appendix \ref{sec:weaksep}).

\subsection{Nonseparable outcome models with or without endogeneity}

Suppose for the moment that $H$ were observed. Then Equation \eqref{modelh} along with Assumption EXOG would yield a nonseparable model for the outcome $H$ with a set of exogenous regressors $X$, with no restrictions on the dimension of heterogeneity $U$ or functional restrictions like monotonicity in $X$ or $U$. In this general setting, \citet{hoderleinmammen2007,sasaki_2015} show that with continuous $X$ quantile regressions reveals outcome-conditioned average treatment effect parameters (this terminology is due to \citealt{outcome-conditioned}). \citet{kasy2022} provides similar results for a multi-dimensional set of outcome variables, and \citet{chernozukovetal2015} extend to panel data settings. \citet{BKM17} use invertibility assumptions to afford identification of an entire structural function with multi-dimensional outcomes.

However in my setting only $R$ is observed, and not $H$. This leads to the model of Section \ref{secmodel} in which $R,H$ and $X$ are related through Equations \eqref{modelr} and \eqref{modelh}. This structure resembles triangular instrumental variables (IV) models, where my $X$ plays the role of the instrument(s) and Eq. \eqref{modelh} represents the ``first stage'' relationship between the instrument(s) and endogeneous regressor. Reporting functions play the role of the outcome equation in an IV setup, and ``endogeneity'' arises if $U_i \cancel{\indep} V_i$, explicitly allowed in my model. However unlike IV settings, one cannot observe the ``endogenous variable'' $H_i$, which renders the analysis of identification very different in my setting.\footnote{Indeed, the IV analogy yields some intuition for my results: although variation in $X$ induces exogeneous variation in $H$ and in $R$ through $H$, we cannot re-scale the ``reduced form'' relationship between $X$ and $R$ by the ``first stage'' relationship between $H$ and $X$, since $H$ is unobserved.} In the literature thus far that has assumed $H$ is observed, it has been found that monotonicity assumptions can be helpful in securing identification when structural functions are taken to be nonseparable as they are in my model \citep{ImbensNewey2009,triangularDF,triangularT,triangularHHKM}.

The result of \citet{hoderleinmammen2007} for nonseparable models with exogeneity has previously been used to study identification from discrete choice probabilities in \citet{chernozukovetal2019}. \citet{MATZKIN201983} also analyzes some nonseparable models of discrete choice. To my knowledge the present paper is the first to leverage results on the link between quantile regressions and conditional average causal effects to address the concerns highlighted by \citet{bondandlang} regarding the use of ordinal scales.

Finally, I note that this paper is related to the literature on measurement error and misclassification, in that one might view $R$ as a imperfect measure of $H$ contaminated by the reporting function. However, I let latent happiness $H$ and responses $R$ exist on entirely different scales (e.g. $H$ in $\mathbbm{R}$ and $R$ in a set of integers), in the tradition of ordered response models and in common with \citet{bondandlang}. This feature also distinguishes the approach of this paper from models of rounding \citep{hoderleinrounding}, measurement error \citep{huschennach2}, and discrete misclassification \citep{hu2008,misclassification}.

\section{Extensions of the basic model} \label{sec:extensions}

\subsection{Using instrumental variables for identification} \label{sec:iv}
Suppose for that rather than making Assumption EXOG, we instead have a set of observed variables $Z_i$ to use as instruments for $X_i$. We assume each $X_j$ for $j=1 \dots J$ is continuously distributed, and $Z_i$ contains a continuously distributed instrument corresponding to each $X_j$, i.e.	
$$X_{1i} = x_1(Z_i,W_i, \eta_{1i}), \quad X_{2i} = x_2(Z_i,W_i, \eta_{2i}) \quad  \dots \quad X_{Ji}=x_J(Z_i,W_i, \eta_{Ji})$$
Finally, for each $j = 1 \dots J$, suppose that	$x_j(z,w,\eta_j)$ is strictly increasing in $\eta_j$. Let $\eta_i = (\eta_{1i}, \eta_{2i}, \dots \eta_{Ji})^T$. We now assume that $Z_i$, rather than $X_i$, is (conditionally) independent of all other heterogeneity across individuals $i$:
\begin{assumption*}[INSTRUMENT (conditional independence of instruments)]
$$ \left\{Z_i \indep (\eta_i, U_i, V_i)\right\} | W_i$$
\end{assumption*}
\noindent The following result adapted from \citet{ImbensNewey2009} implies that under INSTRUMENT we can use $\eta_i$ as a control variable in $W_i$, in the sense that
\begin{lemma*}
Under INSTRUMENT and the IV model above: $\{X_i \indep (U_i, V_i)\}| (\eta_i, W_i)$
\end{lemma*}
\begin{proof}
Note that INSTRUMENT implies that $\{Z_i \indep (U_i,V_i)\}|W_i, \eta_i$. Furthermore, conditional on $\eta_{ji}$ and $W_i$, the only remaining variation in $X_{ji}$ comes from $Z_{i}$. This is true for each $j$, so conditional on $\eta_i$ and $W_i$, the only variation in $X_i$ comes from variation in $Z_i$, i.e. $X_i$ is simply a function of $Z_i$. The result then follows.
\end{proof}
\noindent Thus, if $\eta_i$ is simply included in the vector $W_i$ to begin with, Theorem \ref{propflow} holds under the weaker assumption of INSTRUMENT, since INSTRUMENT then implies EXOG. ``Controlling'' for $\eta_{i}$ is feasible, because given that each $x_j(z,w,\eta_j)$ is strictly increasing in $\eta_j$, we can without loss redefine $\eta_{ji} = F_{X_j|Z,W}(X_{ji}|Z_i,W_i)$ which can be estimated from the data for each $j$ and individual $i$.\footnote{Note that since $x_j(z,w,\eta_j)$ is strictly increasing in $\eta_j$, 
$F_{X_j|Z,W}(x_j|Z_i=z,W_i=w) = P(\eta_{ji} \le x_j^{-1}(z,w,x_j)|W_i=w)$ where we have also used INSTRUMENT. Define $\tilde{\eta}_{ji}:=F_{X_j|Z,W}(X_{ji}|Z_i,W_i)=P(\eta_{ji} \le x_j^{-1}(Z_i,W_i,X_{ji})|W_i) = F_{\eta_j|W}(\eta_{ji}|W_i)$. Observe from this that we can write $\tilde{\eta{ji}}$ as a function of $\eta_{ji}$, conditional on $W_i$.  Define $\tilde{\eta}_i = (\tilde{\eta}_{1i}, \tilde{\eta}_{2i}, \dots \tilde{\eta}_{Ji})^T$ which is similarly a deterministic function of $\eta_{ji}$ conditional on $W_i$. Since conditioning on $\tilde{\eta}_i$ and $W_i$ is the same as conditioning on $\eta_i$ and $W_i$, the random vector $\tilde{\eta}_{i}$ satisfies $\{Z_i \indep (U_i,V_i)\}|W_i, \tilde{\eta}_i$. Note finally that $\tilde{\eta}_{ji} \sim Unif[0,1]$ and with probability one $X_{ji} = \tilde{x}_j(Z_i,W_i,\tilde{\eta}_{ji})$ where $\tilde{x}_j(z,w,u):=Q_{X_j|W=w,Z=z}(u))$ for each $u \in [0,1]$. Thus the Lemma holds after redefinition of $\eta_i$ to be $\tilde{\eta}_i$ and each function $x_j$ to be $\tilde{x}_j$.} If no controls are needed for INSTRUMENT, then simply let $W_i=F_{X_j|Z}(X_{ji}|Z_i)$ and EXOG now holds.

\subsection{Subjectively-defined latent variables} \label{sec:extended}
In the main body of the paper, I assume that individuals use a reporting function $r_i(h)$ that is an increasing function of the variable $h$ that the researcher is interested in. Given this, the model can accommodate arbitrary heterogeneity in $r_i(\cdot)$ (or equivalently: the locations of the thresholds that $i$ uses), so long as this variation is independent of explanatory variables.

However in many applications, one might worry that not only are the definitions of the categories $\mathcal{R}$ subjective, but so is the definition of the quantity that individuals are asked to use in answering the survey question. For example, when answering a life-satisfaction question some individuals might think about their recent life experiences, while others may think about their whole life in aggregate. Some might spend a lot of time thinking about the question, while others might answer quickly and intuitively. Accordingly, let individual $i$ use variable $\tilde{H}^i$ when they answer the survey question, where $\tilde{H}_i:=\tilde{H}^i_i$ is $i$'s value of this quantity that they define for themself. The key assumption that will allow us to extend the model to account for this kind of heterogeneity is that $\tilde{H}$ is a weakly increasing function of $H$, where $H$ is an objectively-defined variable of ultimate interest to the researcher.

I extend the model as follows: observables $(R_i,X_i)$ are now related by
\begin{align}
R_i &= \tilde{r}_i(\tilde{H}_i)=\tilde{r}(\tilde{H}_i, S_i) \label{modelr2}\\
\tilde{H}_i &= \tilde{h}_i(H_i)=\tilde{h}(H_i, T_i) \label{modelhtilde} \\
H_i &= h_i(X_i)=h(X_i, U_i) \label{modelh2}
\end{align}
where \textit{both} $\tilde{r}(\cdot,s)$ and $\tilde{h}(\cdot,t)$ are assumed to be weakly increasing and left-continuous. The new function, $\tilde{h}_i(h)$, can be defined in terms of counterfactuals: what would $i$'s value of their subjectively-defined latent variable $\tilde{H}^i$ be if their objectively-defined happiness $H_i$ were $h$? $T_i$ can be of arbitrary dimension, allowing individual-specific mappings between $H$ and $\tilde{H}$.

Now suppose that $\{X_{ji} \indep (T_i,U_i,V_i)\} | \ W_i.$ If we define $V_i= (S_i,T_i)$, then EXOG holds, and defining $r(\cdot,v) = \tilde{r}(\tilde{h}(\cdot,t),s)$ MONO now holds as well, allowing us to apply the main results of the paper. Note that EXOG is now stronger than it was in the baseline model: if we want to accommodate heterogeneity in what latent variable $\tilde{H}$ individuals use to answer the question, we must assume that heterogeneity to also be conditionally independent of $X_j$. In addition to the existing exclusion restriction that variation in $X_j$ does not alter reporting functions $\tilde{r}_i$, we now have an additional implicit exclusion restriction that variation in $X_j$ does not affect the subjective definitions $T_i$ that individuals apply to generate $\tilde{H}_i$ in terms of $H_i$.

One nice feature of this extended version of the model is that the researcher may be more willing to make structural assumptions about the function $h(x,u)$ now that it is made explicit that $H$ may differ from what individual's actually have in their mind when they answer the question. For example, if causal effects on some notion of objective life satisfaction $H$ are assumed to be homogeneous (so that $h(x,u)=g(x)+u$), then marginal rates of substitution can be identified through Eq. (\ref{eqexpratio2}), despite individuals using $\tilde{H}$ rather than $H$ to answer the survey question.

\subsection{Multivariate latent variables} \label{sec:multivariate}
In some settings, it may be appealing to assume that subjective responses are driven by a vector of latent variables rather than a single one. For simplicity, I in this section assume no control variables $W_i$ are needed for EXOG.

For example, \citet{gradmentalhealth} studies the mental health of economics graduate students in U.S. PhD programs, and include a question in which respondents are asked to agree or disagree with the statement ``I have very good friends at my Economics Department''. In such a case, respondents might consider both the quantity and quality of friendships in their definition of ``having good friends''. The emphasis that respondents place on each may also vary by individual.

To model this case, we might replace Eq. (\ref{modelr}) with
$$R_i = r(H_{1i},H_{2i},V_i)$$
where $r$ is weakly increasing in both $H_1$ (number of friends) and $H_2$ (``average'' quality of friendships). We further assume two separate structural functions $h_1(X,U)$ and $h_2(X,U)$ describing the effects of the $X$ on quantity and quality of friendships, respectively. 

For simplicity, let us first consider a case with a single reporting function $r(H_1,H_2)$, and a scalar $x$. It will be useful to write
\begin{equation} \label{eq:Tintegral}
\frac{d}{dx _j} P(R_i \le r|X_i=x) = \int \int_{T(r)} \frac{d}{dx} f_H(h_1,h_2|x) \cdot dh_1 dh_2
\end{equation}
where $T(r)$ is the set of $(h_1, h_2)$ such that $r(h_1, h_2) \le r$. In the above I have assumed dominated convergence so that one can interchange the integrals and derivative.

In the two-dimensional case, Eq. 4.1 of \citet{hoderleinmammen_EJ} show that a quantity like $\frac{d}{dx} f_H(h_1,h_2|x)$ can be rewritten as:
$$\frac{d}{dx} f_H(h_1,h_2|x) = -\nabla \circ \begin{pmatrix} f_H(h_1,h_2|x) \cdot \mathbbm{E}[\partial_x h_1(x,U)|H_{1i}=h_1,H_{2i}=h_2,X_i=x]\\
f_H(h_1,h_2|x) \cdot \mathbbm{E}[\partial_x h_2(x,U)|H_{1i}=h_1,H_{2i}=h_2,X_i=x]
\end{pmatrix}$$
where for a vector-valued function $\mathbbm{h}(x)$, we let $\nabla \circ \mathbbm{h}$ denote the divergence of $\mathbbm{h}$. More generally, \citet{kasy2022} shows that for a vector $\mathbbm{h} = (h_1,h_2,\dots h_K)'$ of any finite dimension $K$:
$$\frac{d}{dx} f_H(\mathbf{h}|x) = -\nabla \circ \left\{f_H(\mathbf{h}|x) \cdot \mathbbm{E}[\partial_x \mathbf{h}(x,U)|\mathbf{h},x] \right\}$$
where we let $\mathbf{h}(x,U)$ be a vector of $(\mathbf{h}_1(x,U),\mathbf{h}_2(x,U) \dots \mathbf{h}_K(x,U))'$.

In the general case with any $K \ge 1$ and again allowing reporting-function heterogeneity (satisfying EXOG), and multiple treatment variables, Eq. (\ref{eq:Tintegral}) becomes
\begin{equation} \label{eq:Tintegral2}
\frac{d}{dx _j} P(R_i \le r|X_i=x) = \int dF_{V|W}(v|w) \int_{T_v(r)} \frac{d}{dx_j} f_H(\mathbf{h}|x) \cdot d\mathbf{h}
\end{equation}
where $T_v(r) := \{\mathbf{h}: r(\mathbf{h},v) \le r\}$.

An application of the divergence theorem allows us to rewrite Eq. (\ref{eq:Tintegral}) as an integral over \textit{the boundary} $\partial T_v(r)$ of the set $T_v(r)$:
\begin{align*}
\frac{d}{dx _j} P(R_i \le r|X_i=x) &= \int dF_{V|W}(v|w)\int_{\partial T_v(r)} f_H(\mathbf{h}|x,v) \cdot \mathbbm{E}[\partial_{x_j} \mathbf{h}(x,U)|\mathbf{h},x,v] \circ \mathbf{n}_{v}(\ell) \cdot d\ell
\end{align*}
where $\mathbf{n}_{x,v}(\ell)$ represents a normal vector perpendicular to $\partial_{T_v(r)}$ at a point indexed by $\ell$. Figure \ref{fig:2d} depicts this in the two-dimensional example. In that case, $\ell$ is a scalar index that parameterizes the path along the one-dimensional boundary of $T_v(r)$.
\begin{figure}[H]
\small
\begin{center}
\begin{tabular}[t]{m{5cm} l}
	\hspace{-.75cm}
	\includegraphics[height=2.75in]{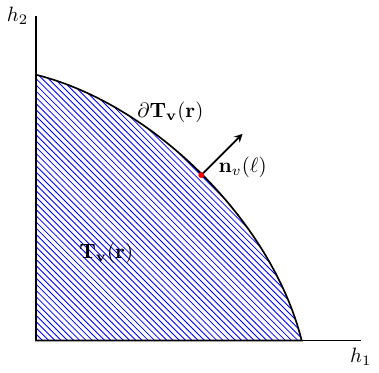}
\end{tabular}
\caption{\label{fig:2d}Components of $\hat{n}(\ell)$ are positive, by monotonicity of $h(h_1,h_2,v)$ w.r.t $h_1$ and $h_2$.}
\end{center}
\end{figure}
Provided that $r(\mathbf{h},v)$ is weakly increasing in each component of $\mathbf{h}$ (for all reporting functions $v$), the components $n_{v,j}(\ell)$ of $\mathbf{n}_v(\ell)$ will be positive, as illustrated in Figure \ref{fig:2d}.\\

\noindent In the two-dimensional case for example, we have:
\begin{align*}
-\frac{d}{dx_j}P(R_i \le r|X_i=x)   
&=\int dF_{V|W}(v|w) \int_{\partial T_v(r)} f_H(\mathbf{h}|x,v) \cdot \left\{ \hat{n}_{v,1}(\ell) \cdot \mathbbm{E}[\partial_{x_j} h_1(x,U)|\mathbf{h},x,v]\right.\\
&\left. \hspace{2in} + \hat{n}_{v,2}(\ell) \cdot \mathbbm{E}[\partial_{x_j} h_2(x,U)|\mathbf{h},x,v] \right\} \cdot d\ell
\end{align*}
Suppose for the moment that $h_j(x,u) = x'\beta_k+u$ where $\beta_{jk}$ represents the effect of treatment variable $X_j$ on $H_k$. Then this becomes
$$\frac{d}{d x_j}P(R_i \le r|X_i=x) = -\mathbbm{E}\left[\left.\int _{\partial T_{V_i}(r)} \left\{\beta_{j1}\cdot \hat{n}_{v,1}(\ell) + \beta_{j2}\cdot \hat{n}_{v,2}(\ell)\right\} \cdot d \ell\right|X_i=x\right]$$
where the expectation is over response functions $V_i$. 

Unless the boundary $\partial T_v(r)$ is linear in $\mathbf{h}$, the positive weights $\hat{n}_{v,2}(\ell)$ will generally vary with $\ell$ across the inner integral. However, the effects of two treatment variables can still be meaningfully compared. For example, suppose we have two continuous treatment variables of interest: $X_1$ and $X_2$, and that for any latent variable $H_k$, the effect of $X_1$ on $H_k$ is $\gamma$ times as large as the effect of $X_2$ on $H_k$. Then:
\begin{align*}
\frac{\frac{d}{d x_1}P(R_i \le r|X_i=x)}{\frac{d}{d x_2}P(R_i \le r|X_i=x)} &= \frac{\mathbbm{E}\left[\left.\int _{\partial T_{V_i}(r)} \left\{\beta_{11}\cdot \hat{n}_{v,1}(\ell) + \beta_{12}\cdot \hat{n}_{v,2}(\ell)\right\} \cdot d \ell\right|X_i=x\right]}{\mathbbm{E}\left[\left.\int _{\partial T_{V_i}(r)} \left\{\beta_{21}\cdot \hat{n}_{v,1}(\ell) + \beta_{22}\cdot \hat{n}_{v,2}(\ell)\right\} \cdot d \ell\right|X_i=x\right]}\\
&= \frac{\mathbbm{E}\left[\left.\int _{\partial T_{V_i}(r)} \left\{\gamma \beta_{21}\cdot \hat{n}_{v,1}(\ell) + \gamma \beta_{22}\cdot \hat{n}_{v,2}(\ell)\right\} \cdot d \ell\right|X_i=x\right]}{\mathbbm{E}\left[\left.\int _{\partial T_{V_i}(r)} \left\{\beta_{21}\cdot \hat{n}_{v,1}(\ell) + \beta_{22}\cdot \hat{n}_{v,2}(\ell)\right\} \cdot d \ell\right|X_i=x\right]} = \gamma
\end{align*}

\section{Additional identification results for continuous treatments} \label{sec:additionalcont}

\subsection{Additional results in the weakly separable case} \label{sec:weaksep}

This appendix continues the analysis of a weakly separable structural function $h(x,u) = \mathtt{h}(g(x),u)$ from Section \ref{sec:weaksepmain} in the main text.

In the still simpler case of a partially linear $h$ function, (\ref{eqexpratio2}) leads to the following:
\begin{corollary} \label{proplinearratio}
Suppose MONO and EXOG and REG$_j$ for $j=\{1,2\}$ hold, and that $h(x,u)$ takes the form $h(x,u) = x_1 \beta_1+x_2\beta_2 + g(x_3, \dots x_J)+u$ (e.g. $h(x,u)=x^T\beta+u$) with $\beta_2 \ne 0$. Then, if EXOG holds with no control variables $\mathbbm{E}[R_i|x]$ is also weakly separable, i.e.
$\mathbbm{E}[R_i|x]=\phi\left(\gamma_1 x_1 + \gamma_2 x_2, x_3 \dots x_{J}\right)$ for some function $\phi$, and ${\gamma_2}/{\gamma_1}=\beta_2/\beta_1$. With controls, we instead have that $\mathbbm{E}[R_i|x,w]$ is weakly separable in $x$ for a fixed $w$, that is $\gamma_1, \gamma_2$ and function $\phi$ may all depend on $w$.
\end{corollary}
\begin{proof}
Fix a $w$, and let $m(x):=\mathbbm{E}[R_i|X_i=x,W_i=w]$. By (\ref{eqexpratio2}), we have that $\partial_{x_2} m(x)/\partial_{x_1} m(x) = \beta_2/\beta_1$, for all $x$. This implies that $m$ takes the form of $\phi$ above.
\end{proof}
\noindent As a final note, we can see how Theorem \ref{propflow} recovers an identification result of \citet{matzkin1994} in the case of no controls $W_i$ and $REG_j$ holding for all components $X_j$ of $x$. Note first that given the weakness of the assumptions made, we could only ever hope to identify $g(x)$ up to an increasing transformation. One functional restriction that removes this arbitrariness, considered by \citet{matzkin1994}, is to suppose $g(x)$ is homogeneous of degree one. \citet{matzkin1994} also imposes that $u$ be a scalar. In this case, Eq. \eqref{eqexpratio2} implies that $g$ is identified up to scale:
\begin{proposition} \label{prop:matzkin}
Suppose MONO and EXOG hold, there are no controls $W$, and each of the $X_1 \dots X_J$ are continuously distributed satisfying REG. Suppose further that $h(x,u)=\texttt{h}(g(x),u)$, where $g$ is homogeneous of degree one, continuously differentiable, and for some $k$: $\partial_{x_k} g(x) \ne 0$ for all $x \in \mathcal{X}$ with $\mathcal{X}$ a convex set in $\mathbbm{R}^J$. Then $g(x)$ is identified up to an overall scale.
\end{proposition}
\begin{proof}
See Appendix \ref{sec:proofs}, in which Eq. (\ref{eq:g}) gives an explicit expression for $g(x)$.
\end{proof}
\noindent Note that Proposition \ref{prop:matzkin} does not require $U_i$ to be a scalar, generalizing the result of \citet{matzkin1994} in this regard.

\subsection{Details: marginal rates of substitution} \label{sec:mrs}
A convenient feature of a weakly separable model like (\ref{eqweaksep}) is that since individual heterogeneity $U$ affects the $X$ variables after they are aggregated by $g$, ratios like $\partial_{x_1} g(x)/\partial_{x_2} g(x)$ captures the marginal rate of substitution between $x_1$ and $x_2$ for each unit. By contrast, (\ref{eqexpratio}) is not necessarily equal to a weighted average over marginal rates of substitution in the population, when they are heterogeneous between units. The following proposition gives a special case in which it does, without the strong condition of weak separability.
\begin{proposition} \label{propmrs}
If in addition to the assumptions of Theorem \ref{propflow} for $j=1,2$, we have
\begin{itemize}
\item $Cov\left(\left.\frac{\partial_{x_2}h(x,U_i)}{\partial_{x_1} h(x,U_i)}, \partial_{x_1} h(x,U_i)\right|H_i \in \tau_{V_i}, x,w\right) = 0$
\item $\{V_i \indep U_i\} \textrm{ }| \textrm{ } (H_i \in \tau_{V_i},X_i,W_i)$ 
\end{itemize} then	$$ \mathbbm{E}\left[\left.\frac{\partial_{x_2}h(x,U_i)}{\partial_{x_1}h(x,U_i)}\right|H_i \in \tau_{V_i}, X_i=x,W_i=w\right] = \frac{\partial_{x_2}\mathbbm{E}[R_i|X_i=x,W_i=w]}{\partial_{x_1}\mathbbm{E}[R_i|X_i=x,W_i=w]}$$
If $Cov\left(\left.\frac{\partial_{x_2} h(x,U_i)}{\partial_{x_1} h(x,U_i)}, \partial_{x_1} h(x,U_i)\right|H_i \in \tau_{V_i}, x,w\right) \le 0$, then $ \mathbbm{E}\left[\left.\frac{\partial_{x_2}h(x,U_i)}{\partial_{x_1}h(x,U_i)}\right|H_i \in \tau_{V_i}, x,w\right] \ge \frac{\partial_{x_2}\mathbbm{E}[R_i|X_i=x,W_i=w]}{\partial_{x_1}\mathbbm{E}[R_i|X_i=x,W_i=w]}$ and vice-versa if the inequality is reversed.
\end{proposition}
\begin{proof}
Given $\{V_i \indep U_i\} \textrm{ }| \textrm{ } (H_i \in \tau_{V_i},X_i,W_i)$ and (\ref{eqtauV}), we have for $j \in \{1,2\}$
\begin{align*}
\partial_{x_j}\mathbbm{E}[R_i|x,w]
&=\mathbbm{E}\left[\rho(x,V_i,w)|H_i \in \tau_{V_i}, X_i=x,W_i=w\right] \cdot \mathbbm{E}\left[\partial_{x_j} h(x,U_i)|H_i \in \tau_{V_i}, X_i=x,W_i=w\right]
\end{align*}
So the RHS of (\ref{eqexpratio}) becomes: $\mathbbm{E}\left[\left.\partial_{x_2} h(x,U_i)\right|H_i \in \tau_{V_i}, x,w\right]/\mathbbm{E}\left[\left.\partial_{x_1} h(x,U_i)\right|H_i \in \tau_{V_i}, x,w\right]$. Now, using $Cov\left(\left.\frac{\partial_{x_2} h(x,U_i)}{\partial_{x_1} h(x,U_i)}, \partial_{x_1} h(x,U_i)\right|H_i \in \tau_{V_i}, x,w\right) \le 0,$ $$\mathbbm{E}\left[\left.\partial_{x_1} h(x,U_i)\right|H_i \in \tau_{V_i}, x,w\right]\le\mathbbm{E}\left[\left.\frac{\partial_{x_2} h(x,U_i)}{\partial_{x_1} h(x,U_i)}\right|H_i \in \tau_{V_i}, x,w\right]\cdot \mathbbm{E}\left[\left.\partial_{x_1} h(x,U_i)\right|H_i \in \tau_{V_i}, x,w\right]$$
and analogously if $\le$ is replaced with $\ge$.
\end{proof}
\noindent Proposition \ref{propmrs} requires reporting heterogeneity $V_i$ to be conditionally orthogonal to structural function heterogeneity $U_i$. Further, one must be able to at least sign the correlation of marginal rates of substitution and heterogeneity in marginal effects with respect to $x_2$. This correlation might be negative, if for example, individuals with high returns to $x_2$ do not have returns to $x_1$ that are proportionally as high, on average.

\subsection{Characterizing the marginal respondents} \label{sec:characterizing}
The following result gives conditions under which average characteristics of respondents on the margin between response category $r$ and $r+1$, which drive the average causal effect identified by Theorem \ref{propflow}, can be identified from the data:
\begin{proposition} \label{prop:chars}
	Let $A_i$ be an individual characteristic such that EXOG holds conditionally on $A_i$, i.e. $\{X_{i} \indep U_i\}|(A_i,W_i,V_i)$ and $\{X_{ji} \indep V_i\}|(A_i,W_i)$. Suppose further that for treatment $j$ the sign of $\partial_{x_j}h(x,U_i)$ is the same for all individuals $i$. Then (under $REG_j$ and further regularity conditions described in the proof):
	\begin{equation} \label{eq:Amean1}
		\mathbbm{E}[A_i|H_i = \tau_{V_i}(r),X_i=x,W_i=w] = \frac{ \mathbbm{E}[A_i \cdot \partial_{x_{j}}P(R_i \le r|A_i,x,w)|x,w]}{\mathbbm{E}[\partial_{x_{j}}P(R_i \le r|A_i,x,w)|x,w]}
	\end{equation}
	Under the stronger independence condition that $\{X_{i} \indep (A_i,U_i,V_i)\}|W_i$, this becomes
	\begin{equation} \label{eq:Amean2}
		\mathbbm{E}[A_i|h(x,U_i) = \tau_{V_i}(r), X_i=x,W_i=w] = \frac{\partial_{x_j} \mathbbm{E}[A_i \cdot \mathbbm{1}(R_i \le r)|x,w]}{\partial_{x_j} P(R_i \le r|x,w)}
	\end{equation}
\end{proposition} 
\noindent 
The stronger assumption $\{X \indep (A,U,V)\}|W$ in Proposition \ref{prop:chars} leading to Eq. \eqref{eq:Amean2} is a natural one if the treatment(s) $X$ are as-good-as-randomly assigned (conditional on $W$), and $A$ represents a characteristic of individuals unaffected by the treatments $X$. In this case $A$ will be independent of the treatments in the same sense that $U$ and $V$ are. As an example, one could in a study in which gender is observed estimate the proportion of respondents at each response margin $r$ that are women. To do this, one only needs to supplement the regression contemplated by Theorem \ref{propflow} with another than multiplies $\mathbbm{1}(R_i \le r)$ by characteristic $A_i$, and compute the ratio of regression derivatives.

The weaker condition leading to Eq. \eqref{eq:Amean1} would hold if $A$ is a variable that \textit{could} be added as a valid control variable in $W$, but does not \textit{need to} be for EXOG to hold. This is perhaps harder to motivate, but it is certainly weaker than the above. \citet{abadie} similarly considers the identification of mean attributes of IV compliers, when those attributes represent valid control variables.\footnote{In the case of complier characteristics, the LATE monotonicity assumption plays a role analogous to the assumption that $\partial_{x_j}h(x,U_i)$ is common across individuals in Proposition \ref{prop:chars}. Analogously, the compliers are not individually identified.} The result of Proposition \ref{prop:chars} is also related to an intermediate result used in the proof of Theorem 1 in \citet{triangularHHKM}.

A particularly simple special case occurs when $A_i$ is binary. Then \eqref{eq:Amean1} yields $P(A_i=1|H_i = \tau_{V_i}(r),x,w)/P(A_i=1|x,w) = \partial_{x_{j}}P(R_i \le r|A_i=1,x,w)/\mathbbm{E}[\partial_{x_{j}}P(R_i \le r|X_i=x,A_i)|x,w]$. As a consequence, we then have that:
\begin{equation} \label{eq:relativeodds}
	\frac{P(A_i=1|H_i = \tau_{V_i}(r),x,w)/P(A_i=0|H_i = \tau_{V_i}(r),x,w)}{P(A_i=1|x,w)/P(A_i=0|x,w)} = \frac{\partial_{x_{j}}P(R_i \le r|A_i=1,x,w)}{\partial_{x_{j}}P(R_i \le r|A_i=0,x,w)}
\end{equation}
This says that, for example, the ratio of the local regression derivative for $R_i \le r$ between the male and female subsamples reveals the odds (conditional on $X_i=x$) of being a woman for the marginal respondents of response category $r$, as compared to the odds of being a woman for \textit{all} respondents (including the infra-marginal ones).

The simplest implementation of Eq. \eqref{eq:Amean2} would take the conditional expectation of $A_i \cdot \mathbbm{1}(R_i\le r)$ to be linear in $x$ and $w$, in addition to assuming a linear probability model for $\mathbbm{1}(R_i\le r)$. Given this restriction, the identified quantity
$$\mathbbm{E}[A_i|h(x,U_i) = \tau_{V_i}(r), X_i=x,W_i=w] = \frac{\partial_{x_j} \mathbbm{E}[A_i \cdot \mathbbm{1}(R_i \le r)|x,w]}{\partial_{x_j} P(R_i \le r|x,w)}$$
does not depend on $x$ or $w$, and thus the ratio of the coefficient on $X_{ji}$ in these two regressions identifies $\mathbbm{E}[A_i|H_i = \tau_{V_i}(r)]$. The results in Figure \ref{fig:attributes_self} choose $X_{ji}$ to be $i$'s household income, though Proposition \ref{prop:chars} could also be applied using the regression coefficients for PUMA income instead under the same assumptions.

The bottom panels of Figure \ref{fig:attributes_self} similarly approximate the relevant regressions with linear probability models, which in turn implies that the local relative odds
$$\frac{P(A_i=1|H_i = \tau_{V_i}(r),X_i=x,W_i=w)/P(A_i=0|H_i = \tau_{V_i}(r),X_i=x,W_i=w)}{P(A_i=1|X_i=x,W_i=w)/P(A_i=0|X_i=x,W_i=w)}$$
do not depend on $x$ or $w$ for a given $r$.

\subsection{How reporting heterogeneity can help instead of hurt causal inferences} \label{app:helphurt}
Ex-ante, it would seem that allowing for heterogeneity in reporting functions $r(\cdot,V_i)$ across individuals should make inferences about causal effects on $H_i$ only more difficult. After all, heterogeneity in $V_i$ precludes interpersonal comparisons of $H_i$ between two individuals on the basis of their observed $R_i$ (as illustrated in Figure \ref{fighonest}). 

It is perhaps counter-intuitive, then that reporting function heterogeneity can in fact be \textit{helpful} in drawing inferences about average causal effects on $H_i$ in the population overall. As Theorem \ref{propflow} demonstrates, assumption EXOG is sufficient to make differences in the distribution of $R_i$ with respect to $X_i$ at the population level causally interpretable, though the effects are local to the individuals that happen to be on the margin between response categories. 

In the extreme case, if $\tau_{V_i}(r)$ were degenerate at value $\tau(r)$, the observable derivative $\partial_{x_j} P(R_i \le r|X_i=x)$ identifies a very specific local average effect:
\begin{equation} \label{eq:degeneratev}
\partial_{x_j} P(R_i \le r|X_i=x) =-f_{h(x,U)}(\tau(r)) \cdot \mathbbm{E}\left[\partial_{x_j} h(x,U_i)|h(x,U_i)=\tau(r)\right]
\end{equation}
where we consider the case with no controls, for simplicity. The RHS of \eqref{eq:degeneratev} might be far from representative of the population mean of $\partial_{x_j} h(x,U_i)$, and may depend heavily on $x$ and $r$ if causal effects are quite heterogeneous.\footnote{For a concrete example, we must look beyond an additively separable model $h(x,U_i) = x^T\beta+U_i$ in which causal effects are homogeneous. With multiplicative heterogeneity in effects $h(x,U_i) = x'\beta \cdot U_i$, the above evaluates to $f_{h(x,U)}(\tau(r)) \cdot \beta_j \cdot \mathbbm{E}\left[U_i|U_i=\tau(r)/x^T\beta \right] = -f_{h(x,U)}(\tau(r)) \cdot \frac{\beta_j \cdot \tau(r)}{x^T\beta}=  f_{U}\left(\frac{\tau(r)}{x^T\beta}\right) \cdot \frac{\beta_j \cdot \tau(r)}{(x^T\beta)^2}$ which necessarily depends on both $x$ and $r$.} Even if we average the local regression derivative over $X_i$, we know by Corollary \ref{corr:avgderivative} that even given a linear model of $P(R_i \le r|X_i,W_i)$, the coefficient $\gamma_{rj}$ remains ``local'' to the marginal respondents, under reporting function homogeneity.

In the other extreme, we could consider a limit of ``maximum'' spread in reporting function heterogeneity, conceptualized as the response thresholds $\tau_{V_i}(r)$ being  uniformly distributed across the real line (or a convex subset of it that contains all happiness values in the population). Corollary \ref{corr:uniformv} shows that if this holds and reporting function heterogeneity is furthermore independent of potential outcomes, then $\partial_{x_j} P(R_i \le r|X_i=x)$ in fact identifies the overall \textit{unconditional} causal effect $\mathbbm{E}\left[\partial_{x_j} h(x,U_i)\right]$, rather than the conditional effect among individuals whose combination of reporting function and $U_i$ make them marginal between response categories $r$ and $r+1$. It then follows that the average derivative $\mathbbm{E}[\partial_{x_j} P(R_i \le r|X_i,W_i)]$ identifies the overall population mean $\mathbbm{E}\left[\partial_{x_j} h(X_i,U_i)\right]$.
\begin{corollary} \label{corr:uniformv}
Suppose that in addition to the assumptions of Theorem \ref{propflow}, i) $U_i \indep V_i|W_i,X_i$ and ii) $\tau_{V_i}(r)$ is uniformly distributed on $\textrm{supp}\{\tau_{V_i}(r)|W_i=w\}=[\mu_w,\ell_w] \subset \mathbbm{R}$ with $\textrm{supp}\{h(x,U_i)\} \subseteq [\mu_w,\ell_w]$. Then
$$\partial_{x_j} P(R_i \le r|x,w)=\frac{-1}{\mu_w-\ell_w}\cdot \mathbbm{E}\left[\partial_{x_j} h(x,U_i)|X_i=x,W_i=w\right]$$
\end{corollary}
\noindent The key feature of Corollary \ref{corr:uniformv} is that it establishes conditions under which $\partial_{x_j} P(R_i \le r|x,w)$ averages over all individuals with $X_i=x,W_i=w$, and not just those on the margin between two response categories. Note that the assumptions of Corollary \ref{corr:uniformv} imply that $\mathbbm{E}[\partial_{x_j} P(R_i \le r|X_i,W_i)] = \mathbbm{E}\left[\left.\frac{-1}{\mu_{W_i}-\ell_{W_i}}\cdot \mathbbm{E}\left[\partial_{x_j} h(X_i,U_i)|W_i=w\right]\right.\right]$, and carry the observable implications that $\partial_{x_j} P(R_i \le r|x,w)$ is the same across $r$ and if $\mathcal{R} = \{0,1, \dots \bar{R}\}$: $\partial_{x_j} \mathbbm{E}[R_i|x,w] = \bar{R}\cdot \partial_{x_j} P(R_i \le r|x,w)$ for each $r$. These implications do not appear satisfied in the empirical application, given the patterns of $\gamma_{jr}$ in Figure \ref{fig:acrossr_self}.

\subsection{Relaxing and testing reporting function invariance} \label{sec:testing}

This section relaxes the assumption that reporting behavior $V_i$ is fixed for each individual and therefore unaffected by variation in $X_i$. In particular, I show that Assumption EXOG is compatible with reporting functions depending directly on observables, in a limited way. I then discusses how even the weakest version of this assumption still leads to testable implications when homogeneity assumptions are placed on causal effects.

\subsubsection{Reporting-function invariance individually versus in distribution}
The assumption that variation in $X$ does not affect reporting functions may be strong. For example, \citet{CPBL} notes that the tendency to bunch at endpoints or the mid-point of scales for life-satisfaction questions is higher among individuals with less formal education, which suggests that a regression of life satisfaction on years of schooling might conflate reporting heterogeneity with variation in actual life satisfaction.\footnote{See also \citet{contirestat} and \citet{jeboreversing} for evidence of non-independence between $V$ and gender.} While a natural experiment could yield variation in schooling uncorrelated with this heterogeneity ex-ante (before schooling takes place), the assumption that education does not still directly change individuals' reporting functions (e.g. their definition of an ``eight'' out of ten in life satisfaction) may be hard to defend. 

To formalize the idea of reporting functions at the individual level being unchanged by $X$, introduce counterfactual notation $V_i^x$ to represent the reporting function that would occur for individual $i$ if $X_i=x$. In this notation, the \textit{actual} reporting function for this individual is $V^{X_i}_i$.\footnote{This counterfactual notation is equivalent to instead treating $V_i$ as fixed for an individual and letting $X_i$ enter directly into the reporting function: $R_i = r(H_i,X_i,V_i)$.}  The following assumption says that components $1 \dots J$ of $X$ are excludable from the reporting function, so that only $W_i$ can enter directly:
\begin{assumption*}[EXCLUSION (full reporting function invariance)] \label{ass:exclusion}
For all $i$, $V_i^{x}=V_i^{x'}$ for any $x$ and $x'$ that differ only in components $1 \dots J$.
\end{assumption*}
\noindent Given EXCLUSION, we may let $V_i= V_i^{W_i}$ and proceed with Assumption EXOG as stated above. However, EXCLUSION is stronger than necessary for my main results, and can be relaxed along similar lines to the ``rank similarity'' assumption of \citet{ch2005}:
\begin{assumption*}[INVARIANT (invariant reporting functions in distribution)] \label{ass:invariant}
Conditional on  $W_i=w$, $V^x \sim V^{x'}$ for any $x$ and $x'$ that differ only in components $1 \dots J$ and for which the remaining components equal $w$. Also, in addition to the second item of EXOG we have:
$\{X_{ji} \indep V^x_i \} \textrm{ }| \textrm{ } W_{i}=w$
for all $w$ and $x$ consistent with $w$.
\end{assumption*}
\noindent Given INVARIANT, we can proceed the definition $V_i = V_i^{X_i}$, and Assumption EXOG now follows. Assumption INVARIANT may also be strong in a given setting, but shows that EXOG does not require full reporting function invariance at the individual level.\\

\subsubsection{Testing reporting-function invariance in separable models} 
Given either EXCLUSION or INVARIANT, it is plausible to make Assumption EXOG under explicit randomization or selection-on-observables type variation in $X_i$. A violation of the ``exclusion restriction'' that $X_i$ does not enter into an individual's reporting function $r_i(\cdot)$ would threaten the first condition of Assumption EXOG that $\{V_i \indep X_{ji}\} | W_i$. This condition has testable implications, when additional structure is assumed on the causal response function $h(x,u)$.

In particular, consider the weak separability condition Eq. (\ref{eqweaksep}) considered in Section \ref{seccontiniousvariation}, that $h(x,u) = \texttt{h}(g(x),u)$ for some function $\texttt{h}$. Then:
\begin{align} \label{eqprobratio2}
\frac{\partial_{x_1}P(R_{i} \le r|x,w)}{\partial_{x_2}P(R_{i} \le r|x,w)} &= \frac{\int dF_{V|W}(v|w) \cdot f_H(\tau_v(r)|x,v,w) \cdot \mathbbm{E}\left[\partial_{x_1} h(x,U_i)|H_i=\tau_v(r),x,v,w\right]}{\int dF_{V|W}(v|w) \cdot f_H(\tau_v(r)|x,v,w) \cdot \mathbbm{E}\left[\partial_{x_2} h(x,U_i)|H_i=\tau_v(r),x,v,w\right]} \nonumber\\
&= \frac{\partial_{x_1}g(x)}{\partial_{x_2}g(x)}
\end{align}
which generalizes Eq. (\ref{eqexpratio2}) to hold for the CDF of responses at any $r$ rather than only for the mean. Importantly, the expression $\frac{\partial_{x_1}g(x)}{\partial_{x_2}g(x)}$ does not depend on $r$, leading to a set of overidentification restrictions when there are multiple thresholds (the number of response categories is 3 or greater). 

This restriction can be leveraged to construct a test for $\{V_i \indep X_{i}\} | W_i$, with $h(x,u) = \texttt{h}(g(x),u)$, MONO, and $\{X_{i} \indep U_i \} \textrm{ }| \textrm{ } (W_{i},V_i)$ (the second component of EXOG) as maintained assumptions. Some algebra shows, using Assumption MONO (see proof of Theorem \ref{propflow}), that:
\begin{align}
\partial_{x_j}P(R_{i} \le r|x,w) &= \partial_{x_j} \int P(H_i \le \tau_v(r)|X_i=x, V_i=v,W_i=w) \cdot  dF(v|x,w) \nonumber\\
&=\int \left\{ \partial_{x_j} P(H_i \le \tau_v(r)|X_i=x, V_i=v,W_i=w) \right\}\cdot dF(v|x,w) \nonumber \\
& \hspace{.5in} +\int P(H_i \le \tau_v(r)|X_i=x, V_i=v,W_i=w)\cdot  \frac{\partial}{\partial_{x_j}} \left\{dF(v|x,w) \right\} 	\label{eq:biasterm}
\end{align}
The first term above evaluates to the quantity in Theorem \ref{propflow} while the second term may be nonzero if $V_i$ is correlated with $X_{ji}$ conditional on $W_i$.\\

\noindent Instead of Eq. (\ref{eqprobratio2}) which assumed EXOG, we now have using (\ref{eq:biasterm})
\begin{align} \label{eqprobratio3}
\frac{\partial_{x_1}P(R_{i} \le r|x,w)}{\partial_{x_2}P(R_{i} \le r|x,w)} &= \frac{\partial_{x_1} g(x) + \frac{\int P(H_i \le \textcolor{purple}{\tau_v(r)}|x,v,w)\cdot \{\partial_{x_1} F_{V|XW}(v|x,w)\}}{\int f_H(\textcolor{purple}{\tau_v(r)}|x,v,w)\cdot dF_{V|XW}(v|x,w) }}{\partial_{x_2} g(x) + \frac{\int P(H_i \le \textcolor{purple}{\tau_v(r)}|x,v,w)\cdot \{\partial_{x_2} F_{V|XW}(v|x,w)\}}{\int f_H(\textcolor{purple}{\tau_v(r)}|x,v,w)\cdot dF_{V|XW}(v|x,w)}}
\end{align}
where the second term in both the numerator and the denominator depend on $r$ through the quantity $\tau_v(r)$, highlighted. Under the maintained assumptions, the only way that $\frac{\partial_{x_1}P(R_{i} \le r|x,w)}{\partial_{x_2}P(R_{i} \le r|x,w)}$ can vary by $r$ is through a failure of $\{V_i \indep X_{i}\} | W_i$. If we further assume linearity of the structural function $g(x)=x^T\beta$, then we obtain additional overidentification restrictions that we can use with (\ref{eqprobratio3}). In particular, $\partial_{x_1}P(R_{i} \le r|x,w)/\partial_{x_2}P(R_{i} \le r|x,w) = \beta_1/\beta_2$ should not depend on $x$, if $\{V_i \indep X_{i}\} | W_i$ holds.

Additional indirect tests for reporting function invariance can be found in the literature. For example, \citet{luttmer2005} compares life satisfaction to other outcome measures often associated with well-being, such as depression and open disagreements within the household. Seeing effects in the same direction, \citet{luttmer2005} concludes that the main results are not likely to driven by individuals changing their ``definition'' of happiness with $X_i$.

Eq. (\ref{eq:biasterm}) can also be used to study the nature of the bias that occurs when the implication $V \indep X |W$ of EXOG fails. Using integration by parts, the second term of (\ref{eq:biasterm}) can be rewritten as
\begin{align}
&\int P(H_i \le \tau_v(r)|X_i=x, V_i=v,W_i=w)\cdot  \frac{ \partial}{\partial_{x_j}} \left\{dF(v|x,w) \right\} \label{eq:nonexogbias}\\
&= (-1)^{d_V}\int \left\{ \frac{\partial}{\partial_{x_j}} F(v|x,w) \right\} \cdot \left\{\partial_{v_1,v_2, \dots v_{d_V}} P(H_i \le \tau_v(r)|X_i=x, V_i=v,W_i=w)\right\} \cdot dv_1 \dots dv_{d_V} \nonumber
\end{align}
provided that $\frac{ \partial}{\partial_{\tilde{v}_1}}\frac{ \partial}{\partial_{\tilde{v}_2}}\dots \frac{ \partial}{\partial_{\tilde{v}_M}}  \left\{\frac{ \partial}{\partial_{x_j}}F(v|x,w) \right\}$ vanishes on the boundary of $\mathcal{V}$, for any subset $\tilde{v}_1 \dots \tilde{v}_M$ of the components of $V$.

Expression (\ref{eq:nonexogbias}) will be positive if, for example, $V$ is a scalar independent of $U$ (conditional on $X$), higher values of $v$ are represent more ``optimistic'' reporting functions (that is, lower thresholds $\tau_v(r)$), and $X_j$ is positively correlated with $V$ (so that $F(v|x)$ decreases as $x_j$ is increased).\footnote{It is in principle possible for this bias term to be negative even if $X_j$ is associated with more optimistic reporting functions: if $U$ and $V$ are correlated in such a way that conditional on $X$ that those with more optimistic reporting functions tend to be less happy (this is difficult, but not impossible, to have happen while $\{X_{j} \indep V\}|W$).} As a simple example, suppose heterogeneity in reporting functions is scalar and takes the form as an additive shift in all thresholds between individuals: $\tau_v(r) = \tau(r)-v$. Individuals with high $V$ are more ``optimistic reporters'', since they require lower values of $H$ to report a given response $r$. If furthermore $U \indep V|X,W$, then (\ref{eq:nonexogbias}) reduces to:
\begin{align*}
\partial_{x_j}& P(R_i \le r|x,w) = \textrm{causal term}\\
& \hspace{1cm} - \int \left\{ \frac{\partial}{\partial_{x_j}} F(v|x,w) \right\} \cdot \left\{\partial_{v} P(H_i \le \tau(r)-v|x,w)\right\} \cdot dv\\
&=\textrm{causal term} - \int f_{H}(\tau(r)-v|x,w) \cdot \left\{ -\frac{\partial}{\partial_{x_j}} F(v|x,w) \right\} \cdot dv
\end{align*}
The second term reflects a positively-weighted integral over the term in brackets, which measures the correlation between $x_j$ and ``reporting optimism'' $v$. If $X_j$ and $V$ are positively correlated, then the second term above in $\partial_{x_j} P(R_i \le r|x,w)$ will be positive, meaning that the observable relationship between $X_j$ and $R$ will be biased upwards by a positive non-causal term. If $V_j$ and $X$ were instead negatively correlated in this example, the bias would be in the other direction.\footnote{Note that if $f_{H}(\tau(r)-v|x,w)$ and $\frac{\partial}{\partial_{x_j}} F(v|x,w)$ are ``uncorrelated'' over $v$ in the sense that $\int \left\{f_{H}(\tau(r)-v|x,w)-\int f_{H}(\tau(r)-v'|x,w)\cdot dv'\right\} \cdot \left\{\frac{\partial}{\partial_{x_j}} F(v|x,w) - \int \frac{\partial}{\partial_{x_j}} F(v'|x,w) \cdot dv' \right\} \cdot dv = 0$, then the density integrates to one and the non-causal term above becomes $\int f_{H}(\tau(r)-v|x,w) \cdot \left\{ -\frac{\partial}{\partial_{x_j}} F(v|x,w) \right\} \cdot dv = - \frac{\partial}{\partial_{x_j}}\mathbbm{E}[V_i|X_i=x,W_i=w]$, i.e. the bias from a failure of independence between $X_j$ and reporting optimism $V$ is simply the rate at which the mean of reporting optimism varies with $X_j$.}

\subsection{Results that assume reporting functions vary idiosyncratically} \label{sec:idio}
The main results in the paper assume both parts of Assumption EXOG: $\{X_{i} \indep V_i \} \textrm{ }| \textrm{ } W_{i}$ and $\{X_{i} \indep U_i \} \textrm{ }| \textrm{ } (W_{i},V_i)$. These are both natural when there is idiosyncratic variation in $X_i$ arising from an experiment or natural experiment, and reporting functions $V_i$ are unaffected by $X_i$. However, if causal inference is not the researcher's goal, and the researcher simply wishes to document features of the joint distribution of $H_i$ and $X_i$, we can let the function $h$ simply represent the conditional quantile function of $H$ as in Eq. Footnote \ref{eqcondquantile} (with the definitions $U_i = (\theta_i, V_i,W_i)^T$, $\theta_i:=F_{H|XVW}(H_i|X_i,V_i,W_i)$ and then $h(x,u) = Q_{H|XVW}(\theta|v,w)$). In this case, model Eq. \ref{modelh} and the latter condition of EXOG holds automatically, since $\theta_i|(X_i,V_i,W_i) \sim Unif[0,1]$, for all $(X_i,V_i,W_i)$ (see Lemmas 3 and 4 of \citet{goff2024testingidentifyingassumptionsparametric} for a proof).

Thus, to learn about the joint distribution of $H_i$ and $X_i$, we only need to assume the first part of EXOG: that $X$ is conditionally independent of reporting heterogeneity $V$ (and not that it is independent of $U$ and $V$ \textit{jointly}). In this case all results from the body of the paper still hold as stated without the first part of EXOG as an explicit assumption.

A stronger assumption that may be attractive in these contexts is that it is reporting heterogeneity $V_i$, rather than $X_j$, that varies ``idiosyncratically''. I.e., we might assume:
\begin{assumption*}[IDR (idiosyncratic reporting)]
$\{V_i \indep (U_i, X_{i}) \} \textrm{ }| \textrm{ } W_{i}$
\end{assumption*}
\noindent Assumption IDR may be an attractive alternative to Assumption EXOG introduced in Section \ref{sec:ci}, though neither assumption nests the other (IDR only implies the first part of EXOG). EXOG aligns more with cases in which there is ``selection-on-observables'': Eq. (\ref{eq:idx}) that $\{X \indep (U,V) \}| W$ may follow naturally in settings in which the researcher has already argued for $\{X \indep U\}|W$. Furthermore, EXOG allows $U$ and $V$ to be arbitrarily correlated, unlike IDR. 

IDR leads to some alternative identification results to the ones in the body of this paper, for establishing features of the joint distribution of $H$ and $X$. To this end, we need not make reference to any structural function $h(x,u)$ for happiness, and can take IDR as saying simply that $\{V_i \indep (H_i, X_{i}) \} \textrm{ }| \textrm{ } W_{i}$. Note that this implication and IDR as stated above are equivalent under the mapping in Footnote \ref{eqcondquantile} that defines $h(x,u)$ as a conditional quantile function, without any causal interpretation. 

For example, if EXOG in Lemma \ref{lemma:hdist} is replaced by Assumption IDR, we can simplify the expression for $\partial_{x_j}P(R_i\le r|x,w)$ to remove conditioning on $V$ in the conditional densities, CDFs, and quantile functions, so that
\begin{equation} \label{propquantileidr}
\nabla_x \mathbbm{E}[R_i|x,w] = \int dF_{V|W}(v|w) \cdot \sum_{r\in \mathcal{R}} f_H(\tau_v(r)|x,w)\cdot \nabla_x \left.Q_{H|XW}(\alpha|x)\right|_{\alpha = F_{H|XW}(\tau_v(r)|x,w)}
\end{equation}
Another result that makes the alternative Assumption IDR rather than EXOG, but allows for discrete variation in $X$:
\begin{proposition}\label{propidr}
Given MONO, IDR, and that $f_{\tau_V|W}(h|w)=\frac{d}{dh}P(h \le \tau_{V_i}(r)|W_i=w)$ exists:
\begin{eqnarray*}
P(R_i \le r|x',w)-P(R_i \le r|x,w) = \int_h \left\{F_{H|XW}(h|x',w)-F_{H|XW}(h|x,w)\right\} \cdot f_{\tau_V|W}(h|w)
\end{eqnarray*}	
\end{proposition}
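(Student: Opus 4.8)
The plan is to reduce the left-hand side to an expectation, over reporting heterogeneity, of the conditional CDF of happiness evaluated at a threshold, and then to recognize that expectation as an integral against the density of the threshold random variable $\tau_{V_i}(r)$. First I would invoke Lemma \ref{prophonest}: under HONEST the event $\{R_i \le r\}$ coincides with $\{H_i \le \tau_{V_i}(r)\}$, so conditioning on $X_i = x$ and $V_i = v$,
$$P(R_i \le r \mid X_i = x, V_i = v) = F_{H|XV}(\tau_v(r)\mid x, v).$$
Next I would apply IDR. Because $\{V_i \indep (H_i, X_{ji})\}\mid W_i$ and $x, x'$ share the same control block $w$, conditioning on $V_i$ does not alter the law of $H_i$ given the treatments, so the right-hand side equals $F_{H|X}(\tau_v(r)\mid x)$, depending on $v$ only through the threshold. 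Integrating over $V_i$ with $dF_{V|W}(v|w)$ --- a measure that IDR also renders common to $x$ and $x'$ --- yields
$$P(R_i \le r \mid X_i = x) = \int dF_{V|W}(v|w)\, F_{H|X}(\tau_v(r)\mid x) = \mathbbm{E}\left[F_{H|X}(\tau_{V_i}(r)\mid x)\mid W_i = w\right].$$

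The second step passes from this expectation over $V_i$ to an integral over $h$. Writing $T := \tau_{V_i}(r)$ for the threshold viewed as a random variable whose conditional law given $W_i = w$ admits the density $f_{\tau_V}$ assumed in the statement, the expectation formula $\mathbbm{E}[\psi(T)] = \int_h \psi(h) f_{\tau_V}(h)\, dh$ gives $P(R_i \le r \mid X_i = x) = \int_h F_{H|X}(h\mid x)\, f_{\tau_V}(h)\, dh$. Subtracting the expression at $x$ from the one at $x'$ --- valid because $f_{\tau_V}$ depends on $(r,w)$ only through the law of $V_i$ and is therefore identical across the two treatment values --- collapses the difference into the single integral $\int_h \{F_{H|X}(h\mid x') - F_{H|X}(h\mid x)\}\, f_{\tau_V}(h)\, dh$, which is the claim.

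I expect the only delicate points to be bookkeeping about the independence. IDR is stated coordinate by coordinate ($V_i \indep (H_i, X_{ji})\mid W_i$ for each $j$), whereas the argument above uses that both the marginal law of $V_i$ and the conditional law of $H_i$ are unchanged when the entire first block of $X$ moves from $x$ to $x'$; I would therefore either invoke the joint form $\{V_i \indep (H_i, X_{1i},\dots,X_{Ji})\}\mid W_i$ that the surrounding discussion effectively adopts, or restrict the contrast to a single coordinate. A second, cosmetic, point is the sign convention in the definition of $f_{\tau_V}$: the object that naturally appears is the ordinary density of $\tau_{V_i}(r)$, namely $\tfrac{d}{dh}P(\tau_{V_i}(r)\le h)$, so the displayed definition should be read accordingly. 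Finally, any atom of $\tau_{V_i}(r)$ at $+\infty$ permitted by Lemma \ref{prophonest} is harmless, since there the integrand $F_{H|X}(h\mid x')-F_{H|X}(h\mid x)$ vanishes.
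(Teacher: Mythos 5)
Your proof is correct, and it reaches the identity by a genuinely different route than the paper. The paper's proof conditions on $H_i=h$ first, writes $P(R_i\le r|X_i=x)=\int_h P(r(h,V_i)\le r)\,dF_{H|X}(h|x)$ using IDR, and then integrates by parts to shift the $h$-derivative from the threshold survival function $P(h\le\tau_{V_i}(r))$ onto $F_{H|X}(\cdot|x)$, discarding a boundary term that is common to $x$ and $x'$. You condition in the opposite order --- on $V_i=v$ first, via Lemma \ref{prophonest} --- obtaining $\int dF_{V|W}(v|w)\,F_{H|X}(\tau_v(r)|x)$, and then pass to the $h$-integral by reading this as $\mathbbm{E}\left[F_{H|X}(T|x)\right]$ for the threshold random variable $T=\tau_{V_i}(r)$ and applying the expectation-against-density formula. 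The two arguments are dual (integration by parts is precisely the bridge between them), but yours buys three things: it avoids boundary-term bookkeeping entirely; it makes the role of the density assumption transparent, as literally the density of $T$; and it disposes of a possible atom of $T$ at $+\infty$ cleanly, since that atom contributes $P(T=\infty)\cdot(1-1)=0$ to the difference --- a case the paper's boundary term handles only implicitly. Your two caveats are also well taken and, notably, apply equally to the paper's own proof: the step $P(r(h,V_i)\le r|H_i=h,X_i=x)=P(r(h,V_i)\le r)$ there likewise requires the joint form $\{V_i\indep(H_i,X_{1i},\dots,X_{Ji})\}\,|\,W_i$ rather than the coordinate-by-coordinate statement of IDR; and the sign convention is indeed off in the displayed definition, since $\frac{d}{dh}P(h\le\tau_{V_i}(r))$ is the derivative of a survival function and hence nonpositive, and the paper's final line silently replaces it with the ordinary density $\frac{d}{dh}P(\tau_{V_i}(r)\le h)$, exactly as you observe.
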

\noindent Consider the case of no controls $W$ for simplicity. One consequence of Proposition \ref{propidr} is that if $F_{H|X=x'}$ first order stochastically dominates $F_{H|X=x}$, i.e. that $F_{H|X}(h|x')\le F_{H|X}(h|x)$ for all $h$, then under IDR this will be reflected in $F_{R|X=x'}$ first order stochastically dominating $F_{R|X=x}$. That is, the idiosyncratic reporting function transformations preserve this ranking of conditional distributions, in aggregate. This generalizes results found in \citet{ShroderYitzhaki2017}, \citet{bondandlang} and \citet{Kaiser_2019}, which assume a common reporting function across individuals. Note that the existence of $\frac{d}{dh}P(h \le \tau_{V_i}(r))$ requires that for any response $r$ and happiness level $h$, there individuals in the population with thresholds for $r$ very close to $h$.

An alternative to Proposition \ref{propidr} considers the conditional mean rather than the conditional CDF of $R_i$:
\begin{proposition}\label{propidr2}
Given MONO and IDR: 
\begin{eqnarray*}
\mathbbm{E}[R_i|x',w]-\mathbbm{E}[R_i|x,w] = \int_{0}^1 \bar{r}'_{x',x,w}(u) \cdot \left\{Q_{H|XW}(u|x',w)-Q_{H|XW}(u|x,w)\right\}du
\end{eqnarray*}	
where $\bar{r}'_{x',x,w}(u):=\int dF_{V|W}(v|w)\cdot  \frac{r(Q_{H|XW}(u|x',w),v)-r(Q_{H|XW}(u|x,w),v)
}{Q_{H|XW}(u|x',w)-Q_{H|XW}(u|x,w)}$.
\end{proposition}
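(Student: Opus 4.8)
The plan is to collapse the difference in conditional means into a single integral over quantile ranks $u \in [0,1]$, using IDR to remove the reporting heterogeneity $V_i$ and then a multiply-and-divide step to manufacture the average slope $\bar{r}'_{x',x}$. First I would write $\mathbbm{E}[R_i|X_i=x]=\mathbbm{E}[r(H_i,V_i)|X_i=x]$ via the model (\ref{modelr}) and factor out $V_i$. Since IDR gives $\{V_i \indep (H_i,X_{ji})\}\,|\,W_i$ and $x,x'$ share the control components $w$, conditioning on $X_i=x$ leaves $V_i$ independent of $H_i$ with $F_{V|X=x}=F_{V|W=w}$, so that
$$\mathbbm{E}[r(H_i,V_i)|X_i=x] = \int dF_{V|W}(v|w)\, \mathbbm{E}[r(H_i,v)|X_i=x].$$

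Next I would represent each inner expectation over quantile ranks: for fixed $v$, $\mathbbm{E}[r(H_i,v)|X_i=x]=\int_0^1 r(Q_{H|X=x}(u),v)\,du$, which holds for an arbitrary distribution of $H_i|X_i=x$ because $Q_{H|X=x}(U)$ with $U\sim \mathrm{Unif}[0,1]$ has the same law as $H_i|X_i=x$ (so no continuity of $H$ is needed here, consistent with the proposition invoking only HONEST and IDR). Differencing across $x'$ and $x$ and applying Fubini to interchange the $v$- and $u$-integrals yields
$$\mathbbm{E}[R_i|X_i=x']-\mathbbm{E}[R_i|X_i=x]=\int_0^1 \int dF_{V|W}(v|w)\,\bigl\{r(Q_{H|X=x'}(u),v)-r(Q_{H|X=x}(u),v)\bigr\}\,du.$$
Finally, inside the $u$-integral I would multiply and divide by $Q_{H|X=x'}(u)-Q_{H|X=x}(u)$, so that the inner factor $\int dF_{V|W}(v|w)\,\frac{r(Q_{H|X=x'}(u),v)-r(Q_{H|X=x}(u),v)}{Q_{H|X=x'}(u)-Q_{H|X=x}(u)}$ is exactly $\bar{r}'_{x',x}(u)$, giving the claimed identity.

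The main care points rather than a deep obstacle are three. First, the $0/0$ case when $Q_{H|X=x'}(u)=Q_{H|X=x}(u)$: the numerator then also vanishes, so the product $\bar{r}'_{x',x}(u)\cdot\{Q_{H|X=x'}(u)-Q_{H|X=x}(u)\}$ is $0$ and $\bar{r}'$ may be defined arbitrarily on this rank set without affecting the integral. Second, the Fubini interchange requires integrability, which is immediate when $\mathcal{R}$ is a bounded set of integers (as in the running example), since $r$ is then bounded. Third, although HONEST is not needed for the identity itself, it is what guarantees the interpretive conclusion that $\bar{r}'_{x',x}(u)\ge 0$: weak monotonicity of $r(\cdot,v)$ forces the numerator and denominator of each integrand to share a sign, so the ratio—and hence $\bar{r}'_{x',x}(u)$—is weakly positive for every $u$.
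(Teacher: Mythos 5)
Your proof is correct, and it follows the paper's own decomposition almost step for step --- law of iterated expectations over $V_i$, IDR to replace $F_{V|X}$ with $F_{V|W}$ and drop the conditioning on $V_i$, a rank representation of each conditional mean, Fubini, and the multiply-and-divide step --- but with one genuine local difference. The paper passes through the quantile function of the \emph{transformed} variable, writing $\mathbbm{E}[r(H_i,v)|X_i=x]=\int_0^1 Q_{r(H,v)|X=x}(u)\,du$ and then invoking quantile equivariance, $Q_{r(H,v)|X=x}(u)=r(Q_{H|X=x}(u),v)$, which is exactly where HONEST (weak monotonicity plus left-continuity of $r(\cdot,v)$) enters its argument. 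You instead apply the probability integral transform directly, $\mathbbm{E}[g(H_i)|X_i=x]=\int_0^1 g(Q_{H|X=x}(u))\,du$ for arbitrary measurable $g$, so you never form $Q_{r(H,v)|X}$ at all; this is more elementary, needs only measurability of $r(\cdot,v)$, and makes transparent your (correct) observation that HONEST is needed only for the interpretive claim $\bar{r}'_{x',x}(u)\ge 0$, not for the identity itself. Your explicit handling of the $0/0$ ranks where $Q_{H|X=x'}(u)=Q_{H|X=x}(u)$ is also cleaner than the paper's treatment, whose definition of $\bar{r}'_{x',x}$ silently leaves that case undefined. The one place you are slightly narrower is the Fubini justification: you invoke boundedness of $r$ when $\mathcal{R}$ is a bounded set of integers, whereas the paper justifies the interchange whenever $\mathbbm{E}[|R_i|\,|X_i=x]$ and $\mathbbm{E}[|R_i|\,|X_i=x']$ are finite --- a condition worth stating if you want the result at the paper's level of generality.
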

\noindent Note that Proposition \ref{propidr2} provides a generalization of the expression
$$\mathbbm{E}[H_i|X_i=x']-\mathbbm{E}[H_i|X_i=x] = \int_{0}^1 \left\{Q_{H|X=x'}(u)-Q_{H|X=x}(u)\right\}du,$$
which reveals how an (infeasible) comparison of means of $H_i$ between $x$ and $x'$ aggregates over conditional quantile differences.

\section{Further details on the empirical application} \label{sec:empiricalmore}

\subsection{Sample construction}
I use three data sources in my replication and extension of \citet{luttmer2005}. First, I access the public microdata files for the 1987 and 1992 waves of the NLSF from ICPSR, which constitutes a nationally representative sample of individuals nineteen or older (and able to speak English or Spanish). This provides the variables $R_i$, $X_{1i}$, and $W_i$ in  e.g. Eq. \eqref{eq:ols2var}. I follow \citet{luttmer2005} in deflating monetary values using the consumer price index from the Bureau of Labor Statistics CPI-U series.

Although accessing the geo-coded data from the NLSF is not currently supported, I obtained the predicted PUMA-level log-earnings variable $X_{2i}$ and PUMA identifiers (for clustering standard errors) through correspondence with Erzo F.P. Luttmer and a data sharing agreement with the Social Sciences Research Services at the University of Wisconsin. 
I thank the author for providing this variable to me and the cooperation of the University of Wisconsin. By merging these data with the publicly available data (by NLSF caseid and wave) and keeping only observations that are matched, I automatically implement the sampling restriction of \citet{luttmer2005} to respondents who were married or cohabiting in both waves of the NLSF.

In the regressions reported, the sample used throughout is that of the OLS regressions with controls. Non-parametric and semi-parametric regressions are implemented with the Stata \texttt{npregress kernel} command. In computing average derivatives, this command drops observations for which the local kernel-weighted design matrix is close to singular, which results in a loss of some observations. Table \ref{table:mainself} reports the size of the full sample passed to \texttt{npregress kernel}.

\newpage
\subsection{Results for average of main respondent and spouse} \label{sec:spouse}

\begin{table}[H]
\centering
{
\def\sym#1{\ifmmode^{#1}\else\(^{#1}\)\fi}
\begin{tabular}{l*{5}{c}}
\hline\hline
          &\multicolumn{1}{c}{(1)}&\multicolumn{1}{c}{(2)}&\multicolumn{1}{c}{(3)}&\multicolumn{1}{c}{(4)}&\multicolumn{1}{c}{(5)}\\
          &\multicolumn{1}{c}{OLS}&\multicolumn{1}{c}{Luttmer Table 1}&\multicolumn{1}{c}{Semiparametric}&\multicolumn{1}{c}{OLS}&\multicolumn{1}{c}{Kernel}\\
\hline
Own ln income&   0.0879\sym{***}&    0.123\sym{***}&   0.0844\sym{***}&   0.0349\sym{***}&    0.126\sym{***}\\
          &   (4.59)         &   (6.15)         &   (8.72)         &   (3.61)         &  (13.07)         \\
PUMA ln income&   -0.225\sym{***}&   -0.239\sym{***}&   -0.179\sym{***}&   -0.151\sym{**} &   -0.247\sym{***}\\
          &  (-3.38)         &  (-3.62)         &  (-3.77)         &  (-3.17)         &  (-5.20)         \\
\hline
Ratio PUMA/own&   -2.558         &   -1.943         &   -2.210         &   -4.312         &   -1.936         \\
se(ratio) &    0.937         &        .         &        .         &    1.640         &        .         \\
\hline    &                  &                  &                  &                  &                  \\
Controls  &        X         &        X         &        X         &                  &                  \\
Clustered se&        X         &        X         &                  &        X         &                  \\
\hline    &                  &                  &                  &                  &                  \\
Sample size&     8855         &     8944         &     7822         &     8856         &     7882         \\
\hline\hline
\multicolumn{6}{l}{\footnotesize \textit{t} statistics in parentheses}\\
\multicolumn{6}{l}{\footnotesize \sym{*} \(p<0.05\), \sym{**} \(p<0.01\), \sym{***} \(p<0.001\)}\\
\end{tabular}
}
\\ \vspace{.5cm}
\caption{Replication of \citet{luttmer2005}'s results for the average of main respondent and spouse, and alternative non-linear estimators. Standard errors are clustered at the PUMA level unless otherwise noted. For semiparametric and non-parametric columns, the first two rows report average local derivatives, and ``Ratio'' measures the average ratio of local derivatives, cf. Eq. \eqref{eqexpratioavg2}.} \label{table:main}
\end{table}
\begin{table}[H]
\centering
{
\def\sym#1{\ifmmode^{#1}\else\(^{#1}\)\fi}
\begin{tabular}{l*{6}{c}}
\hline\hline
          &\multicolumn{1}{c}{(1)}&\multicolumn{1}{c}{(2)}&\multicolumn{1}{c}{(3)}&\multicolumn{1}{c}{(4)}&\multicolumn{1}{c}{(5)}&\multicolumn{1}{c}{(6)}\\
          &\multicolumn{1}{c}{R$\le$1}&\multicolumn{1}{c}{R$\le$2}&\multicolumn{1}{c}{R$\le$3}&\multicolumn{1}{c}{R$\le$4}&\multicolumn{1}{c}{R$\le$5}&\multicolumn{1}{c}{R$\le$6}\\
\hline
Own ln income&  0.00134         &  0.00323         &  0.00849\sym{**} &   0.0203\sym{***}&   0.0327\sym{***}&   0.0141         \\
          &   (1.09)         &   (1.85)         &   (2.63)         &   (3.70)         &   (4.22)         &   (1.78)         \\
PUMA ln income&-0.000673         & -0.00475         &  -0.0275\sym{*}  &  -0.0476\sym{*}  &  -0.0731\sym{*}  &  -0.0764\sym{**} \\
          &  (-0.30)         &  (-1.10)         &  (-2.57)         &  (-2.24)         &  (-2.58)         &  (-2.79)         \\
\hline
Ratio PUMA/own&   -0.501         &   -1.473         &   -3.244         &   -2.346         &   -2.233         &   -5.413         \\
se(ratio) &    1.593         &    1.510         &    1.600         &    1.237         &    1.000         &    3.505         \\
Sample size&     8855         &     8855         &     8855         &     8855         &     8855         &     8855         \\
\hline\hline
\multicolumn{7}{l}{\footnotesize \textit{t} statistics in parentheses}\\
\multicolumn{7}{l}{\footnotesize \sym{*} \(p<0.05\), \sym{**} \(p<0.01\), \sym{***} \(p<0.001\)}\\
\end{tabular}
}
\\ \vspace{.5cm}
\caption{Average of main respondent and spouse.  Linear probability model. All regressions include the controls from Table \ref{table:mainself} and standard errors clustered by PUMA.} \label{table:eachthreshold}
\end{table}

\begin{figure}[H]
\begin{center} \vspace{.2cm}
\includegraphics[height=1.75in]{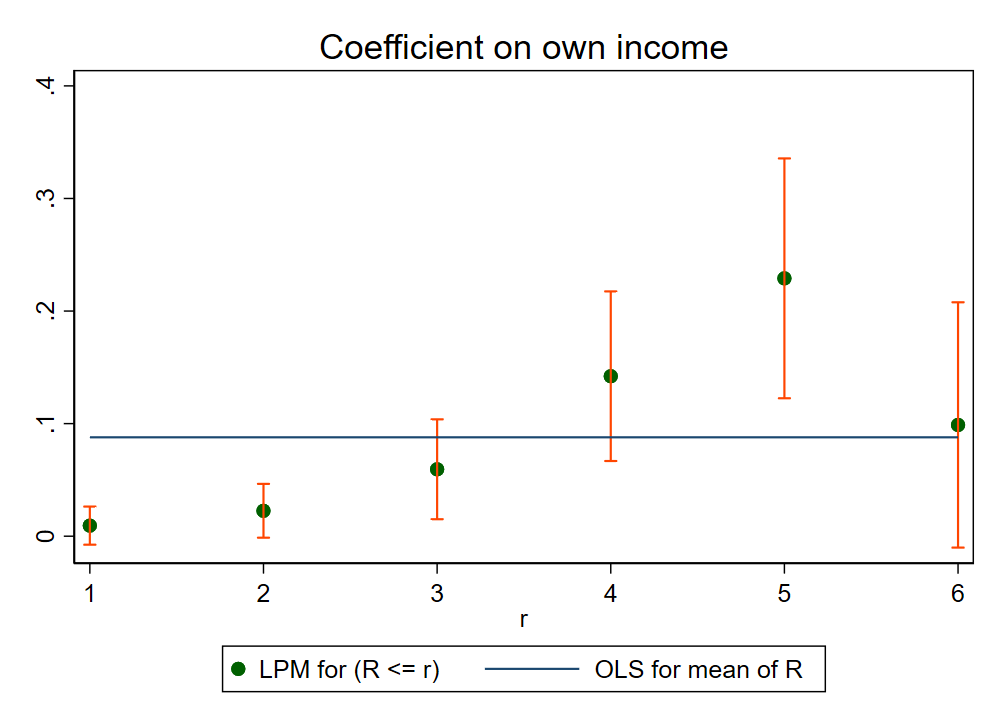}	\quad \quad \includegraphics[height=1.75in]{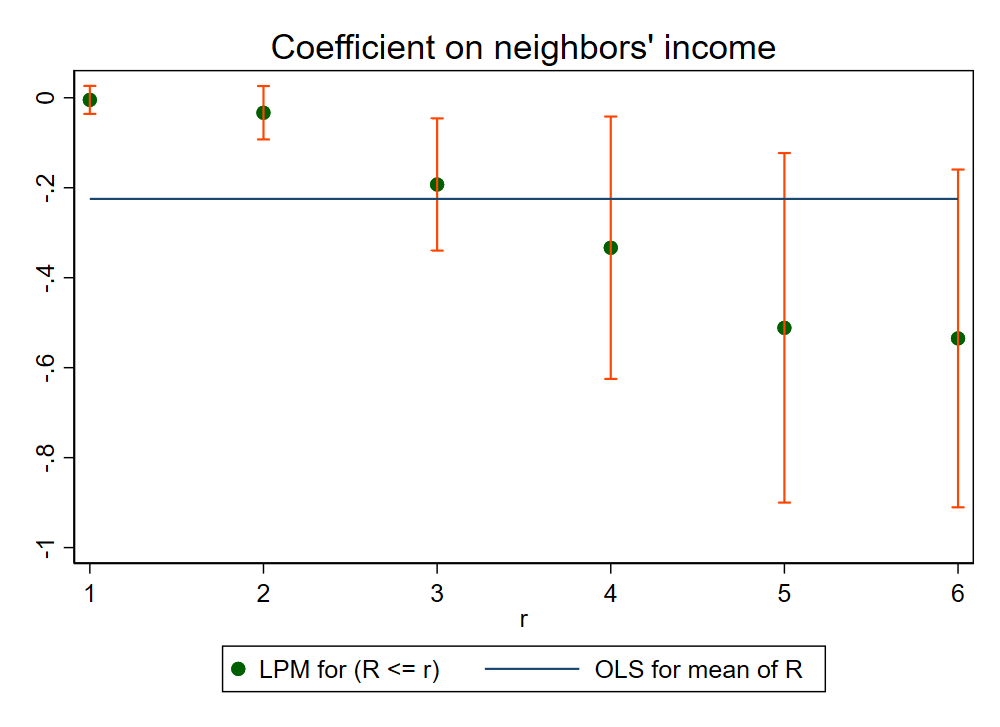} \\ \includegraphics[height=1.75in]{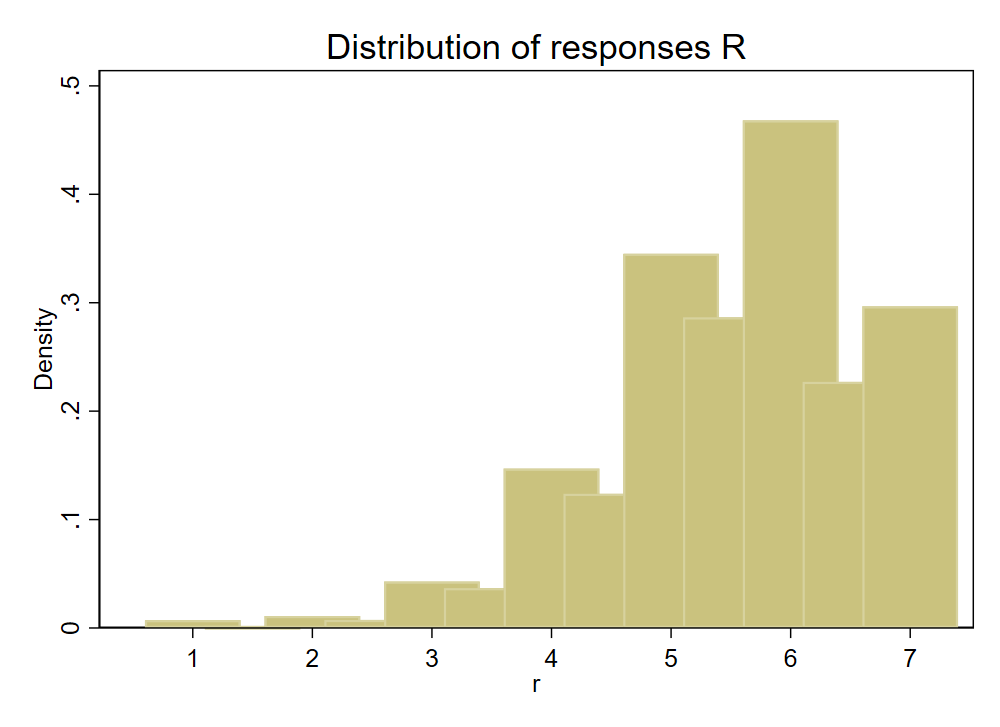}
\quad \quad \includegraphics[height=1.75in]{{"Results/mrsplot_3on_self"}.png}
\caption{Average of main respondent and spouse. Visualization of the OLS estimates of $\gamma_{1r}$ (top-left), $\gamma_{2r}$ (top-right), the ratio $\gamma_{2r}/\gamma_{1r}$ (bottom-right) from the regression $\mathbbm{1}(R_i \le r) = \gamma_{1r} X_{1i}+\gamma_{2r} X_{2i} + \lambda_r^TW_i+\epsilon_{ri}$ for $r \in \{1, 2, \dots 6\}$, along with a histogram of the response categories $r\in \{1, \dots 7\}$. The horizontal line in the upper panels and bottom right panel depicts the corresponding value from mean regression (see Table \ref{table:mainself}).} \label{fig:acrossr}
\end{center}
\end{figure}

\section{What would be identified with a smooth reporting function} \label{seccontinuousreporting}

This section first compares the results for regression derivatives with discrete response categories resulting from Theorem \ref{propflow} to a hypothetical case in which the space of responses $\mathcal{R}$ were instead a continuum. Then I consider such a continuum as a limit of richer and richer response spaces, which is necessary to develop some of the formal results in Section \ref{sec:comparing} of the main paper.

\subsection{Continuous regressors with a continuum of responses} \label{sec:continuum}

It is informative to compare the implications of Theorem \ref{propflow} to what would be identified if $H_i$ were itself directly observable in the data. As a benchmark, this section imagines an intermediate situation in which respondents can select a response from some bounded continuum in $\mathcal{R}$. This allows us to separate the effect of reporting heterogeneity from that of information loss due to discretization of the latent variable $H_i$ into categories.

Suppose $\mathcal{R}$ is a convex subset of $\mathbbm{R}$, for simplicity $\mathcal{R}=[0,\bar{R}]$ for some maximum response value $\bar{R}$. Figure \ref{fighonest3} depicts two examples of reporting functions on this continuum of responses.
\begin{figure}[H]
	\begin{center} \vspace{.2cm}
		\includegraphics[height=2in]{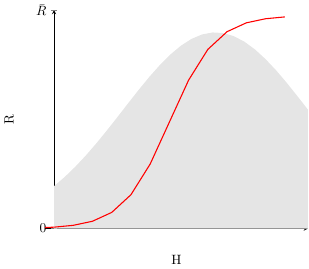} \quad \quad \quad \quad  \includegraphics[height=2in]{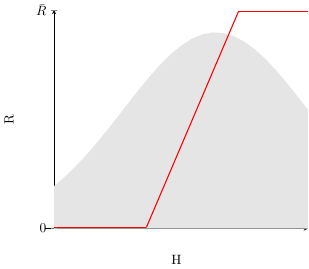}
		\caption{Example of two ``continuous'' reporting functions, with the density of $H$ depicted in gray. \label{fighonest3}}
	\end{center}
\end{figure}
While the example on the left side of Figure \ref{fighonest3} is a smooth sigmoid shape mapping $\mathbbm{R}$ to the interval $[0,\bar{R}]$, the piecewise-linear reporting function on the right has kinks at $\tau_{v}(0)$ and $\tau_{v}(\bar{R})$ beyond which the function is flat. Nevertheless, we may define a derivative function $r'(h, v)$ of any given $r(h,v)$ with respect to $h$, which by virtue of MONO can only fail to exist only at isolated points in $\mathcal{H}$ for a given $v$.\footnote{This is an application of ``Lebesque's theorem'' that monotone functions are differentiable almost everywhere.} Provided that $H_i$ is continuously distributed, it therefore does not affect results to treat $r'(h, v)$ as defined for all $h$. With ``smooth reporting'', we have the following analog of Theorem \ref{propflow}:
\begin{proposition}[] \label{propflowcont}
	Assume MONO, EXOG and REG for at least one $j$, with $\mathcal{R}$ a convex subset of $\mathbbm{R}$. Then: 
	\begin{equation}\label{compare:b}
		\nabla_x \mathbbm{E}[R_i|x,w] = \int dF_{V|W}(v|w) \int dh \cdot \textcolor{blue}{r'(h,v)}\cdot f_H(\textcolor{purple}{h}|x,v,w) \cdot \mathbbm{E}\left[\nabla_x h(x,U_i)|\textcolor{purple}{h},x,v,w\right]
	\end{equation}
	provided the ``boundary condition'': $\lim_{h \rightarrow \pm \infty}  f_H(h|x,v,w) \cdot \mathbbm{E}\left[\partial_{x_j} h(x,U_i)|H_i=h,x,v.w\right] = 0,$
	i.e. average partial effects do not explode for extreme values of $H_i$, any faster than the density of $H_i$ falls off in $h$, for each $v$ and $j$ satisfying REG.
\end{proposition}
\noindent The proof of Proposition \ref{propflowcont} makes use of a result of \citet{kasy2022} that relates derivatives of the density of an outcome with respect to policy variables, to the rate of change of the ``flow density'' quantity introduced in the discussion of Theorem \ref{propflow}.\\

\noindent We can compare this expression to what would be recovered by the infeasible regression of $H_i$ on $X_i$ and $W_i$ (i.e. if $H_i$ were observed):
\begin{subequations}\label{compare}
	\begin{equation}\label{compare:a}
		\nabla_x \mathbbm{E}[H_i|x,w] = \int dF_{V|W}(v|w) \int dh\cdot\textcolor{blue}{1}\cdot f_H(\textcolor{purple}{h}|x,v,w) \cdot \mathbbm{E}\left[\nabla_x h(x,U_i)|H_i=\textcolor{purple}{h},x,v,w\right]
	\end{equation}
	And with integer categories $\mathcal{R}$, using Theorem \ref{propflow}:
	\begin{equation}\label{compare:c}
		\nabla_x \mathbbm{E}[R_i|x,w] = \int dF_{V|W}(v|w) \textcolor{blue}{\sum_{r}} f_H(\textcolor{purple}{\tau_{v}(r)}|x,v,w) \cdot \mathbbm{E}\left[\nabla_x h(x,U_i)|H_i=\textcolor{purple}{\tau_{v}(r)},x,v,w\right]
	\end{equation}
\end{subequations}
These three expressions differ only in what multiplies $f_H(h|x,vmw)\cdot \mathbbm{E}\left[\nabla_x h(x,U_i)|h,x,v,w\right]$ for various values of $h$. Relative to (\ref{compare:a}), (\ref{compare:b}) introduces the derivative $r'(h,v)$ of the reporting function. Intuitively, $r'(h,v)$ corresponds to how closely spaced the thresholds are near a given value of $h$. If this spacing varies across the support of $h$, causal effects will be up-weighted for the $h$ where $r'(h,v)$ is largest, relative to the $h$ where the derivative is smaller. Comparing (\ref{compare:c}) to (\ref{compare:b}) shows that using subjective responses with discrete categories further involves information loss due to the discretization: the integral over all $h$ is replaced by a sum over the thresholds $\tau_v(r)$.\footnote{In the case of linear reporting functions with a continuous response space, Proposition \ref{propflowcont} generalizes a result of \citet{greenetext} for marginal effects in the double-censored Tobit model. The Tobit model takes a linear structural model $h(x,u) = x^T\beta+u$. Greene shows that if the error term $u$ has any continuous distribution, a marginal effect is equal to the true structural effect times the probability that an observation is not censored at either endpoint. (\ref{compare:b}) with no covariates $w$ reduces to	 $\partial_{x_1} \mathbbm{E}[R_i|x] =\beta_1 \cdot \int dF_{V}(v) \cdot \frac{\bar{R}}{\mu(v)-\ell(v)} \cdot P(0 < R_i < \bar{R}|x,v)$
	using that $r'(h,v) = \frac{R}{\mu(v)-\ell(v)}\cdot  \mathbbm{1}(\ell(v) < h < \mu(v))$. The traditional Tobit model further treats $V_i$ as degenerate with $\mu-\ell = R$, so the above recover's Greene's result that $\partial_{x_1} \mathbbm{E}[R_i|X_i=x] = \beta_1 \cdot P(0 < R_i < R|X_i=x)$.
}
\subsection{The ``dense response limit'' of many categories} \label{sec:dense}

In practice, survey questions do not typically allow individuals to give any real number (within a range) in response to subjective questions. However, results based on Proposition \ref{propflowcont} provide a more tractable setting to derive analytical results. If $\mathcal{R}$ is sufficiently rich, then this will provide a useful approximation to the actual properties of that setting (e.g. \cite{aermrs} elicits life-satisfaction data with 100 categories). Below, I give a formal definition of this ``dense response limit'' corresponding to an integer response space $\mathcal{R}=\left\{0,1,\dots,\bar{R}\right\}$, which proves useful in the analysis of Section \ref{sec:comparing}. Appeal to this limit is indicated by the symbol $\stackrel{R}{\rightarrow}$ in the results of Section \ref{sec:comparing}.

To define the dense response limit for a fixed $\bar{R}$, consider a sequence of response spaces $\mathcal{R}_n=\{0,1/n,2/n, \dots, (n \bar{R}) /n\}$ where note that $n \bar{R}$ has $n \bar{R}+1$ categories ranging from $0$ to $(n \bar{R}) /n=\bar{R}$. For a fixed value of reporting heterogeneity $v$, consider a sequence of reporting functions $r_n(\cdot, v)$ indexed by $n$, and let $\tau_{v,n}(\cdot)$ be a function from $\mathcal{R}_n$ to $\mathbbm{R}$ representing the thresholds corresponding to each function $r_n(\cdot, v)$ in the sequence.  
\begin{definition*}[(dense response limit)]
	Fix a $v \in \mathcal{V}$. Consider a sequence of reporting functions $r_n(\cdot, v)$ for $n \rightarrow \infty $. We say that the sequence converges to response function $r(\cdot,v)$ in the dense response limit, denoted as $r_n(\cdot, v) \stackrel{R}{\rightarrow} r(\cdot,v)$, if:
	$$ \lim_{n \rightarrow \infty} \tau_{v,n}(r_n) = \tau_v(r)$$
	for any sequence of $\{r_n\}_{n=1}^{\infty}$ where $r_n \in \mathcal{R}_n$ for each $n$, such that $\lim_{n \rightarrow \infty} r_n = r$ for some $r \in [0,\bar{R}]$ (according to the Euclidean metric on the reals). For any functional of all response functions $\theta(\{r_n(\cdot, v)\}_{v \in \mathcal{V}})$, let $\theta(r_n)\stackrel{R}{\rightarrow} \Theta$ denote that $\Theta$ evaluates the functional $\theta$ at the limiting family of response functions: $\Theta=\theta(r)$.
\end{definition*}
\noindent Intuitively, if the actual response scale is the integers $0$ to $\bar{R}$, the dense response limit instead approximates reports as taking on any real number in $[0,\bar{R}]$.

As a concrete example, consider linear response function $r(h,v)=\min\{\bar{R},\max\{0,h\}\}$ ranging from $0$ to $\bar{R}$ on the continuum $\mathcal{R} = [0,\bar{R}]$. Consider the sequence of reporting functions $r_n(h,v)=\max\limits_{r \in \mathcal{R}_n: h \le \tau_{v,n}(r)} r$, where we let the thresholds be $\tau_{v,n}(r) = r$ for each $r \in \mathcal{R}_n, r < \bar{R}$ (recall that $\tau_{v,n}(r)=\infty$ when $r$ is equal to it's highest value in the response space, in this case $\bar{R}$). The response function $r_n(h,v)$ then represents a ``staircase'' function that jumps from the $r^{th}$ category ($\frac{r-1}{n}$) to the $(r+1)^{th}$ category ($\frac{r}{n}$) at $\tau_{v,n}((r-1)/n) = (r-1)/n$. In this case $r_n(\cdot, v) \stackrel{R}{\rightarrow} r(\cdot,v)$ in the dense response limit, because for any sequence $\{r_n\}_{n=1}^{\infty}$ such that $\lim_{n \rightarrow \infty} r_n = r \in [0,\bar{R}]$ (for example $r_n = \max\limits_{r' \in \mathcal{R}_n: r' \le r} r'$) we have that $\lim_{n \rightarrow \infty} \tau_{v,n}(r_n) = \lim_{n \rightarrow \infty} r_n = r$.

In the dense response limit, discrete differences in the mean of $R_i$ depends upon the average slope $r'(h,V_i)$ of the response function $r(\cdot,V_i)$ for $h$ between $H_i$ and $H_i+\Delta_i$:
\begin{proposition} \label{propdiscretedense}
	Under MONO, EXOG, and REG, then in the dense response limit
	\begin{align*}
		\mathbbm{E}[R_i|x',w]-\mathbbm{E}[R_i|x,w]&\stackrel{R}{\rightarrow}\bar{R}\cdot  \mathbbm{E}[\Delta_i\cdot \bar{r}'(H_i,\Delta_i,V_i)|X_i=x,W_i=w]
	\end{align*}
	where $\bar{r}'(y,\Delta,v):= \frac{1}{\Delta} \int_{y}^{y+\Delta} r'(h,v) \cdot dh$.
\end{proposition}
\noindent Since $\bar{r}' \ge 0$, the weights on $\Delta_i$ in Proposition \ref{propdiscretedense} are positive and aggregate to\footnote{Note that if $\Delta_i$ and $\bar{r}'(H_i,\Delta_i,V_i)$ are uncorrelated conditional on $X_i=x,W_i=w$, then we can further write the RHS of Proposition \ref{propdiscretedense} as $\mathbbm{E}[\Delta_i|X_i=x]\cdot\bar{R}\cdot\mathbbm{E}[\bar{r}'(H_i,\Delta_i,V_i)|X_i=x]$.} $$\Pi_{x,x'} := \bar{R}\cdot\mathbbm{E}[\bar{r}'(H_i,\Delta_i,V_i)|X_i=x,W_i=w]$$
Proposition \ref{propflowcont} in Appendix \ref{seccontinuousreporting} derives an analogous result to Proposition \ref{propdiscretedense} for regression derivatives in the case of a continuous component of $x$. That result shows that the total weight on causal effects in a derivative $\partial_{x_j} \mathbbm{E}[R_i|x,w]$ is, by comparison:
\begin{align*}
	\Pi_{x}:=\bar{R} \cdot \mathbbm{E}[r'(H_i,V_i)|X_i=x,W_i=w]
\end{align*}
\noindent For ease of notation, I leave the dependence of quantities $\Pi_{x}$ and $\Pi_{x,x'}$ on the value of the control variables $W_i$ implicit.

A comparison of $\Pi_{x}$ and $\Pi_{x,x'}$ allows us to interpret the relative magnitudes of discrete and continuous differences in $\mathbbm{E}[R_i|X_i=x,W_i=w]$, as in Eq. (\ref{eq:genratio}). If we have, for example, a binary $X_1$ and continuous $X_2$, and we let
$x'=(1,x_2)$ and $x=(0,x_2)$ for some $x_2 \in \mathbbm{R}$, then:
\begin{equation} \label{eq:linearmodelcompare}
	\frac{\mathbbm{E}[R_i|X_i=x',W_i=w]-\mathbbm{E}[R_i|X_i=x,W_i=w]}{\partial_{x_1}\mathbbm{E}[R_i|X_i=x,W_i=w]} \stackrel{R}{\rightarrow} \frac{\tilde{\beta}_2(x,x',w)}{\tilde{\beta}_1(x'',w)} \cdot \frac{\Pi_{x,x'}}{\Pi_{x}}
\end{equation}
where $\tilde{\beta}_1(x'',w)$ is a convex weighted average over the (derivative) causal effect of $X_1$ on $H$ and $\tilde{\beta}_2(x,x',w)$ is a convex weighted average over causal effects of $X_2$ on $H$. If the aggregate weights are close in magnitude, i.e. $\Pi_{x,x'}/\Pi_{x} \approx 1$, then we can identify the relative magnitudes of these causal averages to a good approximation.

\subsection{Heterogeneous linear reporting in the dense response limit} \label{sec:hetero}
To assess whether the approximation that $\Pi_{x,x'}/\Pi_{x} \approx 1$ is plausible, I impose a further simplification. Let us say that \textit{heterogeneous linear reporting} holds with $\mathcal{R} = \{0,1,\dots \bar{R}\}$ if each individual spaces out the thresholds $\tau_v(r)$ evenly within some individual-specific range, i.e. $\tau_v(r) = \ell(v) + r\cdot \frac{\mu(v)-\ell(v)}{\bar{R}}$ where $\ell(v) = \tau_v(r)$ is the threshold between the two lowest categories for an individual with $V_i=v$, and $\mu(v)$ is the threshold between the top two categories.\footnote{Note that in the limit of many categories $\bar{R}$, this can be well approximated by the linear reporting function $\lim_{\bar{R} \rightarrow \infty} \frac{r(h,v)}{R} = \mathbbm{1}(\ell(v) \le h \le \mu(v))\cdot \frac{h-\ell(v)}{\mu(v)-\ell(v)}$.}

Heterogeneous linear reporting captures the idea that response functions are ``linear'', while still allowing them to vary by individual. Heterogeneous linear reporting may be a reasonable assumption if individuals aim to maximize the informativeness of their responses by equally spreading out the response categories \citep{vanpraag1991}, given their subjective definitions $\ell(v)$ and $\mu(v)$ of the minimum and maximum category thresholds.\footnote{Many studies justify the use of regression based approaches to studying subjective data $R_i$ by interpreting such data as a direct measurement of $H_i$. However, the function $r(\cdot, v)$ cannot literally be the identity function if $\mathcal{R}$ is a set of integers, unless we think that ``true'' happiness also only takes integer values. We might view the cardinality approach as instead supposing that $r(h)$ is homogeneous across individuals and that the thresholds $\tau(r)$ are equally spaced apart.} \citet{Kaiser_2019} summarize empirical evidence in support of linearity, for example from asking individuals directly about their response thresholds, or asking about verifiable outcomes such as an individual's height. 

With heterogeneous linear reporting, a partial identification result holds analytically in the dense response limit:
\begin{proposition} \label{propheterolinear}
	Suppose that the following hold in addition to MONO,EXOG,REG:
	\begin{enumerate}
		\item $r(h, v) \stackrel{R}{\rightarrow} \ell(v) + \frac{h - \ell(v)}{\mu(v)-\ell(v)}$, i.e. reporting is (heterogeneously) linear in the dense response limit; and
		\item For each $\Delta$ in the support of $\Delta_i$, $f_H(h|\Delta,x,v,w)$ is increasing on the interval $[\ell(v)-|\Delta|, \ell(v)+|\Delta|]$, and decreasing on the interval  $[\mu(v)-|\Delta|, \mu(v)+|\Delta|]$
	\end{enumerate}
	Then 
	$$ \frac{\Pi_{x,x'}}{\frac{1}{2}(\Pi_{x}+\Pi_{x'})} \in [1,2],$$	
	\noindent Furthermore, suppose that the lengths of reporting intervals are not too variable across individuals relative to variability in bunching at the endpoints $0$ and $\bar{R}$, in the sense that 
	$$Var\left[\left.\frac{1}{\mu(V_i)-\ell(V_i)} \right|x,w\right] \le Var\left[\left.\mathcal{B}_i \right|x,w\right]\cdot \mathbbm{E}\left\{\left.\frac{1}{\mu(V_i)-\ell(V_i)}\right|x,w\right\}^2,$$ where $\mathcal{B}_i:=P(R_i=0 \textrm{ or } R_i = \bar{R}|X_i,V_i,W_i)$, then
	$$ \frac{1}{2} \le \frac{\Pi_{x,x'}}{\Pi_{x}} \le \frac{1}{(1-\mathbbm{E}[\mathcal{B}_i|x,w])^2},$$
\end{proposition}
\noindent Proposition \ref{propheterolinear} provides two sets of bounds on the ratio of the total weight on causal effects in $\mathbbm{E}[R_i|x',w]-\mathbbm{E}[R_i|x,w]$, to the total weight on causal effects in a derivative $\partial_{x_j} \mathbbm{E}[R_i|x,w]$. The first bound, $\frac{\Pi_{x,x'}}{\frac{1}{2}(\Pi_{x}+\Pi_{x'})} \in [1,2]$ implies that, in the setup of Eq. (\ref{eq:linearmodelcompare}):
\begin{equation} \label{eq:linearmodelcompare2}
	\frac{\mathbbm{E}[R_i|X_i=x',W_i=w]-\mathbbm{E}[R_i|X_i=x,W_i=w]}{\frac{1}{2}\partial_{x_1} \mathbbm{E}[R_i|X_i=x',W_i=w]+\frac{1}{2}\partial_{x_1} \mathbbm{E}[R_i|X_i=x,W_i=w]} \stackrel{R}{\rightarrow}  \theta \cdot \frac{\tilde{\beta}_2(x,x',w)}{\tilde{\beta}_1(x,x',w)}
\end{equation}
where $\theta$ is some number between $1$ and $2$, and $\tilde{\beta}_1(x,x',w)$ is a convex combination of $\partial_{x_1}h(X_i,U_i)$. This bound requires no assumptions on how variable the happiness scale lengths $\mu(V_i)-\ell(V_i)$ can be across individuals with different $V_i$. By contrast, the second set of bounds requires us to assume that the coefficient of variation of $\frac{1}{\mu(V_i)-\ell(V_i)}$ is no greater than the standard deviation of $\mathcal{B}_i$, conditional on $X_i$ and $W_i$. Assuming homogeneity of reporting functions makes the coefficient of variation zero, trivially satisfying the assumption. More generally, the stringency of the assumption can be evaluated from the data by a nonparametric regression of observed bunching at the endpoints of the scale ($0$ and $\bar{R}$) on $X_i$ and $W_i$. 

Note that if the additional restriction justifying the second set of bounds holds, and $\mathbbm{E}[r'(H_i,V_i)|X_i=x,W_i=w]$ is roughly constant in $x$, then
$$\frac{\Pi_{x,x'}}{\Pi_{x}}\approx\frac{\Pi_{x,x'}}{\frac{1}{2}(\Pi_{x}+\Pi_{x'})}$$
and we can take the intersection of the two sets of bounds: $[1,1/(1-\mathbbm{E}[\mathcal{B}_i|x,w])^2]$. This bound will be very narrow if there are few endpoint bunchers when $X_i=x,W_i=w$.

\subsection{Simulation evidence on Proposition \ref{propheterolinear}} \label{sec:sim}
To gather some further suggestive evidence on the comparability of estimates that use discrete vs. continuous variation in $X$. I in this section simulate several data-generating-processes (DGPs) for $H_i$ and for the response functions $r(\cdot, V_i)$. Throughout, I take the response space $\mathcal{R}$ to be a set of integers $0,1,2 \dots \bar{R}$, where the value of $\bar{R}$ will be varied across DGPs. The DGPs are such that EXOG holds with no covariates $W_i$.

Consider a researcher comparing $\mathbbm{E}[R_i|X_i=x']-\mathbbm{E}[R_i|X_i=x]$ to $ \partial_{x_j}\mathbbm{E}[R_i|X_i=x]$ and $ \partial_{x_j}\mathbbm{E}[R_i|X_i=x']$ for some given values $x'$ and $x$, and regressors $X_j$. Given the results of the last section, we seek to compare $\Pi_{x,x'}$, $\Pi_{x}$ and $\Pi_{x'}$ to understand the relative weights each of these estimands place on causal effects.

For now, I suppose heterogeneous linear reporting, so that Proposition \ref{propheterolinear} holds in the dense-response limit $\bar{R} \rightarrow \infty$. Individual reporting functions can be characterized by $\ell(v)$, the value of happiness at which an individual with $V_i=v$ moves from response category $0$ to response category $1$, and $\mu(v)$, the value at which this individual would move from category $\bar{R}-1$ to the highest category $\bar{R}$. Response functions are sampled independently of everything else, which implies $U_i \indep V_i$.

In a first set of simulations, I take $H_i$ to have a standard normal distribution, conditional on $X_i=x$. Note that since the overall location and scale of the happiness distribution is not inherently meaningful, this choice of mean and variance is arbitrary. Next, I suppose that individuals' values of $\ell(V_i)$ are distributed uniformly between $-1$ and $-0.5$, and that $\mu(V_i)$ is independent of $\ell(V_i)$ and drawn uniformly from $[0.5, 1]$. The left panel of Figure \ref{fig:sim_normal} provides a visualization. These choices aim to reflect a world in which while individuals differ e.g. in the point $\mu(V_i)$ at which they would report $R=10$, this threshold for the highest possible category is for all individuals at least above the mean level of happiness in the population.

The simulations generally provide an optimistic picture that $\Pi_{x,x'}/\frac{1}{2}(\Pi_{x}+\Pi_{x'}) \approx 1$ across a wide variety of DGPs, and thus $\left\{\mathbbm{E}[R_i|x',w]-\mathbbm{E}[R_i|x,w]\right\}/\partial_{x_j}\mathbbm{E}[R_i|x,w]$ can be interpreted as close to a ratio of weighted averages of causal effects in those cases.

The table on the right side of Figure \ref{fig:sim_normal} reports $\frac{\Pi_{x,x'}}{\frac{1}{2}(\Pi_{x}+\Pi_{x'})}$ as a function of the number of response categories $\bar{R} \in [2,5,11,100]$, supposing a constant treatment effect $\Delta$ which is varied from $-0.5$ to $5$. Alternatively, the results can be interpreted as reporting conditional analogs of the quantity $\frac{\Pi_{x,x'}}{\frac{1}{2}(\Pi_{x}+\Pi_{x'})}$ among individuals sharing a value of $\Delta_i=h(x',U_i)-h(x,U_i)$, in a setting in which $H_i$ is independent of treatment effects $\Delta_i$, conditional on $X_i=x$.

\begin{figure}[H]
	\small
	\begin{center}
		\begin{tabular}{m{7.5cm} l}
			\includegraphics[width=3in]{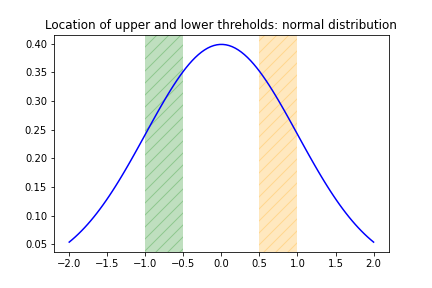}
			&\begin{tabular}{lllll}
\toprule
$\Delta$ &      $\bar{R}$=2 &      $\bar{R}$=5 &     $\bar{R}$=11 &    $\bar{R}$=100 \\
\midrule
    -0.5 & 1.017758 & 1.016489 & 1.018028 & 1.018884 \\
    -0.1 & 1.000335 & 1.000441 & 1.000664 & 1.000809 \\
     0.1 & 1.000837 & 1.000283 & 1.001079 & 1.001052 \\
    0.25 & 1.003905 & 1.005432 & 1.004132 & 1.003522 \\
     0.5 & 1.020549 & 1.019535 & 1.017904 & 1.014529 \\
       1 & 1.060440 & 1.062557 & 1.061607 & 1.051899 \\
       5 & 0.504738 & 0.531706 & 0.544236 & 0.549753 \\
\midrule
    1/NB & 1.867396 & 1.878186 & 1.874115 & 1.873973 \\
\bottomrule
\end{tabular}				
		\end{tabular}
		\vspace{.2cm}
		\caption{$H_i|X_i=x$ is standard normal, and 1000 reporting functions are drawn from $\ell(v) \sim U[-1,1/2]$, $\mu(v) \sim U[1/2,1]$. The left panel depicts the supports of $\ell(v)$ (green) and $\mu(v)$ (yellow) with the density of $H_i$. The right panel reports values of $\Pi_{x,x'}/\frac{1}{2}(\Pi_{x}+\Pi_{x'})$ as a function of $\Delta$ and the number of response categories $\bar{R}$. \label{fig:sim_normal}}
	\end{center}
\end{figure}

Proposition \ref{propheterolinear} implies that as $\bar{R}\rightarrow \infty$, $\Pi_{x,x'}/\frac{1}{2}(\Pi_{x}+\Pi_{x'})$ should lie between 1 and 2, for any values $\Delta$ such that $\ell(V_i) < -|\Delta|$ and $\mu(V_i)>|\Delta|$ for all $V_i$ (so that $f_H(h|x)$ is increasing on the interval $[\ell(V_i)-|\Delta|,\ell(V_i)+|\Delta|]$, and analogously for $\ell(v)$). This is true for all of the values reported in Figure \ref{fig:sim_normal}, aside from $\Delta=1$ and $\Delta=5$. In all but the case of $\Delta=5$, $\Pi_{x,x'}/\frac{1}{2}(\Pi_{x}+\Pi_{x'})$ is in fact quite close to unity, well within the refined bounds $[1,1/NB]$ which holds under the variance restriction in Proposition \ref{propheterolinear}, where $NB=P(0 < R_i < R|X=x)$ is the ``non-bunching'' probability.

With the exception of $\Delta=5$, the standard-normal DGP reported in Figure \ref{fig:sim_normal} provides an optimistic picture that $\left\{\mathbbm{E}[R_i|X_i=x']-\mathbbm{E}[R_i|X_i=x]\right\}/\partial_{x_j}\mathbbm{E}[R_i|X_i=x]$ uncovers something close to a ratio of weighted averages of causal effects, i.e. $\beta_1/\beta_2$ in the case described by Equation (\ref{eq:linearmodelcompare}). In this case, results do not differ substantially whether the number of response categories is small (e.g. $\bar{R}=2$, the case of binary response) or e.g. $\bar{R}=100$. Table \ref{table:sim_nrf_normal} shows that results also do not differ much whether there are few or many different reporting functions present in the population.

The $\Delta=5$ case nevertheless shows that the ratio in (\ref{eq:linearmodelcompare}) may be quite misleading \textit{in principle}, even with this distribution of $H_i$. The $\bar{R}=2$ value of $\Pi_{x,x'}/\frac{1}{2}(\Pi_{x}+\Pi_{x'}) \approx 0.5$ means that the magnitude of $\beta_1$ relative to that of $\beta_2$ would be under-estimated by a factor of 2, when using $x'=(1,x_2)$ and $x=(0,x_2)$ in a linear model $h(x,u) = \beta_1 x_1 +\beta_2 x_2$. On the other hand, it is implausible that binary treatment variable being analyzed would have an effect on happiness that is 5 times the variance of happiness in the population.

While the quantity $\Pi_{x,x'}/\frac{1}{2}(\Pi_{x}+\Pi_{x'})$ averages over the reporting heterogeneity in the population, Figure \ref{fig:sim_normal_dist} disaggregates this by $V_i$. Define $\delta_{\Delta,x,v}:=\frac{\sum_r \bar{f}_H(\Delta,\tau_v(r),x,v)-\sum_r f_H(\tau_v(r)|x,v)}{\sum_r f_H(\tau_v(r)|x,v)}$.
An individual with $X_i=x$ and $V_i=v$ will receive similar weights when using either discrete or continuous variation at $x$ if $\delta_{\Delta,x,v} \approx 0$. Write Eq. (\ref{eqdiscreteintermediate}) as:
\begin{align}
	\mathbbm{E}[R_i|x']-\mathbbm{E}[R_i|x]&
	=\int dF_{V|W}(v|w) \cdot \left(\sum_r f_H(\tau_v(r)|x,v) \right) \cdot \mathbbm{E}[\Delta_i|X_i=x,V_i=v] \nonumber\\
	&\hspace{1.25in} +\int dF_{V|W}(v|w) \cdot \int d\Delta \cdot f_H(\Delta|x,v) \cdot \Delta \cdot \delta_{\Delta,x,v} \nonumber 
\end{align} 

\noindent Figure \ref{fig:sim_normal_dist} reports the distributions of $\frac{\sum_r \bar{f}_H(\Delta,\tau_v(r),x,v)}{\sum_r f_H(\tau_v(r)|x,v)}=1+\delta_{\Delta,x,v}$, across 1000 reporting functions sampled the same as in Figure \ref{fig:sim_normal}. The distributions of $\delta_{\Delta,x,V_i}$ are approximately unimodal in each case, with a variance that tends to increase with the magnitude of $\Delta$.

\begin{figure}[H]
	\small
	\begin{center}
		\includegraphics[width=3.5in]{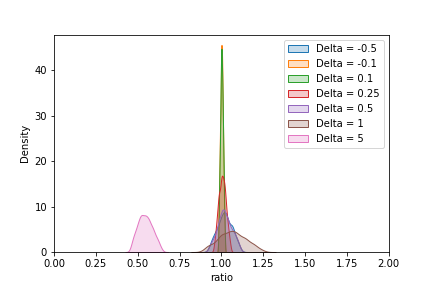}
		\vspace{.2cm}
		\caption{The distribution of $1+\delta_{\Delta,x,V_i}$ across $V_i$ is depicted across alternative values of $\Delta_i$, with $H_i|X_i=x$ standard normal, $\bar{R}=100$, and 1000 reporting functions are drawn from $\ell(v) \sim U[-1,1/2]$, $\mu(v) \sim U[1/2,1]$. \label{fig:sim_normal_dist}}
	\end{center}
\end{figure}

Figures \ref{fig:sim_dblnormal},\ref{fig:sim_uniform} and \ref{fig:sim_lognormal} repeat the exercise of Figure \ref{fig:sim_normal} with alternative distributions assumed for $H_i|X_i=x$. Figure \ref{fig:sim_dblnormal} first relaxes unimodality of the normal distribution by letting $H_i$ be distributed as a mixture of two normals, leading to a ``double-peaked'' shape. Upper and lower thresholds $\mu$ and $\ell$ are sampled from the decreasing and increasing (respectively) portions of this distribution's density. The table shows that $\Pi_{x,x'}/\frac{1}{2}(\Pi_{x}+\Pi_{x'})$ is again close to unity across a wide range of treatment effect sizes, with $\beta_1/\beta_2$ now being over-estimated in the case of an extremely large treatment effect $\Delta=5$. Figure \ref{fig:sim_dblnormal_dist} reports the distributions of $\delta_{\Delta,x,V_i}$, as in Figure \ref{fig:sim_normal_dist}. 

\begin{figure}[H]
	\small
	\begin{center}
		\begin{tabular}{m{7.5cm} l}
			\includegraphics[width=3in]{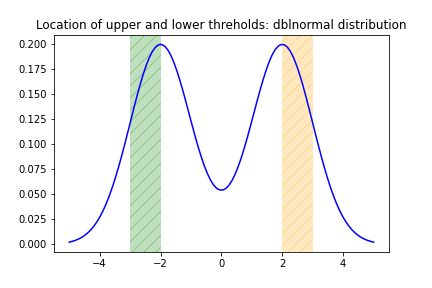}
			&\begin{tabular}{lllll}
\toprule
$\Delta$ &      $\bar{R}$=2 &      $\bar{R}$=5 &     $\bar{R}$=11 &    $\bar{R}$=100 \\
\midrule
    -0.5 & 0.945228 & 0.999726 & 1.002822 & 1.002350 \\
    -0.1 & 0.996596 & 1.000027 & 1.000005 & 1.000072 \\
     0.1 & 0.998344 & 1.000034 & 1.000106 & 1.000211 \\
    0.25 & 0.989510 & 0.999942 & 1.000275 & 1.001007 \\
     0.5 & 0.958623 & 1.000036 & 1.003705 & 1.004092 \\
       1 & 0.901081 & 0.999569 & 1.009504 & 1.013429 \\
       5 & 3.335567 & 1.361256 & 1.225963 & 1.154723 \\
\midrule
    1/NB & 1.470781 & 1.481656 & 1.480439 & 1.483049 \\
\bottomrule
\end{tabular}			
		\end{tabular} \vspace{.2cm}
		\caption{$H_i|X_i=x$ is an equal mixture of $\mathcal{N}(-2,1)$ and $\mathcal{N}(2,1)$, and 1000 reporting functions are drawn from $\ell(v) \sim U[-3,-2]$, $\mu(v) \sim U[2,3]$. The left panel depicts the supports of $\ell(v)$ (green) and $\mu(v)$ (yellow) with the density of $H_i$. The right panel reports values of $\Pi_{x,x'}/\frac{1}{2}(\Pi_{x}+\Pi_{x'})$ as a function of $\Delta$ and the number of response categories $\bar{R}$.\label{fig:sim_dblnormal}}
	\end{center}
\end{figure}

Figure \ref{fig:sim_uniform} instead uses a uniform distribution for $H_i$. This allows us to sample the thresholds $\mu$ and $\ell$ from regions that abut the extremes of the population happiness distribution. Results here are encouraging, except in the cases where $\Delta$ moves a significant portion of the population outside of $[0,1]$ (e.g. $|\Delta| \ge 0.5$. In such cases, there is significant non-overlap between the distributions of $H_i|X_i=x'$ and $H_i|X_i=x$). Notably, $\Pi_{x,x'}/\frac{1}{2}(\Pi_{x}+\Pi_{x'})$ is non-monotonic in the magnitude of $\Delta$, first increasing above unity and then falling much below it, with opposing effects canceling out when $\Delta=1$. Figure \ref{fig:sim_uniform_dist} reports the distributions of $\delta_{\Delta,x,V_i}$, as in Figure \ref{fig:sim_normal_dist}. 

\begin{figure}[H]
	\small
	\begin{center}
		\begin{tabular}{m{7.5cm} l}
			\includegraphics[width=3in]{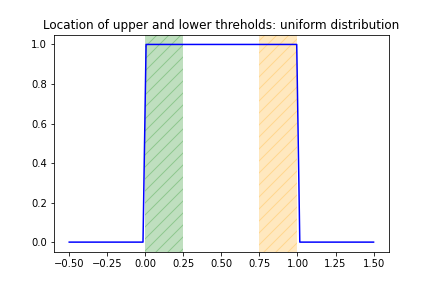}
			&\begin{tabular}{lllll}
\toprule
$\Delta$ &      $\bar{R}$=2 &      $\bar{R}$=5 &     $\bar{R}$=11 &    $\bar{R}$=100 \\
\midrule
    -0.5 & 1.348617 & 1.334445 & 1.332001 & 1.334231 \\
    -0.1 &      1.0 &      1.0 & 1.005277 & 1.010828 \\
     0.1 &      1.0 &      1.0 & 1.004095 & 1.002599 \\
    0.25 &      1.0 & 1.031953 & 1.031220 & 1.035741 \\
     0.5 & 1.320108 & 1.135912 & 1.101178 & 1.081512 \\
       1 & 0.999805 & 0.998796 & 0.995304 & 0.995515 \\
       5 & 0.199587 & 0.199893 & 0.199174 & 0.200451 \\
\midrule
    1/NB & 1.350946 & 1.361273 & 1.355604 & 1.357831 \\
\bottomrule
\end{tabular}				
		\end{tabular} \vspace{.2cm}
		\caption{$H_i|X_i=x$ uniform $[0,1]$, and 1000 reporting functions are drawn from $\ell(v) \sim U[0,1/4]$, $\mu(v) \sim U[3/4,1]$. The left panel depicts the supports of $\ell(v)$ (green) and $\mu(v)$ (yellow) with the density of $H_i$. The right panel reports values of $\Pi_{x,x'}/\frac{1}{2}(\Pi_{x}+\Pi_{x'})$ as a function of $\Delta$ and the number of response categories $\bar{R}$.\label{fig:sim_uniform}}
	\end{center}
\end{figure}

Finally, Figure \ref{fig:sim_lognormal} introduces skewness by letting happiness have a standard log-normal distribution. Corresponding to the long right-tail in the happiness distribution, I take $\mu(V_i)$ to have support over a large range of values relative to $\ell(V_i)$. The results are less optimistic, as compared with the normally distributed case. For $|\Delta| > 0.1$, $\Pi_{x,x'}/\frac{1}{2}(\Pi_{x}+\Pi_{x'})$ differs from unity by more than $10\%$. However, the worst-case $\Delta=5$ is not much worse than in the normally-distributed DGP, with $\Pi_{x,x'}/\frac{1}{2}(\Pi_{x}+\Pi_{x'})$ at least about $0.45$ for all $\bar{R}$. Figure \ref{fig:sim_lognormal_dist} reports the distributions of $\delta_{\Delta,x,V_i}$, as in Figure \ref{fig:sim_normal_dist}. 

\begin{figure}[H]
	\small
	\begin{center}
		\begin{tabular}{m{7.5cm} l}
			\includegraphics[width=3in]{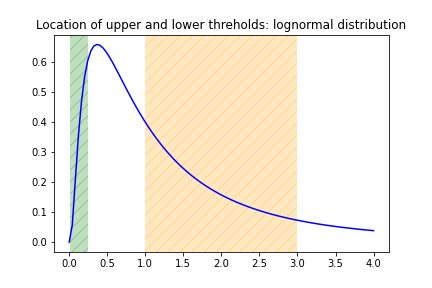}
			&\begin{tabular}{lllll}
\toprule
$\Delta$ &      $\bar{R}$=2 &      $\bar{R}$=5 &     $\bar{R}$=11 &    $\bar{R}$=100 \\
\midrule
    -0.5 & 0.684805 & 0.840033 & 0.891200 & 0.928110 \\
    -0.1 & 0.909362 & 0.931933 & 0.954190 & 0.979041 \\
     0.1 & 1.098319 & 1.075785 & 1.052181 & 1.026204 \\
    0.25 & 1.256482 & 1.182417 & 1.118010 & 1.066896 \\
     0.5 & 1.505123 & 1.278124 & 1.185238 & 1.120666 \\
       1 & 1.643653 & 1.258917 & 1.196697 & 1.137716 \\
       5 & 0.494329 & 0.453087 & 0.439618 & 0.452964 \\
\midrule
    1/NB & 1.467527 & 1.447327 & 1.460737 & 1.445131 \\
\bottomrule
\end{tabular}		
		\end{tabular} \vspace{.2cm}
		\caption{$H_i|X_i=x$ is standard log-normal, and 1000 reporting functions are drawn from $\ell(v) \sim U[1/100,1/4]$, $\mu(v) \sim U[1,3]$. The left panel depicts the supports of $\ell(v)$ (green) and $\mu(v)$ (yellow) with the density of $H_i$. The right panel reports values of $\Pi_{x,x'}/\frac{1}{2}(\Pi_{x}+\Pi_{x'})$ as a function of $\Delta$ and the number of response categories $\bar{R}$. \label{fig:sim_lognormal}}
	\end{center}
\end{figure}

Below I report further results and variations on the DGPs discussed above. Tables \ref{table:sim_nrf_dblnormal}, \ref{table:sim_nrf_uniform} and \ref{table:sim_nrf_lognormal} show that as with the normal DGP, results also do not differ much whether there are few or many different reporting functions present in the population. Taking the lognormal distribution of $H$ as representing the worst-case among the distributions considered, I also consider some variations on the reporting-function DGP used above. Figure \ref{fig:sim_lognormaloverlap} allows the support of $\ell$ and $\mu$ to ``overlap" so that the minimum threshold $\ell$ for some individuals is higher than that \textit{maximum} threshold $\mu$ is for others. Figure \ref{fig:sim_lognormal_fixedends} eliminates all heterogeneity in reporting functions. Figures \ref{fig:sim_lognormal_uniform} and \ref{fig:sim_lognormal_normal} dispense with (heterogeneously) linear reporting, instead sampling the thresholds for a given individual from a specified distribution and sorting them in ascending order to define that individual's reporting function. In all cases, results fall within the range of those presented above.

\begin{table}[H]
	\small
	\begin{center}
		{
			\begin{tabular}{llllll}
\toprule
$\Delta$ & $\#$ r's &        1 &       10 &       11 &     1000 \\
\midrule
    -0.5 & -1.77878 & 1.019952 & 1.006977 & 0.997881 & 1.018028 \\
    -0.1 & -0.36785 & 1.016885 & 0.998345 & 0.996135 & 1.000664 \\
     0.1 & 0.367972 & 1.008979 & 1.004651 & 0.997364 & 1.001079 \\
    0.25 & 0.912569 & 1.012629 & 1.010883 & 1.009681 & 1.004132 \\
     0.5 & 1.779339 & 1.009398 & 1.020753 & 1.035096 & 1.017904 \\
       1 & 3.230443 & 1.136782 & 1.086727 & 1.036715 & 1.061607 \\
       5 & 5.013323 & 0.493566 & 0.579100 & 0.526786 & 0.544236 \\
\midrule
    1/NB &          & 2.369596 & 1.828073 & 1.841236 & 1.873973 \\
\bottomrule
\end{tabular}	
		} \vspace{.2cm}
		\caption{$\Pi_{x,x'}/\frac{1}{2}(\Pi_{x}+\Pi_{x'})$ as a function of $\Delta$ and the number $|\mathcal{V}| \in \{1,10,11,1000\}$ of reporting functions. All cells take $\bar{R}=11$ response categories, and the column labeled $\# r's$ reports the average number of these 11 thresholds crossed by the value of $\Delta$ corresponding to that row, averaged over the distribution of $H_i|X_i=x$. $H_i|X_i=x$ is standard normal, and in all cases thresholds are sampled as depicted in Figure \ref{fig:sim_normal}.\label{table:sim_nrf_normal}}
	\end{center}
\end{table}

\begin{figure}[H]
	\small
	\begin{center}
		\includegraphics[width=3.5in]{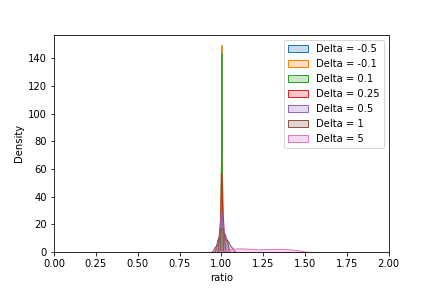}
		\vspace{.2cm}
		\caption{The distribution of $1+\delta_{\Delta,x,V_i}$ across $V_i$ is depicted across alternative values of $\Delta_i$, with $H_i|X_i=x$ an equal mixture of $\mathcal{N}(-2,1)$ and $\mathcal{N}(2,1)$, $\bar{R}=100$, and 1000 reporting functions with thresholds sampled as depicted in Figure \ref{fig:sim_dblnormal}. \label{fig:sim_dblnormal_dist}}
	\end{center}
\end{figure}

\begin{table}[H]
	\small
	\begin{center}
		{
			\begin{tabular}{llllll}
\toprule
$\Delta$ & $\#$ r's &        1 &       10 &       11 &     1000 \\
\midrule
    -0.5 & -0.65973 & 0.987486 & 1.008402 & 1.002409 & 1.002822 \\
    -0.1 & -0.13228 & 0.998206 & 0.997224 & 1.000729 & 1.000005 \\
     0.1 & 0.132622 & 0.998407 & 1.000369 & 0.998732 & 1.000106 \\
    0.25 & 0.330621 & 0.999557 & 1.000884 & 1.001690 & 1.000275 \\
     0.5 & 0.660180 & 0.998541 & 1.001786 & 1.003425 & 1.003705 \\
       1 & 1.296189 & 1.000055 & 1.006245 & 1.020919 & 1.009504 \\
       5 & 4.851162 & 1.408659 & 1.182369 & 1.230598 & 1.225963 \\
\midrule
    1/NB &          & 1.785322 & 1.425248 & 1.507230 & 1.483049 \\
\bottomrule
\end{tabular}	
		} \vspace{.2cm}
		\caption{$\Pi_{x,x'}/\frac{1}{2}(\Pi_{x}+\Pi_{x'})$ as a function of $\Delta$ and the number $|\mathcal{V}| \in \{1,10,11,1000\}$ of reporting functions. All cells take $\bar{R}=11$ response categories, and the column labeled $\# r's$ reports the average number of these 11 thresholds crossed by the value of $\Delta$ corresponding to that row, averaged over the distribution of $H_i|X_i=x$.  $H_i|X_i=x$ is an equal mixture of $\mathcal{N}(-1/2,1)$ and $\mathcal{N}(1/2,1)$, and in all cases thresholds are sampled as depicted in Figure \ref{fig:sim_dblnormal}. \label{table:sim_nrf_dblnormal}}
	\end{center}
\end{table}

\begin{figure}[H]
	\small
	\begin{center}
		\includegraphics[width=3.5in]{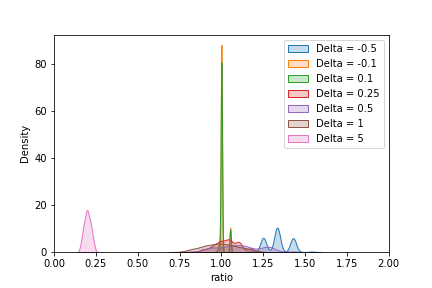}
		\vspace{.2cm}
		\caption{The distribution of $1+\delta_{\Delta,x,V_i}$ across $V_i$ is depicted across alternative values of $\Delta_i$, with $H_i|X_i=x$ uniform on $[0,1]$, $\bar{R}=100$, and 1000 reporting functions with thresholds sampled as depicted in Figure \ref{fig:sim_dblnormal}. \label{fig:sim_uniform_dist}}
	\end{center}
\end{figure}

\begin{table}[H]
	\small
	\begin{center}
		{
			\begin{tabular}{llllll}
\toprule
$\Delta$ & $\#$ r's &        1 &       10 &       11 &     1000 \\
\midrule
    -0.5 &     -5.0 & 1.333333 & 1.351351 & 1.301775 & 1.332001 \\
    -0.1 &     -1.0 & 0.999999 & 1.005025 &      1.0 & 1.005277 \\
     0.1 & 0.999075 & 0.999999 & 1.005025 & 1.004566 & 1.004095 \\
    0.25 & 2.412154 & 1.003553 & 1.015483 & 1.036788 & 1.031220 \\
     0.5 & 4.131344 & 1.063924 & 1.021872 & 1.078504 & 1.101178 \\
       1 & 4.976521 & 1.074331 & 1.067193 & 0.989457 & 0.995304 \\
       5 & 4.979371 & 0.195783 & 0.198180 & 0.212324 & 0.199174 \\
\midrule
    1/NB &          & 1.219642 & 1.400211 & 1.309034 & 1.357831 \\
\bottomrule
\end{tabular}	
		} \vspace{.2cm}
		\caption{$\Pi_{x,x'}/\frac{1}{2}(\Pi_{x}+\Pi_{x'})$ as a function of $\Delta$ and the number $|\mathcal{V}| \in \{1,10,11,1000\}$ of reporting functions. All cells take $\bar{R}=11$ response categories, and the column labeled $\# r's$ reports the average number of these 11 thresholds crossed by the value of $\Delta$ corresponding to that row, averaged over the distribution of $H_i|X_i=x$.  $H_i|X_i=x$ is uniform $[0,1]$, and in all cases thresholds are sampled as depicted in Figure \ref{fig:sim_uniform}.\label{table:sim_nrf_uniform}}
	\end{center}
\end{table}

\begin{figure}[H]
	\small
	\begin{center}
		\includegraphics[width=3.5in]{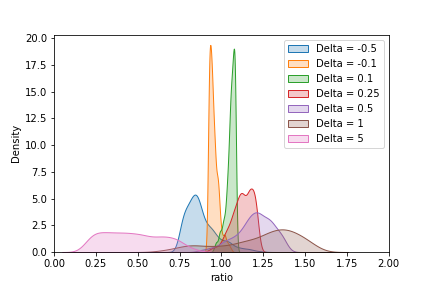}
		\vspace{.2cm}
		\caption{The distribution of $1+\delta_{\Delta,x,V_i}$ across $V_i$ is depicted across alternative values of $\Delta_i$, with $H_i|X_i=x$ a standard log-normal, $\bar{R}=100$, and 1000 reporting functions with thresholds sampled as depicted in Figure \ref{fig:sim_dblnormal}. \label{fig:sim_lognormal_dist}}
	\end{center}
\end{figure}

\begin{table}[H]
	\small
	\begin{center}
		{
			\begin{tabular}{llllll}
\toprule
$\Delta$ & $\#$ r's &        1 &       10 &       11 &     1000 \\
\midrule
    -0.5 & -1.70479 & 0.875985 & 0.850173 & 0.919761 & 0.891200 \\
    -0.1 & -0.39632 & 0.938366 & 0.951939 & 0.949562 & 0.954190 \\
     0.1 & 0.412033 & 1.073444 & 1.045219 & 1.062470 & 1.052181 \\
    0.25 & 1.033962 & 1.120209 & 1.077792 & 1.136499 & 1.118010 \\
     0.5 & 1.980888 & 1.311599 & 1.181132 & 1.178890 & 1.185238 \\
       1 & 3.438382 & 1.122504 & 1.187908 & 1.187129 & 1.196697 \\
       5 & 4.659613 & 0.190009 & 0.427811 & 0.435993 & 0.439618 \\
\midrule
    1/NB &          & 1.799321 & 1.459198 & 1.573983 & 1.445131 \\
\bottomrule
\end{tabular}	
		} \vspace{.2cm}
		\caption{$\Pi_{x,x'}/\frac{1}{2}(\Pi_{x}+\Pi_{x'})$ as a function of $\Delta$ and the number $|\mathcal{V}| \in \{1,10,11,1000\}$ of reporting functions. All cells take $\bar{R}=11$ response categories, and the column labeled $\# r's$ reports the average number of these 11 thresholds crossed by the value of $\Delta$ corresponding to that row, averaged over the distribution of $H_i|X_i=x$. $H_i|X_i=x$ is standard log-normal, and in all cases thresholds are sampled as depicted in Figure \ref{fig:sim_lognormal}.\label{table:sim_nrf_lognormal}}
	\end{center}
\end{table}

\begin{figure}[H]
	\small
	\begin{center}
		\begin{tabular}{m{7.5cm} l}
			\includegraphics[width=3in]{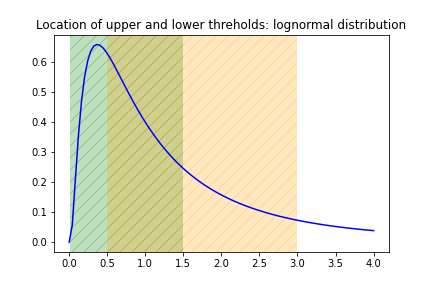}
			&\begin{tabular}{lllll}
\toprule
$\Delta$ &      $\bar{R}$=2 &      $\bar{R}$=5 &     $\bar{R}$=11 &    $\bar{R}$=100 \\
\midrule
    -0.5 & 0.697930 & 0.730712 & 0.746278 & 0.755180 \\
    -0.1 & 0.914523 & 0.921389 & 0.924831 & 0.929371 \\
     0.1 & 1.092894 & 1.085627 & 1.079682 & 1.072299 \\
    0.25 & 1.242410 & 1.222273 & 1.206747 & 1.194887 \\
     0.5 & 1.461151 & 1.413333 & 1.394014 & 1.366348 \\
       1 & 1.729324 & 1.596417 & 1.559468 & 1.534750 \\
       5 & 0.630123 & 0.627392 & 0.608452 & 0.583946 \\
\midrule
    1/NB & 2.333187 & 0.562761 & -235.774 & -1.75283 \\
\bottomrule
\end{tabular}				
		\end{tabular}
		\vspace{.2cm}
		\caption{$H_i|X_i=x$ is standard lognormal, and 1000 reporting functions are drawn from $\ell(v) \sim U[.01,1.5]$, $\mu(v) \sim U[0.5,3]$. Thus, the highest threshold $\mu$ for some individuals is lower than the lowest threshold $\ell$ is for other individuals. The left panel depicts the supports of $\ell(v)$ (green) and $\mu(v)$ (yellow) with the density of $H_i$. The right panel reports values of $\Pi_{x,x'}/\frac{1}{2}(\Pi_{x}+\Pi_{x'})$ as a function of $\Delta$ and the number of response categories $\bar{R}$. \label{fig:sim_lognormaloverlap}}
	\end{center}
\end{figure}

\begin{figure}[H]
	\small
	\begin{center}
		\begin{tabular}{lllll}
\toprule
$\Delta$ &      $\bar{R}$=2 &      $\bar{R}$=5 &     $\bar{R}$=11 &    $\bar{R}$=100 \\
\midrule
    -0.5 & 0.608198 & 0.830388 & 0.862295 & 0.908220 \\
    -0.1 & 0.904246 & 0.924110 & 0.952734 & 0.984422 \\
     0.1 & 1.103991 & 1.083320 & 1.054811 & 1.022938 \\
    0.25 & 1.275106 & 1.205307 & 1.116844 & 1.061898 \\
     0.5 & 1.594517 & 1.299465 & 1.193268 & 1.132321 \\
       1 & 1.955425 & 1.359822 & 1.262187 & 1.213427 \\
       5 & 0.529326 & 0.462219 & 0.454500 & 0.458494 \\
\midrule
    1/NB & 1.341849 & 1.341849 & 1.341849 & 1.341849 \\
\bottomrule
\end{tabular} \vspace{.2cm}
		\caption{$H_i|X_i=x$ is standard log-normal, and all individuals have the same linear reporting function with $\mu(v) = 0.1$ and $\ell(v)=0.2$. Table reports values of $\Pi_{x,x'}/\frac{1}{2}(\Pi_{x}+\Pi_{x'})$ as a function of $\Delta$ and the number of response categories $\bar{R}$. \label{fig:sim_lognormal_fixedends}}
	\end{center}
\end{figure}

\begin{figure}[H]
	\small
	\begin{center}
		\begin{tabular}{m{7.5cm} l}
			\includegraphics[width=3in]{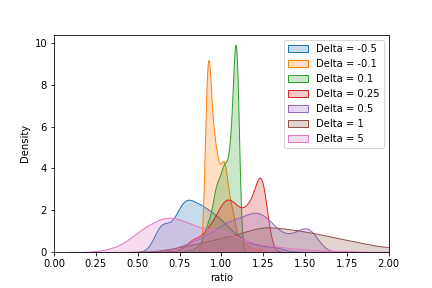}
			&\begin{tabular}{lllll}
\toprule
$\Delta$ &      $\bar{R}$=2 &      $\bar{R}$=5 &     $\bar{R}$=11 &    $\bar{R}$=100 \\
\midrule
    -0.5 & 0.888923 & 0.878441 & 0.873254 & 0.874889 \\
    -0.1 & 0.963337 & 0.981338 & 0.976410 & 0.979380 \\
     0.1 & 1.030927 & 1.026978 & 1.029188 & 1.026451 \\
    0.25 & 1.047773 & 1.071765 & 1.082974 & 1.074864 \\
     0.5 & 1.126755 & 1.175856 & 1.166967 & 1.167957 \\
       1 & 1.364791 & 1.335619 & 1.334358 & 1.325175 \\
       5 & 0.756363 & 0.773198 & 0.776693 & 0.767673 \\
\midrule
    1/NB & 1.341849 & 1.341849 & 1.341849 & 1.341849 \\
\bottomrule
\end{tabular}				
		\end{tabular}
		\vspace{.2cm}
		\caption{ The distribution of $1+\delta_{\Delta,x,V_i}$ across 1000 reporting functions (left), and values of $\Pi_{x,x'}/\frac{1}{2}(\Pi_{x}+\Pi_{x'})$ as a function of $\Delta$ and the number of response categories $\bar{R}$ (right), for $H_i|X_i=x$ following a log normal distribution with all thresholds sampled individually from a uniform distribution on $[0.1,3]$. Thus, thresholds are not equally spaced within individual reporting functions. \label{fig:sim_lognormal_uniform}}
	\end{center}
\end{figure}

\begin{figure}[H]
	\small
	\begin{center}
		\begin{tabular}{m{7.5cm} l}
			\includegraphics[width=3in]{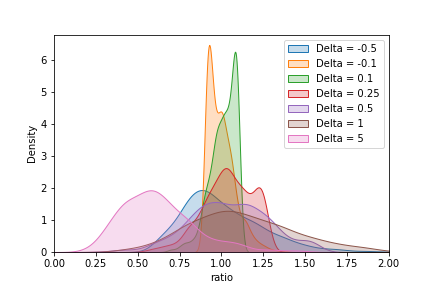}
			&\begin{tabular}{lllll}
\toprule
$\Delta$ &      $\bar{R}$=2 &      $\bar{R}$=5 &     $\bar{R}$=11 &    $\bar{R}$=100 \\
\midrule
    -0.5 & 0.981985 & 0.976757 & 0.980952 & 0.986851 \\
    -0.1 & 0.985272 & 0.990361 & 0.994975 & 0.993520 \\
     0.1 & 0.996470 & 1.006833 & 1.007525 & 1.007262 \\
    0.25 & 1.011348 & 1.028414 & 1.029626 & 1.017586 \\
     0.5 & 1.043327 & 1.044149 & 1.030733 & 1.043710 \\
       1 & 1.062959 & 1.096241 & 1.084608 & 1.075300 \\
       5 & 0.606502 & 0.594470 & 0.585655 & 0.588043 \\
\midrule
    1/NB & 1.341849 & 1.341849 & 1.341849 & 1.341849 \\
\bottomrule
\end{tabular}				
		\end{tabular}
		\vspace{.2cm}
		\caption{ The distribution of $1+\delta_{\Delta,x,V_i}$ across 1000 reporting functions (left), and values of $\Pi_{x,x'}/\frac{1}{2}(\Pi_{x}+\Pi_{x'})$ as a function of $\Delta$ and the number of response categories $\bar{R}$ (right), for $H_i|X_i=x$ following a log normal distribution with all thresholds sampled individually  from a normal distribution with mean $2$ and variance $1$. Thus, thresholds are not equally spaced within individual reporting functions. \label{fig:sim_lognormal_normal}}
	\end{center}
\end{figure}

\section{Implications of the identification results for practice} \label{sec:regression}
From Theorems \ref{propflow} and \ref{propdiscrete}, it is clear that learning from the conditional distribution of responses $R_i$ given variation in $X_i$, one can uncover positive linear combinations of causal effects, but with weights that are not  under the researcher's control. Rather, they depend on individuals' unobserved and heterogeneous reporting functions, and the distribution of underlying happiness $H_i$ near the thresholds at which those individuals move between successive response categories. 

One immediate implication is that if causal effects have the same sign for all individuals, this sign can be identified empirically by mean regression of responses $R_i$ on variation in $X_i$, whether that variation is continuous or discrete.\footnote{When the goal is not causal inference but understanding the joint distribution of $H_i$ and $X_i$, we have from Appendix Eq. \eqref{propquantileidr} that if the sign of $\partial_{x_j} Q_{H|XW}(\alpha|x,w)$ is the same for all $\alpha$, this sign will be reflected in $\partial_{x_j} \mathbbm{E}[R_i|x,w]$. Analogously, with discrete variation in $X_i$, if the conditional distribution of $H_i$ given $X_i=x'$ stochastically dominates that of $X_i=x$ (all conditional on $W_i$), this will be reflected in the sign of the observable conditional mean difference $\mathbbm{E}[R_i|x',w]-\mathbbm{E}[R_i|x,w]$.} The same-sign assumption in fact leads to over-identification restrictions, because $\partial_{x_j}P(R_i \le r|x,w)$ or $P(R_i \le r|x',w)-P(R_i \le r|x,w)$ must have the same sign for all $r$. 

However, researchers often want to be more ambitious and compare the magnitudes of the effects of multiple explanatory variables on $H_i$. The results of the preceding sections show that if $\mathbbm{E}[R_i|X_i=x,W_i=w] = m(x, w)$ is modeled as a fully flexible function of the regressors and estimated nonparametrically, features of the function $m$ can be interpreted causally: derivatives of $m$ uncover positive weighted combinations of partial effects (Section \ref{seccontiniousvariation}) and discrete differences uncover positively-weighted combinations of treatment effects (Section \ref{secdiscrete}). In general, these weights vary not only with regressor $x_j$ but by value of the entire vector $x$, making interpretation somewhat tedious.

Although nonparametric approaches allow one to estimate the entire function $m(x,w)$ consistently, it is difficult to report and interpret an infinite-dimensional object, and the curse of dimensionality looms large with several $X$. One path forward for a continuous $X_1$ is to estimate and report the average of $\partial_y \mathbbm{E}[R_i|X_{1i}=y,X_{-1,i}]$ over the distributions of $y=X_{1i}$ and of the other regressors $X_{-1,i}$. For e.g. a binary regressor $X_2$, one could instead report the average difference  $\mathbbm{E}[R_i|X_{-2,i},X_{2i}=1]-\mathbbm{E}[R_i|X_{-2,i},X_{2i}=0]$ over the distribution of the other regressors $X_{-2,i}$. Such averages can be estimated at the $\sqrt{n}$ rate \citep{ichimuratodd}, and their ratios can still be interpreted in terms of ratios of convex averages of causal effects as in Section \ref{sec:comparing}---the averaging is  now over $x$ as well. In Appendix \ref{sec:ilustrativeexample}, I follow this approach using the estimator of \citet{liracine} (which is implemented in the Stata command \texttt{npregress kernel}) to synthetic data with two explanatory variables. This estimator applies kernel regression techniques to setups in which there may both be continuous and discrete regressors. 

Notwithstanding the above, in practice researchers often instead estimate parsimonious specifications of the function $m$, most frequently applying OLS to linear models of the form of \eqref{eq:ols}. The remainder of this section studies the interpretation of the estimands $\gamma_j$ in Eq. (\ref{eq:ols}) in light of the results of this paper. I focus on the case with no control variables $W_i$ for ease of exposition.

\subsection{Case 1: linear model is correctly specified}
The most straightforward case arises when Eq. (\ref{eq:ols}) is correctly specified in the sense that the conditional expectation function is in fact linear in the $x$, i.e.
\begin{equation} \label{eq:olscorrect}
	\mathbbm{E}[R_i|X_i=x,W_i=w] = \gamma_1 x_1 + \dots + \gamma_J x_J + \lambda^T w
\end{equation}
or equivalently that $\mathbbm{E}[\epsilon_i|X_i=x,W_i=w] = 0$ in (\ref{eq:ols}). It should be emphasized that a linear model for causal effects: $h(x,u) = x^T\beta+u$, does \textit{not} imply that a linear relationship holds between $R_i$ and $X_i$ (conditional on $W_i$), given non-linearity in the response functions. However, whether or not $\mathbbm{E}[R_i|X_i=x,W_i=w]$ exhibits a linear functional form can be examined empirically, given that $(R_i,X_i,W_i)$ are all observable. 

In order to keep notation to a minimum, I for the remainder of this section assume that no control variables $W_i$ are needed for EXOG to hold. In the context of Eq. (\ref{eq:olscorrect}), consider comparing the regression coefficient $\gamma_1$ with $\gamma_2$, if $X_1$ and $X_2$ are both continuously distributed. Since each $\gamma_j$ is then equal to
$ \partial_{x_j}\mathbbm{E}[R_i|X_i=x]$, the ratio $\gamma_2/\gamma_1$ recovers a ratio of two convex averages of causal effects by Eq. (\ref{eqexpratio}). If in addition to Eq. (\ref{eq:olscorrect}), the structural function is linear with $h(x,u)=x^T\beta+u$, then $\gamma_2/\gamma_1 = \beta_2/\beta_1$. 

Now consider comparing the coefficients for a continuously distributed $X_1$ (e.g. income) and a binary $X_2$ (e.g. an indicator for being married). For any values $x_2, x_3 \dots x_{J}$, note that: \normalsize
$$\frac{\gamma_2}{\gamma_1}=\frac{\mathbbm{E}[R_i|X_{1i}=x_1,X_{2i}=1,\dots X_{Ji}=x_{J}]-\mathbbm{E}[R_i|X_{1i}=x_1,X_{2i}=0,\dots X_{Ji}=x_{J}]}{\frac{1}{2}\partial_{x_1}\mathbbm{E}[R_i|X_{1i}=x_1,X_{2i}=1,\dots X_{Ji}=x_{J}] + \frac{1}{2}\partial_{x_1}\mathbbm{E}[R_i|X_{1i}=x_1,X_{2i}=0,\dots X_{Ji}=x_{J}]}$$
\large
If a linear model again holds \textit{both} for $\mathbbm{E}[R_i|X_i=x]$ and for the structural function $h(x,u)=x^T\beta+u$, then under the assumptions of Proposition \ref{propheterolinear}, Eq. (\ref{eq:linearmodelcompare2}) with $x'=(x_1,1, \dots)$ and $x=(x_1,0, \dots)$ implies that $\frac{\gamma_2}{\gamma_1}$ estimates $\frac{\beta_2}{\beta_1}$ up to a scaling factor $\theta$ that lies between 1 and 2. More generally, if linearity holds only for $\mathbbm{E}[R_i|X_i=x]$ but not necessarily for $h(x,u)$, and the assumptions of Proposition \ref{propheterolinear} are still satisfied, then $\gamma_2/\gamma_1$ identifies a ratio of two weighted averages of causal effects, again up to a factor $\theta \in [1,2]$, where the weights aggregate to one both the numerator and the denominator. In the numerator, the averaging is over $\Delta_i = h(x',U_i)-h(x,U_i)$ among units with $X_i=x$ while in the denominator it is over both $\mathbbm{E}[\partial_{x_1} h(x,U_i)|H_i \in \tau_{V_i},X_i=x]$ and $\mathbbm{E}[\partial_{x_1} h(x',U_i)|H_i \in \tau_{V_i},X_i=x']$, cf. Eq. (\ref{eqtauV}).

Finally, suppose that we wish to compare regression coefficients for two discrete variables $X_1$ and $X_2$. For simplicity, suppose that they are both binary. Then, :
$$\frac{\gamma_2}{\gamma_1}=\frac{\mathbbm{E}[R_i|X_{1i}=x_1,X_{2i}=1,\dots X_{Ji}=x_{J}]-\mathbbm{E}[R_i|X_{1i}=x_1,X_{2i}=0,\dots X_{Ji}=x_{J}]}{\mathbbm{E}[R_i|X_{1i}=1,X_{2i}=x_2,\dots X_{Ji}=x_{J}]-\mathbbm{E}[R_i|X_{1i}=0,X_{2i}=x_2,\dots X_{Ji}=x_{J}]}$$
for any $x=(x_1,x_2, \dots, x_{dx})$. To analyze this case we can apply Proposition \ref{propheterolinear} twice while using a continuously distributed third variable $X_3$ as a common comparison. This implies that under the assumptions of Proposition \ref{propheterolinear}, in the dense response limit $\gamma_2/\gamma_1$ identifies a ratio of two weighted averages of causal effects (with respect to $x_2$ in the numerator, and $x_1$ in the denominator) up to a factor that lies between $1/2$ and $2$.\footnote{To see this, let $\gamma_3 = \partial_{x_3}\mathbbm{E}[R_i|x]$, and write $\gamma_2/\gamma_1 = \gamma_2/\gamma_3 \cdot \gamma_3/\gamma_1$. Let $\tilde{\beta}_1$, $\tilde{\beta}_2$, and $\tilde{\beta}_3$ denote the convex combinations of causal effects associated with $\gamma_1$, $\gamma_2$ and $\gamma_3$ (cf Propositions \ref{propflow} and \ref{propdiscrete} after normalizing the weights). By Proposition \ref{propheterolinear} $\gamma_1/\gamma_3 \stackrel{R}{\rightarrow} \theta_1 \cdot \tilde{\beta_1}/\tilde{\beta_3}$ and $\gamma_2/\gamma_3 \stackrel{R}{\rightarrow} \theta_2 \cdot \tilde{\beta_2}/\tilde{\beta_3}$, where $\theta_1, \theta_2 \in [1,2]$. Thus,  $\gamma_2/\gamma_1 \stackrel{R}{\rightarrow} \theta_2/\theta_1 \cdot \tilde{\beta_2}/\tilde{\beta_1}$.}

\subsection{Case 2: misspecified regression function}

When Eq. (\ref{eq:ols}) is misspecified, the estimands of $\gamma_j$ in (\ref{eq:ols}) remain well-defined as population linear projection coefficients, but these do not always bear a straightforward relationship to the features of the conditional expectation function $m(x,w) = \mathbbm{E}[R_i|X_i=x,W_i=w]$ of interest. Nevertheless, some existing results on linear regression are useful to gain some intuition.

One case in which the estimand of Eq. (\ref{eq:ols}) remains causally interpretable without assuming linearity of the expectation (\ref{eq:olscorrect}) occurs when we have a single continuously distributed $X_i$ and no control variables. In this case, Eq. (\ref{eq:ols}) amounts to simple linear regression: $R_i = \gamma_0 + \gamma_1 X_i + \epsilon_i$. \citet{Yitzhaki1996} shows that the regression coefficient $\gamma_1 = \frac{Cov(R,X)}{Var(X)}$ can then be written as a weighted average over the local derivative of $\mathbbm{E}[R_i|X_i=x]$ even if it is non-linear:
$$ \gamma_1 = \int w(x) \cdot \frac{d}{dx} \mathbbm{E}[R_i|X_i=x] \cdot dx$$
where $w(x):= \frac{1}{Var(X)}\int_{-\infty}^x f_X(t)(t-\mathbbm{E}[X_i])dt$ is a positive function that integrates to unity, with $f_X$ denoting the density of $X_i$. By Theorem \ref{propflow}, $\gamma_1$ thus still captures a positively weighted combination of causal effects $\partial_{x} h(x,U_i)$, where the averaging is now also over $x$. If all units in the population have the same sign of $\partial_{x} h(x,U_i)$, then this sign can be recovered as that of $\gamma_1$. \citet{MHE} extend the above expression to a case with covariates: if $\mathbbm{E}[X_1|W_i]$ is linear in $W_i$, then $\gamma_1$ can be written as $\mathbbm{E}[\gamma_1(W_i)]$, where the quantities that define $\gamma_1(w)$ are analogous to the above but condition on $W_i=w$, with weights again integrating to unity.\footnote{The linearity assumption is not restrictive if $W_i$ consists of indicators for an exhaustive set of covariate cells, a so-called ``fully-saturated'' regression.} An analogous expression can also be derived for a setting with a binary $X_1$ \citep{Angrist1998,MHE} or an ordered $X_1$ \citep{angristkrreuger}. Thus with a single treatment variable $X_1$ of any type, a linear regression equation (\ref{eq:ols}) with fully saturated controls simply re-averages the causal effects of $X_1$ on $H_i$ derived in this paper over a second set of positive weights.

Unfortunately, the results mentioned above do not carry over to the general setting with multiple treatment variables $X_1 \dots X_J$ and controls estimated by Eq. (\ref{eq:ols}). \citet{contaminationbias} show that regressions like (\ref{eq:ols}) with controls $W$ can be subject to ``contamination bias'', in which the coefficient $\gamma_1$ on $X_1$ includes not only effects from $X_1$, but also effects from the other treatments $X_2 \dots X_{J}$. In other words, $\gamma_1$ does not cleanly separate variation in $X_1$ from variation in the other treatments.

Since \citet{contaminationbias} consider a standard setup in which the outcome variable of interest is directly observed, we can facilitate the connection by phrasing our examination of regression (\ref{eq:ols}) in terms of the causal effects of $X_i$ on $R_i$ (rather than on $H_i$).\footnote{Note that given any consistent estimator for the average treatment effect of some covariate contrast $x,x'$ (differing only in the first $J$ components) on $R$, one can interpret this using the methods of the present paper by translating it back into a statement about conditional means, since $\mathbbm{E}[R_i(x')-R_i(x)] = \mathbbm{E}[R_i|X_i=x']-\mathbbm{E}[R_i|X_i=x]$.} This interpretation is justified under assumption EXOG, because we can define potential outcomes $R(x)$ with respect to $x \in \mathcal{X}$ as $R_i(x) = R(H(x,U_i),V_i)$ with $\{R_i(x) \indep X_{ji}\} | W_i$ for $j=1\dots J$. Unfortunately, contamination bias is possible even under fairly optimistic linearity assumptions, for example that $\mathbbm{E}[R_i(x)-R_i(x_0)|W_i=w] = x'\beta(w) + \lambda'w$ for some fixed reference treatment $x_0 \in \mathcal{X}$, and vectors $\beta(w)$ and $\lambda$, i.e. conditional-on-$W$ average treatment effects are linear in all treatment variables. If the per-unit conditional effects $\beta(w)$ vary with $w$, then e.g. the estimand $\gamma_1$ may not capture a clean average of the $\beta_1(w)$ but instead include a second term that depends on the $\beta_2 \dots \beta_J$. The threat of contamination bias is not in any way specific to the use of subjective outcome variables, but may be particularly pernicious in this context given the motivation to compare magnitudes across regressors. \citet{contaminationbias} provide detail on possible solutions.

\subsection{An illustrative simulation example} \label{sec:ilustrativeexample}

\subsubsection{Setup}
Consider a population in which happiness is determined as by three things: i) one's income $X_{1i}$ (measured in thousands of dollars); ii) whether they are married $X_{2i} \in \{0,1\}$; and iii) an idiosyncratic error term $U_i$, according to a linear causal relationship:
\begin{equation} \label{eq:happinesseq}
	H_i = \beta_1 \ln(X_{1i}) + \beta_2 X_{2i}+U_i
\end{equation}
For the sake of illustration, we suppose that in this world money does not buy happiness: in fact, it has a slight negative effect with $\beta_1 = -0.1$. However, marriage does come with a substantial benefit to happiness: $\beta_2=1$.

First, suppose that life satisfaction is measured by a binary question in which $R_i=1$ indicates that individual $i$ responded ``yes'' and $R_i=0$ that they responded ``no'' to the question: ``All things considered, are you satisfied with your life at present?'' Given Assumption MONO, we know by Lemma \ref{prophonest} that individual reporting functions can be written
$$ R_i = \mathbbm{1}(H_i > \tau_{V_i})$$
Suppose that $V_i$ takes two values in the population. Optimistic Reporters, indicated by $V_i=1$ have a threshold $\tau_1 = -1$, and Pessimistic Reporters, indicated by $V_i=0$, have $\tau_0 = 0$. While a Pessimistic Reporter requires $H_i$ to be positive to indicate they are satisfied with life, Optimistic Reporters only require $H_i > -1$ to report that they are satisfied with life.

In line with Assumption EXOG, we will eventually assume that $(U_i,V_i) \indep (X_{1i},X_{2i})$, i.e. income and marital status are as good as randomly assigned. However, to investigate departures from this assumption, I introduce a parameter $\rho$ that governs the correlation between income and ``reporting optimism'' $V_i$. In particular, the probability of being an Optimistic Reporter as a function of income is: $\mathbbm{E}[V_i|X_{1i}=y]=\Phi(\rho\cdot \ln(y/50))$, where $\Phi$ is the normal CDF function. Thus if $\rho>0$ the proportion of Optimistic Reporters is increasing with income: all among the richest are are, while none among the poorest are Optimistic Reporters.

I round out the remaining aspects of the DGP as follows:
\begin{itemize}
	\item The distribution of income is log-normal: $ln(X_{1i}/50) \sim \mathcal{N}(0,1)$ trimmed to the range $[20,200]$, with $X_{1i}$ in thousands of dollars. Trimming incomes below 20 avoids $H_i$ tending towards infinity as $X_1 \downarrow 0$.
	\item Half of all individuals are married $\mathbbm{E}[X_{2i}]=0.5$, with $X_{2i} \indep (U_i,V_i,X_{1i})$
	\item $U_i \sim \mathcal{N}(0,1)$, and $U_{i} \indep (V_i,X_{i})$
\end{itemize}
Figure \ref{fig:hdist} shows kernel density estimates of the resulting distribution of $H_i$ in the population, computed from a sample of $N=10,000$. The threshold for Pessimistic Reporters $\tau_0=0$ (blue, dashed vertical line) and for Optimistic reporters $\tau_1=-1$ (orange, dash-dot vertical line) fall close to the center of the distribution.

\begin{figure}[H]
	\small
	\begin{center}
		\includegraphics[width=3in]{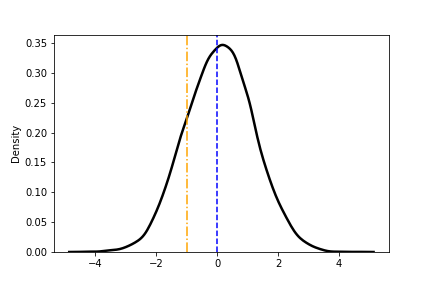} \vspace{.2cm}
		\caption{Distribution of $H_i$ in the illustrative example. Vertical lines indicate the life satisfaction thresholds $\tau_0 = 0$ (blue, dashed) and $\tau_1=-1$ (orange, dash-dot) for Pessimistic and Optimistic Reporters, respectively. \label{fig:hdist}}
	\end{center}
\end{figure}

\subsubsection{The importance of exogeneity} \label{sec:importanceexog}
To illustrate the importance of Assumption EXOG, let us first consider ourselves in the shoes of an econometrician facing a DGP with $\rho =1$. Since Optimistic Reporters tend to have higher incomes, this introduces a mechanical positive correlation between income and reported well-being, depicted in the left panel of Figure \ref{fig:illustrative_rho1}.\footnote{\label{fn:cef}This correlation can be computed explicitly: by the law of iterated expectations, we have that
	\begin{align*}
		\mathbbm{E}[R_i|X_{1i}=y, X_{2i}=m] &= P(V_i = 0|X_{1i}=y)\cdot P(\beta_1ln(y)+\beta_2m+U_i > \tau_0)+P(V_i = 1|X_{1i}=y)\cdot P(\beta_1ln(y)+\beta_2m+U_i > \tau_1)\\
		&= \Phi(\rho\cdot \ln(y/50))\cdot \Phi(\beta_1ln(y)+\beta_2m-\tau_1)+(1-\Phi(\rho\cdot \ln(y/50)))\cdot \Phi(\beta_1ln(y)+\beta_2m-\tau_0)
\end{align*}} A regression of $R$ on $\ln(X_1)$ and $X_2$ picks up this spurious correlation between $R$ and $X_1$ that arises from reporting heterogeneity: the coefficient on log income reported in Column (1) is positive, despite $\beta < 0$. The ratio of estimates $\hat{\beta}_{Married}/\hat{\beta}_{LogIncome}$ evaluates to $4.04$, having opposite sign as the true value of $\beta_2/\beta_1 = -10$. Column (3) shows that if the econometrician did have access to direct observations of $H$, a simple OLS regression estimates $\beta_1$ and $\beta_2$, and hence their ratio, well--in line with Eq. (\ref{eq:happinesseq}).

\begin{figure}[H]
	\small
	\begin{center}
		\begin{tabular}{m{6cm} l}
			\includegraphics[width=7cm]{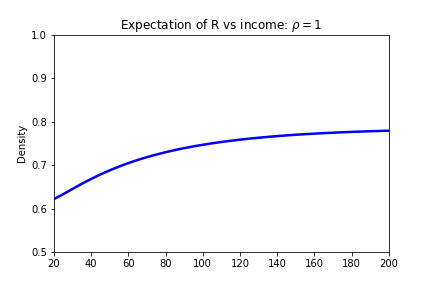}
			&{
\def\sym#1{\ifmmode^{#1}\else\(^{#1}\)\fi}
\begin{tabular}{l*{3}{c}}
\hline\hline
            &\multicolumn{1}{c}{(1)}&\multicolumn{1}{c}{(2)}&\multicolumn{1}{c}{(3)}\\
            &\multicolumn{1}{c}{R}&\multicolumn{1}{c}{R}&\multicolumn{1}{c}{H}\\
\hline
Log Income  &      0.0765\sym{***}&                     &      -0.106\sym{***}\\
            &   (0.00699)         &                     &    (0.0161)         \\
[1em]
Income      &                     &     0.00155\sym{***}&                     \\
            &                     &  (0.000191)         &                     \\
[1em]
Married     &       0.309\sym{***}&       0.310\sym{***}&       1.015\sym{***}\\
            &   (0.00861)         &   (0.00864)         &    (0.0199)         \\
[1em]
Constant    &       0.232\sym{***}&       0.698\sym{***}&      0.0162         \\
            &    (0.0295)         &   (0.00482)         &    (0.0672)         \\
\hline
$\hat{\beta}_{Married}/\hat{\beta}_{LogIncome}$&        4.04         &        3.69         &       -9.55         \\
Estimator   &         OLS         &Local Linear         &         OLS         \\
N           &       10000         &       10000         &       10000         \\
\hline\hline
\multicolumn{4}{l}{\footnotesize Standard errors in parentheses}\\
\multicolumn{4}{l}{\footnotesize \sym{*} \(p<0.05\), \sym{**} \(p<0.01\), \sym{***} \(p<0.001\)}\\
\end{tabular}
}
			
		\end{tabular}
		\vspace{.2cm}
		\caption{$\rho=1$ case. Left panel depicts the conditional expectation function $\mathbbm{E}\left\{\mathbbm{E}[R_i|X_{1i}=y,X_{2i}]\right\}$ as a function of $y$ (calculated from the known DGP as described in footnote \ref{fn:cef}), when $\rho=1$. Regression results (right panel) of $R_i$ on $X_i$ reflect this spurious positive association between income and reported satisfaction, estimated on a simulated dataset of $10,000$ observations. Column (1) uses OLS of $R$ on log-income and marriage, while Column (2) nonparametrically estimates the mean marginal effect of income and the mean effect of Marital (see text for details). Column (3) reports an (infeasible) direct regression of $H_i$ on log-income and marriage, recovering consistent estimates of the true parameters $\beta_1=-0.1$ and $\beta_2=1$. \label{fig:illustrative_rho1}}
	\end{center}
\end{figure}

Column (2) of Figure \ref{fig:illustrative_rho1} shows that $\hat{\beta}_{Married}/\hat{\beta}_{LogIncome}$ getting the wrong sign in Column (1) is not due to functional form misspecification in the OLS estimates. A nonparametric regression of $R$ on income and marital status again captures a positive ratio, and of similar magnitude. Specifically, Column (2) reports the average derivative of $\mathbbm{E}[R_i|X_{1i}=y,X_{2i}]$ with respect to $y$ over the distribution of $X_{1i}$ as the ``coefficient'' for income, and estimates the average difference  $\mathbbm{E}[R_i|X_{1i},X_{2i}=1]-\mathbbm{E}[R_i|X_{1i},X_{2i}=0]$ as the ``coefficient'' for marital status. This is implemented using the kernel estimator of \citet{liracine}, with bandwidth chosen by cross-validation. Standard errors are calculated using 500 bootstrap replications. I report $\hat{\beta}_{Married}/\hat{\beta}_{LogIncome}$ computed by averaging the local ratio of effects across the empirical distribution of $X_i$: $\frac{1}{N} \sum_{i=1}^N \frac{\mathbbm{E}[R_i|X_{1i},X_{2i}=1]-\mathbbm{E}[R_i|X_{1i},X_{2i}=0]}{\left.\partial_{y}\mathbbm{E}[R_i|y,X_{2i}]\right|_{y=X_{1i}}}$.

\subsubsection{The importance of correct functional form}
Figure \ref{fig:illustrative_rho0} turns to the case of $\rho=0$, in which Assumption EXOG is satisfied and hence the results of this paper apply. In the left panel, we see that the conditional expectation of $R$ with respect to income is now decreasing in income, in line with the negative sign on $\beta_1$. The OLS regression of reported satisfaction on log income and marriage now yields $\hat{\beta}_{Married}/\hat{\beta}_{LogIncome}=-7.91$, which has the correct sign but undershoots the true value of $-10$. This could be due to $\Pi_{x,x'}/\Pi_{x} < 1$, in the parlance of Section \ref{sec:comparing}, but also could arise from misspecification of the functional form of $\mathbbm{E}[R|X_{1i},X_{2i}]$. Column (2) again implements the local linear regression method described above, returning estimates $\hat{\beta}_{Married}/\hat{\beta}_{LogIncome}=-12.13$. These estimates are very similar to those obtained by an OLS regression of $R$ on income (not in logs) as well as marital status. This underscores the fact that functional form assumptions regarding the effects of $X$ on $H$ to not translate unchanged into features of the observable correlations between $X$ and $R$. In this case the causal relationship is linear in log income, while the observable one is linear in income. This distinction matters quantitatively in this case for assessing the relative contributions of income and marriage to well-being. 

\begin{figure}[H]
	\small
	\begin{center}
		\begin{tabular}{m{6cm} l}
			\includegraphics[width=7cm]{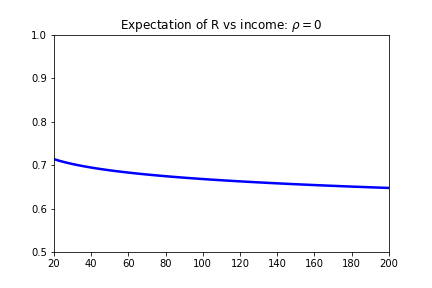}
			&{
\def\sym#1{\ifmmode^{#1}\else\(^{#1}\)\fi}
\begin{tabular}{l*{3}{c}}
\hline\hline
            &\multicolumn{1}{c}{(1)}&\multicolumn{1}{c}{(2)}&\multicolumn{1}{c}{(3)}\\
            &\multicolumn{1}{c}{R}&\multicolumn{1}{c}{R}&\multicolumn{1}{c}{R}\\
\hline
Log Income  &     -0.0386\sym{***}&                     &                     \\
            &   (0.00726)         &                     &                     \\
[1em]
Income      &                     &   -0.000522\sym{***}&   -0.000520\sym{***}\\
            &                     &  (0.000103)         &  (0.000102)         \\
[1em]
Married     &       0.305\sym{***}&       0.306\sym{***}&       0.306\sym{***}\\
            &   (0.00879)         &   (0.00892)         &   (0.00879)         \\
[1em]
Constant    &       0.688\sym{***}&       0.684\sym{***}&       0.567\sym{***}\\
            &    (0.0304)         &   (0.00483)         &    (0.0100)         \\
\hline
$\hat{\beta}_{Married}/\hat{\beta}_{LogIncome}$&       -7.91         &      -12.13         &      -11.87         \\
Estimator   &         OLS         &Local Linear         &         OLS         \\
N           &       10000         &       10000         &       10000         \\
\hline\hline
\multicolumn{4}{l}{\footnotesize Standard errors in parentheses}\\
\multicolumn{4}{l}{\footnotesize \sym{*} \(p<0.05\), \sym{**} \(p<0.01\), \sym{***} \(p<0.001\)}\\
\end{tabular}
}
			
		\end{tabular}
		\vspace{.2cm}
		\caption{$\rho=0$ case. Left panel depicts the conditional expectation function $\mathbbm{E}\left\{\mathbbm{E}[R_i|X_{1i}=y,X_{2i}]\right\}$ as a function of $y$ as a function of $y$ (calculated from the known DGP as described in footnote \ref{fn:cef}), when $\rho=0$. Now assumption EXOG is satisfied and the observable relationship between $R$ and income reflects sign of the true negative effect $\beta_1$. Right panel reports regression results of $R_i$ on $X_i$ on a simulated dataset of $10,000$ observations. Column (1) uses OLS on log income and married, and Column (2) again uses the nonparametric estimator described in the text for Figure \ref{fig:illustrative_rho1}. Column (3) compares this against OLS using income rather than the log of income as a regressor. For Column (3) $\hat{\beta}_{LogIncome}$ is computed as $\hat{\beta}_{Income}\cdot \hat{\mathbbm{E}}[1/X_{1i}]$). \label{fig:illustrative_rho0}}
	\end{center}
\end{figure}

\subsubsection{The effect of the number of response categories}
While the two DGPs reported above consider a binary $R$ for simplicity, Figure \ref{fig:illustrative_rho011} reports the $\rho=0$ case with an 11-point scale for $R$. The DGP is unchanged from above except that now Pessimistic Reporters have linear reporting functions with 
$$ \tau_0(r) = -5 + r$$
while Optimistic Reporters have all thresholds shifted down by $1$ relative to the Pessimists:
$$ \tau_1(r) = -6 + r$$
Figure \ref{fig:illustrative_rho011} again compares a linear regression of $R$ on log-income and marital status (1) to a nonparametric (2) and linear regression (3) of $R$ on income and marital status. In Column (1), the estimated ratio $\hat{\beta}_{Married}/\hat{\beta}_{LogIncome}$ is close in magnitude to $-10$ while the estimated ratios in Columns (2) and (3) are somewhat larger. This suggests that the Column (1) estimate being close to the truth is a coincidence of functional-form misspecification offsetting $\Pi_{x,x'}/\Pi_{x} > 1$ in line with Theorem \ref{propheterolinear}. Indeed, comparing Columns (2) and (3) the CEF of $R$ appears to again be approximately linear in income.

\begin{figure}[H]
	\small
	\begin{center}
		\begin{tabular}{m{6cm} l}
			\includegraphics[width=7cm]{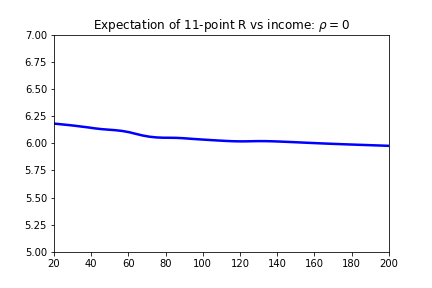}
			&{
\def\sym#1{\ifmmode^{#1}\else\(^{#1}\)\fi}
\begin{tabular}{l*{3}{c}}
\hline\hline
            &\multicolumn{1}{c}{(1)}&\multicolumn{1}{c}{(2)}&\multicolumn{1}{c}{(3)}\\
            &\multicolumn{1}{c}{R}&\multicolumn{1}{c}{R}&\multicolumn{1}{c}{R}\\
\hline
Log Income  &      -0.108\sym{***}&                     &                     \\
            &    (0.0194)         &                     &                     \\
[1em]
Income      &                     &    -0.00144\sym{***}&    -0.00145\sym{***}\\
            &                     &  (0.000264)         &  (0.000271)         \\
[1em]
Married     &       1.021\sym{***}&       1.021\sym{***}&       1.021\sym{***}\\
            &    (0.0230)         &    (0.0221)         &    (0.0230)         \\
[1em]
Constant    &       6.018\sym{***}&       6.094\sym{***}&       5.682\sym{***}\\
            &    (0.0808)         &    (0.0129)         &    (0.0252)         \\
\hline
$\hat{\beta}_{Married}/\hat{\beta}_{LogIncome}$&       -9.50         &      -14.35         &      -14.25         \\
Estimator   &         OLS         &Local Linear         &         OLS         \\
N           &       10000         &       10000         &       10000         \\
\hline\hline
\multicolumn{4}{l}{\footnotesize Standard errors in parentheses}\\
\multicolumn{4}{l}{\footnotesize \sym{*} \(p<0.05\), \sym{**} \(p<0.01\), \sym{***} \(p<0.001\)}\\
\end{tabular}
}
			
		\end{tabular}
		\vspace{.2cm}
		\caption{$\rho=0$ case with an eleven point response scale. Left panel depicts a lowess regression of $R$ on income, in the simulated dataset of $10,000$ observations.  \label{fig:illustrative_rho011}}
	\end{center}
\end{figure}

\section{Proofs} \label{sec:proofs}

\subsection{Proof of Lemma \ref{prophonest}}
Fix any $v$. First we show that if (\ref{eq:iff}) holds for all $r$ then Assumption MONO holds. Indeed, suppose that for some $h' > h$ we had $r(h',v) < r(h,v)$. Substituting $r=r(h',v)$ into (\ref{eq:iff}), we would then have that $r(h,v)> r(h',v) \implies h > \tau_v(r(h',v))$ and hence that $h' > \tau_v(r(h',v))$ since $h'>h$. But $h' > \tau_v(r(h',v))$ violates the definition of $\tau_v$, since then $h' > \sup\{h \in \mathcal{H}: r(h,v) \le r(h',v)\} \ge h'$. 

Left-continuity of $r$ holds by considering any increasing sequence of $h$ converging to $\tau_v(r)$, i.e. I show that $\lim_{h \uparrow \tau_v(r)} r(h,v) = r(\tau_v(r),v)$. First, note that $\lim_{h \uparrow \tau_v(r)} r(h,v) > r(\tau_v(r),v)$ would violate weak monotonicity of $r$. Suppose instead that $\lim_{h \uparrow \tau_v(r)} r(h,v) = r^*$ where $r^* < r(\tau_v(r),v)$. This limit exists by the increasing property of $r$. It must then be the case that $\tau_v(r^*) = \tau_v(r)$. To see this, consider the two alternatives. For $\tau_v(r^*) < \tau_v(r)$, there would need to exist an $h^*$ such that $r(h^*,v)>r^*$ but $h^* < \tau_v(r)$. This would violate $\lim_{h \uparrow \tau_v(r)} r(h,v) = r^*$ given that $r$ is increasing. Suppose instead that $\tau_v(r^*) > \tau_v(r)$. Then there would need to exist an $h^*$ such that $r(h^*,v)>r$ but $h^* < \tau_v(r)$. But $h^* < \tau_v(r)$ implies that $r(h^*,v) \le r$ given that $r$ is increasing. Now, given that $\tau_v(r^*) = \tau_v(r)$, $r^* < r(\tau_v(r),v)$ would violate (\ref{eq:iff}) for $h=\tau_v(r^*)$, because $r(h,v)>r \implies h > \tau_v(r)$. 

Now we will show that if Assumption MONO holds then (\ref{eq:iff}) is satisfied for all $v,r$. First, note that $\tau_v(r)$ is weakly increasing in $r$, and thus $r(h,v) \le r \implies \tau_v(r(h,v)) \le \tau_v(r) \implies  h \le \tau_{v}(r)$ since by the definition of $\tau_v(r)$: $h \le \tau_v(r(h,v))$ for any $h$. Thus we can establish the $\implies$ direction of (\ref{eq:iff}), without even invoking Assumption MONO. In the other direction, assume that for some $r$ and $h$, $h \le \tau_v(r)$ but $r(h,v) > r$. By the increasing property of MONO: $h \le \tau_{v}(r) \implies r(h,v) \le r(\tau_{v}(r),v)$. Thus $r < r(h,v) \le r(\tau_{v}(r),v)$ and thus $r(\tau_{v}(r),v)>r$, so $r(\cdot, v)$ must have a left discontinuity at $\tau_v(r)$.

\subsection{Proof of Theorem \ref{propflow}}
Along the way to proving Theorem \ref{propflow}, we will first establish the following result:
\begin{lemma} \label{lemma:hdist}
	Assume MONO holds and that REG$_j$ holds for some $j \in \{1,\dots, J\}$. Assume further that $\{X_{i} \indep V_i \} \textrm{ }| \textrm{ } W_{i}$ (the first part of EXOG). Then:
	\begin{align*} 
		\partial_{x_j} &P(R_{i} \le r|x,w) \\ 
		&= -\mathbbm{E}\left\{\left.{f_H(\tau_{V_i}(r)|x,V_i,w)} \cdot \partial_{x_j} \left.Q_{H|XVW}(\alpha|x,V_i,w)\right|_{\alpha = F_{H|XVW}(\tau_{V_i}(r)|x,V_i,w)]} \right|W_i=w\right\}
	\end{align*}
\end{lemma}
\noindent Lemma \ref{lemma:hdist} is of independent interest, because it shows that under MONO, a regression of the distribution of $R$ on a component of $X$ can be decomposed into a linear combination of quantile regressions of $H$ on $X$ (conditional on $V$ and $W$). Beyond regularity conditions, this result only requires reporting heterogeneity $V$ to be conditionally independent of variation in $X_j$, and no causal assumptions. To interpret this result causally, we add the second part of EXOG and establish Theorem \ref{propflow}.

To establish both results, note that by the law of iterated expectation and Lemma \ref{prophonest}:
\begin{align*}
P(R_{i} \le r|X_i=x,W_i=w) &= \int dF_{UV|XW}(u,v|x,w) \cdot \mathbbm{1}(r(h(x,u),v) \le r)\\
&= \int dF_{UV|XW}(u,v|x,w) \cdot \mathbbm{1}(h(x,u) \le \tau_v(r))\\
&=  \int dF_{V|XW}(v|x,w) \int dF_{U|XVW}(u|x,v,w) \cdot \mathbbm{1}(h(x,u) \le \tau_v(r))\\
&= \int dF_{V|W}(v|w) \cdot \mathbbm{E}\left[\left.\mathbbm{1}(h(x,U_i) \le \tau_v(r))\right|X_i=x, V_i=v,W_i=w\right]\\
&= \int dF_{V|W}(v|w) \cdot P(H_i \le \tau_v(r)|X_i=x, V_i=v,W_i=w)
\end{align*}
where I have used $\left\{X_{i} \indep V_i\right\}|W_{i}$ in the second to last step to replace $F_{V|XW}$ with $F_{V|W}$.

By differentiating the equation $Q_{H|XVW}(F_{H|XVW}(h|x,v)|x,v,w) = h$ with respect to $x_j$, we have:
$$ \partial_{x_j} P(H_{i} \le h|X_i=x,V_i=v,W_i=w) = -f_H(h|x,v,w) \cdot \left.\partial_{x_j} Q_{H|XVW}(\alpha|x,v,w)\right|_{\alpha = F_{H|XVW}(h|x,v,w)}$$
By dominated convergence (using Assumption REG) we can move the derivative inside the expectation, and thus:
$$\partial_{x_j} P(R_{i} \le r|x) = -\int dF_{V|W}(v|w) \cdot f_H(\tau_v(r)|x,v,w)\cdot \partial_{x_j}  \left.Q_{H|XVW}(\alpha|x,v,w)\right|_{\alpha = F_{H|XVW}(\tau_v(r)|x,v,w)}$$
establishing Lemma \ref{lemma:hdist}.

Note that EXOG implies that $\{X_{ji} \indep U_i\}|(X_{-j,i},V_i,W_{i})$, where $X_{-j,i}$ denotes all of the components of $X_i$ aside from the $j^{th}$. The theorem of \citet{hoderleinmammen2007} implies that given this and REG:
$$\partial_{x_j} \left.Q_{H|XVW}(\alpha|x,v,w)\right|_{\alpha = F_{H|XVW}(\tau_v(r)|x,v,w)}=\mathbbm{E}\left[\partial_{x_j} h(x,U_i)|H_i=\tau_v(r),x,v,w\right]$$
Therefore:
\begin{align*}
\partial_{x_j} &P(R_{i} \le  r|x) = -\int dF_{V|W}(v|w) \cdot f_H(\tau_v(r)|x,v)\cdot \mathbbm{E}\left[\partial_{x_j} h(x,U_i)|H_i=\tau_v(r),x,v\right]
\end{align*}
In the case where $V$ is degenerate, a similar proof to the above is used in \citet{chernozukovetal2019} to study derivatives of conditional choice probabilities in multinomial choice models (under somewhat different regularity conditions).

In the proof of Theorem 1 in \citet{hoderleinmammen2007}, the conditional expectation function analogous to $\mathbbm{E}\left[\partial_{x_j} h(x,U_i)|H_i=h,X_i=x,V_i=v,W_i=w\right]$ appearing in the expression for $\partial_{x_j} P(R_{i} \le  r|X_i=x)$ is defined to be the following integral:
\begin{align}
	\int dt \cdot t \cdot &\frac{f_{H,\partial_{x_j}h(x,U)|XVW}(h,t|x,v,w)}{f_{H|XVW}(h|x,v,w)}= \int dt \cdot t \cdot \frac{\partial_{t} \partial_h P(H_i \le h,\partial_{x_j}h(x,U_i) \le t|x,v,w)}{\partial_h P(H_i \le h|x,v,w)} \nonumber\\
	&\int dt \cdot t \cdot \partial_{t} \left\{\frac{\lim_{\epsilon \downarrow 0} P(H_i \in [h,h+\epsilon],\partial_{x_j}h(x,U_i) \le t|x,v,w)/\epsilon}{\lim_{\epsilon \downarrow 0} P(H_i \in [h,h+\epsilon]|x,v,w)/\epsilon}\right\} \nonumber \\
	&\int dt \cdot t \cdot \partial_{t} \left\{\lim_{\epsilon \downarrow 0} \frac{P(\partial_{x_j}h(x,U_i) \le t, h(x,U_i) \in [h,h+\epsilon]|x,v,w)}{P(h(x,U_i) \in [h,h+\epsilon]|x,v,w)}\right\} \label{eq:cefdef}
\end{align}
where $f_{H|XVW}$ and $f_{H,\partial_{x_j}h(x,U)|XVW}$ exist and have a ratio that is dominated by an absolutely integrable function $M \cdot c(t)$, by Assumption $REG_j$. Given that $U_i$ is a random vector in $\mathbbm{R}^{d_U}$ with a well-defined probability distribution conditional on $X_i=x,V_i=v,W_i=w$, the limit
$$\lim_{\epsilon \downarrow 0} \frac{P(\partial_{x_j}h(x,U_i) \le t, h(x,U_i) \in (h,h+\epsilon]|x,v,w)}{P(h(x,U_i) \in (h,h+\epsilon]|x,v,w)}$$
yields a regular conditional probability distribution of $\partial_{x_j}h(x,U_i)$ given $H_i=h(x,U_i)=h$ (and $X_i=x,V_i=v,W_i=w$). See the result of \citet{asderivatives} for details.

Under an interchange of the limit and the integral in \eqref{eq:cefdef}, we could also write the quantity $
\mathbbm{E}[\partial_{x_j} h(x,U_i)|H_i=h,X_i=x,V_i=v,W_i=w]$ as $\lim_{\epsilon \downarrow 0} \hspace{.cm} \mathbbm{E}[\partial_{x_j} h(x,U_i)|H_i \in [h,h+\epsilon],X_i=x,V_i=v,W_i=w]$. I employ this limit representation to offer an intuitive description of the Theorem \ref{propflow} estimand as a weighted average over $\partial_{x_j} h(x,U_i)$, among individuals having $H_i$ ``near'' $\tau_{V_i}(r)$, in the limit that $\epsilon \downarrow 0$.\footnote{One can also establish Theorem \ref{propflow} intuitively by applying Theorem \ref{propdiscrete} and letting $x' \rightarrow x$ (see Footnote \ref{fn:lemmasmall}).} \citet{sasaki_2015} shows how such outcome-conditioned average derivatives can be written as an explicit integral over the distribution of heterogeneity values $U_i\in \mathbbm{R}^{d_U}$ such that $h(x,U_i)=\tau_v(r)$.

\subsection{Proof of Corollary \ref{corr:avgderivative}}
Let $\beta_r(x,v,w):=\mathbbm{E}\left[\partial_{x_j} h(x,U_i)|H_i=\tau_{v}(r),x,v,w\right]$. Averaging Eq \eqref{eq:propflow} over $X_i,W_i$ yields:
\begin{align*}
	\mathbbm{E}[&\partial_{x_j} P(R_i \le r|X_i,W_i)] \\
	&=- \int dF_{XW}(x,w) \cdot \int dF_{V|W}(v|w) \cdot f_{H|XVW}(\tau_{v}(r)|x,v,w) \cdot \beta_r(x,v,w)\\
	&=- \int dF_{XW}(x,w) \cdot \int dF_{V|XW}(v|x,w) \cdot f_{H|XVW}(\tau_{v}(r)|x,v,w) \cdot \beta_r(x,v,w)\\
	&=- \int dF_{XVW}(x,v,w)\cdot f_{H|XVW}(\tau_{v}(r)|x,v,w) \cdot \beta_r(x,v,w)\\
	&=- \int dF_{XVW|H}(x,v,w|\tau_v(r))\cdot f_{H}(\tau_{v}(r)) \cdot \beta_r(x,v,w)\\
	&=- \int dF_{V|H}(v|\tau_{v}(r)) \cdot f_{H}(\tau_{v}(r)) \int dF_{XW|VH}(x,w|v,\tau_v(r)) \cdot \mathbbm{E}\left[\partial_{x_j} h(X_i,U_i)|H_i=\tau_{v}(r),x,v,w\right]\\
	&=- \int dF_{V|H}(v|\tau_{v}(r)) \cdot f_{H}(\tau_v(r)) \cdot  \mathbbm{E}\left[\partial_{x_j} h(X_i,U_i)|H_i=\tau_{V_i}(r),V_i=v\right]\\
	&=- \int dF_{V}(v) \cdot f_{H|V}(\tau_v(r)|v) \cdot  \mathbbm{E}\left[\partial_{x_j} h(X_i,U_i)|H_i=\tau_{V_i}(r),V_i=v\right]
\end{align*}
using EXOG in the second step. Note that by Bayes' rule:
$$dF_{V|H-\tau_V(r)}(v|0) = f_{H-\tau_V(r)|V}(0|v)\cdot \frac{dF_V(v)}{f_{H-\tau_V(r)}(0)} = f_{H|V}(\tau_v(r)|v)\cdot \frac{dF_V(v)}{f_{H-\tau_V(r)}(0)}$$
and thus $dF_V(v)\cdot f_{H|V}(v|\tau_v(r)) = f_{H-\tau_V(r)}(0) \cdot dF_{V|H-\tau_V(r)}(v|0)$, where $f_{H|V}(\tau_v(r)|v)=f_{H-\tau_V(r)|V}(0|v)$ given that  $\tau_V(r)$ is a constant given $V=v$. Note  that existence of $f_{H-\tau_V(r)}(0)$ is guaranteed by Assumption $REG_j$, since by integrating item four of $REG_j$ we know that the density $f_{H|V}$ exists and thus the density $f_{H-\tau_V(r)|V}$ exists as well. Thus:
\begin{align*}
	\mathbbm{E}[&\partial_{x_j} P(R_i \le r|X_i,W_i)] \\
	&=- \int dF_{V}(v) \cdot f_{H|V}(\tau_v(r)|v) \cdot  \mathbbm{E}\left[\partial_{x_j} h(X_i,U_i)|H_i=\tau_{V_i}(r),V_i=v\right]\\
	&=- f_{H-\tau_V(r)}(0) \cdot \int dF_{V|H-\tau_V(r)}(v|0)\cdot  \mathbbm{E}\left[\partial_{x_j} h(X_i,U_i)|H_i-\tau_{V_i}(r)=0,V_i=v\right]\\
	&=- f_{H-\tau_V(r)}(0) \cdot \mathbbm{E}\left[\partial_{x_j} h(X_i,U_i)|H_i-\tau_{V_i}(r)=0\right]=- f_{H-\tau_V(r)}(0) \cdot \mathbbm{E}\left[\partial_{x_j} h(X_i,U_i)|H_i=\tau_{V_i}(r)\right]
\end{align*}

\subsection{Proof of Theorem \ref{propdiscrete}}
I begin with a heuristic overview: the detailed proof is below. The logic of the result is as follows: for a given individual having $V_i=v$, $R_i$ will be less than or equal to $r$ when $X_i=x'$, but not when $X_i=x$, if $\Delta_i< 0$ and $h(x,U_i) \in (\tau_v(r),\tau_v(r)+|\Delta_i|]$. This event increases the value of $P(R_{i} \le r|x,w')-P(R_{i} \le r|x,w)$. On the other hand, $R_i$ will be less than or equal to $r$ when $X_i=x$ but not when $X_i=x'$ when $\Delta_i > 0$ and $h(x,U_i) \in (\tau_v(r)-\Delta_i,\tau_v(r)]$. This event instead decreases the value of $P(R_{i} \le r|x',w)-P(R_{i} \le r|x,w)$. The RHS of Theorem \ref{propdiscrete} can be written as
$$\mathbbm{E}\left\{\left.\int_{\tau_{V_i}(r)-\Delta_i}^{\tau_{V_i}(r)} dy \cdot f_H(y|\Delta_i,X_i=x, V_i)\right|W_i=w\right\},$$
which averages over both positive and negative $\Delta_i$, covering both cases.

Now let us prove the result of Theorem \ref{propdiscrete}. By the law of iterated expectations, Lemma \ref{prophonest}, and then EXOG
\begin{align*}
	&P(R_{i} \le r|X_i=x',W_i=w)-P(R_{i} \le r|X_i=x,W_i=w)\\
	&= \int dF_{UV|XW}(u,v|x',w) \cdot \mathbbm{1}(r(h(x',u),v) \le r)-\int dF_{UV|XW}(u,v|x,w) \cdot \mathbbm{1}(r(h(x,u),v) \le r)\\
	&= \int dF_{UV|XW}(u,v|x',w) \cdot \mathbbm{1}(h(x',u) \le \tau_v(r))- \int dF_{UV|XW}(u,v|x,w) \cdot \mathbbm{1}(h(x,u) \le \tau_v(r))\\
	&= \int dF_{V|W}(v|w) \cdot \left\{P(h(x',U_i) \le \tau_v(r)|X_i=x',v,w)-P(h(x,U_i) \le \tau_v(r)|X_i=x,v,w)\right\}\\
	&= \int dF_{V|W}(v|w) \cdot \left\{P(h(x',U_i) \le \tau_v(r)|X_i=x,v,w)-P(h(x,U_i) \le \tau_v(r)|X_i=x, v,w)\right\}
\end{align*}
using that $X_i \indep U_i |W_i,V_i$ by EXOG in the last step. Thus:
\begin{align*}
	&P(R_{i} \le r|X_i=x',W_i=w)-P(R_{i} \le r|X_i=x,W_i=w)\\
	&= \int dF_{V|W}(v|w) \cdot \left\{P(h(x',U_i) \le \tau_v(r) \textrm{ but not } h(x,U_i) \le \tau_v(r)|x,v,w)\right.\\
	&\left. \hspace{2in}-P(h(x,U_i) \le \tau_v(r) \textrm{ but not } h(x',U_i) \le \tau_v(r)|x,v,w)\right\}\\
	&=\int dF_{V|W}(v|w) \cdot \left\{P(h(x',U_i) \le \tau_v(r) < h(x,U_i)|x,v,w)-P(h(x,U_i) \le \tau_v(r) < h(x',U_i) |x,v,w)\right\}\\
	&=\int dF_{V|W}(v|w) \cdot \left\{P(h(x,U_i) \in (\tau_v(r),\tau_v(r)-\Delta_i]|x,v,w)-P(h(x,U_i) \in (\tau_v(r)-\Delta_i,\tau_v(r)]|x,v,w)\right\}\\
	&=\int dF_{V|W}(v|w) \cdot \left\{P(H_i \in (\tau_v(r),\tau_v(r)-\Delta_i]|x,v,w)-P(H_i \in (\tau_v(r)-\Delta_i,\tau_v(r)]|x,v,w)\right\}\\
	&=-\int dF_{V|W}(v|w) \cdot \int dF_{\Delta|XVW}(\Delta|x,v,w)\cdot \left\{P(H_i \in (\tau_v(r),\tau_v(r)-\Delta]|\Delta,x,v,w)\right.\\
	&\hspace{4in} \left.-P(H_i \in (\tau_v(r)-\Delta,\tau_v(r)]|\Delta,x,v,w)\right\}\\
	&=-\int dF_{V|W}(v|w) \cdot \int dF_{\Delta|XVW}(\Delta|x,v,w) \int_{\tau_v(r)-\Delta}^{\tau_v(r)} dy \cdot f_H(h|\Delta,x,v,w)\\
	&=-\int dF_{V|W}(v|w) \cdot \int dF_{\Delta|VW}(\Delta|x,v,w)\cdot \bar{f}_H(\tau_v(r)|\Delta,x,v,w)\cdot \Delta\\
	&=-\int dF_{V|W}(v|w) \cdot \mathbbm{E}[\bar{f}_H(\tau_v(r)|\Delta_i,x,v,w)\cdot \Delta_i|V_i=v,W_i=w]\\
	&=-\mathbbm{E}[\bar{f}_H(\tau_{V_i}(r)|\Delta_i,x,V_i,w)\cdot \Delta_i|W_i=w]
\end{align*}
using EXOG and with the definition $\bar{f}_H(y|\Delta,x,v,w):= \frac{1}{\Delta} \int_{y-\Delta}^{y} f_H(h|\Delta,x,v,w) dh$.

\subsection{Proof of Proposition \ref{prop:matzkin}}
To fix the scale normalization, suppose that $g(x^*)=1$ for some $x^* \in \mathcal{X}$. Then, note that by the fundamental theorem of calculus, we may write
\begin{align*}
	\log g(x) = \int_{x^*}^{x} \nabla \log  g(x) \circ dv = \sum_{j=1}^J \int_{x^*_j}^{x_j} \partial_{x_j} \log g(x_1, \dots x_{j-1},t,0,\dots,0) dt
\end{align*}
where $\circ$ denotes a dot product and $dv$ traces any continuous path in $\mathcal{X}$ from $x^*$ to $x$, for example the one given after the second equality that integrates over each $x_j$ in turn.

If all components of $X$ are continuous and there are no controls, then note that for any $x \in \mathcal{X}$ we can identify $\partial_{x_j} g(x)/\partial_{x_k} g(x) = \partial_{x_j} \mathbbm{E}[R_i|x]/\partial_{x_k} \mathbbm{E}[R_i|x]$ for any $j, k \in 1 \dots J$ by Eq. (\ref{eqexpratio2}). By assumption that $g(x)$ is homogeneous of degree one, we have that $g(\lambda x) = \lambda g(x)$. ``Euler's theorem'' of homogeneous functions then implies that $ g(x) = \sum_{j=1}^J \partial_{x_j} g(x)\cdot x_j$ (this result can be obtained by differentiating $g(\lambda x) = \lambda g(x)$ with respect to $\lambda$ and evaluating at $\lambda=1$). Thus $ \left(\partial_{x_k} \log g(x)\right)^{-1}=\frac{g(x)}{\partial_{x_k} g(x)} = 1+\sum_{j \ne k} \frac{\partial_{x_j} g(x)}{\partial_{x_k} g(x)}\cdot x_j$. We now arrive at a constructive expression for $g(x)$ in terms of observables
\begin{equation} \label{eq:g}
	g(x)=e^{\int_{x^*_j}^{x_j}  \left(1+\sum_{j \ne k} \frac{\partial_{x_j} \mathbbm{E}[R_i|(x_1, \dots x_{j-1},t,0,\dots,0)]}{\partial_{x_k} \mathbbm{E}[R_i|(x_1, \dots x_{j-1},t,0,\dots,0) ]}\cdot x_j\right)^{-1}dt}
\end{equation}

\subsection{Proof of Proposition \ref{prop:chars}}
For any $x$ and $x'$ that differ in component $X_j$ only:
\begin{align}
&P(R_i \le r|X_i=x',A_i)-P(R_i \le r|X_i=x,A_i)\nonumber \\
&\hspace{.1cm}=\mathbbm{E}[\mathbbm{1}(H_i \le \tau_{V_i}(r))|x',A_i]-\mathbbm{E}[\mathbbm{1}(H_i \le  \tau_{V_i}(r))|x,A_i]\nonumber \\
&\hspace{.1cm}= \int dF_{V|XA}(v|x',A_i)\cdot  \mathbbm{E}[\mathbbm{1}(H_i \le \tau_{v}(r))|x',v,A_i]-\int dF_{V|XA}(v|x,A_i)\cdot  \mathbbm{E}[\mathbbm{1}(H_i \le \tau_{v}(r))|x,v,A_i]\nonumber \\
&\hspace{.1cm}= \int dF_{V|XA}(v|x,A_i)\cdot \{ \mathbbm{E}[\mathbbm{1}(h(x',U_i) \le \tau_{v}(r))|x',v,A_i]- \mathbbm{E}[\mathbbm{1}(h(x,U_i) \le \tau_{v}(r))|x,v,A_i]\}\nonumber \\
&\hspace{.1cm}= \int dF_{V|XA}(v|x,A_i)\cdot \{ \mathbbm{E}[\mathbbm{1}(h(x',U_i) \le \tau_{v}(r))- \mathbbm{1}(h(x,U_i) \le \tau_{v}(r))|x,v,A_i]\}\nonumber \\
&\hspace{.1cm}= \int dF_{V|XA}(v|x,A_i)\cdot \{ \mathbbm{E}[\mathbbm{1}(h(x',U_i) \le \tau_{v}(r) < h(x,U_i))|x,v,A_i]\} \nonumber \\
&\hspace{2cm} - \int dF_{V|XA}(v|x,A_i)\cdot \{ \mathbbm{E}[\mathbbm{1}(h(x,U_i) \le \tau_{v}(r) < h(x',U_i))|x,v,A_i]\} \label{eq:diffcondA}
\end{align}
where in the second equality I have used that $\{X_{ji} \indep V_i\}|(A_i,W_i)$ so that $F_{V|XA}(v|x',a)=F_{V|XA}(v|x,a)$ for all $a$, and in the fourth equality that $\{X_{ji} \indep U_i\}|(A_i,W_i,V_i)$ so that
\begin{align*}
\mathbbm{E}[ \mathbbm{1}(h(x',U_i) \le  \tau_v(r))|X_i=x',V_i=v,A_i]=\mathbbm{E}[ \mathbbm{1}(h(x',U_i) \le  \tau_v(r))|X_i=x,V_i=v,A_i]
\end{align*}
Given \eqref{eq:diffcondA}, we have that
\begin{align*}
&\mathbbm{E}[A_i \cdot \{P(R_i \le r|x',A_i)-P(R_i \le r|x,A_i)\}|X_i=x]\\
&\hspace{1cm}= \int dF_{A|X}(a|x)\cdot a \cdot \int dF_{V|XA}(v|x,a)\cdot \{ \mathbbm{E}[\mathbbm{1}(h(x',U_i) \le \tau_{v}(r) < h(x,U_i)|x,v,a]\} \nonumber \\
&\hspace{2cm} - \int dF_{A|X}(a|x)\cdot a \cdot \int dF_{V|XA}(v|x,a)\cdot \{ \mathbbm{E}[\mathbbm{1}(h(x,U_i) \le \tau_{v}(r) < h(x',U_i)|x,v,a]\}\\
&=\mathbbm{E}[A_i \cdot \mathbbm{1}(h(x',U_i) \le \tau_{V_i}(r) < h(x,U_i))|X_i=x]-\mathbbm{E}[A_i \cdot \mathbbm{1}(h(x,U_i) \le \tau_{V_i}(r) < h(x',U_i))|X_i=x]
\end{align*}
and similarly 
\begin{align*}
&\mathbbm{E}[P(R_i \le r|x',A_i)-P(R_i \le r|x,A_i)|X_i=x]\\
&\hspace{.5cm}=\mathbbm{E}[\mathbbm{1}(h(x',U_i) \le \tau_{V_i}(r) < h(x,U_i))|X_i=x]-\mathbbm{E}[\mathbbm{1}(h(x,U_i) \le \tau_{V_i}(r) < h(x',U_i))|X_i=x]
\end{align*}
Note that assuming the numerator and denominator below both exist, we can write
\begin{align*}
&\frac{\mathbbm{E}[A_i \cdot \partial_{x_{j}}P(R_i \le r|X_i=x,A_i)|X_i=x]}{\mathbbm{E}[\partial_{x_{j}}P(R_i \le r|X_i=x,A_i)|X_i=x]}\\
&\hspace{1cm}= \frac{\lim_{x' \downarrow x} \frac{1}{||x'-x||}\cdot \mathbbm{E}[A_i \cdot \{P(R_i \le r|x',A_i)-P(R_i \le r|x,A_i)\}|X_i=x]}{\lim_{x' \downarrow x} \frac{1}{||x'-x||} \cdot \mathbbm{E}[P(R_i \le r|x',A_i)-P(R_i \le r|x,A_i)|X_i=x]}\\
&\hspace{1cm}=\lim_{x' \downarrow x}\frac{\cancel{\frac{1}{||x'-x||}} \cdot \mathbbm{E}[A_i \cdot \{P(R_i \le r|x',A_i)-P(R_i \le r|x,A_i)\}|X_i=x]}{\cancel{\frac{1}{||x'-x||}} \cdot \mathbbm{E}[P(R_i \le r|x',A_i)-P(R_i \le r|x,A_i)|X_i=x]}\\
&\hspace{1cm} = \lim_{x' \downarrow x}\frac{\mathbbm{E}[A_i \cdot \{P(R_i \le r|x',A_i)-P(R_i \le r|x,A_i)\}|X_i=x]}{\mathbbm{E}[P(R_i \le r|x',A_i)-P(R_i \le r|x,A_i)|X_i=x]}
\end{align*}
where $x'$ is a sequence of vectors that differ from $x$ only in the $j^{th}$ component, and I've assumed dominated convergence so that we can interchange the limits and expectations.

Thus, by the above:
\begin{align}
&\frac{\mathbbm{E}[A_i \cdot \partial_{x_{j}}P(R_i \le r|X_i=x,A_i)|X_i=x]}{\mathbbm{E}[\partial_{x_{j}}P(R_i \le r|X_i=x,A_i)|X_i=x]} \nonumber \\
&=\lim_{x' \downarrow x} \quad \frac{\mathbbm{E}[A_i \cdot \mathbbm{1}(h(x',U_i) \le \tau_{V_i}(r) < h(x,U_i))|X_i=x]}{\mathbbm{E}[\mathbbm{1}(h(x',U_i) \le \tau_{V_i}(r) < h(x,U_i))|X_i=x]-\mathbbm{E}[\mathbbm{1}(h(x,U_i) \le \tau_{V_i}(r) < h(x',U_i))|X_i=x]} \nonumber \\
&\hspace{.0cm}-\lim_{x' \downarrow x} \quad\frac{\mathbbm{E}[A_i \cdot \mathbbm{1}(h(x,U_i) \le \tau_{V_i}(r) < h(x',U_i))|X_i=x]}{\mathbbm{E}[\mathbbm{1}(h(x',U_i) \le \tau_{V_i}(r) < h(x,U_i))|X_i=x]-\mathbbm{E}[\mathbbm{1}(h(x,U_i) \le \tau_{V_i}(r) < h(x',U_i))|X_i=x]} \label{eq:charestimand}
\end{align}	
Assume that $\partial_{x_j} h(x,U_i)$ has the same sign for all $i$ ``uniformly'' in the sense that
either the first or the second term above is zero. Suppose for example that $\partial_{x_j} h(x,U_i) \ge 0$ with probability one. Then:
\begin{align*}
&\frac{\mathbbm{E}[A_i \cdot \partial_{x_{j}}P(R_i \le r|X_i=x,A_i)|X_i=x]}{\mathbbm{E}[\partial_{x_{j}}P(R_i \le r|X_i=x,A_i)|X_i=x]}\\
&=\lim_{x' \downarrow x} \quad \frac{\mathbbm{E}[A_i \cdot \mathbbm{1}(h(x',U_i) \le \tau_{V_i}(r) < h(x,U_i))|X_i=x]}{\mathbbm{E}[\mathbbm{1}(h(x',U_i) \le \tau_{V_i}(r) < h(x,U_i))|X_i=x]}\\
&=\lim_{x' \downarrow x} \quad \frac{P(h(x',U_i) \le \tau_{V_i}(r) < h(x,U_i)|X_i=x) \cdot \mathbbm{E}[A_i|h(x',U_i) \le \tau_{V_i}(r) < h(x,U_i),X_i=x]}{P(h(x',U_i) \le \tau_{V_i}(r) < h(x,U_i)|X_i=x)}\\
&=\lim_{x' \downarrow x} \quad \mathbbm{E}[A_i|h(x',U_i) \le \tau_{V_i}(r) < h(x,U_i),X_i=x] = \mathbbm{E}[A_i|h(x,U_i) = \tau_{V_i}(r),X_i=x]\\
&=\mathbbm{E}[A_i|H_i = \tau_{V_i}(r),X_i=x]
\end{align*}
provided that the RHS of the last line is well-defined. Similarly, if $\partial_{x_j} h(x,U_i) \le 0$ with probability one, then the LHS above evaluates to $\lim_{x' \downarrow x} \quad \mathbbm{E}[A_i|h(x,U_i) \le \tau_{V_i}(r) < h(x',U_i),X_i=x] = \mathbbm{E}[A_i|H_i = \tau_{V_i}(r),X_i=x]$ and we thus obtain the same expression.

More generally, if the sign of treatment effects vary by unit:
\begin{align}
&\frac{\mathbbm{E}[A_i \cdot \partial_{x_{j}}P(R_i \le r|X_i=x,A_i)|X_i=x]}{\mathbbm{E}[\partial_{x_{j}}P(R_i \le r|X_i=x,A_i)|X_i=x]} \nonumber \\
&=\lim_{x' \downarrow x}  \quad \frac{P(h(x',U_i) \le \tau_{V_i}(r) < h(x,U_i)|X_i=x)\cdot \mathbbm{E}[A_i|h(x',U_i) \le \tau_{V_i}(r) < h(x,U_i),X_i=x]}{P(h(x',U_i) \le \tau_{V_i}(r) < h(x,U_i)|X_i=x)-P(h(x,U_i) \le \tau_{V_i}(r) < h(x',U_i)|X_i=x)} \nonumber \\
&\hspace{.5cm}-\lim_{x' \downarrow x}  \quad\frac{P(h(x,U_i) \le \tau_{V_i}(r) < h(x',U_i)|X_i=x)\cdot \mathbbm{E}[A_i|h(x,U_i) \le \tau_{V_i}(r) < h(x',U_i),X_i=x]}{P(h(x',U_i) \le \tau_{V_i}(r) < h(x,U_i)|X_i=x)-P(h(x,U_i) \le \tau_{V_i}(r) < h(x',U_i)|X_i=x)} \label{eq:chargenera}
\end{align} and the estimand $\frac{\mathbbm{E}[A_i \cdot \partial_{x_{j}}P(R_i \le r|X_i=x,A_i)|X_i=x]}{\mathbbm{E}[\partial_{x_{j}}P(R_i \le r|X_i=x,A_i)|X_i=x]}$ yields a non-convex combination of $\mathbbm{E}[A_i|\mathbbm{1}(h(x,U_i) \le \tau_{V_i}(r)<h(x',U_i),X_i=x]$ and $\mathbbm{E}[A_i|\mathbbm{1}(h(x',U_i) \le \tau_{V_i}(r)<h(x,U_i),X_i=x]$.

A sufficient condition for $\partial_{x_j} h(x,U_i)$ to have the same sign for all $i$ ``uniformly'' in the above sense is that for all $x'$ within some neighborhood of $x$, $h(x',U_i)$ is either strictly increasing or strictly decreasing in component $j$ of $x'$, for all $U_i$. Specifically, let $x'(\delta)$ be the vector $x$ but with $\delta$ added to the $j^{th}$ component. Then, for some $\bar{\delta}>0$, we have we have that $P(h(x,U_i) \ge h(x'(\delta),U_i))=1$ or $P(h(x,U_i) \ge h(x'(\delta),U_i))=0$ for any $\delta \le \bar{\delta}$ (i.e. $x'$ and $x$ are sufficiently close). Then given that $\mathbbm{E}[A_i|H_i = \tau_{V_i}(r),X_i=x]$ is well-defined we have either that $\lim_{x' \downarrow x} \frac{P(h(x',U_i) \le \tau_{V_i}(r) < h(x,U_i)|X_i=x)}{P(h(x,U_i) \le \tau_{V_i}(r) < h(x',U_i)|X_i=x)} = 0$ if $\partial_{x_j} h(x,U_i) \ge 0$ with probability one, or
that $\lim_{x' \downarrow x} \frac{P(h(x,U_i) \le \tau_{V_i}(r) < h(x',U_i)|X_i=x)}{P(h(x',U_i) \le \tau_{V_i}(r) < h(x,U_i)|X_i=x)} = 0$ if $\partial_{x_j} h(x,U_i) \le 0$ with probability one. In either case one term of \eqref{eq:chargenera} evaluates to zero and the other to $\mathbbm{E}[A_i|H_i = \tau_{V_i}(r),X_i=x]$.

To see that \eqref{eq:Amean2} holds under the stronger condition that $\{X_{ji} \indep (A_i,U_i,V_i)\}|W_i$, we have in this case by similar steps as above:
\begin{align}
&\mathbbm{E}[A_i\cdot \mathbbm{1}(R_i \le r)|X_i=x']-\mathbbm{E}[A_i\cdot \mathbbm{1}(R_i \le r)|X_i=x]\nonumber \\
&\hspace{.1cm}=\mathbbm{E}[A_i \cdot \mathbbm{1}(H_i \le \tau_{V_i}(r))|x']-\mathbbm{E}[A_i \cdot \mathbbm{1}(H_i \le  \tau_{V_i}(r))|x]\nonumber \\
&\hspace{.1cm}= \int dF_{VA|X}(v,a|x')\cdot a \cdot  \mathbbm{E}[\mathbbm{1}(H_i \le \tau_{v}(r))|x']-\int dF_{VA|X}(v,a|x)\cdot a \cdot  \mathbbm{E}[\mathbbm{1}(H_i \le \tau_{v}(r))|x]\nonumber \\
&\hspace{.1cm}= \int dF_{VA|X}(v,a|x)\cdot a \cdot \{ \mathbbm{E}[\mathbbm{1}(h(x',U_i) \le \tau_{v}(r))|x']- \mathbbm{E}[\mathbbm{1}(h(x,U_i) \le \tau_{v}(r))|x]\}\nonumber \\
&\hspace{.1cm}= \int dF_{VA|X}(v,a|x)\cdot a \cdot \{ \mathbbm{E}[\mathbbm{1}(h(x',U_i) \le \tau_{v}(r))- \mathbbm{1}(h(x,U_i) \le \tau_{v}(r))|x]\}\nonumber \\
&\hspace{.1cm}= \int dF_{VA|X}(v,a|x)\cdot a \cdot \{ \mathbbm{E}[\mathbbm{1}(h(x',U_i) \le \tau_{v}(r) < h(x,U_i))|x]\} \nonumber \\
&\hspace{2cm} - \int dF_{VA|X}(v,a|x)\cdot a \cdot \{ \mathbbm{E}[\mathbbm{1}(h(x,U_i) \le \tau_{v}(r) < h(x',U_i))|x]\} \nonumber \\
&\hspace{.1cm}= \mathbbm{E}[A_i \cdot \{\mathbbm{1}(h(x',U_i) \le \tau_{V_i}(r) < h(x,U_i))-\mathbbm{1}(h(x,U_i) \le \tau_{v}(r) < h(x',U_i))\}|X_i=x]
\label{eq:diffcondAindep}
\end{align}
using that $\{X_{ji} \indep (V_i,A_i)\}|W_i$ so that $F_{VA|X}(v,a|x')=F_{VA|X}(v,a|x)$ for all $a$ in the third equality. Similarly:
\begin{align}
&P(R_i \le r)|X_i=x')-P(R_i \le r)|X_i=x)\nonumber \\
&\hspace{.1cm}= \mathbbm{E}[\mathbbm{1}(h(x',U_i) \le \tau_{V_i}(r) < h(x,U_i))-\mathbbm{1}(h(x,U_i) \le \tau_{v}(r) < h(x',U_i))|X_i=x] \label{eq:diffcondAindep2}
\end{align}
And thus
$$\frac{\partial_{x_j} \mathbbm{E}[A_i \cdot \mathbbm{1}(R_i \le r)|X_i=x]}{\partial_{x_j} P(R_i \le r|X_i=x)} = \lim_{x' \downarrow x} \frac{\mathbbm{E}[A_i\cdot \mathbbm{1}(R_i \le r)|X_i=x']-\mathbbm{E}[A_i\cdot \mathbbm{1}(R_i \le r)|X_i=x]}{P(R_i \le r)|X_i=x')-P(R_i \le r)|X_i=x)}
$$
under suitable regularity conditions to take the derivative outside of the expectation. Given \eqref{eq:diffcondAindep} and \eqref{eq:diffcondAindep2}, the above yields the same estimand as \eqref{eq:charestimand}, again simplifying to $\mathbbm{E}[A_i|h(x,U_i) = \tau_{V_i}(r), X_i=x]$ given the common sign of derivative $\partial_{x_j}h(x,U_i)$ across all individuals $i$.

\subsection{Proof of Corollary \ref{corr:uniformv}}
The proof will make use of the following lemma:
\begin{lemma*} 
	Let $A$ and $B$ be two independent random variables defined on a common probability space, and $g$ a measurable function of $A$. Let $C=A-B$. Then $A \indep C$ iff $B$ is distributed uniformly on its support.
\end{lemma*}
\begin{proof}
	Let the supports of $A$, $B$ and $C$ be $\mathcal{A}$, $\mathcal{B}$, and $\mathcal{C}$. Note that by the law of iterated expectations
	$$P(C \le c|A=a) = P(g(a)-B \le c|A=a) = P(B \ge g(a)-c)$$
	using independence between $A$ and $B$.
	
	Now consider the case that $B$ is uniformly distributed on $\mathcal{B}$. $P(C \le c|A=a)=1-F_{B}(g(a)-c)$ and $C$ thus has a density conditional on $A$:
	\begin{align} \label{eq:densc}
		f_{C|A=a}(c) = \partial_c P(C \le c|A=a) = f_{B}(g(a)-c)
	\end{align}
	Since $f_{B}(g(a)-c)$ is constant on $\mathcal{B}$, $f_{C|A=a}(c)$ does not depend on $a$ for any $c \in \mathcal{C}$, and hence $C$ and $A$ are independent.
	
	In the other direction, note that the RHS of \eqref{eq:densc} \textit{will} depend on $a$ for some $c$, provided that $f_{B}(b) \ne f_{B}(b')$ for some $b,b' \in \phi(\mathcal{B})$, where we say that $f_{B}(b) \ne f_{B}(b')$ if $f_{B}(t)$ and $f_{B}(t')$ both exist or have different values, or if the derivative of $F_{B}(\cdot)$ does not exist at one of the two points. Suppose that $F_{B}(\cdot)$ is differentiable at $b$ but not at $b'$. Then for any $a \in \mathcal{A}$, $P(C \le c|A=a)$ is differentiable at $c=g(a)-b$ but not at $c=g(a)-b'$, and hence $C$ and $A$ are not independent. Suppose instead that $F_{B}(\cdot)$ is differentiable at $b$ and $b'$ but $f_{B}(b) \ne f_{B}(b')$. Note that for any $b \in \mathcal{B}$ at which $F_B$ is differentiable, and any $c \in \mathcal{C}$, the conditional density $f_{C|g(A)=c+b}(c)$ exists and is equal to $f_{B}(b)$. Thus $f_{C|g(A)=c+b}(c) \ne f_{C|g(A)=c+b'}(c)$, and again $C$ and $A$ are not independent.
\end{proof}

Let $\tau_{ri} = \tau_{V_i}(r)$ and define $\tau_i$ to be a vector of $\tau_{ri}$ across $r \in \mathcal{R}$. Given $(U_i,V_i) \indep X_i |W_i$. If this independence assumption holds with $V_i=\tau_i$, then it also holds with $\tau_{i}(r)$, and we can rewrite Eq. \eqref{eq:propflow} as a one-dimensional integral over $\tau_{ri}$:
\begin{align*}
	&\partial_{x_j} P(R_i \le r|x,w)\\
	& =-\int dF_{\tau_r|W}(t|w) \cdot f_{h(x,U)}(t|\tau_{ri}=t,x,w) \cdot \mathbbm{E}\left[\partial_{x_j} h(x,U_i)|h(x,U_i)=t,\tau_{ri}=t,x,w\right]\\
	&=-\int dF_{\tau_r|W}(t|w) \cdot f_{h(x,U)}(t|\tau_{ri}=t,x,w) \cdot \mathbbm{E}\left[\partial_{x_j} h(x,U_i)|h(x,U_i)=t,h(x,U_i)-\tau_{ri}=0,x,w\right]
\end{align*}
Under the assumption that $\tau_{ri} \sim Unif[\ell_w,u_w]$ and $V_i \indep U_i|X_i,W_i$, we can replace $dF_{\tau_r|W}(t|w)$ with $\frac{dt}{\mu_w-\ell_w} \cdot \mathbbm{1}(\ell_w \le t \le \mu_w)$ and we have by the Lemma above that $U_i \indep \{h(x,U_i)-\tau_{ri}\}|X_i=x,W_i=w$. Thus:
\begin{align*}
	\partial_{x_j} P(R_i \le r|x,w) &=-\frac{1}{\mu_w-\ell_w} \cdot\int_{\ell_w}^{u_w} dt  \cdot f_{h(x,U)}(t|x,w) \cdot \mathbbm{E}\left[\partial_{x_j} h(x,U_i)|h(x,U_i)=t,x,w\right]
\end{align*}
since provided that $t \in \textrm{supp}\{\tau_{ri}|W_i=w\} = [\ell_w,\mu_w]$, $f_{h(x,U)}(t|\tau_{ri}=t,x,w) =f_{h(x,U)}(t|x,w)$ and $\mathbbm{E}\left[\partial_{x_j} h(x,U_i)|h(x,U_i)=t,h(x,U_i)-\tau_{ri}=0,x,w\right]=\mathbbm{E}\left[\partial_{x_j} h(x,U_i)|h(x,U_i)=t,x,w\right]$. Meanwhile:
\begin{align*}
	\mathbbm{E}\left[\partial_{x_j} h(x,U_i)|x,w\right] &=\int dt \cdot f_{h(x,U)}(t|x,w) \cdot \mathbbm{E}\left[\partial_{x_j} h(x,U_i)|h(x,U_i)=t,x,w\right]\\
	&=\int_{\ell_w}^{u_w} dt \cdot f_{h(x,U)}(t|x,w) \cdot \mathbbm{E}\left[\partial_{x_j} h(x,U_i)|h(x,U_i)=t,x,w\right]
\end{align*}
using that $\textrm{supp}\{h(x,U_i)\} \subseteq [\mu_w,\ell_w]$ in the second equality.

Combining, we have that $\partial_{x_j} P(R_i \le r|x,w) =-\frac{1}{\mu_w-\ell_w} \cdot \mathbbm{E}\left[\partial_{x_j} h(x,U_i)|x,w\right]$.

\subsection{Proof of Proposition \ref{propidr}}
Using integration by parts and IDR:
\begin{align*}
	&P(R_i \le r|X_i=x,W_i=w) = \int_h P(r(h,V_i) \le r|H_i=h, X_i=x,W_i=w) \cdot dF_{H|XW}(h|x,w)\\
	&=\int_h P(r(h,V_i) \le r|W_i=w) \cdot dF_{H|XW}(h|x,w)\\
	&=\left.F_{H|XW}(h|x,w)P(r(h,V_i) \le r)\right|_h -\int_h F_{H|XW}(h|x,w) \cdot \frac{d}{dh}P(r(h,V_i) \le r|W_i=w) \cdot dh
\end{align*}
This implies that
\begin{align*}
	P(R_i \le r|x',w)&-P(R_i \le r|x,w)\\& = -\int_h \left\{F_{H|XW}(h|x',w)-F_{H|X}(h|x,w)\right\} \cdot \frac{d}{dh}P(r(h,V_i) \le r|W_i=w)\\
	&= -\int_h \left\{F_{H|XW}(h|x',w)-F_{H|XW}(h|x,w)\right\} \cdot \frac{d}{dh}P(h \le \tau_{V_i}(r)|W_i=w)\\
	&= \int_h \left\{F_{H|X}(h|x')-F_{H|X}(h|x)\right\} \cdot f_{\tau_V|W}(h|w)
\end{align*} 
since the first term does not depend on $x$. 

\subsection{Proof of Proposition \ref{propidr2}}
The following sequence of steps uses the law of iterated expectations, then IDR, then $\mathbbm{E}[A_i|X_i=x,W_i=w]=\int_0^1 Q_{A|X=x,W=w}(u)\cdot du$ for any random variable $A$, and finally that $Q_{r(H,v)|X=x',W=w}(u)=r(Q_{H|X=x',W=w}(u),v)$ since $r(\cdot, v)$ is weakly increasing and left-continuous for all $v$ \citep{quantileequivariance}: 
\begin{align*}
	\mathbbm{E}[R_i|x',w]-\mathbbm{E}[R_i|x,w] &= \int dF_{V|W}(v|w)\cdot \mathbbm{E}[r(H_i,v)|x',v,w]-\mathbbm{E}[r(H_i,v)|x,v,w]\\
	&= \int dF_{V|W}(v|w)\cdot \left\{\mathbbm{E}[r(H_i,v)|x',w]-\mathbbm{E}[r(H_i,v)|x,w]\right\}\\
	&= \int dF_{V|W}(v|w)\cdot \int_{0}^1 \left\{Q_{r(H,v)|X=x',W=w}(u)-Q_{r(H,v)|X=x,W=w}(u)\right\} du\\
	&= \int dF_{V|W}(v|w)\cdot \int_{0}^1 \left\{r(Q_{H|X=x',W=w}(u),v)-r(Q_{H|X=x,W=w}(u),v)\right\} du\\
	&= \int_{0}^1 \left[ \int dF_{V|W}(v|w) \left\{r(Q_{H|X=x',W=w}(u),v)-r(Q_{H|X=x,W=w}(u),v)\right\}\right] du\\
	&= \int_{0}^1 \bar{r}'_{x',x}(u) \cdot \left\{Q_{H|X=x',W=w}(u)-Q_{H|X=x,W=w}(u)\right\}du
\end{align*}
where the interchange of integrals is warranted provided that each of $\mathbbm{E}[R_i|x',w]$ and $\mathbbm{E}[R|x,w]$ are finite, because
\begin{align*}
	\int dF_{V|W}(v|w)&\cdot \int_{0}^1 \left|r(Q_{H|X=x',W=w}(u),v)-r(Q_{H|X=x,W=w}(u),v)\right| du\\
	& \hspace{.3in} \le \int dF_{V|W}(v|w)\cdot \int_{0}^1 \left|r(Q_{H|X=x',W=w}(u),v)|+|r(Q_{H|X=x,W=w}(u),v)\right| du\\
	& \hspace{1in} = \mathbbm{E}[|R_i||x',w]-\mathbbm{E}[|R_i||x,w] < \infty
\end{align*}
Note as well that $\bar{r}'_{x',x}(u) \cdot \left\{Q_{H|X=x',W=w}(u)-Q_{H|X=x,W=w}(u)\right\}$ is always well-defined and equal to $r(Q_{H|X=x',W=w}(u),v)- r(Q_{H|X=x,W=w}(u),v)$, because $r(Q_{H|X=x',W=w}(u),v)\ne r(Q_{H|X=x,W=w}(u),v)$ implies that $Q_{H|X=x',W=w}(u)\ne Q_{H|X=x,W=w}(u)$.

\subsection{Proof of Proposition \ref{propflowcont}}
By the law of iterated expectations: $ \mathbbm{E}[R_i|X_i=x,W_i=w] = \int dF_{V|W}(v|w) \cdot \int dh \cdot r(h,v)\cdot f_H(h|x,v,w)$. Now use REG to move the derivative inside the integral:
$$ \partial_{x_j} \mathbbm{E}[R_i|X_i=x,W_i=w] = \int dF_{V|W}(v|w) \cdot \int dh \cdot r(h,v)\cdot \partial_{x_j} f_H(h|x,v,w) $$
Theorem 1 of \citet{kasy2022} (for a one-dimensional outcome) implies that
$\partial_{x_j} f_H(h|x,v,w) = -\frac{\partial}{\partial h} \left\{ f_H(h|x,v,w) \cdot \mathbbm{E}\left[\partial_{x_j} h(x,U_i)|H_i=h,x,v,w\right]\right\}$. Thus
\begin{align*}
	\partial_{x_j} \mathbbm{E}[R_i|x,w] = -\int dF_{V|W}(v|w) \int dh \cdot r(h,v) \cdot \frac{\partial}{\partial h} \left\{ f_H(h|x,v,w) \cdot  \mathbbm{E}\left[\partial_{x_j} h(x,U_i)|H_i=h,x,v,w\right]\right\}
\end{align*}
Now use integration by parts, applying the assumed boundary condition eliminates the first term, establishing the result:
\begin{align*}
	\partial_{x_j} \mathbbm{E}[R_i|x,w]
	&=0+\int dF_{V|W}(v|w) \int dh \cdot r'(h,v)\cdot f_h(h|x,v,w) \cdot \mathbbm{E}\left[\partial_{x_j} h(x,U_i)|H_i=h,x,v,w\right]
\end{align*}

\subsection{Proof of Proposition \ref{propdiscretedense}}
With the substitution $h=\tau_v(r)$, $dr = r'(h,v)\cdot dh$:
\begin{align*}
&\sum_r \int_{\tau_v(r)-\Delta}^{\tau_v(r)} dy \cdot f_H(y|\Delta,x, v,w) \stackrel{R}{\rightarrow}\bar{R}\cdot \int dr \int_{\tau_v(r)-\Delta}^{\tau_v(r)} dy \cdot f_H(y|\Delta,x,v,w)\\
&= \bar{R}\cdot \int dh\cdot  r'(h,v) \int_{h-\Delta}^{h} dy \cdot f_H(y|\Delta,x,v,w)=\bar{R}\cdot\int dy \int_{y}^{y+\Delta} dh \cdot r'(h,v)\cdot f_H(y|\Delta,x,v,w)\\
&=\bar{R}\cdot \int dy \cdot \Delta \cdot  \bar{r}'(y,\Delta,v)\cdot f_H(y|\Delta,x,v,w)=\Delta \cdot \bar{R}\cdot \mathbbm{E}[\bar{r}'(H_i,\Delta,v)|\Delta_i=\Delta,X_i=x, V_i=v,W_i=w]
\end{align*}
where $\bar{r}'(y,\Delta,v):= \frac{1}{\Delta} \int_{y}^{y+\Delta} r'(h,v) dh$. Thus:
\begin{align*}
\mathbbm{E}&[R_i|X_i=x',W_i=w]-\mathbbm{E}[R_i|X_i=x,W_i=w]\\
&=\bar{R}\cdot  \int dF_{V|W}(v|w) \cdot \int dF_{\Delta|XVW}(\Delta|x,v,w) \cdot \Delta \cdot\mathbbm{E}[\bar{r}'(H_i,\Delta,v)|\Delta_i=\Delta,X_i=x, V_i=v,W_i=w]\\
&=\bar{R}\cdot  \int dF_{V|W}(v|w) \cdot \int dF_{\Delta|XVW}(\Delta|x,v,w) \cdot \mathbbm{E}[\Delta \cdot\bar{r}'(H_i,\Delta,v)|\Delta_i=\Delta,X_i=x, V_i=v,W_i=w]\\
&=\bar{R}\cdot  \int dF_{V|W}(v|w) \cdot \mathbbm{E}\left[\left. \mathbbm{E}[\Delta_i\cdot\bar{r}'(H_i,\Delta_i,V_i)|\Delta_i=\Delta,X_i=x, V_i=v]\right|X_i=x,V_i=v\right]\\
&=\bar{R}\cdot  \int dF_{V|W}(v|w) \cdot \mathbbm{E}[\Delta_i\cdot\bar{r}'(H_i,\Delta_i,V_i)|X_i=x, V_i=v,W_i=w]\\
&=\bar{R}\cdot  \mathbbm{E}[\Delta_i\cdot\bar{r}'(H_i,\Delta_i,V_i)|X_i=x,W_i=w]
\end{align*}
Note that if we assume that $\Delta_i$ and $\bar{r}'(H_i,\Delta_i,V_i)$ are uncorrelated conditional on $X_i=x,W_i=w$, this reduces to
$$\mathbbm{E}[R_i|X_i=x']-\mathbbm{E}[R_i|X_i=x]=\bar{R}\cdot  \mathbbm{E}[\Delta_i|X_i=x]\cdot\mathbbm{E}[\bar{r}'(H_i,\Delta_i,V_i)|X_i=x]
$$

\subsection{Proof of Proposition \ref{propheterolinear}}
Starting with Proposition \ref{propdiscretedense}, observe that $\bar{r}'(y,\Delta,v):= \frac{1}{\Delta} \int_{y}^{y+\Delta} r'(h,v) dh$ is equal to
$$r'(v) \cdot \begin{cases}\frac{y-(\ell(v)-\Delta)}{|\Delta|}\cdot \mathbbm{1}(y \in [\ell(v)-\Delta, \ell(v)])+ \mathbbm{1}(y \in [\ell(v),\mu(v) - \Delta])\\
\hspace{2in} + \frac{\mu(v) - y}{\Delta}\cdot \mathbbm{1}(y \in [\mu(v)-\Delta, \mu(v)])& \textrm{ if } \Delta> 0\\
\frac{y-\ell(v)}{\Delta}\cdot \mathbbm{1}(y \in [\ell(v), \ell(v)+|\Delta|])+ \mathbbm{1}(y \in [\ell(v)+|\Delta|,\mu(v)])\\
\hspace{2in} + \frac{\mu(v)+|\Delta| - y}{|\Delta|}\cdot \mathbbm{1}(y \in [\mu(v), \mu(v)+|\Delta|]) & \textrm{ if } \Delta< 0
\end{cases}$$
where $r'(v) = \frac{|\mathcal{R}|}{\ell(v)-\mu(v)}$. To ease notation, let us for the moment make the conditioning implicit and let $f(y)$ denote $f_H(y|\Delta,x,v,w)$ and $F(y)$ the corresponding conditional CDF. Let us keep $v$ also implicit in both $\ell$ and $\mu$.  If we let $\theta$ denote the quantity $\frac{1}{r'(v)}\int dy \cdot \bar{r}'(y,\Delta,v)$ for a fixed $\Delta$, then:
\begin{align}
\theta = \begin{cases} [F(\ell)-F(\ell-\Delta)] \mathbbm{E}\left[\left.\frac{H_i-(\ell-\Delta)}{\Delta}\right|H_i \in [\ell-\Delta, \ell]\right] + F(\mu-\Delta)\\
\hspace{1in}-F(\ell)+ [F(\mu)-F(\mu-\Delta)] \mathbbm{E}\left[\left.\frac{\mu - H_i}{\Delta}\right|H_i \in [\mu-\Delta, \mu]\right] & \textrm{ if } \Delta > 0\\
[F(\ell+|\Delta|)-F(\ell)] \mathbbm{E}\left[\left.\frac{H_i-\ell}{|\Delta|}\right|H_i \in [\ell, \ell+|\Delta|]\right]+ F(\mu)\\
\hspace{1in}-F(\ell+|\Delta|)+ [F(\mu+|\Delta|)-F(\mu)] \mathbbm{E}\left[\left.\frac{\mu+\Delta - H_i}{|\Delta|}\right|H_i \in [\mu,\mu+|\Delta|] \right] & \textrm{ if } \Delta < 0 \end{cases}\label{eq:theta}
\end{align}
To get a lower bound on $\theta$, we use the assumption that $f(y)$ is increasing on the interval $[\ell-|\Delta|, \ell+|\Delta|]$, as well as decreasing on the interval  $[\mu-|\Delta|, \mu+|\Delta|]$:
\begin{align*} \theta \ge &\begin{cases} \frac{1}{2}[F(\ell)-F(\ell-\Delta)] + F(\mu-\Delta)-F(\ell)+ \frac{1}{2}[F(\mu)-F(\mu-\Delta)] & \textrm{ if } \Delta > 0\\
\frac{1}{2}[F(\ell+|\Delta|)-F(\ell)] + F(\mu)-F(\ell+|\Delta|) + \frac{1}{2}[F(\mu+|\Delta|)-F(\mu)] & \textrm{ if } \Delta < 0 \end{cases}\\
&\hspace{1.4in}=\begin{cases} \frac{1}{2}[F(\mu-\Delta)-F(\ell-\Delta)] + \frac{1}{2}[F(\mu)-F(\ell)] & \textrm{ if } \Delta > 0\\
\frac{1}{2}[F(\mu+|\Delta|)-F(\ell+|\Delta|)] + \frac{1}{2}[F(\mu)-F(\ell)] & \textrm{ if } \Delta < 0 \end{cases}\\
&\hspace{1.4in}=\frac{1}{2}[F(\mu-\Delta)-F(\ell-\Delta)] + \frac{1}{2}[F(\mu)-F(\ell)]\\
& =\frac{1}{2}[F(\mu(v)|\Delta,x',v)-F(\ell(v)|\Delta,x',v)]+\frac{1}{2}[F(\mu(v)|\Delta,x,v,w)-F(\ell(v)|\Delta,x,v,w)],
\end{align*}
reintroducing conditioning values with the notation $F(\cdot|\Delta,x,v,w):=F_{H|\Delta XVW}(\cdot|\Delta,x,v,w)$. A lower bound on the weight $\Pi_{x,x'}$ on causal effects in $\mathbbm{E}[R_i|x',w]-\mathbbm{E}[R_i|x,w]$ can thus given by averaging over $V_i$ (c.f. Proposition \ref{propdiscretedense}):
\begin{align*}
\Pi_{x,x'}& \ge \int dF_{V|W}(v|w) \cdot \int dF(\Delta|x,v,w)  \cdot \left\{\frac{1}{2}[F(\mu(v)|\Delta,x',v,w)-F(\ell(v)|\Delta,x',v,w)]\right.\\
&\hspace{3in}\left.+\frac{1}{2}[F(\mu(v)|\Delta,x,v,w)-F(\ell(v)|\Delta,x,v,w)]\right\}
\end{align*}
Note that this exactly the same as the average between the weights $\Pi_{x}$ and $\Pi_{x'}$ corresponding to using continuous variation at $X_i=x$ and $X_i=x'$, respectively. For example (c.f. Eq. \ref{compare:b}):
\begin{align*}
\Pi_{x}& = \int dF_{V|W}(v|w) \cdot \int dF(\Delta|x,v,w)   \cdot [F(\mu(v)|\Delta,x,v,w)-F(\ell(v)|\Delta,x,v,w)]
\end{align*}
This leads to the lower bound of $\Pi_{x,x'}/(\frac{1}{2}\Pi_{x}+\frac{1}{2}\Pi_{x'}) \ge 1$ in Proposition \ref{propdiscretedense}.

Now, to obtain an upper bound, notice that an upper bound on $\theta$ occurs if we imagine putting all of the mass in each of the interval conditional expectations in (\ref{eq:theta}) to the right in the intervals that depend on $\ell$, and at the left end for the intervals that depend on $\mu$. Then:
\begin{align*}
\theta &\le \begin{cases} \cancel{F(\ell)}-F(\ell-\Delta) + \cancel{F(\mu-\Delta)}-\cancel{F(\ell)}+ F(\mu)-\cancel{F(\mu-\Delta)}& \textrm{ if } \Delta > 0\\
\cancel{F(\ell+|\Delta|)}-F(\ell) + \cancel{F(\mu)}-\cancel{F(\ell+|\Delta|)} + F(\mu+|\Delta|)-\cancel{F(\mu)}& \textrm{ if } \Delta < 0 \end{cases}\\
&=\begin{cases} F(\mu)-F(\ell-\Delta) & \textrm{ if } \Delta > 0\\
=F(\mu+|\Delta|)-F(\ell) & \textrm{ if } \Delta < 0 \end{cases}=\begin{cases} F(\mu(v)|\Delta,x,v,w)-F(\ell(v)|\Delta,x',v,w) & \textrm{ if } \Delta > 0\\
F(\mu(v)|\Delta,x',v,w)-F(\ell(v)|\Delta,x,v,w) & \textrm{ if } \Delta < 0 \end{cases}
\end{align*}
where I've used that $F(y|\Delta,x',v,w) = F(y-\Delta|\Delta,x,v,w)$ in the last step. An upper bound for $\theta$ that applies to both cases can be obtained by adding them together:
\begin{equation}
\theta \le F(\mu(v)|\Delta,x,v,w)-F(\ell(v)|\Delta,x,v,w) + F(\mu(v)|\Delta,x',v,w)-F(\ell(v)|\Delta,x',v,w)
\end{equation}
where I've used that $F(\mu) \ge F(\ell - \Delta)$ and $F(\mu + |\Delta|) \ge F(\ell)$ are implied by the assumption that $f(y)$ is increasing on the interval $[\ell-|\Delta|, \ell+|\Delta|]$, while decreasing on the interval  $[\mu-|\Delta|, \mu+|\Delta|]$, which implies that $\mu - |\Delta| \ge \ell + |\Delta|$.

Thus, an upper bound on the weight $\Pi_{x,x'}$ on causal effects in $\mathbbm{E}[R_i|x',w]-\mathbbm{E}[R_i|x,w]$ is:
\begin{align*}
\Pi_{x,x'}& \ge \int dF_{V|W}(v|w) \cdot \int dF_{\Delta|XVW}(\Delta|x,v,w)  \cdot \left\{F(\mu(v)|\Delta,x',v,w)-F(\ell(v)|\Delta,x',v,w)\right.\\
&\hspace{3in}\left.+F(\mu(v)|\Delta,x,v,w)-F(\ell(v)|\Delta,x,v,w)\right\}
\end{align*}
leading to the upper bound of $\Pi_{x,x'}/(\frac{1}{2}\Pi_{x}+\frac{1}{2}\Pi_{x'}) \le 2$ in Proposition \ref{propdiscretedense}.

Now consider the final condition in Proposition \ref{propdiscretedense}. That $\Pi_{x,x'}/\Pi_{x} \ge 1/2$ follows from the above since $F(\mu(v)|\Delta,x',v,w)-F(\ell(v)|\Delta,x',v,w) \ge 0$ for all $\Delta,x,v,w$. For the upper bound we have
\begin{align*}
\frac{\Pi_{x}}{\Pi_{x,x'}} &\ge \frac{\mathbbm{E}\left\{\left.\frac{NB(x,V_i,w)}{\mu(V_i)-\ell(V_i)}\right|X_i=x,W_i=w\right\}}{\mathbbm{E}\left[\left.\frac{1}{\mu(V_i)-\ell(V_i)} \right|\right]}\\
& = \frac{\mathbbm{E}\left[\left.\frac{1}{\mu(V_i)-\ell(V_i)} \right|X_i=x,W_i=w\right]\cdot NB(x,w) - Cov\left[\left.\frac{1}{\mu(V_i)-\ell(V_i)}, NB(x,V_i,w) \right|X_i=x,W_i=w\right]}{\mathbbm{E}\left\{\left.\frac{1}{\mu(V_i)-\ell(V_i)}\right|X_i=x,W_i=w\right\}}\\
&\ge NB(x,w)-\sqrt{\frac{ Var\left[\left.\frac{1}{\mu(V_i)-\ell(V_i)} \right|X_i=x,W_i=w\right]}{\mathbbm{E}\left\{\left.\frac{1}{\mu(V_i)-\ell(V_i)}\right|X_i=x,W_i=w\right\}^2}\cdot Var\left[\left.NB(x,V_i,w) \right|X_i=x,W_i=w\right]}\\
&\ge NB(x,w)-Var\left[\left.NB(x,V_i,w) \right|X_i=x,W_i=w\right]\\
&\ge NB(x,w) - NB(x,w)\cdot(1-NB(x,w)) = NB(x,w)^2
\end{align*}
where $NB(x,v,w):=P(0 < R_i < \bar{R}|x,v,w) =P(\ell(V_i) \le H_i \le \mu(V_i)|x,v,w)$ and $NB(x,w) = \mathbbm{E}[NB(x,V_i,w)|x,w]$ is the observable probability of not bunching given $(X_i,W_i)=(x,w)$. The third inequality uses the assumption that $\frac{ Var\left[\left.\frac{1}{\mu(V_i)-\ell(V_i)} \right|x,w\right]}{\mathbbm{E}\left\{\left.\frac{1}{\mu(V_i)-\ell(V_i)}\right|x,w\right\}^2}\le Var\left[\left.NB(x,V_i,w) \right|x,w\right]$ and the last one that $Var\left[\left.NB(x,V_i,w) \right|x,w\right] \le NB(x,w)\cdot (1-NB(x,w))$ since $NB(x,v,w) \in [0,1]$ for all $x,v,w$.

From this notation, we obtain the form written in Proposition \ref{propheterolinear} by noting that  $NB(X_i,V_i,W_i)=1-\mathcal{B}_i$. Note that $Var\left[\left.NB(x,V_i,w) \right|x,w\right]=Var\left[\left.\mathcal{B}_i \right|x,w\right]$.

\end{appendices}

\end{document}